\crefname{claim}{Claim}{Claims}
\newtheorem{theorem}{Theorem}
\newtheorem{lemma}{Lemma}
\newtheorem{definition}{Definition}
\newtheorem{remark}{Remark}
\newcommand{\bits}{\{0,1\}}
\newcommand{\thetatamp}{\theta^{\mathsf{tamp}}}
\newcommand{\thetasame}{\theta^{\mathsf{same}}}
\newcommand{\rhotamp}{\rho^{\mathsf{tamp}}}
\newcommand{\rhosame}{\rho^{\mathsf{same}}}
\newtheorem{fact}{Fact}
\newcommand{\beq}{\begin{equation}}
	\newcommand{\enq}{\end{equation}}
\newcommand{\bel}{\begin{lemma}}
	\newcommand{\enl}{\end{lemma}}
\newcommand{\bet}{\begin{theorem}}
	\newcommand{\ent}{\end{theorem}}
\newcommand{\tr}{\mathrm{Tr}}
\newcommand{\E}{\mathbb{E}}
\newcommand{\ketbra}[1]{|#1\rangle\langle#1|}
\newcommand{\eps}{\varepsilon}
\newcommand{\poly}{\textnormal{poly}}
\newcommand{\fid}{\mathsf{F}}
\newcommand*{\cC}{\mathcal{C}}
\newcommand*{\cA}{\mathcal{A}}
\newcommand*{\cH}{\mathcal{H}}
\newcommand*{\cD}{\mathcal{D}}
\newcommand*{\cO}{\mathcal{O}}
\newcommand*{\cX}{\mathcal{X}}
\newcommand*{\cZ}{\mathcal{Z}}
\newcommand*{\cE}{\mathcal{E}}
\newcommand{\cP}{\mathcal{P}}
\newcommand*{\IP}{\mathsf{IP}}
\newcommand{\supp}{\mathrm{supp}}
\newcommand{\suppress}[1]{}
\newcommand{\defeq}{\ensuremath{ \stackrel{\mathrm{def}}{=} }}
\newcommand{\F}{\mathbb{F}}
\newcommand {\br} [1] {\ensuremath{ \left( #1 \right) }}
\newcommand {\minusspace} {\: \! \!}
\newcommand {\fn} [2] {\ensuremath{ #1 \minusspace \br{ #2 } }}
\newcommand {\dmax} [2] {\fn{\mathrm{D}_{\max}}{#1 \middle\| #2}}
\newcommand {\imax}{\ensuremath{\mathrm{I}_{\max}}}
\newcommand {\hmin} [2] {\fn{ \mathrm{H }_{\min}}{#1 \middle | #2}}
\newcommand {\id} {\ensuremath{\mathbb{I}}}
\newcommand{\samp}{\mathsf{Samp}}
\newcommand{\auth}{\mathrm{Auth}}
\newcommand{\ver}{\mathrm{Ver}}
\newcommand{\epspriv}{\eps_{\mathsf{priv}}}
\newcommand{\epsnm}{\eps_{\mathsf{nm}}}
\newcommand{\epslk}{\eps_{\mathsf{leak}}}
\newcommand*{\sm}{\mathsf{same}}
\newcommand*{\qpas}{\mathsf{qpa\mhyphen state}}
\newcommand*{\nmext}{\mathsf{nmExt}}
\newcommand{\epr}{\mathsf{epr}}
\newcommand{\X}{\mathcal{X}}
\newcommand{\Leak}{\mathsf{Leak}}
\newcommand*{\cL}{\mathcal{L}}
\newcommand{\bra}[1]{\langle #1|}
\newcommand{\ket}[1]{|#1 \rangle}
\mathchardef\mhyphen="2D
\newcommand*{\enc}{\mathrm{Enc}}
\newcommand*{\qenc}{\mathrm{2NMC}}
\newcommand*{\qdec}{\mathrm{2NMD}}
\newcommand*{\lrenc}{\mathrm{lrShare}}
\newcommand*{\qshare}{\mathrm{qShare}}
\newcommand*{\qrec}{\mathrm{qRec}}
\newcommand*{\nmenc}{\mathrm{nmShare}}
\newcommand*{\dec}{\mathrm{Dec}}
\newcommand*{\lrdec}{\mathrm{lrRec}}
\newcommand*{\nmdec}{\mathrm{nmRec}}
\newcommand*{\cenc}{\mathrm{cEnc}}
\newcommand*{\cdec}{\mathrm{cDec}}
\newcommand*{\nmcenc}{\mathrm{\nmext}}
\newcommand*{\nmcdec}{\mathrm{\nmext}}
\newcommand{\nmshare}{\mathrm{nmShare}}
\newcommand{\share}{\mathrm{Share}}
\newcommand{\nmrec}{\mathrm{nmRec}}
\newcommand{\rec}{\mathrm{Rec}}
\newcommand*{\cSC}{\mathcal{SC}}
\newcommand*{\rom}[1]{\expandafter\@slowromancap\romannumeral #1@}
\mathchardef\mhyphen="2D
\title{Split-State Non-Malleable Codes and Secret Sharing Schemes for 
\\ Quantum Messages\footnote{This work was presented as a contributed talk at QCRYPT 2023.}}
\author{
Naresh Goud Boddu\footnote{NTT Research, \texttt{naresh.boddu@ntt-research.com}.} \and
	Vipul Goyal\footnote{Carnegie Mellon University and NTT Research, \texttt{vipul@vipulgoyal.org}.} \and
 Rahul Jain\footnote{Centre for Quantum Technologies and Department of Computer Science, 
  National University of Singapore and MajuLab, UMI 3654, Singapore,  \texttt{rahul@comp.nus.edu.sg}.} \and 
 João Ribeiro\footnote{NOVA LINCS and NOVA School of Science and Technology, \texttt{joao.ribeiro@fct.unl.pt}.}
}
\date{}
\begin{document}
\maketitle

\begin{abstract}

Non-malleable codes are fundamental objects at the intersection of cryptography and coding theory. 
These codes provide security guarantees even in settings where error correction and detection are impossible, and have found applications to several other cryptographic tasks.
One of the strongest and most well-studied adversarial tampering models is $2$-split-state tampering.
Here, a codeword is split into two parts which are stored in physically distant servers, and the adversary can then independently tamper with each part using arbitrary functions.
This model can be naturally extended to the secret sharing setting with several parties by having the adversary independently tamper with each share.
Previous works on non-malleable coding and secret sharing in the split-state tampering model only considered the encoding of \emph{classical} messages.
Furthermore, until recent work by Aggarwal, Boddu, and Jain (IEEE Trans.\ Inf.\ Theory 2024 \& arXiv 2022), adversaries with quantum capabilities and \emph{shared entanglement} had not been considered, and it is a priori not clear whether previous schemes remain secure in this model.

In this work, we introduce the notions of split-state non-malleable codes and secret sharing schemes for quantum messages secure against quantum adversaries with shared entanglement. Then, we present explicit constructions of such schemes that achieve low-error non-malleability. 
More precisely, we construct efficiently encodable and decodable split-state non-malleable codes and secret sharing schemes for quantum messages preserving entanglement with external systems and achieving security against quantum adversaries having shared entanglement with codeword length $n$, any message length at most $n^{\Omega(1)}$, and error $\eps=2^{-{n^{\Omega(1)}}}$.
In the easier setting of \emph{average-case} non-malleability, we achieve efficient non-malleable coding with rate close to $1/11$.
    
\end{abstract}

\newpage

\tableofcontents

\newpage

\section{Introduction}
\label{sec:intro}

Non-Malleable Codes (NMCs), introduced in the work of Dziembowski, Pietrzak, and Wichs \cite{DPW10}, are now considered fundamental cryptographic primitives that provide security guarantees even in adversarial settings where error correction and detection are impossible. Informally, NMCs guarantee that an adversary cannot change the encoding of a message into that of a related message. They encode a classical message $M$ into a codeword $C$ in such a way that tampering $C$ into $f(C)$ using an allowed tampering function $f$ results in the decoder either outputting the original message $M$ or a message that is unrelated/independent of $M$.
Note that it is impossible to construct NMCs that protect against \emph{arbitrary} tampering functions. This is because an adversary could simply apply the decoder to $C$, recovering the message $M$, and then output the encoding of a related message $M+1$ as the tampered version of $C$. Therefore, previous work on non-malleable coding has focused on constructing NMCs for restricted, but still large and meaningful, classes of tampering functions.

One of the strongest and most studied tampering models is \emph{split-state tampering}, introduced by Liu and Lysyanskaya~\cite{LL12}. In the $2$-state version of this model (which is the hardest), we view a codeword $C$ as being composed of two parts, $E_1$ and $E_2$, and an adversary is allowed to \emph{independently} tamper with each part using an arbitrary tampering function. In other words, a split-state tampering adversary consists of a pair of arbitrary functions $(f,g)$, and a codeword $(E_1,E_2)$ is tampered to $(f(E_1),g(E_2))$ (see \cref{fig:splitstate121} for a diagram of this model). The split-state model is meaningful because we can imagine that the two parts of the codeword are stored in different physically isolated servers, making communication between tampering adversaries infeasible.

The notion of non-malleable secret sharing has also been widely studied as a strengthening of non-malleable codes. Secret sharing, dating back to the work of Blakley~\cite{Bla79} and Shamir~\cite{Sha79}, is a fundamental cryptographic primitive where a dealer encodes a secret into $p$ shares and distributes them among $p$ parties. Each secret sharing scheme has an associated \emph{monotone}\footnote{A set $\Gamma\subseteq 2^{[p]}$ is \emph{monotone} if $A\in\Gamma$ and $A\subseteq B$ imply that $B\in\Gamma$.} set $\Gamma\subseteq 2^{[p]}$, usually called an \emph{access structure}, whereby any set of parties $T\in\Gamma$, called \emph{authorized} sets, can reconstruct the secret from their shares, but any \emph{unauthorized} set of parties $T\not\in\Gamma$ gains essentially no information about the secret. One of the most natural and well-studied types of access structures are \emph{threshold} access structures, where a set of parties $T$ is authorized if and only if $|T|\geq t$ for some threshold $t$.

Non-Malleable Secret Sharing (NMSS), generalizing non-malleable coding, was introduced by Goyal and Kumar~\cite{GK16} and has received significant interest in the past few years in the classical setting. NMSS schemes additionally guarantee that an adversary who is allowed to tamper all the shares (according to some restricted tampering model) cannot make an authorized set of parties reconstruct a different but related secret. 
Non-malleable secret sharing is particularly well-studied in the context of the split-state tampering model described above, whereby an adversary can independently tamper with each share. Once again, the motivation is that shares are being held in physically distant devices, making communication between the several tampering adversaries infeasible.

The split-state and closely related tampering models for codes and secret sharing schemes have witnessed a flurry of work in the past decade \cite{LL12,DKO13,CG14a,CG14b,ADL17,CGL15,li15,Li17,GK16,GK18,ADNOP19,BS19,FV19,Li19,AO20,BFOSV20,BFV21,GSZ21,AKOOS22,CKOS22,Li23}, culminating in recent explicit constructions of classical split-state NMCs of rate $1/3$~\cite{AKOOS22} and constructions with a smaller constant rate but also smaller error~\cite{Li23} (with a known upper bound on the rate being $1/2$~\cite{CG14a}).
Split-state NMCs and NMSS schemes and related notions have also found applications in other cryptographic tasks, such as non-malleable commitments, secure message transmission, and non-malleable signatures~\cite{GPR16,GK16,GK18,ADNOP19,AP19,CKOS22}.

\paragraph{Split-state tampering and quantum computing.}

Given the rapid development of quantum technologies, it is natural to consider NMCs and NMSS schemes in the quantum setting and examine how adversaries with quantum capabilities affect previous assumptions made about tampering models. 
For instance, all known NMCs are tailored to classical messages, but it is equally important to design non-malleable coding schemes that allow us to encode quantum states as well. 
Moreover, the possibility of attackers with quantum capabilities challenges the \emph{independence assumption} made in the split-state models described earlier. Although servers holding different parts of the codeword (or different secret shares) may be physically isolated from each other, the tampering adversaries attacking each server may have pre-shared a large amount of entangled quantum states. Access to such shared entanglement can provide non-trivial advantages to these adversaries beyond what has been considered in the classical tampering models. 
For example, in the Clauser-Horne-Shimony-Holt (CHSH) game~\cite{CHSH69}, non-communicating parties can use local measurements on both halves of an EPR state to achieve a higher success probability than what is possible using fully classical strategies. 
Therefore, it is not clear whether any of the existing classical NMCs and NMSS schemes remain secure in the augmented split-state tampering models where the adversary is allowed to make use of arbitrary shared entanglement across multiple states (see \cref{fig:splitstate211} for a diagram of the split-state tampering model for two states). 
The only exception to this are the recent work of Aggarwal, Boddu, and Jain~\cite{ABJ22} and the concurrent work of Batra, Boddu, and Jain~\cite{BBJ23}.
The former constructs explicit NMCs for \emph{classical} messages that are secure in the $2$-split-state tampering model, while the latter focuses on explicit quantum-secure~\emph{non-malleable randomness encoders} (NMRE) with a higher rate in the same tampering model, and uses these objects to non-malleable codes in the $3$-split-state model\footnote{Meaning that the codeword is divided into three parts and the adversary tampers each part independently.} for quantum messages.

The shortcomings of existing split-state non-malleable coding schemes in the face of quantum messages and adversaries raise the following natural question:
\begin{quote}
\begin{center}
Can we design efficient $2$-split-state NMCs and split-state NMSS schemes for \emph{quantum messages} secure against quantum adversaries with shared entanglement?
\end{center}
\end{quote}
We resolve this question in the affirmative.

\begin{figure}
		\centering
  \resizebox{10cm}{5cm}{
	\begin{tikzpicture}
		
		\node at (1.8,1.5) {$M$};
		\node at (11,1.5) {$M'$};
		
		\draw (2,1.5) -- (3,1.5);
		\draw (3,-0.5) rectangle (4.5,3.5);
		\node at (3.8,1.5) {$\enc$};
		
		
		
		\node at (4.7,3) {$E_1$};
		\node at (8.1,3) {$E_1$};
		\draw (4.5,2.8) -- (6,2.8);
		\draw (7,2.8) -- (8.5,2.8);
		\draw (10,1.5) -- (10.7,1.5);
		
		\node at (4.7,0.4) {$E_2$};
		\node at (8.1,0) {$E_2$};
		\draw (4.5,0.2) -- (6,0.2);
		\draw (7,0.2) -- (8.5,0.2);
		
		\draw (6,2.2) rectangle (7,3.2);
		\node at (6.5,2.7) {$f$};
		\draw (6,-0.2) rectangle (7,0.8);
		\node at (6.5,0.3) {$g$};
		
		\node at (6.5,-0.5) {$\mathcal{A}=(f,g)$};
		\draw (5.2,-0.8) rectangle (7.8,3.8);

		
		
		\draw (8.5,-0.5) rectangle (10,3.2);
		\node at (9.2,1.5) {$\dec$};
		
	\end{tikzpicture} }
	\caption{Classical split-state tampering model.}\label{fig:splitstate121}
\end{figure}
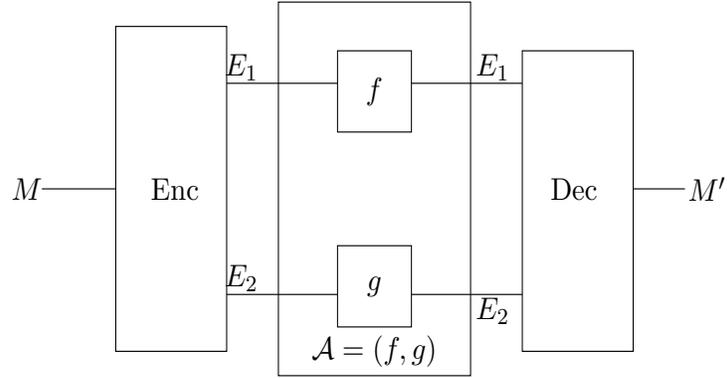

\begin{figure}
		\centering
   \resizebox{10cm}{5cm}{
		\begin{tikzpicture}
			

     \draw (2,4.3) -- (11,4.3);
     \node at (1.8,4.3) {$\hat{M}$};
      \node at (11.2,4.3) {$\hat{M}$};
			\node at (1.8,1.5) {$M$};
			\node at (11,1.5) {$M$};

\draw (1.8,2.8) ellipse (0.3cm and 1.8cm);
   
			
			\draw (2,1.5) -- (3,1.5);
			\draw (3,-0.5) rectangle (4.5,3.5);
			\node at (3.8,1.5) {$\enc$};
			
			
			
			\node at (4.7,3) {$E_1$};
			\node at (8.25,3) {$E_1$};
  
			\draw (4.5,2.8) -- (6,2.8);
			\draw (7,2.8) -- (8.6,2.8);
			\draw (10,1.5) -- (10.7,1.5);
			
			\node at (4.7,0.4) {$E_2$};
			\node at (8.25,0) {$E_2$};
			\draw (4.5,0.2) -- (6,0.2);
			\draw (7,0.2) -- (8.6,0.2);
			
			\draw (6,2) rectangle (7,3);
			\node at (6.5,2.5) {$U$};
			\draw (6,-0) rectangle (7,1);
			\node at (6.5,0.5) {$V$};
               \node at (6.5,1.5) {$\ket{\psi}_{W_1W_2}$};
			
			\node at (6.5,-0.4) {$\mathcal{A}=(U,V,\ket{\psi}_{})$};
			\draw (5,-0.8) rectangle (8,3.8);

			\draw (5.5,1.5) ellipse (0.3cm and 1.1cm);
			\node at (5.5,2) {$W_1$};
				\draw (5.7,2.2) -- (6,2.2);
			\node at (7.4,2) {$W_1$};
				\draw (7,2.2) -- (7.2,2.2);
				\node at (5.5,1) {$W_2$};
				\draw (5.7,0.7) -- (6,0.7);
			\node at (7.4,0.9) {$W_2$};
			\draw (7.0,0.7) -- (7.2,0.7);
			
			
			\draw (8.6,-0.5) rectangle (10,3.2);
			\node at (9.2,1.5) {$\dec$};
			
		\end{tikzpicture} }
		\caption{Split-state tampering model with shared entanglement. This shared entanglement is stored in registers $W_1$ and $W_2$.}\label{fig:splitstate211}
	\end{figure}
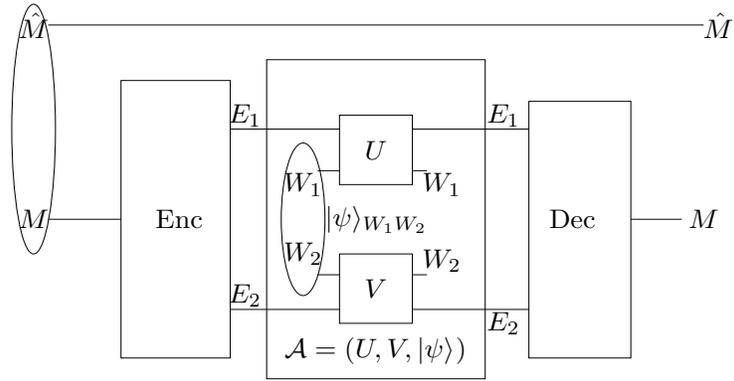

\subsection{Our contributions}

\subsubsection{Split-state non-malleable codes for quantum messages}\label{sec:nmcintro}

As our first contribution, we propose a definition of split-state non-malleability for quantum messages against adversaries with shared entanglement. Our definition is a natural extension of the one considered for classical messages in the literature~\cite{DPW10}. Here, we present it specifically for the $2$-split-state case, which is our main setting of interest.

Let $\sigma_M$ be an arbitrary state in a message register $M$, and $\sigma_{M\hat{M}}$ be its canonical purification.
We consider a ($2$-split-state) coding schemes given by an encoding Completely Positive Trace-Preserving (CPTP) map $\enc : \cL(\cH_M) \to \cL(\cH_{E_1} \otimes \cH_{E_2})$ and a decoding CPTP map $\dec : \cL(\cH_{E_1} \otimes \cH_{E_2}) \to \cL(\cH_M)$, where $\cL(\cH)$ is the space of all linear operators in the Hilbert space $\cH$.

The most basic property we require of this coding scheme $(\enc, \dec)$ is correctness (which includes preserving entanglement with external systems), i.e.,
\begin{equation*}
\dec(\enc(\sigma_{M\hat{M}})) = \sigma_{M\hat{M}},
\end{equation*}
where we use the shorthand $T$ to represent the CPTP map $T \otimes \id$ whenever the action of the identity operator $\id$ is clear from context.

Before we proceed to define split-state non-malleability, we describe the split-state adversarial tampering model in the quantum setting.
Let $\rho_{E_1 E_2} = \enc(\sigma_M)$ be the split-state encoding of message $\sigma_M$. A split-state tampering adversary $\mathcal{A}$ is specified by two tampering maps\footnote{Tampering maps are assumed to be unitary without any loss of generality. This is because, in the presence of unbounded arbitrary shared entanglement, tampering with unitary maps is equivalent to tampering with CPTP maps. More precisely, consider a tampering adversary that uses two CPTP maps $\Phi_1$ and $\Phi_2$ acting on registers $E_1 W_1$ and $E_2 W_2$, respectively. Then, the action of this adversary is equivalent to another adversary who tampers using Stinespring isometry extensions $U$ and $V$ of $\Phi_1$ and $\Phi_2$, respectively, which act on $E_1 W_1 A_1$ and $E_2 W_2 A_2$, respectively, where $A_1$ and $A_2$ are unentangled ancilla registers set to $\ket{0}$ without loss of generality and can be seen as part of the shared entanglement.} $U : \cL(\cH_{E_1}\otimes \cH_{W_1}) \to \cL(\cH_{E_1}\otimes\cH_{W_1})$ and $V : \cL(\cH_{E_2}\otimes \cH_{W_2}) \to \cL(\cH_{E_2}\otimes \cH_{W_2})$ along with a quantum state $\ket{\psi}_{W_1W_2}$ that captures the shared entanglement between the non-communicating tampering adversaries. Finally, the decoding procedure $\dec$ is applied to the tampered codeword. \cref{fig:splitstate211} presents a diagram of this tampering model.
Let
\begin{equation*}
\eta = \dec\left( ( U \otimes V) \left(\enc( \sigma_{M \hat{M}}) \otimes \ket{\psi}\bra{\psi} \right) ( U^\dagger \otimes V^\dagger) \right)
\end{equation*}
be the final state after applying the split-state tampering adversary $\mathcal{A}$ followed by the decoding procedure.

We are now ready to define split-state non-malleability of the coding scheme $(\enc,\dec)$.
\begin{definition}[Worst-case and average-case non-malleable codes for quantum messages]\label{def:qnmcodesfinaldef}  
We say that the coding scheme
$(\enc, \dec)$ is a \emph{(worst-case) $\eps$-non-malleable code for quantum messages} if for every split-state adversary $\cA=(U,V,\ket\psi_{W_1 W_2})$ and every quantum message $\sigma_M$ (with canonical purification $\sigma_{M\hat{M}}$)
    it holds that
    \begin{equation}\label{eq:splitnmc}
        \eta_{M\hat{M}} \approx_\eps p_{\cA} \sigma_{M \hat{M}}+(1-p_{\cA})\gamma^{\cA}_{M}\otimes \sigma_{\hat{M}},
    \end{equation}
where $p_{\cA}\in[0,1]$ and $\gamma^{\cA}_{M}$ depend only\footnote{By this, we mean that $p_{\cA}$ can be computed and the state $\gamma^{\cA}_{M}$ can be prepared without the knowledge of the input message $\sigma_{M\hat{M}}$.} on the split-state adversary $\cA$, and $\approx_\eps$ denotes that the two states are $\eps$-close in trace distance.

If \cref{eq:splitnmc} is only guaranteed to hold when $\sigma_{M}$ is the maximally mixed state, then we say that $(\enc,\dec)$ is an \emph{average-case $\eps$-non-malleable code for quantum messages}.
\end{definition}

\begin{remark}
    \em

    Intuitively, our definition of average-case non-malleability for quantum messages in \cref{def:qnmcodesfinaldef} is analogous to requiring that the \emph{average} non-malleability error of a given classical code is small when averaged over a uniformly random message. 
    Later, in \cref{thm:avgtoworst}, we show that every average-case non-malleable code for quantum messages is also a worst-case non-malleable code, though with a larger error.

\end{remark}

\cref{def:qnmcodesfinaldef} can be readily extended to encompass arbitrary classes of tampering adversaries. However, for the sake of readability, we do not provide the generalization here. In the context of split-state tampering, we present the following two results.

Our first result gives an explicit average-case $2$-split-state NMC for quantum messages with rate arbitrarily close to $1/11$.
\begin{theorem}[Average-case $2$-split-state NMC for quantum messages with constant rate]\label{thm:mainqnmc1}
For any fixed constant $\delta>0$ there exist an integer $n_0>0$ and $c\in(0,1)$ such that the following holds:
There exists a family of coding schemes $(\cC_n)_{n\in\mathbb{N}}$ where each $\cC_n$ has codeword length $n$ and message length $\lfloor \left(\frac{1}{11}-\delta\right)n\rfloor$ such that $C_n$ is average-case $\eps$-non-malleable for quantum messages with error $\eps = 2^{-n^{c}}$ for all integers $n\geq n_0$.
Furthermore, there exist encoding and decoding procedures for the family $(\cC_n)_{n\in\mathbb{N}}$ running in time $\poly(n)$.
\end{theorem}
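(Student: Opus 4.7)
The plan is to construct the scheme by composing three ingredients: (i) a constant-rate average-case classical $2$-split-state NMC that is itself secure against quantum adversaries with shared entanglement (either developed as an intermediate technical lemma in this paper, or obtained by rate-boosting the construction of Aggarwal, Boddu and Jain~\cite{ABJ22}), (ii) a keyed quantum authentication code with a short classical key (for instance, a Clifford authentication code or a Pauli-trap code), and (iii) a quantum one-time pad. The encoder for $\sigma_M$ will first authenticate the message using a fresh classical key $K_{\mathrm{auth}}$, then apply a Pauli one-time pad with key $K_{\mathrm{otp}}$ to the authenticated state, then split the resulting ciphertext qubits into two halves placed on the registers $E_1, E_2$, and finally append to $E_1$ and $E_2$ the two classical halves of the codeword produced by the classical NMC on the keys $(K_{\mathrm{auth}}, K_{\mathrm{otp}})$.

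For the analysis, I would reduce worst-case non-malleability on this composite code to two guarantees on the pieces: the non-malleability of the classical NMC applied to the keys, and the soundness of the authentication code. Invoking the classical NMC guarantee (with the adversary's shared entanglement treated as side information) splits the analysis into two branches. In the \emph{same} branch, the tampered keys coincide with the original keys; here, the Pauli one-time pad converts the adversary's unitary on the ciphertext qubits into a convex combination of Pauli channels acting on the authenticated state, independent of $\sigma_M$, and authentication soundness forces every non-identity Pauli to decode to the ideal \emph{reject} state, so that the output has the form $p \sigma_{M\hat M} + (1-p) \gamma^\cA_M \otimes \sigma_{\hat M}$ required by \cref{eq:splitnmc}. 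In the \emph{independent} branch, the tampered keys are a sample from some distribution independent of $(K_{\mathrm{auth}}, K_{\mathrm{otp}})$ and of $\sigma_M$; the decoder then verifies with a fresh authentication key and, by soundness, accepts only with negligible probability, again producing a state of the form $\gamma^\cA_M \otimes \sigma_{\hat M}$.

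The rate calculation proceeds as follows. The message occupies $m$ qubits; the Pauli one-time pad yields $m$ ciphertext qubits that are split equally ($m/2$ qubits on each side); the keys $(K_{\mathrm{auth}}, K_{\mathrm{otp}})$ have total bit length $O(m)$ dominated by the $2m$-bit Pauli key; and the classical NMC of rate $r_c$ close to $1/3$ encodes these keys into two classical halves of size roughly $3m$ each. Summing over the two shares gives $n \approx 2(m/2 + 3m) + o(m) = 7m + o(m)$ in the most optimistic bookkeeping, so obtaining exactly $n \approx 11m$ will require a slightly less efficient combination (for instance, an inner $2$-out-of-$2$ quantum secret sharing that inflates the quantum ciphertext, or an authentication code whose own rate must be charged), and the claimed rate $1/11 - \delta$ will be extracted by taking the constants in the classical NMC and in the authentication scheme sufficiently sharp, while the error $\eps = 2^{-n^c}$ inherits from the classical NMC's error and the $2^{-\Omega(k)}$ authentication error when the key length is taken polynomial in $n$.

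The main obstacle will be the first case of the analysis under entangled adversaries: ensuring that the Pauli one-time pad on the ciphertext qubits remains informationally hiding \emph{conditioned on} the adversary's entangled register $\ket{\psi}_{W_1 W_2}$, so that the induced channel on $\sigma_M$ factors as a mixture of Paulis independent of $\sigma_M$. This requires treating the entanglement as quantum side information on the keys and then invoking a quantum-secure guarantee of the classical NMC — precisely the content of ingredient (i). The cleanest route is to formulate the classical NMC guarantee in the strong form that the joint state on the keys and the two halves of $\ket\psi$, after tampering, is close to a convex combination of the ideal \emph{same} and \emph{independent} distributions tensored with fixed quantum states on $W_1 W_2$, and then to verify that both branches of the composite analysis need only this factorization, not full unconditional independence.
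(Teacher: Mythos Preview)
Your proposal differs from the paper's construction in two structural ways, and both lead to genuine gaps.

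First, the paper does \emph{not} use a quantum authentication scheme with trap qubits. It uses a single random Clifford $C_R$ drawn from the subgroup $\cSC(\cH_M)$ of \cref{lem:subclifford}, applied to the bare message with no appended traps, and the entire quantum ciphertext $Z=C_R\sigma_M C_R^\dagger$ sits on \emph{one} side only. The reason is spelled out in the technical overview and in \cref{lem:modtamptranspose,lem:modtampdelay}: the analysis hinges on the transpose method (\cref{fact:transposetrick}), which lets one move $C_R$ from $M$ to $\hat{M}$ and then delay it past the adversary. This is what makes the non-malleable extractor guarantee of \cref{lem:qnmcodesfromnmext} applicable with $\hat{M}$ absorbed into the adversary's side register $W_2$. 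The transpose method requires that the operator being moved act on one half of a maximally entangled pair; with trap qubits appended, the Clifford acts on $\sigma_M\otimes\ket{0}^{\otimes t}$, which is not of this form, and the paper explicitly notes that the trick fails in that case. Your authentication-based route offers no substitute for handling the correlation between $\hat{M}$ and the key, which is the central obstacle (the paper observes that $(\sigma_1)_{\hat{M}ZXY}\neq(\sigma_1)_{\hat{M}Z}\otimes(\sigma_1)_{XY}$ even when $\sigma_M=U_M$).

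Second, your ingredient (i) --- a constant-rate quantum-secure classical split-state NMC --- is not available. What the paper uses is the quantum-secure \emph{non-malleable randomness encoder} of \cite{BBJ23} (\cref{lem:qnmcodesfromnmext}), which has rate close to $1/2$ but only encodes a uniformly random key and whose non-malleability guarantee is one-sided: in the ``same'' branch, $R$ is uniform conditioned on the post-tampering register $W_2$ of a \emph{single} side. Because the paper's quantum ciphertext lives entirely on that side, this one-sided guarantee suffices. Your construction splits the ciphertext across both shares, so the twirl you need in the ``same'' branch would require $R$ to be uniform conditioned on both adversaries' residual registers simultaneously, and that is not what the primitive provides. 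The rate $\tfrac{1}{11}$ in the paper is exactly $5\cdot 2+1$: a $5|M|$-bit key for the Clifford subgroup, inflated by the NMRE of rate $\approx\tfrac12$ into $\approx 10|M|$ classical bits, plus $|M|$ ciphertext qubits on one side; your accounting with a rate-$1/3$ NMC and $2m$-bit Pauli keys does not reproduce this.
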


Our second result, which builds on \cref{thm:mainqnmc1}, gives an explicit construction of a worst-case $2$-split-state NMC for quantum messages.
\begin{theorem}[Worst-case $2$-split-state NMC for quantum messages]\label{thm:mainqnmc}
    There exist constants $c\in(0,1)$ and $n_0\in\mathbb{N}$ such that the following holds:
    There exists a family of coding schemes $(\cC_n)_{n\in\mathbb{N}}$ where each $\cC_n$ has codeword length $n$ and message length $\lfloor n^{c}\rfloor$ such that $\cC_n$ is $\eps$-non-malleable for quantum messages with error $\eps=2^{-n^{c}}$ for all integers $n\geq n_0$.
    Furthermore, there exist encoding and decoding procedures for the family $(\cC_n)_{n\in\mathbb{N}}$ running in time $\poly(n)$.
\end{theorem}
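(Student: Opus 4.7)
The plan is to derive \cref{thm:mainqnmc} directly from \cref{thm:mainqnmc1} by invoking the average-to-worst-case transformation \cref{thm:avgtoworst} mentioned in the preceding remark. Concretely, I would fix some constant $\delta>0$ and start from the family of average-case $2$-split-state NMCs for quantum messages with codeword length $n$, message length $k_{\mathsf{avg}} := \lfloor (1/11-\delta)n\rfloor$, and error $\eps_{\mathsf{avg}}=2^{-n^{c_1}}$ supplied by \cref{thm:mainqnmc1} for some constant $c_1 \in (0,1)$. Since this scheme has a much larger message register than the $\lfloor n^c \rfloor$ qubits targeted by \cref{thm:mainqnmc}, the first step is to cut it down to a still-average-case scheme that encodes only $k := \lfloor n^c \rfloor$ qubits, where $c \in (0,1)$ is a constant to be chosen below. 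I would do this by padding with maximally mixed ancillas: define $\enc'$ on a $k$-qubit input by first tensoring it with the maximally mixed state on a fresh $(k_{\mathsf{avg}}-k)$-qubit ancilla register $N$ and then applying $\enc$, and define $\dec'$ by applying $\dec$ and tracing out $N$. Because the maximally mixed state on $k$ qubits tensored with the maximally mixed state on $N$ equals the maximally mixed state on the full $k_{\mathsf{avg}}$-qubit register, the scheme $(\enc',\dec')$ inherits average-case $\eps_{\mathsf{avg}}$-non-malleability from $(\enc,\dec)$ against the same class of split-state adversaries with shared entanglement.

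Next, I would apply \cref{thm:avgtoworst} to $(\enc',\dec')$. The expected blow-up in quantum average-to-worst conversions is exponential in the message length $k$ (roughly a factor of $4^k$, reflecting the dimension of the real vector space of $k$-qubit Hermitian operators), giving a worst-case error of at most $2^{O(k)} \cdot \eps_{\mathsf{avg}} = 2^{O(n^c) - n^{c_1}}$. Choosing $c < c_1$ small enough (for instance $c \leq c_1/2$ once the hidden constant in the $O(\cdot)$ is absorbed) keeps this bound at $2^{-n^{c_1}/2}$, which is of the form $2^{-n^{c_0}}$ for some $c_0 \in (0,1)$. Taking the final constant in \cref{thm:mainqnmc} to be $\min(c,c_0)$ then yields a family $(\cC_n)_n$ with codeword length $n$, message length $\lfloor n^{c}\rfloor$, and worst-case non-malleability error $2^{-n^{c}}$, as required. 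Polynomial-time encoding and decoding follow from efficiency of the base scheme in \cref{thm:mainqnmc1}, efficient preparation of maximally mixed ancillas, efficient partial trace, and efficiency of the reduction underlying \cref{thm:avgtoworst}.

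The main obstacle I foresee is pinning down the precise form of the blow-up in \cref{thm:avgtoworst}: any overhead at most exponential in $k$ with a modest constant in the exponent can be absorbed by shrinking $c$, but an overhead growing faster than $n^{c_1}$ would break the chain. This hinges on the internal structure of the proof of \cref{thm:avgtoworst}, which presumably proceeds either by expanding an arbitrary quantum message in the Pauli basis and applying the average-case guarantee term-by-term, or via a Choi--Jamio{\l}kowski/teleportation-style reduction that rephrases a worst-case input as tampering on the $\hat{M}$ half of an EPR pair (which is the canonical purification of the maximally mixed state). Once this blow-up factor is quantified explicitly, the remainder of the argument reduces to bookkeeping in the choice of the final constant $c$.
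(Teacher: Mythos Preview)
Your proposal is correct and matches the paper's approach: the paper proves \cref{thm:mainqnmc} exactly by applying the average-to-worst-case \cref{thm:avgtoworst} to the average-case code of \cref{thm:mainqnmc1} with the message length shrunk to $b\leq n^c$, and your padding-with-maximally-mixed-ancillas step is precisely how the paper accommodates shorter messages. The blow-up in \cref{thm:avgtoworst} is exactly $2^b$ (not $4^b$), obtained via the quantum rejection sampling route you anticipated (measuring the purification register $\hat{M}$ to condition the maximally entangled state onto an arbitrary message succeeds with probability $2^{-b}$); note also that this reduction is purely analytical and does not alter the encoding or decoding procedures, so no separate efficiency argument for the reduction is needed.
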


In fact, we show something stronger: The explicit code from \cref{thm:mainqnmc} is actually a $2$-out-of-$2$ non-malleable secret sharing scheme for quantum messages with share size $n$, any message of length at most $n^{\Omega(1)}$, and error $\eps=2^{-n^{\Omega(1)}}$.
We refer the reader to \cref{sec:intronmss} for more details on non-malleable secret sharing.

\subsubsection{Split-state non-malleable secret sharing schemes for quantum messages}\label{sec:intronmss}

We begin by defining threshold non-malleable secret sharing schemes for quantum messages. This is an extension of our definition of $2$-split-state non-malleable codes in \cref{sec:nmcintro}, and it is analogous to the definition in the classical setting~\cite{GK16}. We focus on threshold schemes for simplicity, but the definition can be easily generalized to other access structures.

Let $\sigma_M$ be an arbitrary state in a message register $M$, and $\sigma_{M\hat{M}}$ be its canonical purification. An NMSS scheme for $p$ parties is composed of two CPTP maps $(\nmshare, \nmrec)$. $\nmshare$ is a sharing procedure $\nmshare : \cL(\cH_M) \to \cL(\cH_{S_1}\otimes \cH_{S_2} \otimes \cdots \otimes \cH_{S_p})$, where $S_i$ is the share register for the $i$-th party, and $\nmrec$ is a reconstruction procedure $\nmrec : \cL( \bigotimes_{i \in T}  \cH_{S_i}) \to \cL(\cH_{M})$, where $\cL(\cH)$ is the space of all linear operators in the Hilbert space $\cH$.
The reconstruction procedure $\nmrec$ acts on any authorized subset of shares $T$ to reconstruct the original message.

Before we proceed to define non-malleability, we describe the adversarial tampering model for secret sharing. 
This is a simple extension of the $2$-split-state tampering model described in \cref{sec:nmcintro}.
Fix an authorized subset $T$ of parties. 
A tampering adversary $\cA_T$ is specified by $|T|$ tampering maps\footnote{Tampering maps are assumed to be unitary without any loss of generality.}
$U_i:\cL(\cH_{S_i}\otimes \cH_{W_i})\to\cL(\cH_{S_i}\otimes\cH_{W_i})$ for $i \in  T$ along with a quantum state $\ket\psi_{W_1 W_2 \dots W_p}$ which captures the shared entanglement between non-communicating local tampering adversaries. We denote the overall tampered state by $\tau^{\cA_T}$, which may be written as
\begin{equation*}
    \tau^{\cA_T} = \left(\left(\bigotimes_{i\in T}U_i\right)\otimes\left(\bigotimes_{j\not\in T} \id\right)\right)\left( \nmshare(\sigma_{M \hat{M}}) \otimes \ketbra{\psi}_{W_1 W_2 \dots W_p} \right)\left(\left(\bigotimes_{i\in T}U^\dagger_i\right)\otimes\left(\bigotimes_{j\not\in T} \id\right)\right).
\end{equation*}
Let the final state of the tampering experiment be
\begin{equation*}
    \eta = \nmrec\left(\tau^{\cA_T}_{S_T}\right).
\end{equation*}
We are now ready to define split-state NMSS schemes, in a similar manner to \cref{def:qnmcodesfinaldef} for split-state NMCs.
\begin{definition}[Threshold non-malleable secret sharing scheme for quantum messages]\label{def:qnmss}
The coding scheme $(\nmshare,\nmrec)$ is said to be a \emph{$t$-out-of-$p$ $(\epspriv,\epsnm)$-non-malleable secret sharing scheme for quantum messages} if for any quantum message ${\sigma_M} \in \cD(\cH_M)$ (with canonical purification  $\sigma_{M\hat{M}}$) the following properties are satisfied:
\begin{itemize}
    \item \textbf{Correctness:} For any $T \subseteq [p]$ such that $ \vert T \vert \geq t$  it holds that
    \begin{equation*}
        \nmrec (\nmshare( \sigma_{M \hat{M}})_{S_{T}}) = \sigma_{M\hat{M}},
    \end{equation*}
    where we write $S_T=(S_i)_{i\in T}$.

    \item \textbf{Statistical privacy:} For any $T \subseteq [p]$ such that $ \vert T \vert \leq t-1$ it holds that
    \begin{equation*}
        \nmshare( \sigma)_{\hat{M}S_{T}}  \approx_{\epspriv} \sigma_{\hat{M}} \otimes  \zeta_{S_{T}},
    \end{equation*}
    where $\zeta_{S_{T}} $ is a fixed quantum state independent of $\sigma_{M \hat{M}}$.
    
    \item \textbf{Non-malleability:} For any $T \subseteq [p]$ such that $ \vert T \vert = t$, it holds that
    \begin{equation}\label{eq:qnmssdef}
        \eta_{M\hat{M}} \approx_{\epsnm} p_{\cA_T} \sigma_{M \hat{M}}+(1-p_{\cA_T})\gamma^{\cA_T}_{M}\otimes \sigma_{\hat{M}},
    \end{equation}
where $p_{\cA_T}\in[0,1]$ and $\gamma^{\cA_T}_{M}$ depend only\footnote{By this, we mean that $p_{\cA_T}$ can be computed and the state $\gamma^{\cA_T}_{M}$ can be prepared without the knowledge of the input message $\sigma_{M\hat{M}}$.} on the threshold tampering adversary $\cA_T$, and $\approx_\eps$ denotes that the two states are $\eps$-close in trace distance.

If \cref{eq:qnmssdef} is only guaranteed to hold when $\sigma_{M}$ is the maximally mixed state, then we say that $(\nmshare,\nmrec)$ is an \emph{average-case} $t$-out-of-$p$ $(\epspriv,\epsnm)$-non-malleable secret sharing scheme for quantum messages.
 \end{itemize}
\end{definition}

As our main result in this direction, which builds on \cref{thm:mainqnmc}, we construct efficient split-state NMSS schemes for quantum messages realizing more general threshold access structures with low privacy and non-malleability errors.

\begin{restatable}[Split-state threshold NMSS schemes for quantum messages]{theorem}{qnmss}\label{thm:mainqnmss}
    There exist constants $c,C>0$ and an integer $n_0\in\mathbb{N}$ such that the following holds for any number of parties $p$ and threshold $t\geq 3$ such that $t\leq p\leq 2t-1$ and for any $n\geq n_0$: 
    There exists a family of $t$-out-of-$p$ $(\epspriv=\eps,\epsnm=\eps)$-non-malleable secret sharing schemes for quantum messages with shares of size at most $(pn)^C$, message length $\lfloor n^c\rfloor$, and error $\eps=2^{-n^c}$.
    Furthermore, the sharing and reconstruction procedures of this scheme can be computed in time polynomial in $p$ and $n$.
\end{restatable}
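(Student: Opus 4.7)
The plan is to construct the $t$-out-of-$p$ NMSS for quantum messages by composing the 2-split-state NMC of \cref{thm:mainqnmc} with an inner $t$-out-of-$p$ quantum threshold secret sharing scheme. Concretely, I would first encode $\sigma_M$ using the NMC to obtain codeword halves $(E_1,E_2)=\enc(\sigma_M)$, and then, using two independent instances of a quantum threshold secret sharing scheme (for example, one based on Cleve--Gottesman--Lo-type CSS constructions, which exist in the regime $t\leq p\leq 2t-1$), separately share $E_1$ and $E_2$ among the $p$ parties. Party $i$'s share is the pair $(S_i^{(1)},S_i^{(2)})$ of its two inner shares. Plugging in the parameters of \cref{thm:mainqnmc} (codeword length $n$, message length $\lfloor n^c\rfloor$, error $2^{-n^c}$) and noting that each inner share has size $\poly(n,p)$ gives the claimed share size $(pn)^C$.

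Correctness is immediate: any authorized $T\subseteq[p]$ with $|T|=t$ can reconstruct both $E_1$ and $E_2$ from the inner shares and then apply $\dec$, thereby recovering $\sigma_{M\hat M}$. Statistical privacy follows directly from the statistical privacy of the inner quantum threshold SS, since any $T$ with $|T|\leq t-1$ reveals no information about either $E_1$ or $E_2$, and hence none about $\sigma_M$.

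The key technical step is non-malleability, which I would establish by a reduction to the non-malleability of the outer NMC. Given any authorized $T$ of size $t$ and any NMSS adversary $\cA_T=((U_i)_{i\in T},\ket{\psi}_{W_1\cdots W_p})$, I would build a 2-split-state NMC adversary $(U,V,\ket{\psi'})$. The shared state $\ket{\psi'}$ packages $\ket{\psi}$ together with additional shared entanglement and classical randomness used to coordinate the simulation across the two sides. Intuitively, $U$ acts on $E_1$ by sampling/inheriting the randomness that generates the inner shares $(S_i^{(1)})_{i\in T}$, applying (via the shared entanglement) the $E_1$-portion of each $U_i$, and reconstructing a tampered $\tilde{E}_1$; the right unitary $V$ acts symmetrically on $E_2$. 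Invoking the non-malleability guarantee of \cref{thm:mainqnmc} on $(U,V,\ket{\psi'})$ would then yield the required form $p_{\cA_T}\sigma_{M\hat M}+(1-p_{\cA_T})\gamma^{\cA_T}_{M}\otimes\sigma_{\hat M}$ up to error $\eps=2^{-n^c}$.

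The main obstacle is that each NMSS adversary $U_i$ acts \emph{jointly} on $(S_i^{(1)},S_i^{(2)})$, whereas a 2-split-state NMC adversary must consist of two unitaries acting separately on $E_1$ and $E_2$. To bridge this gap, I would equip the inner sharing scheme with a quantum leakage-resilience / augmented-privacy property, so that the $U$-side simulation can produce statistically indistinguishable ``dummy'' copies of the $S_i^{(2)}$ registers without knowing $E_2$ (and symmetrically on the $V$-side), while the extra shared entanglement in $\ket{\psi'}$ keeps the two simulations jointly consistent when applying the $U_i$'s. The constraint $t\geq 3$ should enter precisely here, providing enough slack among the parties in any authorized set for the leakage-resilience hybrid argument to go through, while the constraint $p\leq 2t-1$ is what makes the inner quantum threshold SS exist in the first place (by no-cloning).
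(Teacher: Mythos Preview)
Your reduction to $2$-split-state non-malleability has a genuine gap stemming from the symmetric design. To build the left adversary $U$, you must produce the tampered $\tilde{E}_1$ from $E_1$ alone (plus shared entanglement independent of $E_2$). But $\tilde{E}_1$ is reconstructed from the $t$ tampered shares $(\tilde S_i^{(1)})_{i\in T}$, each of which is output by $U_i$ acting jointly on $(S_i^{(1)},S_i^{(2)},W_i)$; hence the left side implicitly depends on all $t$ shares $(S_i^{(2)})_{i\in T}$, which together determine $E_2$. Your fix is to feed in ``dummy'' $S_i^{(2)}$'s and invoke leakage-resilience of the inner scheme. Symmetrically, the right side would need leakage-resilience against leakage consisting of the tampered $S_i^{(1)}$'s. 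Any leakage-resilient scheme needs share size larger than the tolerated leakage, so you would simultaneously need $|S_i^{(1)}|\gg |S_i^{(2)}|$ and $|S_i^{(2)}|\gg |S_i^{(1)}|$, which is impossible. The appeal to shared entanglement ``keeping the two simulations jointly consistent'' does not resolve this: the two non-communicating sides cannot both apply the same $U_i$ on the same entangled register $W_i$, nor can they manufacture dummy shares for the other half that are jointly consistent with the real ones without communication.

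The paper breaks this circularity with an \emph{asymmetric} construction that exploits a special feature of its NMC: the right half $R$ of $\enc(\sigma_M)$ is \emph{classical}. The quantum left half $L$ is shared with a standard $t$-out-of-$p$ quantum secret sharing scheme (no leakage-resilience needed), while the classical $R$ is shared with a $2$-out-of-$p$ \emph{augmented leakage-resilient} scheme (threshold $2$ would violate no-cloning for quantum messages once $p\geq 4$, so classicality of $R$ is essential here). In the reduction, only one direction uses leakage-resilience: the tampered $L_i$'s, of a fixed size $\ell$, are treated as bounded leakage on the $R_i$'s, whose shares can then be made large enough to absorb it. The other direction uses only threshold privacy: since $R$ is shared $2$-out-of-$p$, the tampered $R'$ is reconstructed from just $\tilde R_1,\tilde R_2$, which depend only on $L_1,L_2$; because $t\geq 3$, these two $L$-shares reveal nothing about $L$. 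This inequality $2<t$ is exactly where the constraint $t\geq 3$ enters---not as ``slack for a leakage-resilience hybrid.''
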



Combined with our previously discussed $2$-out-of-$2$ non-malleable secret sharing scheme for quantum messages, \cref{thm:mainqnmss} covers all remaining threshold access structures for which secret sharing is possible in the quantum setting (i.e., those which do not violate no-cloning) except $2$-out-of-$3$.
We leave constructing a $2$-out-of-$3$ non-malleable secret sharing scheme for quantum messages as a very interesting open problem.

Finally, an analogous realization of the approach behind \cref{thm:mainqnmss} using the same techniques allows us to obtain \emph{classical} threshold non-malleable secret sharing schemes secure against quantum adversaries with shared entanglement. 
See \cref{sec:qsqnmss} for more details, including a formal result.

\subsection{Other related work}

In this section, we discuss relevant prior work on classical NMCs and on quantum non-malleability beyond what we covered above.

\paragraph{Classical NMCs.}

The first work on NMCs by Dziembowski, Pietrzak, and Wichs~\cite{DPW10} showed that, surprisingly, there exists a (possibly inefficient) NMC against any family of at most $2^{2^{\alpha n}}$ tampering functions, where $\alpha<1$ is an arbitrary constant, and $n$ is the message length. The best possible rate of (possibly inefficient) NMCs was studied by Cheraghchi and Guruswami~\cite{CG14a}, who showed that split-state NMCs can have a rate of at most $1/2$. To complement the above, \cite{DPW10} constructed efficient NMCs in the \emph{bitwise tampering model}, a strictly weaker model than split-state tampering, where each bit of the codeword is tampered independently. As discussed before, this spurred a deep line of work which recently culminated in explicit constructions of constant-rate NMCs in the split-state model~\cite{AO20,AKOOS22}.

Several works have also studied NMCs against computationally-bounded adversaries from various hardness assumptions and setups, such as a common reference string. Naturally, in such restricted settings, it is possible to achieve a better rate while allowing the adversary to perform many adaptive tamperings in a row. Computationally-restricted tampering models include polynomial-time algorithms~\cite{FMNV14}, split-state polynomial-time adversaries~\cite{AAGMPP16}, bounded-depth circuits~\cite{BDKM16,BDKM18,BFMV22}, low-degree polynomials~\cite{BCLMT20}, decision trees~\cite{BDKM18,BGW19}, and streaming space-bounded algorithms \cite{FHMV17,BDKM18}. Some other works have studied constructions of short NMCs based on conjectured properties of practical block ciphers~\cite{FKM18,BFRV22} or extractable hash functions~\cite{KLT16}.

We note that existing works in these settings only consider classical messages and adversaries.

\paragraph{Classical NMSS schemes.}

The notion of non-malleable secret sharing first appeared implicitly in the work of Aggarwal, Dziembowski, Kazana, and Obremski~\cite{ADKO15}, where it was shown that every classical $2$-split-state NMC is also a $2$-out-of-$2$ secret sharing scheme with statistical privacy. NMSS schemes were then studied explicitly and in much greater generality by Goyal and Kumar~\cite{GK16,GK18}, leading to a long line of research on classical split-state non-malleable secret sharing, as mentioned above.

Classical NMSS schemes have also been studied in tampering models beyond split-state tampering. For example, the original works of Goyal and Kumar~\cite{GK16,GK18} also consider a "joint" tampering model where, under certain restrictions, tampering adversaries may tamper with \emph{subsets} of multiple shares, instead of only a single share. Stronger joint tampering models have been considered in the computational~\cite{BFOSV20} and information-theoretic settings~\cite{GSZ21}.

In an orthogonal direction, other notions of non-malleability where the adversary learns the reconstruction of tampered secrets with respect to multiple authorized subsets of parties have also been studied~\cite{ADNOP19,BFOSV20}.

\paragraph{Other notions of quantum non-malleability.}

Some other notions of non-malleability for quantum messages have been studied in the context of \emph{keyed} coding schemes by Ambainis, Bouda, and Winter~\cite{ABW09} and Alagic and Majenz~\cite{AM17}. In the keyed setting, quantum authentication schemes~\cite{BW16} also provide non-malleability, since they allow one to detect tampering on the encoded quantum data.

We can view the setting studied in~\cite{ABW09,AM17} for keyed coding schemes in the split-state model as follows (see \cref{fig:splitstate21} for a diagram of the model): Let $\sigma_M$ be the quantum message, and $R$ be the key shared between the encoder and decoder. Let $\sigma_M$ be encoded to $\rho_Z= \enc_R(\sigma_M)$. Then, the adversary tampers $\rho_Z \to \tau_{Z}$, and the decoder outputs $\eta_{M}= \dec_R(\tau_{Z})$. Since the key $R$ is available at both the encoder and decoder, we can view $R$ as being the second part of the codeword, with the register $Z$ being the first part of the codeword. Observe that in this model, the adversary is only allowed to tamper with the first part of the codeword, while our split-state tampering model allows the adversary to simultaneously but independently tamper with both parts of the codeword.
Alagic and Majenz~\cite{AM17} used unitary $2$-designs in their protocol for encoding and decoding. However, we note that their security definition involving mutual information appears a priori different from our \cref{def:qnmcodesfinaldef}.

The question of whether one can even build split-state non-malleable coding schemes for quantum messages when there is no shared key and when the split-state adversary can tamper with both registers (with or without shared entanglement) remained open. In this work, we resolve this question in the affirmative.

\cref{table:nmcs} summarizes the main properties of known constructions of split-state NMCs and related constructions of keyed quantum schemes.

\paragraph{Concurrent work.} 
In concurrent and independent work, Bergamaschi~\cite{Ber23} introduces, among other things, a natural quantum analogue of the \emph{bitwise tampering} model~\cite{DPW10}, where each qubit of the encoding is tampered independently, and constructs high-rate codes in this setting satisfying a restricted form of keyless authentication called tamper detection which is somewhat stronger than non-malleability. We note that this result is incomparable to ours. First, we consider tampering adversaries with access to shared entanglement, while Bergamaschi~\cite{Ber23} only studies the weaker setting where the various tampering adversaries are unentangled. 
In fact, tamper-detection codes (where the receiver either recovers the original message or aborts if they detect the adversary) are impossible to construct against adversaries with access to shared entanglement.
This is because such adversaries can replace the quantum ciphertext with a fixed valid codeword.
Second, the split-state tampering adversaries we study are much more powerful than bitwise tampering adversaries. 

As already discussed above, in another concurrent work Batra, Boddu, and Jain~\cite{BBJ23} constructed a classical $2$-split-state non-malleable randomness encoder secure against quantum adversaries having shared entanglement with rate close to $1/2$.
They use it to build constant-rate quantum secure $3$-split-state NMCs for quantum messages.

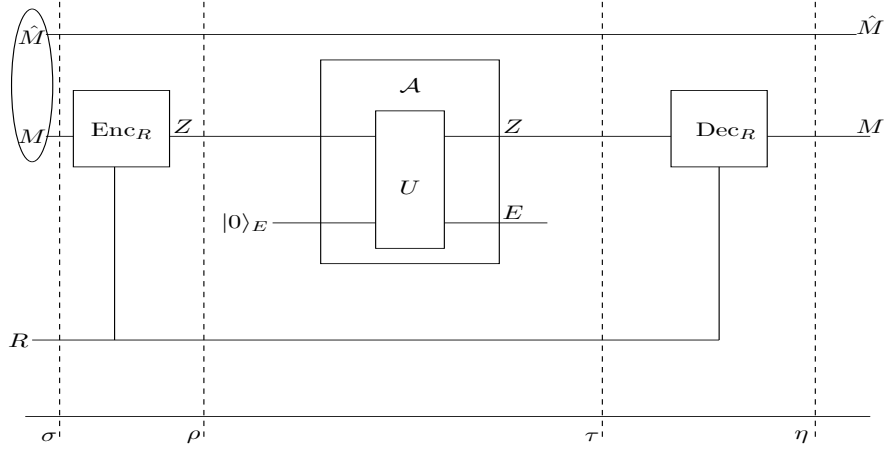
\begin{figure}
\centering
\resizebox{12cm}{6cm}{
\begin{tikzpicture}

\node at (1,6.5) {$\hat{M}$};
\node at (13.2,6.7) {$\hat{M}$};
\draw (1.2,6.5) -- (13,6.5);
\draw (1,5.5) ellipse (0.3cm and 1.5cm);
\draw (1.6,3.9) rectangle (3,5.4);
\draw (10.3,3.9) rectangle (11.7,5.4);
\node at (1,4.5) {$M$};

\node at (2.3,4.6) {$\enc_R$};
\node at (11.1,4.6) {$\dec_{R}$};


\draw (1.2,4.5) -- (1.6,4.5);
\draw (3,4.5) -- (6,4.5);
\draw (7,4.5) -- (10.3,4.5);
\draw (11.7,4.5) -- (13.2,4.5);
\node at (13.2,4.7) {$M$};
\node at (0.8,0.5) {$R$};
\draw (1,0.5) -- (11,0.5);


\draw  (2.2,0.5) -- (2.2,3.9);

\draw [dashed] (1.4,-1.4) -- (1.4,7.2);
\draw [dashed] (3.5,-1.4) -- (3.5,7.2);
\draw [dashed] (9.3,-1.4) -- (9.3,7.2);
\draw [dashed] (12.4,-1.4) -- (12.4,7.2);

\draw (0.9,-1) -- (13.2,-1);

\node at (1.25,-1.4) {$\sigma$};
\node at (3.35,-1.4) {$\rho$};
\node at (9.15,-1.4) {$\tau$};
\node at (12.2,-1.4) {$\eta$};

\node at (3.2,4.7) {$Z$};
\node at (8,4.7) {$Z$};
\node at (4.1,2.8) {$\ket{0}_E$};
\node at (8,3) {$E$};
\draw (4.5,2.8) -- (6,2.8);
\draw (7,2.8) -- (8.5,2.8);

\draw (11,0.5) -- (11,3.9);


\draw (6,2.3) rectangle (7,5);
\node at (6.5,3.5) {$U$};

\node at (6.5,5.5) {$\mathcal{A}$};
\draw (5.2,2) rectangle (7.8,6);





\end{tikzpicture}}

\caption{Quantum NMC with shared key.}\label{fig:splitstate21}
\end{figure}

\begin{table}
\centering
\begin{tabular}{||c c c c c||} 
 \hline
 \textbf{Work by} & \textbf{Rate} & \textbf{Messages} & \textbf{Adversary} & \textbf{Shared key}\\ [1ex] 
 \hline\hline 
\cite{CGL15} & $\frac{1}{\poly(n)}$  & classical  & classical  & No \\ [1ex] 
 \hline
 \cite{Li17} & $\Omega\left( \frac{1}{\log n}\right)$ & classical & classical & No\\ [1ex] 
 \hline
 \cite{Li19} & $ \Omega\left(\frac{ \log \log n}{\log n}\right)$ & classical & classical & No \\ [1ex] 
 \hline
 \cite{AO20} & $\Omega(1)$ & classical & classical & No \\[0.5ex] 
 \hline
  \cite{Li23} & $\Omega(1)$ & classical & classical & No \\[0.5ex] 
 \hline
 \cite{AKOOS22} & $1/3$ & classical & classical & No \\[0.5ex] 
 \hline
 \cite{ABJ22} & $\frac{1}{\poly(n)}$ & classical & \textbf{quantum} & No \\[0.5ex] 
 \hline
 \cite{BBJ23} & $\approx 1/5$ & {(average-case) classical} & \textbf{quantum} & No \\[0.5ex] 
 \hline
\cite{AM17} & $\Omega(1)$ & \textbf{quantum} & \textbf{quantum} & {\color{red} Yes} \\[0.5ex] 
\hline
 This work & $\approx 1/11$ & \textbf{(average-case) quantum} & \textbf{quantum}  & No\\ [0.5ex] 
 \hline
 This work & $\frac{1}{\poly(n)}$ & \textbf{quantum} & \textbf{quantum}  & No\\ [1ex] 
 \hline
\end{tabular}

\vspace{0.5cm}

\caption{Comparison between the best known explicit constructions of $2$-split-state NMCs. Here, $n$ denotes the codeword length.}
\label{table:nmcs}
\end{table}

\subsection{Technical overview}

\subsubsection{Split-state NMCs for quantum messages}\label{sec:techoverNMC}
We follow a high-level strategy similar to that of Aggarwal, Agrawal, Gupta, Maji, Pandey, and Prabhakaran~\cite{AAGMPP16}, who used authenticated encryption to transform any (augmented) split-state non-malleable code into a high-rate split-state non-malleable code resilient to computationally-bounded tampering in the classical setting. The main challenge we have to deal with, and which is not considered in~\cite{AAGMPP16}, is that our tampering adversaries have quantum capabilities and are allowed to a priori share arbitrary \emph{entangled quantum states}.

As a warm-up to our main construction, let us consider a simpler and natural approach that yields a $3$-split-state non-malleable code for quantum messages. First, we observe that if we have a shared classical secret key $R$ between the encoder and decoder, then we can detect whether \emph{arbitrary} tampering has occurred by encoding the quantum message $\sigma_{M}$ with a quantum authentication scheme $(\auth_R,\ver_R)$~\cite{BCGST02,BW16}. Therefore, a reasonable approach would be to first encode the message using this quantum authentication scheme, then encode the key $R$ using an existing split-state non-malleable code for classical messages, and lastly output the split-state encoding of $R$ as part of the final codeword.

Fortunately, since $R$ is a classical string, we can use the split-state non-malleable code $(\cenc,\cdec)$ of Aggarwal, Boddu, and Jain~\cite{ABJ22}, which protects against quantum adversaries!
Less fortunately, $\cenc(R)$ yields a codeword with two parts, call them $R_1$ and $R_2$, that cannot be stored together.

To elaborate on this approach, let's define a quantum authentication scheme as $(\auth_R, \ver_R)$, where $\auth_R$ is a quantum encoding procedure that takes a quantum state $\sigma_M$ and a classical secret key $R$ and outputs an authenticated state $\psi = \auth_R(\sigma_M)$. The verification procedure $\ver_R$ takes an authenticated state $\psi'$ and the secret key $R$ as input, and outputs either $1$ (accept) or $0$ (reject). The overall encoding procedure $\enc(\sigma_M)$ for our $3$-split-state non-malleable code is as follows:
\begin{enumerate}
    \item Sample a classical secret key $R$ uniformly at random from an appropriate keyspace.
 \item Compute the authenticated state $\psi = \auth_R(\sigma_M)$ using the quantum authentication scheme.
    \item Compute the classical split-state encoding $(R_1, R_2) = \cenc(R)$.
    \item Output the $3$-state codeword $(\psi, R_1, R_2)$.
\end{enumerate}

To decode a possibly tampered codeword $(\psi', R'_1, R'_2)$, the decoding procedure $\dec$ works as follows:
\begin{enumerate}
    \item 
Use the classical decoder $\cdec(R'_1, R'_2)$ to obtain a candidate key $R'$.
      \item Run the verification procedure $\ver_{R'}(\psi')$ to check the authenticity of the quantum state $\psi'$ using the candidate key $R'$.
    \item If the verification procedure outputs $1$ (accept), then output $\dec(\psi', R'_1, R'_2)$. Otherwise, output a special symbol to indicate tampering.

\end{enumerate}

This approach leverages the quantum authentication scheme to detect any tampering of the quantum message, and the classical split-state non-malleable code to protect the classical key $R$ against tampering. Intuitively, if we consider quantum adversaries \emph{without} shared entanglement, the non-malleability of the $3$-state coding scheme can be understood as follows: If the adversary attempts to tampering with $R_1$ and $R_2$, then the properties of the classical split-state non-malleable code $(\cenc, \cdec)$ guarantee that either the candidate key $R'$ remains unchanged ($R' = R$) or that it becomes independent of $R$. In the former case, when the decoder $\dec$ calls $\ver_R(\psi')$, the quantum authentication scheme will, with high probability, detect whether $\psi' \neq \psi$, indicating tampering, and $\dec$ will abort the decoding process. In the latter case, the decoder calls $\ver_{R'}(\psi')$ with $R'$ being independent of $R$ and, consequently, independent of $\psi'$ as well. Choosing a quantum authentication scheme, such as Clifford-based authentication, it is possible to show that $\ver_{R'}(\psi')$ will output a state independent of (and thus unrelated to) the message $\sigma$ with high probability.

Overall, this approach combines the strength of quantum authentication to detect tampering with the quantum message and the non-malleability property of the classical split-state code to protect the classical key $R$ against tampering. By leveraging these properties together, we obtain a $3$-split-state non-malleable code for quantum messages. Although the coding scheme above already gives some non-malleability guarantees for quantum messages, it features some major shortcomings. First, it requires dividing the codeword into three parts which must be stored separately.
Ideally, we would like to construct efficient coding schemes that only need to be divided into \emph{two} parts. Second, our argument above only works against quantum adversaries \emph{without} shared entanglement.
However, as we move to the more powerful setting with adversaries having shared entanglement, we need to develop more sophisticated techniques to ensure non-malleability.

\paragraph{Split-state non-malleable codes for average-case quantum messages.}
Perhaps the most natural approach to building a split-state non-malleable code secure against quantum adversaries with shared entanglement would be to take our $3$-state code above and merge two of the parts.
More precisely, we could attempt to analyze the code which outputs
\begin{equation*}
    (\psi, R_1)\quad \textrm{and}\quad R_2
\end{equation*}
as its two parts, where we recall that $\psi$ is the authenticated state via the secret key $R$ and $(R_1,R_2)=\cenc(R)$ is the classical split-state non-malleable encoding of $R$.

This matches the approach taken in~\cite{AAGMPP16}.
However, as already mentioned above, they did not have to handle quantum adversaries with shared entanglement.
Although this is also essentially the approach we successfully undertake, establishing non-malleability of this modified coding scheme is significantly more involved than the intuitive argument laid out above, and also requires some small further changes to the scheme. 
Indeed, one of the first difficulties arises due to the use of trap flags\footnote{Clifford-based quantum authentication schemes apply a random (secret) Clifford operator to the message plus several additional ``trap registers'' initialized to $\ket{0}$. Verifying whether tampering of the authenticated state occurred consists of checking whether the trap registers all return to the $\ket{0}$ state after applying the inverse Clifford operator. If this does not hold, then the verification procedure outputs the special symbol $\perp$, which we call the ``trap flag''.} in the authentication schemes, as we will discuss subsequently. 
We note that, unlike in an authentication scheme, the decoder need not output $\perp$ in a non-malleable code, since error detection is not required. 
We therefore get rid of the trap flags used in the authentication scheme specified in the 3-split argument above, and this is one of the first insights that allows our analysis to work.

We first describe a sub-optimal version of our approach in the setting of \emph{average-case non-malleability}, and then discuss how its rate can be optimized.
Per \cref{def:qnmcodesfinaldef}, this corresponds to the scenario where the message $\sigma_M$ is assumed to be a maximally mixed quantum state with canonical purification $\sigma_{M\hat{M}}$. To construct our split-state non-malleable code for quantum messages, we use random Clifford unitaries $\{(C_r, C^\dagger_r) \}_{r \leftarrow R}$ with underlying classical randomness $R$, along with the classical split-state coding scheme $(\cenc, \cdec)$ designed in \cite{ABJ22} as a quantum-secure non-malleable code for the classical string $R$. 

Inspired by the $3$-state approach above, we use $(\cenc, \cdec)$ to protect the key $R$ in a non-malleable manner, and then use the random Clifford $C_R$ to protect the quantum message $\sigma_M$. 
This yields the following encoding procedure $\enc(\sigma_M)$, where we use slightly different notation than the above to facilitate our analysis
(see also \cref{fig:splitstate2} for a diagram of our encoding and decoding procedures):
\begin{enumerate}
    \item Sample a classical secret key 
    $R$ uniformly at random (independent of $\sigma_{M\hat{M}}$) from an appropriate keyspace;
    
    \item Compute the  state $(\sigma_1)_Z = C_R(\sigma_M)C^\dagger_R$;
    
    \item Compute the split-state encoding $\cenc(R)=(\sigma_1)_{XY}$;
    
    \item Output the $2$-part codeword $((\sigma_1)_{ZX},(\sigma_1)_{Y})$.
\end{enumerate}
The message $\sigma_{M}=U_M$, along with its purification register $\hat{M}$, is thus encoded into $(\sigma_1)_{\hat{M}ZXY}$. 

As our first observation, note that the register $Z$, which holds the Clifford-protected message, may carry information about the classical key $R$, since $(\sigma_1)_Z = C_R(\sigma_M)C^\dagger_R$.
Fortunately, since $\sigma_M=U_M$, where $U_M$ is a maximally mixed state, we have 
\[ 
    (\sigma_1)_{ZXY}=(\sigma_1)_{Z} \otimes (\sigma_1)_{XY}.
\]
One may then expect that we can use the argument of~\cite{ABJ22} for classical messages to argue that the key $R$ remains secure even if the adversary sees the register $Z$ when tampering one of the parts of $\cenc(R)$. 
Unfortunately, the argument of~\cite{ABJ22} does not go through in this scenario.
Namely, for $(\cenc, \cdec)$ to protect the key $R$ in a non-malleable manner after adversarial tampering we need that
\[ 
    (\sigma_1)_{\hat{M}ZXY}=(\sigma_1)_{\hat{M}Z} \otimes (\sigma_1)_{XY}.
\]
However, it can be verified that the registers $\hat{M}Z$ are not independent of $XY$ in state $\sigma_1$. 
To circumvent this issue, we use the transpose method (see \cref{fact:transposetrick}) for state $\sigma_{M\hat{M}}$, and note that the application of the random Clifford gate $C_R$ and the adversarial operations commute in $\sigma_1$ (see \cref{fig:splitstate2,fig:splitstate3}). 
This allows us to delay the operation $C_R$ on register $\hat{M}$ (see \cref{fig:splitstate4}). 
Crucially, we could not have used the transpose method in the presence of flag registers as employed in quantum authentication. 

Carefully combining the above with arguments from~\cite{ABJ22}, we conclude that $(\cenc, \cdec)$ can protect the key $R$ after adversarial tampering on state $\theta_1$ (which corresponds to the state prior to applying the $C_R$ operation on register $\hat{M}$ in \cref{fig:splitstate4}), since 
\[
    (\theta_1)_{\hat{M}ZXY}=(\theta_1)_{\hat{M}Z} \otimes (\theta_1)_{XY}. 
\]
Let $\theta_2$ be the state obtained after adversarial tampering on $\theta_1$. 
By the properties of $(\cenc,\cdec)$, we are essentially guaranteed that either $R'=R$ or that $R'$ is independent of $R$ in $\theta_2$. 
However, this is not enough, and we observe that $(\cenc,\cdec)$ actually guarantees something even stronger!
More precisely, in state $\theta_2$ we  have either:
\begin{itemize}
    \item $R=R'$ and $R$ is independent of registers $Z\hat{M}$;
    \item $R$ is independent of $R'Z\hat{M}$.
\end{itemize}

In the case where $R$ is independent of $R'Z\hat{M}$, applying the delayed operation $C_R$ on register $\hat{M}$ decouples the registers $R'Z\otimes  \hat{M}$, and so we are done.
We are thus left to analyze the case where $R=R'$ and $R$ is independent of registers $Z\hat{M}$.
Here, we make use of the $2$-design properties of the Clifford scheme $(C_R, C^\dagger_R)$.
Roughly speaking, first suppose that the adversary applied  $\id_Z$ on register $Z$. Then, since the message was maximally mixed, we conclude that we get an EPR state as the outcome of the tampering experiment (after the decoding procedure is applied).
Now, suppose that the adversary applied some 
$P \ne \id_Z$ on register $Z$.
We make use of Clifford randomization and the twirl property to handle this scenario.
To elaborate, first note that the state
\begin{equation*}
    (\id \otimes P) (\theta_1)_{\hat{M}Z} (\id \otimes P^\dagger)
\end{equation*}
is in a subspace orthogonal to an EPR state in $\hat{M}Z$. Consider the state after applying the delayed operation $C_R$ on register $\hat{M}$ and $C^\dagger_R$ on register $Z$. 
Here, Clifford randomization and the twirl property ensure that any state orthogonal to an EPR state in registers $\hat{M}Z$ is exactly maximally mixed in a subspace (of dimension $4^{\vert \hat{M} \vert}-1$) orthogonal to this EPR state in registers $\hat{M}Z$ (see \cref{lem:equal101}). As a result, the outcome of the tampering experiment after the decoding procedure is applied is close in trace distance to $U_{\hat{M}} \otimes U_{M}$. 
Our proof overall crucially uses interesting properties of the Pauli and Clifford unitaries including Pauli twirl, Clifford twirl, and Clifford randomization.

\paragraph{Improving the rate from sub-constant to constant in the average-case setting.}

An important quantity associated with a (non-malleable code) is its \emph{rate} -- the ratio between the size of a message and the size of its corresponding encoding. Looking at the proposed approach we discussed above, we conclude that its rate is sub-constant, i.e., the size of the encoding is a superlinear function of the message size. The main reasons behind this are as follows: First, the classical key $R$ that we use to sample the Clifford operator $C_R$ is much longer than the message $\sigma_M$ (In fact $\vert R \vert = \cO(\vert M \vert^2)$). Second, the rate of the underlying non-malleable code for classical messages $(\cenc,\cdec)$ from~\cite{ABJ22} is already sub-constant.

We now explain how our approach above can be modified to yield a constant-rate average-case non-malleable code for quantum messages. First, instead of using the whole Clifford group, we can use a shorter random key $R$ to sample a Clifford operator from a smaller subgroup with special properties that suffice for our needs. 
A result of Cleve, Leung, Liu, and Wang~\cite{CLLW16} guarantees that this can be done efficiently with a classical key $R$ of length at most $5|M|$. Second, observe that we only care about obtaining a split-state non-malleable encoding of a \emph{uniformly random} classical key $R$. This means that we can replace the classical split-state non-malleable code $(\cenc,\cdec)$ from~\cite{ABJ22} by a \emph{Non-Malleable Randomness Encoder} (NMRE), an object originally introduced by Kanukurthi, Obbattu, and Sekar~\cite{KOS18} and whose known constructions enjoy much better rates than non-malleable codes. 
Batra, Boddu, and Jain~\cite{BBJ23} recently constructed a classical NMRE secure against quantum adversaries having shared entanglement with a rate close to $1/2$.\footnote{For the experienced reader, Batra, Boddu, and Jain~\cite{BBJ23} construct an explicit quantum-secure $2$-source non-malleable extractor $\nmext$ with a large output length. We can then sample classical bitstrings $X$ and $Y$ uniformly at random with appropriate lengths and set the classical key $R$ to be $R=\nmext(X,Y)$; this is the quantum-secure classical NMRE that we use in our optimized coding scheme.} 
In contrast, as mentioned above, the NMC for classical messages from~\cite{ABJ22} only has a sub-constant rate.

Using these two results in our previously described approach allows us to improve the rate of our average-case NMC for quantum messages from sub-constant to close to $1/11$.

\paragraph{From average-case to worst-case NMCs for quantum messages.}
In the discussion above we assumed that the quantum message was maximally mixed.
However, we would like to show that our code is a worst-case NMC, i.e., it is non-malleable for \emph{any} fixed quantum message.

Consider an arbitrary quantum message $\rho_M$ with canonical purification $\rho_{M\hat{M}}$.
Recall that $\sigma_{ \hat{M}M}$ is the canonical purification of $\sigma_M=U_M$. 
Using a quantum rejection sampling argument (see \cref{rejectionsampling}), we obtain a measurement acting on register $\hat{M}$ of $\sigma$ such that the state conditioned on ``success'' is exactly $\rho$. 
Moreover, this measurement on the register $\hat{M}$ commutes with $\enc$,  $\dec$, and the adversarial operations, and it succeeds with probability $2^{-|M|}$.
If the error of the underlying average-case NMC is $\eps$, then this measurement allows us to argue non-malleability of the same NMC for any message $\rho_{M}$ (i.e., worst-case non-malleability) with larger error $\eps' = 2^{|M|}\cdot \eps$, and we can easily handle this blow-up in the error at the expense of dropping the rate of the code from constant to sub-constant.
Intuitively, this happens because of the following: Suppose that there is a fixed message $\rho$ for which~\cref{eq:splitnmc} (with $\rho$ in place of $\sigma$) holds only with error larger than $\eps'$.
Therefore, when faced with a maximally mixed message $\sigma_{M\hat{M}}$, we can apply the measurement above to $\hat{M}$ and, if the measurement succeeds and the resulting state is $\rho$, distinguish with advantage larger than $\eps'$.
Since the measurement succeeds with probability at least $2^{-|M|}$, the overall distinguishing advantage is larger than $2^{-|M|}\cdot \eps' = \eps$, which contradicts the average-case $\eps$-non-malleability of the NMC.


\subsubsection{Threshold split-state NMSS schemes for quantum messages}\label{sec:techoverviewNMSS}

At a high level, in order to construct our threshold NMSS schemes for quantum messages, we combine our split-state NMC for quantum messages above with the approach of Goyal and Kumar~\cite{GK16} used to construct NMSS schemes in the classical setting. However, as we will discuss, following this approach in the quantum setting poses various challenges.

Roughly speaking, the approach of Goyal and Kumar~\cite{GK16} for building NMSS schemes proceeds as follows for $p$ parties and a threshold $3\leq t\leq p$. On input a message $m$, first encode it with the split-state NMC to generate two states, $L$ and $R$. Now, apply a standard $t$-out-of-$p$ secret sharing scheme to $L$ (say, Shamir's secret sharing), yielding shares $L_1,\dots,L_p$. Furthermore, apply a $2$-out-of-$p$ \emph{leakage-resilient} secret sharing scheme to $R$, yielding shares $R_1,\dots,R_p$. Intuitively, a secret sharing scheme is leakage-resilient if the input remains private even when the adversary learns an unauthorized subset of shares \emph{plus} bounded side information from every other share. Finally, set the resulting $i$-th share $S_i$ as $S_i=(L_i,R_i)$ for each $i\in[p]$.

To argue the non-malleability of this construction, Goyal and Kumar showed how to transform any tampering attack on the resulting secret sharing scheme into an essentially-as-good tampering attack on the underlying $2$-split-state NMC. The main challenge in designing such a reduction is that the two tampering functions for the underlying NMC must act independently -- we must tamper $L$ without knowledge of $R$, and vice versa. On the other hand, the $i$-th tampering function for the secret sharing scheme tampers $(L_i,R_i)$ into $(L'_i,R'_i)$, and so potentially has access to some information from both $L$ and $R$. For simplicity, let $L'$ and $R'$ denote the tampered secrets reconstructed from $(L'_1,\dots,L'_t)$ and $(R'_1,R'_2)$, respectively.

Showing that we can obtain $R'$ from $R$ without knowing $L$ is easy. This follows because $R'$ is fully determined by $R'_1$ and $R'_2$, which in turn only depend on $L_1$ and $L_2$. Since the $L_i$'s are a $t$-out-of-$p$ Shamir secret sharing of $L$ with threshold $t\geq 3$, the two shares $L_1$ and $L_2$ are independent of $L$. However, arguing that we can (up to small error) obtain $L'$ from $L$ without knowledge of $R$ is a lot trickier. The previous argument clearly does not immediately work since the $t$ shares $L'_1,\dots,L'_t$ depend on $R_1,\dots,R_t$, respectively, which determine $R$. This is where the leakage-resilience property kicks in -- if we see the $L'_i$s as bounded leakage on the $R_i$s, then leakage-resilience guarantees that $R$ is (close to) independent of the leakages $L'_1,\dots,L'_t$, and so is independent of $L'$.

\paragraph{Realizing this approach in the quantum setting.}

We follow the same high-level approach for a quantum message $\sigma_M$. To that end, we first replace the classical split-state NMC by our NMC for quantum messages discussed previously. We use additional key properties of our NMC: (1) The resulting left state $L$ is quantum, but the right state $R$ is \emph{classical}, and (2) we prove that it is actually a $2$-out-of-$2$ split-state NMSS scheme, and so, in particular, learning only one of $L$ and $R$ reveals nothing about $\sigma_M$. Since $L$ is quantum, we now apply a standard $t$-out-of-$p$ (for $p < 2t$) secret sharing scheme for quantum messages, such as ``quantum'' Shamir secret sharing~\cite{CGL99}, to get quantum shares $(L_1,\dots,L_p)$. And, since $R$ is classical, we secret-share it using a $2$-out-of-$p$ scheme satisfying a special leakage-resilience property that we will determine later.

Establishing correctness and privacy of the resulting $t$-out-of-$p$ secret sharing scheme is not difficult using property (2) of our NMC above. It remains to prove non-malleability. Arguing that $R'$ can be obtained from $R$ without knowledge of $L$ still follows easily from the fact that $R$ is shared using a $2$-out-of-$p$ scheme, while $L$ is shared using a $t$-out-of-$p$ scheme with $t\geq 3$. Here, it is crucial that $R$ is classical. Otherwise, a $2$-out-of-$p$ scheme would not exist when $p\geq 4$, and this means that we would not be able to, say, construct $(p/2+1)$-out-of-$p$ split-state NMSS schemes for quantum messages for \emph{any} even $p$ using this approach.

We also want to argue that $L'$ can be obtained from $L$ without knowledge of $R$.
As before, we would like to see $L'_1,\dots,L'_p$ as leakages on the secret shares $R_1,\dots,R_p$, and then exploit the leakage-resilience of the scheme used to share $R$ to conclude that $R$ is independent of $L'_1,\dots,L'_p$, and hence of $L'$.
However, realizing this in our quantum setting requires stronger leakage-resilience properties: First, the tamperings $L'_1,\dots,L'_p$ are now quantum states.
Second, the tampering functions in our setting \emph{share arbitrary entangled states}.
This means that the $i$-th tampering function now sees $(L_i,R_i,W_i)$, where $W_1,\dots, W_p$ are quantum registers holding an arbitrary state.
To overcome these barriers, we introduce \emph{augmented} leakage-resilient secret sharing schemes.
This corresponds to a setting where there are $p$ local adversaries $\cA_1,\dots,\cA_p$ sharing an arbitrary entangled state spread across registers $W_1,\dots, W_p$, respectively, and each one having access to a share $R_1,\dots,R_p$.
We require that $R$ remains hidden even if $\cA_i$ knows the share $R_i$, local bounded quantum leakages $\Leak_1(R_1,W_1),\dots, \Leak_{i-1}(R_{i-1},W_{i-1}),\Leak_{i+1}(R_{i-1},W_{i+1}),\dots, \Leak_{p}(R_{p},W_{p})$ from every other share, \emph{and also the entangled state $W_i$} (hence the ``augmented'' adjective).
We prove that the $2$-out-of-$2$ secret sharing scheme whereby $R$ is shared into $X$ and $Y$ sampled uniformly such that $\langle X,Y\rangle =R$ is augmented leakage-resilient with good parameters.
This relies on the randomness extraction properties of the inner product function and the formalism of ``qpa-states''\footnote{qpa-state stands for quantum purified adversary state.}~\cite{BJK21}.
We can then extend this scheme to a $2$-out-of-$p$ augmented leakage-resilient scheme for classical messages in a standard manner.
Although notions of leakage-resilience against quantum adversaries with shared entanglement have been studied in other recent work~\cite{CGLR23}, these do not cover augmented leakage-resilience. 

Besides the above, proving non-malleability requires dealing with additional subtleties specific to the quantum setting.
The original non-malleability argument in~\cite{GK16} proceeds by fixing the values of certain components (e.g., shares, leakages).
However, we cannot fully emulate this approach in the quantum setting: The left state $L$ is quantum, and so are the bounded leakages that show up in the analysis.
Therefore, we cannot fix them.
Moreover, again because $L$ is quantum, it is modified after the tampering functions are applied, but we still need to access the ``original'' $L$ in the analysis. For this, with some work, we can use the message's canonical purification register $\hat{M}$ to generate a new register $\hat{L}$ which can be thought of as a coherent copy of the original left state $L$.

\subsection{Open problems}

We list here some interesting directions for future research:
\begin{itemize}
    \item Can we design (worst-case) split-state NMCs for quantum messages with a constant rate? This is open even for classical messages against quantum adversaries with shared entanglement. More ambitiously, can we construct (worst-case) split-state NMSS schemes for quantum messages with a constant rate?
\item Can we construct $2$-out-of-$3$ split-state non-malleable secret sharing schemes for quantum messages?
\item Can we design NMSS schemes for quantum messages that are secure against joint tampering of shares?
\item What can we achieve if we consider computationally-bounded adversaries instead?
\end{itemize}

\subsection{Acknowledgements}
We thank Thiago Bergamaschi for insightful discussions about notions of non-malleability in the quantum setting.
We also thank Dakshita Khurana for useful discussions in the initial stage of this project.

J.\ Ribeiro's research was supported by NOVA LINCS (UIDB/04516/2020) with the financial support of FCT - Fundação para a Ciência e a Tecnologia. The work of R.\ Jain is supported by the NRF grant NRF2021-QEP2-02-P05 and the Ministry
of Education, Singapore, under the Research Centres of Excellence program. This work was
done in part while R.\ Jain was visiting the Technion-Israel Institute of Technology, Haifa, Israel and the Simons Institute for the Theory of Computing, Berkeley, CA, USA.

\section{Preliminaries}
\label{sec:prelims}

This section collects basic notation and conventions alongside useful facts and lemmas that we use in the proofs of our main results.
In this work, facts denote results already known from prior work, and lemmas denote auxiliary results that we prove here.

\subsection{Basic general notation}
All the logarithms are evaluated to the base $2$. 
We denote sets by uppercase calligraphic letters such as $\X$ and use uppercase roman letters such as $X$ and $Y$ for both random variables and quantum registers. 
The distinction will be clear from context.
The set $\{1,\dots,n\}$ may be written as $[n]$, and we may also more generally write $[t,n]$ for the set $\{t,t+1,\dots,n\}$.
We denote the uniform distribution over $\{0,1\}^d$ by $U_d$.
For a {\em random variable} $X \in \X$, we use $X$ to denote both the random variable and its distribution, whenever it is clear from context. 
We use $x \leftarrow X$ to denote that $x$ is drawn according to $X$, and, for a finite set $\X$, we use $x \leftarrow \X$ to denote that $x$ is drawn uniformly at random from $\X$. For two random variables $X,Y$ we use $X \otimes Y$ to denote their product distribution. 
We call random variables $X$ and $Y$ {\em copies} of each other if and only if $\Pr[X=Y]=1$.

\subsection{Quantum information theory}

In this section we cover some important basic prerequisites from quantum information theory alongside some useful lemmas and facts.

\subsubsection{Conventions and notation}

Consider a finite-dimensional Hilbert space $\cH$ endowed with an inner-product $\langle \cdot, \cdot \rangle$ (we only consider finite-dimensional Hilbert-spaces). A quantum state (or a density matrix or a state) is a positive semi-definite operator on $\cH$ with trace value  equal to $1$. 
It is called {\em pure} if and only if its rank is $1$. Let $\ket{\psi}$ be a unit vector on $\cH$, that is $\langle \psi,\psi \rangle=1$.  
With some abuse of notation, we use $\psi$ to represent the state and also the density matrix $\ketbra{\psi}$, associated with $\ket{\psi}$. Given a quantum state $\rho$ on $\cH$, the {\em support of $\rho$}, denoted by $\text{supp}(\rho)$, is the subspace of $\cH$ spanned by all eigenvectors of $\rho$ with non-zero eigenvalues.
 
A {\em quantum register} $A$ is associated with some Hilbert space $\cH_A$. Define $\vert A \vert := \log\left(\dim(\cH_A)\right)$. 
For a sequence of registers $A_1,\dots,A_n$ and a set $T\subseteq[n]$, we define the projection according to $T$ as $A_T=(A_i)_{i\in T}$.
Let $\mathcal{L}(\cH_A)$ represent the set of all linear operators on the Hilbert space $\cH_A$. For an operator $O \in \cL(\cH_A)$, we use $O^T$ to represent the transpose of $O$. For operators $O, O'\in \cL(\cH_A)$, the notation $O \leq O'$ represents the L\"{o}wner order, that is, $O'-O$ is a positive semi-definite operator. 
We denote by $\mathcal{D}(\cH_A)$ the set of all quantum states on the Hilbert space $\cH_A$. The state $\rho$ with subscript $A$ indicates that $\rho_A \in \mathcal{D}(\cH_A)$. 
If two registers $A,B$ are associated with the same Hilbert space, we shall represent the relation by $A\equiv B$. 
For two states $\rho$ and  $\sigma$, we write $\rho \equiv \sigma$ if they are identical as states (potentially in different registers). Composition of two registers $A$ and $B$, denoted $AB$, is associated with the Hilbert space $\cH_A \otimes \cH_B$.  For two quantum states $\rho\in \mathcal{D}(\cH_A)$ and $\sigma\in \mathcal{D}(\cH_B)$, $\rho\otimes\sigma \in \mathcal{D}(\cH_{AB})$ represents the tensor product ({\em Kronecker} product) of $\rho$ and $\sigma$. The identity operator on $\cH_A$ is denoted $\id_A$. Let $U_A$ denote the maximally mixed state in $\cH_A$. Let $\rho_{AB} \in \mathcal{D}(\cH_{AB})$. Define
$$ \rho_{B} \defeq \tr_{A}{\rho_{AB}} \defeq \sum_i (\bra{i} \otimes \id_{B})
\rho_{AB} (\ket{i} \otimes \id_{B}) , $$
where $\{\ket{i}\}_i$ is an orthonormal basis for the Hilbert space $\cH_A$.
The state $\rho_B\in \mathcal{D}(\cH_B)$ is referred to as the marginal state of $\rho_{AB}$ on the register $B$. Unless otherwise stated, a missing register from subscript in a state represents partial trace over that register. Given $\rho_A\in\mathcal{D}(\cH_A)$, a {\em purification} of $\rho_A$ is a pure state $\rho_{AB}\in \mathcal{D}(\cH_{AB})$ such that $\tr_{B}{\rho_{AB}}=\rho_A$. Purification of a quantum state is not unique.
Suppose $A\equiv B$. Given $\{\ket{i}_A\}$ and $\{\ket{i}_B\}$ as orthonormal bases over $\cH_A$ and $\cH_B$ respectively, the \textit{canonical purification} of a quantum state $\rho_A$ is a pure state $\rho_{AB} \defeq (\rho_A^{\frac{1}{2}}\otimes\id_B)\left(\sum_i\ket{i}_A\ket{i}_B\right)$. 

A quantum {map} $\cE: \mathcal{L}(\cH_A)\rightarrow \mathcal{L}(\cH_B)$ is a completely positive and trace preserving (CPTP) linear map. A CPTP map $\cE$ is described by the Kraus operators $\{ M_i : \cH_A\rightarrow \cH_B \}_i$ such that $\cE(\rho) = \sum_i M_i \rho M^\dagger_i$ and $\sum_i M^\dagger_i M_i =\id_A$.~A {\em Hermitian} operator $H:\cH_A \rightarrow \cH_A$ is such that $H=H^{\dagger}$. A projector $\Pi \in  \mathcal{L}(\cH_A)$ is a Hermitian operator such that $\Pi^2=\Pi$. A {\em unitary} operator $V_A:\cH_A \rightarrow \cH_A$ is such that $V_A^{\dagger}V_A = V_A V_A^{\dagger} = \id_A$. The set of all unitary operators on $\cH_A$ is  denoted by $\mathcal{U}(\cH_A)$. An {\em isometry}  $V:\cH_A \rightarrow \cH_B$ is such that $V^{\dagger}V = \id_A$. A {\em POVM} element is an operator $0 \le M \le \id$.~We use the shorthand $\bar{M} \defeq \id - M$, where $\id$ is clear from context. 
We use shorthand $M$ to represent $M \otimes \id$, where $\id$ is clear from context.

\subsubsection{Registers, quantum maps, and isometries}

This section collects definitions of certain registers and operations on them.

\begin{definition}[Classical register in a pure state]\label{def:classicalinpurestate}
Let $\X$ be a set. 
A {\em classical-quantum} (c-q) state $\rho_{XE}$ is of the form \[ \rho_{XE} =  \sum_{x \in \X}  p(x)\ket{x}\bra{x} \otimes \rho^x_E , \] where ${\rho^x_E}$ are states. 

Let $\rho_{XEA}$ be a pure state. We call $X$ a \emph{classical register} in $\rho_{XEA}$ if $\rho_{XE}$ (or $\rho_{XA}$) is a c-q state. 
Whenever it is clear from context, we identify the random variable $X$ with the register $X$ via $\Pr[X=x]=p(x)$.

\end{definition}

\begin{definition}[Copy of a classical  register]\label{def:copyofaclassicalregister}
Let $\rho_{X\hat{X}E}$ be a pure state with $X$ being a classical register in $\rho_{X\hat{X}E}$ taking values in $\cX$. Similarly, let $\hat{X}$ be a classical register in $\rho_{X\hat{X}E}$ taking values in $\cX$. Let $\Pi_{\mathsf{Eq}} = \sum_{x \in \cX} \ketbra{x} \otimes \ketbra{x}$ be the \emph{equality} projector acting on the registers $X\hat{X}$. We call $X$ and $\hat{X}$ copies of each other (in the computational basis) if $\tr\left(\Pi_{\mathsf{Eq}} \rho_{X\hat{X}}\right) =1$.
\end{definition}

\begin{definition}[Conditioning] \label{def:conditioning}
Let  
\[ \rho_{XE} =  \sum_{x \in \{0,1\}^n}  p(x)\ket{x}\bra{x} \otimes \rho^x_E , \]
be a c-q state. For an event $\mathcal{S} \subseteq \{0,1\}^n$, define  $$\Pr[\mathcal{S}]_\rho \defeq  \sum_{x \in \mathcal{S}} p(x) \quad \textrm{and} \quad (\rho|X\in \mathcal{S})\defeq \frac{1}{\Pr[\mathcal{S}]_\rho} \sum_{x \in \mathcal{S}} p(x)\ket{x}\bra{x} \otimes \rho^x_E.$$
We sometimes shorthand $(\rho|X\in \mathcal{S})$ as $(\rho|\mathcal{S})$ when the register $X$ is clear from context. 

Let $\rho_{AB}$ be a state with $|A|=n$. We define 
$(\rho|A \in \mathcal{S}) \defeq (\sigma|\mathcal{S})$, where $\sigma_{AB}$ is the c-q state obtained by measuring the register $A$ in $\rho_{AB}$ in the computational basis. In the case where $\mathcal{S}=\{s\}$ is a singleton set, we shorthand $(\rho|A = s) \defeq \tr_A (\rho|A =s)$.
\end{definition}

\begin{definition}[Safe maps] \label{def:safe}
We call an isometry $V: \cH_X \otimes \cH_A \rightarrow \cH_X \otimes \cH_B$, {\em safe} on $X$ if and only if there is a collection of isometries $V_x: \cH_A\rightarrow \cH_B$ such that for all states $\ket{\psi}_{XA} = \sum_x \alpha_x \ket{x}_X \ket{\psi^x}_A$ we have that
\begin{equation*}
    V  \ket{\psi}_{XA} =  \sum_x \alpha_x \ket{x}_X V_x \ket{\psi^x}_A.
\end{equation*}
\end{definition}

\begin{definition}[Extension] \label{def:extension} Let $$\rho_{XE}=  \sum\limits_{x \in \cX}  p(x)\ket{x}\bra{x} \otimes \rho^x_E,$$
be a c-q state. For a function $Z:\cX \rightarrow \cZ$, define the following extension of $\rho_{XE}$, 
\[ \rho_{ZXE} \defeq  \sum_{x\in \cX}  p(x) \ket{Z(x)}\bra{Z(x)} \otimes \ket{x}\bra{x} \otimes  \rho^{x}_E.\]
\end{definition}

For a pure state $\rho_{XEA}$ (with $X$ classical and $X \in \cX$) and a function $Z:\cX \rightarrow \cZ$, define $\rho_{Z\hat{Z}XEA}$ to be a pure state extension of $\rho_{XEA}$ generated via a safe isometry $V: \cH_X \rightarrow \cH_X \otimes \cH_Z \otimes \cH_{\hat{Z}}$ ($Z$ classical with copy $\hat{Z}$). We use the notation $\mathcal{M}_{A}(\rho_{AB})$ to denote measurement in the computational basis on register $A$ in state $\rho_{AB}$.

All isometries considered in this paper are safe on classical registers that they act on. Isometries  applied by adversaries can be assumed without loss of generality as safe on classical registers, by the adversary first making a (safe) copy of classical registers and then proceeding as before. This does not reduce the power of the adversary.
The following notion of a Stinespring isometry extension will also be useful at times.
 \begin{fact}[\protect{Stinespring isometry extension \cite[Theorem 5.1]{WatrousQI}}]\label{fact:stinespring}
 		 Consider a CPTP map $\Phi :    \mathcal{L} (\cH_X ) \rightarrow   \mathcal{L}(\cH_Y )$. 
		 There exists an isometry $V :  \cH_{X} \rightarrow   \cH_{Y} \otimes \cH_{Z}$ (called the \emph{Stinespring isometry extension} of $\Phi$) such that $\Phi(\rho_X)= \tr_{Z}(V \rho_X V^\dagger)$ for every state $\rho_X$.
 \end{fact}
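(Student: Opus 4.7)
The plan is to establish the Stinespring dilation via the Kraus representation theorem, then construct $V$ explicitly. I would first invoke the fact that every completely positive map $\Phi: \cL(\cH_X) \to \cL(\cH_Y)$ admits a Kraus decomposition $\Phi(\rho) = \sum_{i=1}^{k} M_i \rho M_i^\dagger$, for some collection of operators $M_i : \cH_X \to \cH_Y$. (This is typically derived by taking the spectral decomposition of the Choi operator $J(\Phi) = \sum_{a,b} \Phi(\ketbra{a}{b}) \otimes \ketbra{a}{b}$, which is positive semi-definite precisely because $\Phi$ is completely positive; each nonzero eigenvector yields a Kraus operator.) The trace-preservation condition on $\Phi$ translates, via the standard identification, into the completeness relation $\sum_i M_i^\dagger M_i = \id_X$.

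With Kraus operators in hand, I would introduce an auxiliary Hilbert space $\cH_Z$ of dimension $k$ with a chosen orthonormal basis $\{\ket{i}_Z\}_{i=1}^{k}$ and define the linear map $V : \cH_X \to \cH_Y \otimes \cH_Z$ on basis vectors by
\begin{equation*}
V \ket{\psi}_X \;\defeq\; \sum_{i=1}^{k} \left(M_i \ket{\psi}\right)_Y \otimes \ket{i}_Z.
\end{equation*}
To verify that $V$ is an isometry, I would compute
\begin{equation*}
V^\dagger V \;=\; \sum_{i,j=1}^{k} M_i^\dagger M_j \, \braket{i}{j} \;=\; \sum_{i=1}^{k} M_i^\dagger M_i \;=\; \id_X,
\end{equation*}
using orthonormality of $\{\ket{i}_Z\}$ and the completeness relation.

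To verify the dilation identity, I would take an arbitrary state $\rho_X$ and compute the partial trace of $V \rho_X V^\dagger$ over $Z$ in the basis $\{\ket{k}_Z\}$:
\begin{equation*}
\tr_Z(V \rho_X V^\dagger) \;=\; \sum_{k} (\id_Y \otimes \bra{k}_Z) \left( \sum_{i,j} M_i \rho_X M_j^\dagger \otimes \ketbra{i}{j} \right) (\id_Y \otimes \ket{k}_Z) \;=\; \sum_{i} M_i \rho_X M_i^\dagger \;=\; \Phi(\rho_X),
\end{equation*}
as required. The only nontrivial step here is really the Kraus representation theorem itself; once that is in hand, the construction of $V$ and the two verifications are mechanical. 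If one wanted to be fully self-contained, the main obstacle to address would be proving the Kraus representation, which reduces cleanly to the spectral theorem applied to the (positive semi-definite) Choi operator associated to $\Phi$. Since the statement cites Watrous's textbook, I would simply refer to that source for the details of the Kraus theorem and present the explicit construction of $V$ as above.
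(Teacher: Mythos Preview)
Your proof is correct and follows the standard route via the Kraus representation. Note, however, that the paper does not actually prove this statement: it is recorded as a \emph{Fact} with a citation to Watrous's textbook and no argument given, so there is no ``paper's own proof'' to compare against. Your construction is essentially the proof one finds in that reference.
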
 

  \begin{fact}[\protect{Stinespring isometry extension for a classical map}]\label{fact:stinespringclassicalmap}
		 Let $\Phi :    \mathcal{L} (\cH_X ) \rightarrow   \mathcal{L}(\cH_Y )$ be a classical map such that for every $X=x$ we have a fixed function $f$ acting on register $Y$ such that $ \Pr(f(\Phi(x)) =x)=1$. 
        There exists an isometry $V_{\Phi} :  \cH_{X} \rightarrow   \cH_{Y} \otimes \cH_{\hat{Y}}$ (called the \emph{Stinespring isometry extension} of $\Phi$) such that $\Phi(\rho_X)= \tr_{\hat{Y}}(V \rho_X V^\dagger)$ for every state $\rho_X$. Furthermore, for a classical register $\rho_X$, we also have $\Pr(Y =\hat{Y})_{V \rho_X V^\dagger}=1$.
  \end{fact}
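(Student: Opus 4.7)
The plan is to exploit the recoverability structure provided by $f$ to write down the Stinespring isometry explicitly, rather than invoking \cref{fact:stinespring} as a black box. Since $\Phi$ is a classical map, its action on computational-basis elements has the form $\Phi(\ket{x}\bra{x}) = \sum_y p(y\mid x)\,\ket{y}\bra{y}$ for some conditional distribution $p(\cdot\mid x)$, while $\Phi(\ket{x}\bra{x'}) = 0$ for $x \neq x'$. The assumption $\Pr(f(\Phi(x)) = x) = 1$ forces $p(y\mid x) > 0 \Rightarrow f(y) = x$, and in particular the supports $\{y : p(y\mid x) > 0\}$ for distinct inputs $x, x'$ are disjoint.

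With $\hat{Y} \equiv Y$, define the candidate isometry $V_\Phi: \cH_X \to \cH_Y \otimes \cH_{\hat{Y}}$ on basis states by
\[
    V_\Phi \ket{x} \;=\; \sum_{y} \sqrt{p(y\mid x)}\;\ket{y}_Y \ket{y}_{\hat{Y}},
\]
and extend linearly. I would then verify three things in order. First, $V_\Phi$ is an isometry: $\langle x\mid V_\Phi^\dagger V_\Phi \mid x'\rangle = \sum_y \sqrt{p(y\mid x)\,p(y\mid x')}$, which equals $1$ for $x = x'$ by normalization and $0$ for $x \neq x'$ by the disjoint-support property. Second, $\tr_{\hat{Y}}(V_\Phi \ket{x}\bra{x'} V_\Phi^\dagger) = \sum_y \sqrt{p(y\mid x)\,p(y\mid x')}\,\ket{y}\bra{y}$, which coincides with $\Phi(\ket{x}\bra{x'})$ in both the diagonal ($x = x'$) and off-diagonal ($x \neq x'$) cases; linearity then gives $\tr_{\hat{Y}}(V_\Phi \rho_X V_\Phi^\dagger) = \Phi(\rho_X)$ for every $\rho_X$. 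Third, for a classical input $\rho_X = \sum_x q(x)\ket{x}\bra{x}$, a direct computation yields $\langle z,z\mid V_\Phi \ket{x}\bra{x} V_\Phi^\dagger \mid z,z\rangle = p(z\mid x)$, so $\tr(\Pi_{\mathsf{Eq}}\, V_\Phi \rho_X V_\Phi^\dagger) = \sum_{x,z} q(x)\,p(z\mid x) = 1$, which is exactly the claimed $\Pr(Y = \hat{Y}) = 1$ in the sense of \cref{def:copyofaclassicalregister}.

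The step I expect to be the main subtlety is the off-diagonal part of the second verification: showing that $V_\Phi$ reproduces $\Phi$ not only on diagonal inputs $\ket{x}\bra{x}$ (which is automatic from the construction) but also on the cross terms $\ket{x}\bra{x'}$ with $x \neq x'$, since only this ensures the partial-trace identity on general $\rho_X$. This is precisely where the recoverability hypothesis is essential — it is the disjointness of $\supp p(\cdot\mid x)$ and $\supp p(\cdot\mid x')$ implied by $f$ that simultaneously kills the off-diagonal partial trace and makes $V_\Phi$ an isometry. Everything else is routine linear algebra and normalization bookkeeping.
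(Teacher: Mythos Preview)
The paper does not actually prove this fact; it is stated in the preliminaries as a standard result without argument. Your explicit construction is correct and constitutes a complete proof: the disjoint-support observation (forced by the existence of the recovery function $f$) is exactly what makes $V_\Phi$ an isometry and simultaneously kills the off-diagonal cross terms, and your verification of $\Pr(Y=\hat{Y})=1$ via $\tr(\Pi_{\mathsf{Eq}}\,V_\Phi\rho_X V_\Phi^\dagger)=1$ is the right computation.
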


\subsubsection{Norms, trace distance, and divergences}

This section collects definitions of some important quantum information-theoretic quantities and related useful properties. 

\begin{definition}[Schatten $p$-norm]
    For $p\geq 1$ and a matrix $A$, the \emph{Schatten $p$-norm} of $A$, denoted by $\|A\|_p$, is defined as $\| A \|_p  \defeq (\tr(A^\dagger A)^{\frac{p}{2}})^{\frac{1}{p}}.$
\end{definition}

\begin{definition}[Trace distance]
    The \emph{trace distance} between two states $\rho$ and $\sigma$ is given by $\|\rho-\sigma\|_1$. We write $\rho\approx_\eps \sigma$ if $\|\rho-\sigma\|_1\leq \eps$.
\end{definition}

\begin{definition}[Fidelity]
    The \emph{fidelity} between two states $\rho$ and $\sigma$, denoted by $\fid(\rho,\sigma)$, is defined as
    \begin{equation*}
        \fid(\rho,\sigma)= \Vert  \sqrt{\rho} \sqrt{\sigma}\Vert_1.
    \end{equation*}
\end{definition}

\begin{definition}[Bures metric]  
The \emph{Bures distance} between two states $\rho$ and $\sigma$, denoted by $\Delta_B(\rho,\sigma)$, is defined as
\begin{equation*}
    \Delta_B(\rho,\sigma)= \sqrt{1-\fid(\rho,\sigma)}.
\end{equation*} 
\end{definition}

\begin{definition}[Max-divergence (\cite{Datta09}, see also~\cite{JainRS02})]\label{def:maxdiv}
    Given states $\rho$ and $\sigma$ such that $\supp(\rho) \subseteq \supp(\sigma)$, the \emph{max-divergence} between $\rho$ and $\sigma$, denoted by $\dmax{\rho}{\sigma}$, is defined as
    \begin{equation*}
        \dmax{\rho}{\sigma} =  \min\{ \lambda \in \mathbb{R} :   \rho  \leq 2^{\lambda} \sigma \}.
    \end{equation*}
\end{definition}

\begin{definition}[Max-information~\cite{Datta09}]\label{def:maxinfo}
  Given a state $\rho_{AB}$, the \emph{max-information} between $A$ and $B$, denoted by $\imax(A:B)_{\rho}$, is given by
  \begin{equation*}
       \imax(A:B)_{\rho} \defeq   \inf_{\sigma_{B}\in \mathcal{D}(\cH_B)}\dmax{\rho_{AB}}{\rho_{A}\otimes\sigma_{B}} .
  \end{equation*}
\end{definition}

We state a useful fact about max-information.
\begin{fact}[\protect{\cite[Fact 5]{BJL21T}}]\label{fact:boundnew}
    Let $\rho_{XBD}$ be a c-q state ($X$ classical) such that $\rho_{XB} = \rho_X \otimes \rho_B$. Then, $$ \imax(X:BD)_\rho \leq  2|D|.$$
\end{fact}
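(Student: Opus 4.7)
The plan is to exhibit an explicit $\sigma_{BD}$ in the infimum of \cref{def:maxinfo} witnessing the claimed upper bound. I take $\sigma_{BD} \defeq \rho_B \otimes U_D$ and aim to prove the operator inequality
\begin{equation*}
    \rho_{XBD} \;\leq\; 2^{2|D|}\,\rho_X \otimes \rho_B \otimes U_D \;=\; 2^{|D|}\,\rho_X \otimes \rho_B \otimes \id_D,
\end{equation*}
which by \cref{def:maxdiv} immediately gives $\dmax{\rho_{XBD}}{\rho_X \otimes \sigma_{BD}} \leq 2|D|$ and hence the stated bound on $\imax(X:BD)_\rho$.

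Since $X$ is classical, I write $\rho_{XBD} = \sum_x p(x)\,\ket{x}\bra{x}\otimes \rho_{BD}^x$, and the target inequality decouples, block by block on $X$, into
\begin{equation*}
    \rho_{BD}^x \;\leq\; 2^{|D|}\,\rho_B \otimes \id_D \quad \text{for every } x \text{ with } p(x)>0.
\end{equation*}
The hypothesis $\rho_{XB} = \rho_X \otimes \rho_B$ is precisely the statement that $\tr_D \rho_{BD}^x = \rho_B$ for every such $x$. Hence it suffices to prove the following marginal-preserving claim: \emph{for any bipartite state $\tau_{BD}$ with $\tr_D \tau_{BD} = \rho_B$, one has $\tau_{BD} \leq 2^{|D|}\rho_B \otimes \id_D$.}

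For the pure-state case $\tau_{BD} = \ket{\psi}\bra{\psi}$, I would use the Schmidt decomposition $\ket{\psi} = \sum_i \sqrt{\mu_i}\,\ket{u_i}_B\ket{v_i}_D$ compatible with $\rho_B = \sum_i \mu_i \ket{u_i}\bra{u_i}$. Conjugation by the pseudo-inverse $\rho_B^{-1/2}\otimes \id_D$ sends $\ket{\psi}$ to $\sum_{i : \mu_i > 0} \ket{u_i}\ket{v_i}$, whose squared norm equals the Schmidt rank and is therefore at most $\dim \cH_D = 2^{|D|}$. This yields $(\rho_B^{-1/2}\otimes \id_D)\ket{\psi}\bra{\psi}(\rho_B^{-1/2}\otimes \id_D) \leq 2^{|D|}\,\id_{BD}$, and conjugating back by $\rho_B^{1/2}\otimes \id_D$ (using that $\ket{\psi}$ lies in $\supp(\rho_B)\otimes \cH_D$ by Schmidt) gives $\ket{\psi}\bra{\psi} \leq 2^{|D|}\rho_B \otimes \id_D$. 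For a general $\tau_{BD}$ with marginal $\rho_B$, I decompose $\tau_{BD} = \sum_k q_k\ket{\psi_k}\bra{\psi_k}$ with component $B$-marginals $\rho_B^k$ satisfying $\sum_k q_k \rho_B^k = \rho_B$, apply the pure-state bound to each $\ket{\psi_k}$ (with its own marginal $\rho_B^k$), and sum to conclude $\tau_{BD} \leq 2^{|D|}\,\rho_B \otimes \id_D$.

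Reassembling the block-by-block inequality on $X$ establishes the desired operator bound and hence the fact. The main subtlety is the pseudo-inverse step in the pure-state case: one must verify that $\ket{\psi}$ is supported on $\supp(\rho_B)\otimes\cH_D$ so that $(\rho_B^{1/2}\otimes \id_D)(\rho_B^{-1/2}\otimes \id_D)\ket{\psi} = \ket{\psi}$ exactly, but this is immediate from the Schmidt form. Once this is handled, the remaining steps are bookkeeping.
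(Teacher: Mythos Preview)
Your argument is correct. The block-diagonal reduction over classical $X$, the pure-state Schmidt bound via pseudo-inverse conjugation, and the convex recombination over eigenvectors of each $\rho_{BD}^x$ all go through as you describe; the support subtlety you flag is handled exactly as stated, and the support condition $\supp(\rho_{XBD}) \subseteq \supp(\rho_X \otimes \rho_B \otimes U_D)$ required by \cref{def:maxdiv} follows from $\tr_D \rho_{BD}^x = \rho_B$ via the same reasoning.

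There is nothing to compare against: the paper states this result as a cited fact from prior work (\cite[Fact 5]{BJL21T}) and does not supply a proof of its own. Your self-contained argument is a clean direct verification of the operator inequality $\rho_{XBD} \leq 2^{|D|}\,\rho_X \otimes \rho_B \otimes \id_D$, which is exactly the standard way to establish such $\imax$ bounds.
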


For the facts stated below without citation, we refer the reader to standard textbooks~\cite{NielsenC00,Wat18}. The following facts state some basic properties of trace distance.
\begin{fact}[Data-processing inequality]
\label{fact:data}
Let $\rho, \sigma$ be states and $\cE$ be a CPTP map. Then,
\begin{itemize}
    \item $ \Vert  \cE(\rho)  - \cE(\sigma)\Vert_1  \le \Vert \rho  - \sigma \Vert_1; $
    
     \item $\Delta_B ( \cE(\rho)  , \cE(\sigma))  \le \Delta_B (\rho  , \sigma).$     
\end{itemize}
The inequality above is an equality whenever $\cE$ is a CPTP map corresponding to an isometry.
\end{fact}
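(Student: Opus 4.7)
The plan is to handle the trace-distance and Bures-distance parts separately, and then treat the equality claim for isometries as a common corollary.

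For the trace-distance bound, I would use the Jordan-Hahn decomposition of the Hermitian operator $\rho - \sigma$. Write $\rho - \sigma = P - Q$ with $P, Q \geq 0$ having mutually orthogonal supports, so that $\|\rho - \sigma\|_1 = \tr(P) + \tr(Q)$. Applying $\cE$, which is linear, positivity-preserving, and trace-preserving, yields $\cE(\rho) - \cE(\sigma) = \cE(P) - \cE(Q)$ with $\cE(P), \cE(Q) \geq 0$ (though no longer orthogonally supported in general). By the triangle inequality for the Schatten $1$-norm together with $\|X\|_1 = \tr(X)$ for $X \geq 0$,
\begin{equation*}
\|\cE(\rho) - \cE(\sigma)\|_1 \;\leq\; \|\cE(P)\|_1 + \|\cE(Q)\|_1 \;=\; \tr(\cE(P)) + \tr(\cE(Q)) \;=\; \tr(P) + \tr(Q) \;=\; \|\rho - \sigma\|_1.
\end{equation*}

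For the Bures-distance bound, I would reduce to the monotonicity of fidelity, namely $\fid(\cE(\rho), \cE(\sigma)) \geq \fid(\rho, \sigma)$, from which the inequality follows because $t \mapsto \sqrt{1-t}$ is monotonically decreasing on $[0,1]$. To establish fidelity monotonicity I would invoke Uhlmann's theorem: choose purifications $|\phi_\rho\rangle, |\phi_\sigma\rangle$ of $\rho, \sigma$ on an extended system such that $|\langle \phi_\rho \mid \phi_\sigma \rangle| = \fid(\rho, \sigma)$. Using the Stinespring isometry extension $V$ of $\cE$ provided by \cref{fact:stinespring}, the vectors $(V \otimes \id)|\phi_\rho\rangle$ and $(V \otimes \id)|\phi_\sigma\rangle$ are purifications of $\cE(\rho)$ and $\cE(\sigma)$, respectively. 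Applying Uhlmann once more,
\begin{equation*}
\fid(\cE(\rho), \cE(\sigma)) \;\geq\; |\langle \phi_\rho | (V^\dagger V \otimes \id) | \phi_\sigma\rangle| \;=\; |\langle \phi_\rho | \phi_\sigma\rangle| \;=\; \fid(\rho, \sigma),
\end{equation*}
using $V^\dagger V = \id$.

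For the equality claim, I observe that when $\cE$ is itself an isometry $V$ (so $\cE(X) = V X V^\dagger$ with $V^\dagger V = \id$), the operator $V(\rho-\sigma)V^\dagger$ has exactly the same nonzero singular values as $\rho - \sigma$, so $\|V\rho V^\dagger - V\sigma V^\dagger\|_1 = \|\rho - \sigma\|_1$, and the fidelity equality $\fid(V\rho V^\dagger, V\sigma V^\dagger) = \fid(\rho, \sigma)$ follows from the Uhlmann argument applied in both directions (the dilation being trivial). Together these yield equality in the Bures distance as well.

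The only non-elementary ingredient is Uhlmann's theorem, which I will cite from the standard references \cite{NielsenC00,Wat18}; the remaining manipulations are routine linear algebra. The main (mild) obstacle is being careful with the identification of Hilbert spaces when enlarging purifications through the Stinespring dilation, but no new quantum-information machinery is required beyond what is already introduced in the preliminaries.
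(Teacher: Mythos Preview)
Your proof is correct and follows the standard textbook arguments. The paper does not actually prove this fact; it is stated without proof under the blanket remark ``For the facts stated below without citation, we refer the reader to standard textbooks~\cite{NielsenC00,Wat18}.'' So there is nothing to compare against, and your write-up would serve as a perfectly valid proof should one be desired.
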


\begin{fact}\label{traceavg1}
Let $\rho,\sigma$ be states. Let $\Pi$ be a projector. Then,
\begin{equation*}
    \tr(\Pi \rho)  \left\|  \frac{\Pi \rho_{} \Pi}{\tr(\Pi \rho) }-  \frac{\Pi\sigma_{} \Pi}{\tr(\Pi \sigma)} \right\|_1 \leq \| \rho_{}-\sigma_{} \|_1.
\end{equation*}
\end{fact}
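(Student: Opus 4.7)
The plan is to reduce the statement to a cleaner inequality and then exploit the Jordan decomposition of $\rho - \sigma$. Setting $p = \tr(\Pi\rho)$ and $q = \tr(\Pi\sigma)$ (and assuming $q > 0$, else the left-hand side is undefined), multiplying through turns the claim into
\begin{equation*}
\left\|\Pi\rho\Pi - \tfrac{p}{q}\,\Pi\sigma\Pi\right\|_1 \leq \|\rho-\sigma\|_1.
\end{equation*}
I would first write $\Pi\rho\Pi - \tfrac{p}{q}\Pi\sigma\Pi = \Pi(\rho-\sigma)\Pi + \bigl(1 - \tfrac{p}{q}\bigr)\Pi\sigma\Pi$ and apply the triangle inequality for the trace norm. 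Since $\Pi\sigma\Pi \geq 0$ has trace $q$, the second term contributes at most $|1 - p/q|\cdot q = |p - q|$, reducing the goal to the key inequality
\begin{equation*}
\|\Pi(\rho-\sigma)\Pi\|_1 + |p - q| \leq \|\rho-\sigma\|_1. \qquad (\star)
\end{equation*}

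To establish $(\star)$, I would invoke the Jordan decomposition of the Hermitian operator $\rho - \sigma$, writing $\rho - \sigma = P - Q$ with $P, Q \geq 0$ and $PQ = 0$, so that $\|\rho - \sigma\|_1 = \tr(P) + \tr(Q)$. The crucial observation is that $\tr(\rho) = \tr(\sigma) = 1$ forces $\tr(P) = \tr(Q)$, hence both equal $\tfrac{1}{2}\|\rho-\sigma\|_1$. For the first term of $(\star)$, the triangle inequality applied to $\Pi(\rho-\sigma)\Pi = \Pi P \Pi - \Pi Q \Pi$ (each summand being PSD) yields $\|\Pi(\rho-\sigma)\Pi\|_1 \leq \tr(\Pi P) + \tr(\Pi Q)$. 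For the second term, $p - q = \tr(\Pi(\rho-\sigma)) = \tr(\Pi P) - \tr(\Pi Q)$. A short case split on the sign of $p - q$ collapses the combined bound to $2\max\bigl(\tr(\Pi P), \tr(\Pi Q)\bigr)$, which is at most $2\max\bigl(\tr(P), \tr(Q)\bigr) = \|\rho-\sigma\|_1$ by the trace equality just noted.

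The main subtlety is shaving the naive factor of $2$ down to $1$. A direct use of contractivity only gives $\|\Pi(\rho-\sigma)\Pi\|_1 \leq \|\rho - \sigma\|_1$ and $|p - q| \leq \|\rho - \sigma\|_1$ separately, which summed yield only a factor-$2$ bound. The point is that these two quantities cannot simultaneously saturate their naive bounds: a large discrepancy $|p - q|$ consumes the positive (respectively negative) mass available in the Jordan decomposition, leaving less room for $\|\Pi(\rho-\sigma)\Pi\|_1$. The identity $\tr(P) = \tr(Q)$ is precisely what quantifies this tradeoff and delivers the tight constant.
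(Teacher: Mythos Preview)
Your argument is correct. The paper does not actually supply a proof of this fact: it is listed among the preliminaries with the blanket remark that uncited facts can be found in standard textbooks (Nielsen--Chuang, Watrous). So there is no ``paper's own proof'' to compare against; you have filled in a gap the authors chose to leave implicit.

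Your approach via the Jordan decomposition is clean and tight. The key move---recognizing that $\tr(P)=\tr(Q)=\tfrac12\|\rho-\sigma\|_1$ lets you collapse $\tr(\Pi P)+\tr(\Pi Q)+|\tr(\Pi P)-\tr(\Pi Q)|$ to $2\max(\tr(\Pi P),\tr(\Pi Q))\le \|\rho-\sigma\|_1$---is exactly the right way to avoid the lossy factor of $2$. One small presentational point: you might also handle the degenerate case $p=0$ (where the left-hand side is $0$ by convention and the bound is trivial) alongside your stated assumption $q>0$, just for completeness.
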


\begin{fact}\label{fact:traceconvex} Let $\rho, \sigma$  be states such that $\rho = \sum_{x} p_x \rho^x$,  $\sigma = \sum_{x} p_x \sigma^x$, $\{\rho^x, \sigma^x\}_x$ are states and $\sum_x p_x =1$. Then, 
\begin{equation*}
    \Vert\rho- \sigma \Vert_1 \leq \sum_x p_x \Vert \rho^x -\sigma^x \Vert_1.
\end{equation*}

\end{fact}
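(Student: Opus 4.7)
The plan is to derive this directly from the triangle inequality and absolute homogeneity of the Schatten $1$-norm. Since the weights $p_x$ appear identically in the decompositions of $\rho$ and $\sigma$, linearity gives
\[
    \rho - \sigma \;=\; \sum_x p_x \rho^x - \sum_x p_x \sigma^x \;=\; \sum_x p_x (\rho^x - \sigma^x).
\]
Applying the triangle inequality for $\|\cdot\|_1$ term by term, and then pulling the nonnegative scalars $p_x \geq 0$ out of the norm, yields
\[
    \|\rho - \sigma\|_1 \;=\; \Bigl\| \sum_x p_x (\rho^x - \sigma^x) \Bigr\|_1 \;\leq\; \sum_x \| p_x (\rho^x - \sigma^x) \|_1 \;=\; \sum_x p_x \| \rho^x - \sigma^x \|_1,
\]
which is exactly the claimed inequality.

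There is essentially no obstacle here, as this is just the convexity of the trace norm inherited from the triangle inequality; no spectral or operator-theoretic machinery is needed, and the hypotheses $\sum_x p_x = 1$ and $\rho^x,\sigma^x$ being states are not even used in this step (they only ensure that $\rho$ and $\sigma$ are themselves states). If one wanted to present the argument without explicitly invoking linearity of the sum (e.g.\ if the index set were infinite and convergence was a concern), one would instead induct on the number of summands starting from the two-term case $\|p(\rho^1-\sigma^1) + (1-p)(\rho^2-\sigma^2)\|_1 \leq p\|\rho^1-\sigma^1\|_1 + (1-p)\|\rho^2-\sigma^2\|_1$, but for the finite sums considered throughout the paper the one-line triangle-inequality argument suffices.
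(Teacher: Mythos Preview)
Your proof is correct; this is exactly the standard triangle-inequality argument for convexity of the trace norm. The paper does not actually provide its own proof of this fact (it is listed among the basic facts referred to standard textbooks), so there is nothing to compare against, but your argument is the expected one.
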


The next result is useful for the rejection sampling step in the analysis of our candidate code.
\begin{fact}[\protect{\cite{JainRS02}}]
\label{rejectionsampling}
Let $\rho_{A'B}, \sigma_{AB}$ be pure states such that
$\dmax{\rho_B}{\sigma_B} \leq k$. Let Alice and Bob share $\sigma_{AB}$. There exists an isometry $V: \cH_A \rightarrow \cH_{A'} \otimes \cH_C$ such that:
\begin{enumerate}
\item  $(V \otimes \id_B) \sigma_{AB}(V \otimes \id_B)^\dagger  = \phi_{A'BC}$, where $C$ is a single qubit register. 
\item Let $C$ be the outcome of measuring $\phi_C$ in the standard basis. Then $\Pr[C=1] \geq 2^{-k}$.
\item Conditioned on outcome $C=1$, the state shared between Alice and Bob is $\rho_{A'B}$.  
\end{enumerate} 
\end{fact}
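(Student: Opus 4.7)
The plan is to realize the isometry $V$ via Uhlmann's theorem after constructing a carefully chosen second purification of $\sigma_B$ that explicitly exposes the rejection-sampling structure.

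First, I would translate the hypothesis $\dmax{\rho_B}{\sigma_B} \leq k$ into the operator inequality $2^{-k}\rho_B \leq \sigma_B$ using \cref{def:maxdiv}. Define the positive semidefinite operator $\omega_B \defeq \sigma_B - 2^{-k}\rho_B$, which satisfies $\tr(\omega_B) = 1 - 2^{-k}$. Pick any purification $\ket{\omega}_{\tilde{A}B}$ of $\omega_B$, and embed $\tilde{A}$ isometrically into $\cH_{A'}$ (enlarging $\cH_{A'}$ by padding with ancillary zeros if necessary, without loss of generality, since we may always assume $A$ is at least as large as the purifying register actually needed). Similarly view $\ket{\rho}_{A'B}$ as a vector in $\cH_{A'}\otimes\cH_B$. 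Introduce a fresh qubit register $C$ and define
\[
\ket{\psi}_{A'BC} \;\defeq\; 2^{-k/2}\,\ket{\rho}_{A'B}\otimes\ket{1}_C \;+\; \ket{\omega}_{A'B}\otimes\ket{0}_C.
\]
By orthogonality of $\ket{0}_C$ and $\ket{1}_C$ and the fact that $\ket{\rho}$ and $\ket{\omega}$ are unit/subnormalized purifications of $\rho_B$ and $\omega_B$ respectively, a direct computation gives
\[
\tr_{A'C}\!\ketbra{\psi}_{A'BC} \;=\; 2^{-k}\rho_B + \omega_B \;=\; \sigma_B.
\]

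Hence $\ket{\psi}_{A'BC}$ and $\ket{\sigma}_{AB}$ are both purifications of the same marginal $\sigma_B$. Uhlmann's theorem, in its isometry form, then produces an isometry $V:\cH_A\to\cH_{A'}\otimes\cH_C$ with $(V\otimes\id_B)\ket{\sigma}_{AB} = \ket{\psi}_{A'BC}$, establishing item 1 (with $C$ a single qubit, as required). For item 2, measuring $C$ in the computational basis yields outcome $1$ with probability equal to $\|2^{-k/2}\ket{\rho}_{A'B}\|^2 = 2^{-k}$, since the two summands of $\ket{\psi}$ are orthogonal thanks to the $C$ register. For item 3, the post-measurement state on $A'B$ conditioned on $C=1$ is the normalization of the $\ket{1}_C$-component, which is exactly $\ket{\rho}_{A'B}$, whose density matrix is $\rho_{A'B}$.

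The only technical subtlety — and the main thing to get right — is the dimensional bookkeeping when invoking Uhlmann's theorem: we must ensure that $\cH_{A'}\otimes\cH_C$ is large enough to host both components of $\ket{\psi}$ and that the purifying spaces on the two sides of Uhlmann's isomorphism match. This is resolved by the standard padding argument noted above; after embedding, no loss of generality occurs and $V$ is genuinely an isometry (rather than only a partial isometry). Once this is in place, the three claimed properties follow immediately from the construction of $\ket{\psi}_{A'BC}$ and basic orthogonality.
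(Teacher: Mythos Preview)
The paper does not prove this statement: it is stated as a \emph{fact} (in the paper's convention, a result imported from prior work) and simply cited to~\cite{JainRS02}. So there is no ``paper's own proof'' to compare against.

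Your argument is the standard rejection-sampling construction and is correct. You build an explicit purification $\ket{\psi}_{A'BC}$ of $\sigma_B$ that bakes in the desired conditional structure, then invoke Uhlmann's theorem to obtain $V$. Items~1--3 follow exactly as you describe; the probability of outcome $C=1$ is exactly $2^{-k}$, which in particular satisfies the stated bound $\geq 2^{-k}$.

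The one point worth tightening is your ``padding'' discussion. You write that you may enlarge $\cH_{A'}$ if necessary, but $A'$ is fixed by the given state $\rho_{A'B}$. The cleaner way to handle the dimensions is this: both $\ket{\sigma}_{AB}$ and $\ket{\psi}_{A'CB}$ purify the same $\sigma_B$, so Uhlmann yields an isometry from $\supp(\sigma_A)$ onto $\supp(\psi_{A'C})$ (both have dimension $\mathrm{rank}(\sigma_B)$), and since $\ket{\sigma}_{AB}$ is supported there, that is all you need; extending $V$ arbitrarily to the rest of $\cH_A$ is harmless provided $\dim(\cH_A)\le 2\dim(\cH_{A'})$. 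If that last inequality fails, the fact as literally stated cannot hold (there is no isometry $\cH_A\to\cH_{A'}\otimes\cH_C$ at all), so one must read the statement with the usual implicit convention that the purifying registers are large enough --- which is indeed how the paper uses it (with $A=A'=\hat M$). Your proof is fine once this is phrased carefully.
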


The following fact connects the fidelity, trace distance, and the Bures metric.
\begin{fact}[\protect{Fuchs-van de Graaf inequalities~\cite{FvdG06} (see also~\cite[Theorem 4.10]{WatrousQI})}]
\label{fact:TracevsFidelityvsBures}
Let $\rho,\sigma$ be two states. 
Then,
\[  1-\fid(\rho,\sigma) \leq   \frac{1}{2} \Vert \rho -\sigma \Vert_1\leq \sqrt{ 1-\fid(\rho,\sigma)^2} \quad \textrm{and} \quad \Delta_B(\rho,\sigma)^2 \leq  \frac{1}{2} \Vert \rho -\sigma \Vert_1 \leq  \sqrt{2}\Delta_B(\rho,\sigma).  \]
\end{fact}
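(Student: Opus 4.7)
The plan is to establish the two inequalities for fidelity/trace distance first, and then derive the Bures metric consequences as essentially book-keeping. For the pair $1-\fid(\rho,\sigma)\le\tfrac{1}{2}\|\rho-\sigma\|_1\le\sqrt{1-\fid(\rho,\sigma)^2}$, I would handle the upper and lower bounds separately, using purifications for the upper bound and a variational/measurement characterization of fidelity for the lower bound.

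For the upper bound on trace distance, I would first reduce to pure states via Uhlmann's theorem: choose purifications $\ket{\psi}_{AB}$ of $\rho_A$ and $\ket{\phi}_{AB}$ of $\sigma_A$ satisfying $|\braket{\psi}{\phi}|=\fid(\rho,\sigma)$. Monotonicity of the trace distance under partial trace (\cref{fact:data}) gives $\|\rho-\sigma\|_1\le\|\ketbra{\psi}-\ketbra{\phi}\|_1$. For pure states this last quantity is computable: the operator $\ketbra{\psi}-\ketbra{\phi}$ has rank at most $2$ and its two nonzero eigenvalues are $\pm\sqrt{1-|\braket{\psi}{\phi}|^2}$ (a short linear algebra calculation restricted to $\operatorname{span}(\ket{\psi},\ket{\phi})$). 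Hence $\|\ketbra{\psi}-\ketbra{\phi}\|_1=2\sqrt{1-\fid(\rho,\sigma)^2}$, yielding the claim.

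For the lower bound $1-\fid(\rho,\sigma)\le\tfrac{1}{2}\|\rho-\sigma\|_1$, I would invoke the measurement characterization of fidelity due to Fuchs and Caves: $\fid(\rho,\sigma)=\min_{\{M_i\}}\sum_i\sqrt{p_i q_i}$, where the minimum runs over POVMs and $p_i=\tr(M_i\rho)$, $q_i=\tr(M_i\sigma)$. Fix a POVM achieving (or approximating) this minimum. Then combine the classical inequality $1-\sum_i\sqrt{p_iq_i}\le\tfrac{1}{2}\sum_i|p_i-q_i|$, which follows from $(\sqrt{p_i}-\sqrt{q_i})^2\le|p_i-q_i|$ and Cauchy-Schwarz, together with the fact that the classical total variation distance of measurement outcomes is upper bounded by $\tfrac{1}{2}\|\rho-\sigma\|_1$ (this is itself a consequence of the dual formulation of trace distance as $\tfrac{1}{2}\|\rho-\sigma\|_1=\max_{0\le M\le \id}\tr(M(\rho-\sigma))$).

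The Bures inequalities then follow cleanly from $\Delta_B(\rho,\sigma)^2=1-\fid(\rho,\sigma)$. The left inequality $\Delta_B(\rho,\sigma)^2\le\tfrac{1}{2}\|\rho-\sigma\|_1$ is just a restatement of the lower bound above. For the right one, I would write
\[
\tfrac{1}{2}\|\rho-\sigma\|_1\le\sqrt{1-\fid(\rho,\sigma)^2}=\sqrt{(1-\fid)(1+\fid)}\le\sqrt{2(1-\fid)}=\sqrt{2}\,\Delta_B(\rho,\sigma),
\]
using $\fid(\rho,\sigma)\le 1$. The main subtle step is the Fuchs-Caves variational identity invoked in the lower bound; everything else is either direct computation on pure states, monotonicity under CPTP maps, or elementary algebra, so I expect no real obstacle beyond citing that identity carefully.
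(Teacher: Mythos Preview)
Your proposal is correct. Note that the paper does not actually prove the first pair of inequalities at all: it treats the Fuchs--van de Graaf bounds as a cited fact and only records how the Bures-metric inequalities follow from them via
\[
1-\fid(\rho,\sigma)^2 \le 2-2\fid(\rho,\sigma)=2\Delta_B(\rho,\sigma)^2,
\]
which is precisely the algebraic step you wrote down in your last display. So on the part the paper does address, your derivation coincides with theirs.

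Where you go further is in supplying an actual proof of the Fuchs--van de Graaf inequalities themselves. Your upper-bound argument (Uhlmann purification plus the rank-$2$ eigenvalue computation for pure states, then monotonicity of trace distance under partial trace) and your lower-bound argument (Fuchs--Caves measurement characterization of fidelity combined with $(\sqrt{p_i}-\sqrt{q_i})^2\le|p_i-q_i|$ and monotonicity of trace distance under measurement) are both standard and sound. One small remark: the elementary bound $\sum_i(\sqrt{p_i}-\sqrt{q_i})^2\le\sum_i|p_i-q_i|$ already yields $1-\sum_i\sqrt{p_iq_i}\le\tfrac12\sum_i|p_i-q_i|$ directly upon summing, so the appeal to Cauchy--Schwarz is not needed there.
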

In the above fact, the second set of inequalities follows by noting that
\begin{equation*}
    1-\fid(\rho,\sigma)^2 \leq 2-2\fid(\rho,\sigma) = 2\Delta_B(\rho,\sigma)^2.
\end{equation*}

We now state Uhlmann's Theorem, which relates the  closeness of any two mixed states $\rho, \sigma$ to the closeness of two purification states $\ket{\rho} , \ket{\sigma}$, respectively. 
 \begin{fact}[Uhlmann's Theorem~\cite{uhlmann76}]\label{fact:uhlmann}
Let $\rho_A,\sigma_A\in \mathcal{D}(\cH_A)$ and let $\rho_{AB}\in \mathcal{D}(\cH_{AB})$ be a purification of $\rho_A$ and $\sigma_{AC}\in\mathcal{D}(\cH_{AC})$ be a purification of $\sigma_A$.
Then, there exists an isometry $V$ (from a subspace of $\cH_C$ to a subspace of $\cH_B$) such that
\begin{equation*}
    \Delta_B\left( \ketbra{\rho}_{AB}, \ketbra{\theta}_{AB}) =  \Delta_B(\rho_A,\sigma_A\right),
\end{equation*}
 where $\ket{\theta}_{AB} = (\id_A \otimes V) \ket{\sigma}_{AC}$.
\end{fact}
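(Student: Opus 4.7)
The plan is to reduce the claim to the statement that
\[
\max_{V} \bigl|\bra{\rho}_{AB}(\id_A \otimes V)\ket{\sigma}_{AC}\bigr| = \fid(\rho_A, \sigma_A),
\]
where the maximum is taken over isometries $V$ from a subspace of $\cH_C$ to a subspace of $\cH_B$. Indeed, for any two pure states $\ketbra{\alpha}$ and $\ketbra{\beta}$ we have $\fid(\ketbra{\alpha}, \ketbra{\beta}) = |\braket{\alpha}{\beta}|$, so $\Delta_B(\ketbra{\rho}_{AB}, \ketbra{\theta}_{AB}) = \sqrt{1 - |\braket{\rho}{\theta}|}$, and the desired equality $\Delta_B(\ketbra{\rho}_{AB}, \ketbra{\theta}_{AB}) = \Delta_B(\rho_A, \sigma_A) = \sqrt{1 - \fid(\rho_A, \sigma_A)}$ will follow as soon as I exhibit an isometry $V$ attaining $|\braket{\rho}{\theta}| = \fid(\rho_A, \sigma_A)$, with $\ket{\theta}_{AB} = (\id_A \otimes V)\ket{\sigma}_{AC}$.

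To produce such a $V$, I first pass to a canonical reference purification on a common purifying register $A' \equiv A$: for $\tau \in \{\rho_A, \sigma_A\}$, set $\ket{\tau_{\text{can}}}_{AA'} = (\sqrt{\tau} \otimes \id_{A'})\sum_i \ket{i}_A\ket{i}_{A'}$. By the essential uniqueness of purifications (Schmidt decomposition: any two purifications of a fixed state share the same Schmidt coefficients and hence differ by an isometry on the purifying register), there exist isometries $W_B : \cH_{A'} \to \cH_B$ and $W_C : \cH_{A'} \to \cH_C$ satisfying $\ket{\rho}_{AB} = (\id_A \otimes W_B)\ket{\rho_{\text{can}}}_{AA'}$ and $\ket{\sigma}_{AC} = (\id_A \otimes W_C)\ket{\sigma_{\text{can}}}_{AA'}$. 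It therefore suffices to find a unitary $U$ on $\cH_{A'}$ with $|\bra{\rho_{\text{can}}}(\id_A \otimes U_{A'})\ket{\sigma_{\text{can}}}| = \fid(\rho_A, \sigma_A)$, and then take $V := W_B\, U\, W_C^\dagger$.

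For this last step, applying the transpose trick $(X \otimes \id)\sum_i\ket{i}\ket{i} = (\id \otimes X^T)\sum_i \ket{i}\ket{i}$ lets me rewrite $(\id_A \otimes U_{A'})\ket{\sigma_{\text{can}}} = (\sqrt{\sigma_A} U^T \otimes \id_{A'})\sum_i \ket{i}\ket{i}$, and a direct computation then yields $\bra{\rho_{\text{can}}}(\id_A \otimes U_{A'})\ket{\sigma_{\text{can}}} = \tr(\sqrt{\rho_A}\sqrt{\sigma_A}\, U^T)$. Choosing $U^T := P^\dagger$, where $\sqrt{\rho_A}\sqrt{\sigma_A} = P\, |\sqrt{\rho_A}\sqrt{\sigma_A}|$ is the polar decomposition, gives $\tr(\sqrt{\rho_A}\sqrt{\sigma_A}\, U^T) = \tr(|\sqrt{\rho_A}\sqrt{\sigma_A}|) = \|\sqrt{\rho_A}\sqrt{\sigma_A}\|_1 = \fid(\rho_A, \sigma_A)$, completing the argument. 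The main obstacle is purely bookkeeping: when $|B|$ and $|C|$ differ, the composition $W_B\, U\, W_C^\dagger$ is only a partial isometry on $\cH_C$, but restricting attention to the support of $\sigma_A$ (and extending arbitrarily off the range of $W_C$) delivers an isometry between the subspaces stated in the claim.
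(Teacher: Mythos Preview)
Your proof is correct and follows the standard textbook argument for Uhlmann's theorem (reduction to canonical purifications, transpose trick, polar decomposition to pick the optimal unitary). The paper itself does not prove this statement: it is recorded as a \emph{Fact} with a citation to Uhlmann's original 1976 paper, in keeping with the paper's convention that ``facts denote results already known from prior work.'' So there is nothing to compare against; your write-up simply supplies the classical proof the paper omits.
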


\subsubsection{Pauli and Clifford operators}

We proceed to define Pauli operators and the associated Pauli and Clifford groups.
\begin{definition}[Pauli operators]\label{def:pauli}
The single-qubit \emph{Pauli operators} are given by
\[ I = \begin{pmatrix} 1 & 0 \\ 0 & 1\end{pmatrix} \quad X = \begin{pmatrix} 0 & 1 \\ 1 & 0\end{pmatrix} \quad Y = \begin{pmatrix} 0 & -i \\ i & 0 \end{pmatrix} \quad Z = \begin{pmatrix} 1 & 0 \\ 0 & -1 \end{pmatrix}.\]

 An $n$-qubit Pauli operator is given by the $n$-fold tensor product of single-qubit Pauli operators. 
 We denote the set of all $\vert A \vert$-qubit Pauli operators on $\cH_A$ by  $\cP(\cH_A)$, where $\vert \cP(\cH_A)\vert =4^{\vert A \vert}$. Any linear operator $L \in \cL(\cH_A)$ can be written as a linear combination of $\vert A \vert$-qubit Pauli operators with complex coefficients as $L = \sum_{P \in \mathcal{P}(\cH_A)} \alpha_P P$. This is called the \emph{Pauli decomposition} of a linear operator.
\end{definition}
\begin{definition}[Pauli group]
The single-qubit \emph{Pauli group} is given by
\begin{equation*}
    \{ +P, -P, \ iP, \ -iP : P \in \{ I, X, Y, Z\} \}.
\end{equation*}
The Pauli group on $\vert A \vert$-qubits is the group generated by the operators described above applied to each of $\vert A \vert$-qubits in the tensor product. We denote the $\vert A \vert$-qubit Pauli group on $\cH_A$ by  $\tilde{\cP}(\cH_A)$.
    
\end{definition}
\begin{definition}[Clifford group]\label{def:clifford}
The \emph{Clifford group} $\mathcal{C}(\cH_A)$ is defined as the group of unitaries that normalize the Pauli group $\tilde{\cP}(\cH_A)$, i.e.,
\begin{equation*}
    \mathcal{C}(\cH_A) = \{ V \in \mathcal{U}(\cH_A) : V \tilde{\cP}(\cH_A) V^\dagger =\tilde{\cP}(\cH_A)\}.
\end{equation*}
The \emph{Clifford unitaries} are the elements of the Clifford group.

\end{definition}

We will also need to work with subgroups of the Clifford group with certain special properties.
The following fact describes these properties and guarantees the existence of such subgroups.
\begin{fact}[Subgroup of the Clifford group~\cite{CLLW16}]\label{lem:subclifford}
There exists a subgroup $\mathcal{SC}(\cH_A)$ of the Clifford group $\mathcal{C}(\cH_A)$ such that given any non-identity Pauli operators $P ,Q \in \cP(\cH_A)$ we have that
\begin{equation*}
    \vert \{ C \in \cSC(\cH_A) \vert C^\dagger P C =Q \} \vert = \frac{\vert \cSC(\cH_A)  \vert}{\vert\cP(\cH_A) \vert -1} \quad \textrm{and} \quad \vert \cSC(\cH_A)  \vert = 2^{5 \vert A \vert }-2^{3 \vert A \vert}.
\end{equation*} 
Informally, applying a random Clifford operator from  $\mathcal{SC}(\cH_A)$ (by conjugation) maps $P$ to a Pauli operator chosen uniformly at random over all non-identity Pauli operators. 
Furthermore, we have that $\cP(\cH_A)\subset \cSC(\cH_A)$.

Additionally, there exists a procedure $\samp$ which when given as input a uniformly random string $R\leftarrow \bits^{5|A|}$ outputs in time $\poly(|A|)$ the classical description $\samp(R)$ of a Clifford operator $C_R\in\mathcal{SC}(\cH_A)$ with the following property:
Let $U_{\mathcal{SC}(\cH_A)}$ denote the uniform distribution over the classical descriptions of Clifford operators in $\cSC(\cH_A)$.
Then, it holds that 
\begin{equation}\label{eq:approxsample}
    \samp(R) \approx_{2^{-2|A|}} U_{\mathcal{SC}(\cH_A)},
\end{equation}
where, as before, $\approx_{2^{-2|A|}}$ means that the statistical distance between the two distributions is at most $2^{-2|A|}$.

\end{fact}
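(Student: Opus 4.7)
The plan is to build $\cSC(\cH_A)$ via the symplectic representation of the Clifford group and to read off the orbit-counting property from group theory. Set $n = |A|$. I would work throughout in the projective Clifford group $\cC(\cH_A)/U(1)$ (Cliffords modulo global phase), which contains the quotiented Pauli group $\cP(\cH_A)$ (of order $4^n$) as a normal subgroup with $\cC(\cH_A)/\cP(\cH_A) \cong \mathrm{Sp}(2n,\F_2)$; conjugation descends to the standard symplectic action on $(\F_2^{2n},\omega)$ where each nonzero vector $(a,b)$ corresponds to the Pauli $P_{(a,b)} = i^{a\cdot b}X^a Z^b$ and $\omega((a,b),(a',b')) = a\cdot b' + b\cdot a'$. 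It therefore suffices to exhibit a subgroup $H \leq \mathrm{Sp}(2n,\F_2)$ of order $2^n(4^n-1)$ acting transitively on $\F_2^{2n}\setminus\{0\}$; one then defines $\cSC(\cH_A)$ to be its preimage under $\cC(\cH_A) \to \mathrm{Sp}(2n,\F_2)$, which automatically contains $\cP(\cH_A)$.

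For the choice of $H$, I would pick an $\F_2$-basis of $\F_{2^n}$ that is self-dual with respect to the trace form, which identifies $\F_2^{2n}\cong \F_{2^n}^2$ and pushes $\omega$ forward to the trace of the $\F_{2^n}$-valued determinant form. Then $H = \mathrm{SL}(2,\F_{2^n}) = \mathrm{Sp}(2,\F_{2^n})$ embeds into $\mathrm{Sp}(2n,\F_2)$, has order $2^n(4^n-1)$, and acts transitively on $\F_{2^n}^2\setminus\{0\}$ with point-stabilizer the $2^n$-element unipotent upper-triangular subgroup. Hence $|\cSC(\cH_A)| = 4^n \cdot 2^n(4^n-1) = 2^{3n}(2^{2n}-1)$, matching the stated size. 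The orbit-stabilizer theorem applied to the transitive conjugation action on $\cP(\cH_A)\setminus\{I\}$ then yields $|\{C \in \cSC : C^\dagger P C = Q\}| = |\cSC|/(|\cP|-1) = 2^{3n}$ for every pair of non-identity Paulis $P,Q$ (taken up to phase, as is standard once $\cP$ is identified with the $4^n$ representatives $\{I,X,Y,Z\}^{\otimes n}$).

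For the efficient sampler $\samp$, I would enumerate elements of $\cSC(\cH_A)$ as pairs (Pauli factor, $\mathrm{SL}_2$ factor with a fixed lift to a Clifford circuit), each admitting $\poly(n)$-time arithmetic once $\F_{2^n}$ is represented by an irreducible polynomial; the lift of a symplectic element to a concrete Clifford circuit is computable in $\poly(n)$ time via standard symplectic-to-Clifford compilation (e.g., a Bruhat or Aaronson-Gottesman normal-form decomposition). Given $R \in \bits^{5n}$ viewed as an integer, output the element indexed by $R$ if $R < |\cSC| = 2^{5n}-2^{3n}$, and any fixed element otherwise; since the out-of-range event has probability exactly $2^{3n}/2^{5n} = 2^{-2n}$, the statistical distance of the output to $U_{\cSC(\cH_A)}$ is at most $2^{-2n}$, giving \cref{eq:approxsample}.

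The main obstacle, modest as it is, lies in the first algebraic step: verifying that $\mathrm{SL}(2,\F_{2^n})$ really embeds into $\mathrm{Sp}(2n,\F_2)$ for the \emph{standard} symplectic form, which forces the self-dual basis setup so that the $\F_{2^n}$-valued symplectic form pushes forward to $\omega$ under the trace pairing. Everything else---the orbit-counting, the explicit lift from $\mathrm{SL}_2$ to Clifford circuits, and the bound on the sampling error---is mechanical once this algebraic framework is fixed.
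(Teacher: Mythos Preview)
The paper does not give its own proof of this statement: it is stated as a Fact with a citation to \cite{CLLW16}, in line with the paper's convention that Facts denote results taken from prior work. Your construction---taking $\cSC(\cH_A)$ to be the preimage of $\mathrm{SL}(2,\F_{2^n})\hookrightarrow\mathrm{Sp}(2n,\F_2)$ under the Clifford-to-symplectic quotient, using a trace-self-dual basis of $\F_{2^n}$ over $\F_2$ to align the $\F_{2^n}$-valued determinant form with the standard binary symplectic form---is correct and is essentially the construction in the cited reference. The transitivity of $\mathrm{SL}(2,q)$ on $\F_q^2\setminus\{0\}$ together with orbit--stabilizer immediately yields the counting identity, and your sampler analysis is sound: conditioned on $R<|\cSC|$ the output is exactly uniform over $\cSC$, so the total variation distance to $U_{\cSC(\cH_A)}$ is at most the probability $2^{3n}/2^{5n}=2^{-2n}$ of the out-of-range event.
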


\paragraph{Pauli twirling and related facts.}

The analysis of our construction will require the use of several facts related to Pauli twirling. We collect them below, beginning with the usual version of the Pauli twirl.
\begin{fact}[Pauli twirl~\cite{DCEL09}]\label{lem:paulitwirl}
 Let $\rho \in \cD( \cH_A)$ be a state and $P , P' \in \cP(\cH_A)$ be Pauli operators such that $P \ne P'$. Then,
$$\sum_{Q \in \cP(\cH_A)}   Q^\dagger P Q \rho Q^\dagger P'^\dagger Q  =0 .$$

\end{fact}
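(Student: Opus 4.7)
The plan is to exploit the fundamental commutation property of Pauli operators: any two elements $P, Q \in \cP(\cH_A)$ either commute or anti-commute, so there is a sign $\epsilon(P,Q) \in \{+1,-1\}$ with $QP = \epsilon(P,Q) PQ$. Since each element of $\cP(\cH_A)$ is Hermitian and squares to $\id$, we get $Q^\dagger P Q = \epsilon(P,Q) P$ and similarly $Q^\dagger P'^\dagger Q = \epsilon(P',Q) P'^\dagger$. Substituting into the sum and pulling the scalar signs out of the operator product collapses the expression to
\begin{equation*}
\sum_{Q \in \cP(\cH_A)} Q^\dagger P Q \rho Q^\dagger P'^\dagger Q = \left(\sum_{Q \in \cP(\cH_A)} \epsilon(P,Q)\,\epsilon(P',Q)\right) P \rho P'^\dagger .
\end{equation*}
Observing that $Q \mapsto \epsilon(R, Q)$ is multiplicative in $R$ (the commutation sign is a character of the Pauli group modulo its center), we can rewrite the sign product as $\epsilon(P,Q)\,\epsilon(P',Q) = \epsilon(PP', Q)$, so the problem reduces to showing that the character sum $\sum_{Q} \epsilon(PP', Q)$ vanishes.

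To finish, I will argue that $\sum_{Q \in \cP(\cH_A)} \epsilon(R, Q) = 0$ for any non-identity Pauli $R$ (possibly up to a global phase). The hypothesis $P \ne P'$, together with Paulis being involutions, forces $R := PP'$ to differ from $\id$ up to a phase, because $PP' \propto \id$ would give $P = P'^{-1} = P'$. To see the character sum vanishes, I use that $\epsilon(R, Q)$ factorizes across tensor factors as $\prod_i \epsilon(R_i, Q_i)$, and by assumption some $R_i$ is non-identity. In that single-qubit slot, exactly two of $\{I, X, Y, Z\}$ commute with $R_i$ and two anti-commute, so the marginal sum $\sum_{Q_i} \epsilon(R_i, Q_i)$ vanishes; the full tensor-factored sum then vanishes as well. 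The main obstacle is essentially bookkeeping — ensuring that the sign conventions, the identification of global phases arising from products like $XY = iZ$, and the use of $Q^\dagger = Q$ line up — rather than a conceptual difficulty, so I anticipate no serious trouble completing the argument.
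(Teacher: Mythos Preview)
Your proof is correct. The paper does not supply its own proof of this statement; it is recorded as a known fact with a citation to~\cite{DCEL09}, so there is nothing to compare your argument against. The commutation-sign/character-sum approach you outline is the standard way to establish the Pauli twirl, and your bookkeeping (in particular, that $P\neq P'$ with $P,P'\in\cP(\cH_A)$ forces $PP'$ not to be proportional to $\id$, and that the sign function factorizes over tensor factors) is sound.
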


\begin{fact}[Subgroup Clifford twirl~\cite{BBJ23}]\label{lem:cliffordtwirl}
Let $\rho \in \cD( \cH_A)$ be a state and $P , P' \in \cP(\cH_A)$ be Pauli operators such that $P \ne P'$. Let $\cSC(\cH_A)$ be the subgroup of Clifford group as defined in \cref{lem:subclifford}. Then, 
    \begin{equation*}
        \sum_{C \in \cSC(\cH_A)}   C^\dagger P C \rho C^\dagger P'^\dagger C  = 0.
    \end{equation*}
  As an immediate corollary, we conclude that for any normal operator $M \in \cL( \cH_A)$ such that $M^\dagger M=MM^\dagger$ we have that
  \begin{equation*}
      \sum_{C \in \cSC(\cH_A)}   C^\dagger P C M C^\dagger P'^\dagger C  =0,
  \end{equation*}
since $M$ has an eigen-decomposition. 
\end{fact}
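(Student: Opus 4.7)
The plan is to mimic the standard proof of the Pauli twirl (Fact~\ref{lem:paulitwirl}), exploiting only two structural properties of $\cSC(\cH_A)$ guaranteed by Fact~\ref{lem:subclifford}: that $\cSC(\cH_A)$ is a group, and that it contains $\cP(\cH_A)$ as a subset. The fine-grained uniform-conjugation property from Fact~\ref{lem:subclifford} will not actually be needed for this particular lemma.

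Concretely, the key step will be a clever reparametrization of the sum. Given $P\neq P'$ in $\cP(\cH_A)$, the product $PP'^\dagger$ is a non-identity element of the Pauli group (up to phase), since equality of $P$ and $P'$ as elements of the phase-free set $\cP(\cH_A)$ is equivalent to $PP'^\dagger \in \{\pm I,\pm iI\}$. Consequently, there exists $Q_0\in \cP(\cH_A)$ that anticommutes with $PP'^\dagger$. If we define $\epsilon_1,\epsilon_2\in\{\pm 1\}$ by $Q_0^\dagger P Q_0 = \epsilon_1 P$ and $Q_0^\dagger P'^\dagger Q_0 = \epsilon_2 P'^\dagger$, multiplying these equations yields $\epsilon_1\epsilon_2(PP'^\dagger)=Q_0^\dagger(PP'^\dagger)Q_0=-PP'^\dagger$, hence $\epsilon_1\epsilon_2=-1$.

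Since $Q_0\in\cP(\cH_A)\subset\cSC(\cH_A)$ and $\cSC(\cH_A)$ is a group, the map $C\mapsto Q_0 C$ is a bijection on $\cSC(\cH_A)$. Denoting $S=\sum_{C\in\cSC(\cH_A)} C^\dagger PC\rho C^\dagger P'^\dagger C$, the reparametrization $C\leftarrow Q_0 C$ gives
\begin{align*}
S &= \sum_{C\in\cSC(\cH_A)} (Q_0 C)^\dagger P (Q_0 C)\,\rho\, (Q_0 C)^\dagger P'^\dagger (Q_0 C) \\
  &= \sum_{C} C^\dagger (Q_0^\dagger P Q_0) C\,\rho\, C^\dagger (Q_0^\dagger P'^\dagger Q_0) C \;=\; \epsilon_1\epsilon_2\, S \;=\; -S,
\end{align*}
so $S=0$, proving the main claim. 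For the corollary about normal operators $M$, I would invoke the spectral theorem to write $M=\sum_i \lambda_i \ketbra{v_i}$ for orthonormal eigenvectors $\ket{v_i}$ and complex eigenvalues $\lambda_i$, apply the main statement with the (pure) state $\rho=\ketbra{v_i}$ to each term, and conclude by linearity.

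The only part requiring any care is the phase bookkeeping when establishing $\epsilon_1\epsilon_2=-1$, together with confirming that $PP'^\dagger$ is genuinely non-identity (up to phase) whenever $P\neq P'$ as elements of $\cP(\cH_A)$; once this is in place, the entire argument is essentially a one-line symmetry trick. No obstacle arises from the structure of $\cSC(\cH_A)$ itself, since we need only that it is a group containing all Paulis.
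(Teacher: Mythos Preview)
The paper does not give its own proof of this statement; it is recorded as a fact attributed to~\cite{BBJ23}. Your argument is correct and self-contained: the anticommutation trick (choosing $Q_0\in\cP(\cH_A)\subset\cSC(\cH_A)$ that anticommutes with $PP'^\dagger$) combined with the left-translation bijection on the group $\cSC(\cH_A)$ yields $S=-S$ cleanly, and the extension to normal $M$ via the spectral theorem is immediate.
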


\begin{fact}[Modified twirl~\cite{BBJ23}]\label{lem:cliffordtwirl1}
 Let $\rho_{\hat{A}A}$ be the canonical purification of $\rho_A$. Let $P \in \cP(\cH_A),P' \in \cP(\cH_A)$ be Pauli operators such that $P \ne P'$. Let $\cSC(\cH_A)$ be the subgroup of Clifford group as defined in \cref{lem:subclifford}. Then, 
    $$ \sum_{C \in \cSC(\cH_A)}   (\id \otimes C^\dagger PC) \rho_{\hat{A}A} (\id \otimes C^\dagger P' C)  = 0.$$
\end{fact}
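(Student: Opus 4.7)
The plan is to reduce the claim to many instances of the Subgroup Clifford Twirl of \cref{lem:cliffordtwirl} applied to operators acting on register $A$ alone.

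First, I will use the spectral decomposition $\rho_A = \sum_k p_k \ket{k}\bra{k}$ and take the eigenbasis $\{\ket{k}\}$ of $\rho_A$ as the orthonormal basis defining the canonical purification. In this basis the purification takes its Schmidt form $\ket{\rho}_{\hat A A} = \sum_k \sqrt{p_k}\,\ket{k}_{\hat A}\ket{k}_A$, so
\begin{equation*}
    \rho_{\hat A A} = \sum_{k,l} \sqrt{p_k p_l}\, \bigl(\ket{k}\bra{l}\bigr)_{\hat A}\otimes \bigl(\ket{k}\bra{l}\bigr)_A.
\end{equation*}
Substituting this expression and pulling the sum over $C$ past the $\hat A$ tensor factor (on which the twirling operators act as the identity) gives
\begin{equation*}
    \sum_{C\in\cSC(\cH_A)}(\id \otimes C^\dagger PC)\,\rho_{\hat A A}\,(\id \otimes C^\dagger P' C) \;=\; \sum_{k,l}\sqrt{p_k p_l}\, \bigl(\ket{k}\bra{l}\bigr)_{\hat A}\otimes T_{kl},
\end{equation*}
where $T_{kl} \defeq \sum_{C\in\cSC(\cH_A)} (C^\dagger P C)\, \ket{k}\bra{l}\, (C^\dagger P' C)$. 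Hence it suffices to show $T_{kl} = 0$ for every pair $(k,l)$.

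For the diagonal case $k = l$, the operator $\ket{k}\bra{k}$ is a pure density matrix, so \cref{lem:cliffordtwirl} applied with $\rho = \ket{k}\bra{k}$ gives $T_{kk} = 0$ directly. For the off-diagonal case $k \neq l$, the operator $\ket{k}\bra{l}$ is \emph{not} normal, so the corollary of \cref{lem:cliffordtwirl} does not apply as stated; I sidestep this by splitting it into its Hermitian and anti-Hermitian parts,
\begin{equation*}
    \ket{k}\bra{l} \;=\; \frac{\ket{k}\bra{l} + \ket{l}\bra{k}}{2} \;+\; i\cdot\frac{\ket{k}\bra{l} - \ket{l}\bra{k}}{2i},
\end{equation*}
where both summands are Hermitian (and therefore normal). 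Applying the corollary of \cref{lem:cliffordtwirl} to each summand and using linearity of the twirl yields $T_{kl} = 0$.

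Combining the two cases gives $T_{kl} = 0$ for all $k,l$, so the whole sum vanishes. The only subtle point is the non-normality of the off-diagonal matrix units $\ket{k}\bra{l}$ for $k \neq l$; this is the main obstacle, but it is resolved cleanly by the Hermitian/anti-Hermitian decomposition above, since every linear operator is a complex linear combination of two Hermitian (hence normal) pieces.
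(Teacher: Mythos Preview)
The paper does not include its own proof of this statement: it is labeled a \emph{Fact} and cited from~\cite{BBJ23}, and the paper explicitly reserves that label for results taken from prior work without reproof. So there is no in-paper argument to compare against.

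Your proof is correct. The reduction to \cref{lem:cliffordtwirl} via a basis expansion on $\hat A$ is the natural move, and the Hermitian/anti-Hermitian split cleanly handles the one subtlety (non-normality of $\ket k\bra l$ for $k\neq l$). One cosmetic point: the paper's canonical purification is defined relative to a fixed computational basis, not the eigenbasis of $\rho_A$; your choice of eigenbasis is without loss of generality because the claimed identity is invariant under unitaries on $\hat A$ (the twirl touches only $A$), but a one-line remark to that effect would be good hygiene. Alternatively, you could skip the eigenbasis entirely: the Hermitian/anti-Hermitian decomposition already extends the corollary of \cref{lem:cliffordtwirl} from normal $M$ to \emph{all} $M\in\cL(\cH_A)$ by linearity, after which any expansion $\rho_{\hat A A}=\sum_{i,j}\ket i\bra j_{\hat A}\otimes M_{ij}$ finishes the job.
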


\begin{fact}[Uniform Pauli conjugation]\label{lem:paulitwirl2}
  Let $P \in \cP(\cH_A)$ be a Pauli operator. If $P=\id_A$, then 
  $$\frac{1}{\vert \cP(\cH_A) \vert} \sum_{Q \in \cP(\cH_A)}    Q P Q^\dagger  =\id_A ,$$
  else if $P \ne \id_A$
 $$\frac{1}{\vert \cP(\cH_A) \vert} \sum_{Q \in \cP(\cH_A)}    Q P Q^\dagger  =0 .$$  
\end{fact}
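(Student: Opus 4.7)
The plan is to reduce the multi-qubit identity to the single-qubit case via the tensor-product structure of the Pauli basis, and then verify the single-qubit identity by direct inspection of commutation relations. A more conceptual alternative, which I mention at the end, is to recognise the average as a Hilbert--Schmidt projection onto the commutant of the Pauli group.

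First I would write $n = |A|$ and decompose any $P, Q \in \cP(\cH_A)$ as $P = P_1 \otimes \cdots \otimes P_n$ and $Q = Q_1 \otimes \cdots \otimes Q_n$, with $P_i, Q_i \in \{I,X,Y,Z\}$. Since conjugation distributes over tensor products, $Q P Q^\dagger = \bigotimes_{i=1}^n Q_i P_i Q_i^\dagger$, and summing independently over each factor gives
\[
\frac{1}{|\cP(\cH_A)|} \sum_{Q \in \cP(\cH_A)} Q P Q^\dagger \;=\; \bigotimes_{i=1}^n \left( \frac{1}{4} \sum_{Q_i \in \{I,X,Y,Z\}} Q_i P_i Q_i^\dagger \right).
\]
Thus it suffices to establish the claim for a single qubit.

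For the single-qubit case, I would use the fact that any two single-qubit Paulis either commute or anticommute, so $Q_i P_i Q_i^\dagger \in \{+P_i, -P_i\}$ for every $Q_i$. If $P_i = I$, then $Q_i P_i Q_i^\dagger = I$ for each of the four choices of $Q_i$, giving an average of $I$. If $P_i \neq I$, then a direct check shows that exactly two of the four single-qubit Paulis commute with $P_i$ (namely $I$ and $P_i$ itself) and the other two anticommute with $P_i$, so the four contributions cancel: $\tfrac{1}{4}(P_i + P_i - P_i - P_i) = 0$.

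Combining these observations yields the fact: if $P = \id_A$ every tensor factor contributes $I$ and the product is $\id_A$; if $P \neq \id_A$ then some $P_i \neq I$, the corresponding factor vanishes, and the whole tensor product is zero. There is no real obstacle here beyond careful bookkeeping over the tensor factors. As a sanity-check alternative, one may observe that $X \mapsto \tfrac{1}{|\cP(\cH_A)|}\sum_Q QXQ^\dagger$ is the twirl over $\tilde{\cP}(\cH_A)$ and thus equals the Hilbert--Schmidt projection onto the commutant of the Pauli group, which is $\mathrm{span}\{\id_A\}$; since $\tr(P)=0$ for all $P \neq \id_A$ while $\tr(\id_A) = 2^{|A|}$, the formula follows immediately.
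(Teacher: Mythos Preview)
Your proof is correct but takes a different route from the paper. The paper dispatches the nontrivial case $P\neq\id_A$ in one line by invoking the Pauli twirl (\cref{lem:paulitwirl}) with the substitution $(P,P',\rho)\leftarrow(P,\id_A,U_A)$: since $U_A$ is a scalar multiple of the identity and Pauli operators are self-adjoint, the twirl identity $\sum_Q Q^\dagger P Q \rho Q^\dagger P'^\dagger Q = 0$ collapses directly to $\sum_Q Q P Q^\dagger = 0$. You instead give a self-contained argument via tensor factorization and an explicit single-qubit commutation count, and also sketch the commutant-projection viewpoint. Your approach is more elementary and does not rely on \cref{lem:paulitwirl} as a black box, which is pedagogically nice; the paper's approach is shorter given that \cref{lem:paulitwirl} is already in hand.
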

\begin{proof}
If $P=\id_A$, the statement trivially follows. If $P \ne \id_A$, the statement follows from \cref{lem:paulitwirl} by considering $(P,P',\rho)$ in \cref{lem:paulitwirl} to be $(P, \id_A, U_A)$.
\end{proof}
Similarly, the following result is a consequence of \cref{lem:cliffordtwirl}.
\begin{fact}[Uniform conjugation]\label{lem:cliffordtwirl2}
  Let $P \in \cP(\cH_A)$ be a Pauli operator. If $P=\id_A$, then 
  $$\frac{1}{\vert \cSC(\cH_A) \vert} \sum_{C \in \cSC(\cH_A)}    C P C^\dagger  =\id_A .$$
  Else, if $P \ne \id_A$, then
 $$\frac{1}{\vert \cSC(\cH_A) \vert} \sum_{C \in \cSC(\cH_A)}    C P C^\dagger  =0 .$$  
\end{fact}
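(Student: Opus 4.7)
The plan is to mimic the proof of the analogous Pauli-conjugation fact (\cref{lem:paulitwirl2}), but substituting \cref{lem:cliffordtwirl} (Subgroup Clifford twirl) in place of \cref{lem:paulitwirl}. The case $P = \id_A$ is immediate, since $CPC^\dagger = CC^\dagger = \id_A$ for every unitary $C$, so the normalized sum equals $\id_A$.

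For the non-trivial case $P \ne \id_A$, I would instantiate \cref{lem:cliffordtwirl} with the particular choices $P' = \id_A$ and $\rho = U_A = \id_A/\dim(\cH_A)$ (the maximally mixed state). Since $P \ne \id_A = P'$, the hypothesis of \cref{lem:cliffordtwirl} is satisfied, and the conclusion reads
\begin{equation*}
    \sum_{C \in \cSC(\cH_A)} C^\dagger P C \, U_A \, C^\dagger \id_A C \;=\; 0.
\end{equation*}
Using $C^\dagger C = \id_A$ together with the fact that $U_A$ is a nonzero scalar multiple of the identity, this simplifies to $\sum_{C \in \cSC(\cH_A)} C^\dagger P C = 0$.

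The last step is to convert this into the claimed identity $\sum_{C \in \cSC(\cH_A)} C P C^\dagger = 0$. Because $\cSC(\cH_A)$ is a group (as a subgroup of $\cC(\cH_A)$), the reindexing $C \mapsto C^\dagger$ is a bijection on $\cSC(\cH_A)$, so
\begin{equation*}
    \sum_{C \in \cSC(\cH_A)} C P C^\dagger \;=\; \sum_{C \in \cSC(\cH_A)} C^\dagger P C \;=\; 0.
\end{equation*}
Dividing both cases by $|\cSC(\cH_A)|$ gives the statement. No step here is a genuine obstacle: the argument is a direct specialization of \cref{lem:cliffordtwirl}, and the only mildly subtle point worth noting in the write-up is the group-reindexing bijection used to swap the roles of $C$ and $C^\dagger$.
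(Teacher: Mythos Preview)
Your proposal is correct and matches the paper's approach: the paper simply remarks that \cref{lem:cliffordtwirl2} ``is a consequence of \cref{lem:cliffordtwirl}'' in the same way that \cref{lem:paulitwirl2} follows from \cref{lem:paulitwirl}, i.e., by specializing to $(P', \rho) = (\id_A, U_A)$. The one detail you correctly add is the group-reindexing $C \mapsto C^\dagger$ to pass from $\sum_C C^\dagger P C$ to $\sum_C C P C^\dagger$; this step is unnecessary in the Pauli case (where $Q^\dagger = Q$) but is needed here, and the paper's one-line remark leaves it implicit.
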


The following fact is folklore.  
We provide a proof for completeness. 
\begin{fact}[Pauli $1$-design]\label{fact:bellbasis}
    Let $\rho_{AB}$ be a state. Then,
$$\frac{1}{\vert \cP(\cH_A) \vert} \sum_{Q \in \cP(\cH_A) } (Q \otimes \id)   \rho_{A B}  ( Q^\dagger \otimes \id  )  = U_{A } \otimes  \rho_B.$$
\end{fact}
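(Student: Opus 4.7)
The plan is to use the Pauli decomposition of $\rho_{AB}$ together with the Uniform Pauli conjugation fact (\cref{lem:paulitwirl2}) to show that only the identity component on register $A$ survives the averaging.

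First I would write $\rho_{AB}$ in its Pauli basis on register $A$:
\begin{equation*}
    \rho_{AB} = \sum_{P \in \cP(\cH_A)} P \otimes \sigma_P,
\end{equation*}
where each $\sigma_P \in \cL(\cH_B)$ is obtained via the standard formula $\sigma_P = \frac{1}{2^{|A|}} \tr_A((P \otimes \id) \rho_{AB})$. Such a decomposition exists because the Pauli operators form a basis of $\cL(\cH_A)$ (see \cref{def:pauli}).

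Next, I would plug this decomposition into the left-hand side of the claim and swap the order of summation:
\begin{equation*}
    \frac{1}{|\cP(\cH_A)|} \sum_{Q \in \cP(\cH_A)} (Q \otimes \id) \rho_{AB} (Q^\dagger \otimes \id) = \sum_{P \in \cP(\cH_A)} \left( \frac{1}{|\cP(\cH_A)|} \sum_{Q \in \cP(\cH_A)} Q P Q^\dagger \right) \otimes \sigma_P.
\end{equation*}
By \cref{lem:paulitwirl2}, the inner average equals $\id_A$ when $P = \id_A$ and equals $0$ otherwise. Therefore only the $P = \id_A$ term survives, giving $\id_A \otimes \sigma_{\id_A}$.

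Finally, I would identify $\sigma_{\id_A}$ with the marginal on $B$ up to normalization. Taking the partial trace on $A$ of the Pauli decomposition and using $\tr(\id_A) = 2^{|A|}$ together with $\tr(P) = 0$ for every non-identity Pauli operator $P$, I get $\rho_B = \tr_A(\rho_{AB}) = 2^{|A|} \sigma_{\id_A}$, hence $\sigma_{\id_A} = \rho_B / 2^{|A|}$. Substituting back yields $\id_A \otimes \sigma_{\id_A} = (\id_A / 2^{|A|}) \otimes \rho_B = U_A \otimes \rho_B$, as desired. No step here is a real obstacle; the only thing to be careful about is making sure the Pauli decomposition is applied on the $A$-factor and that the trace identities for the Pauli operators are invoked correctly.
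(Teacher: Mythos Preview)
Your proposal is correct and follows essentially the same approach as the paper: decompose $\rho_{AB}$ in the Pauli basis on the $A$ register, apply \cref{lem:paulitwirl2} to kill every term except $P=\id_A$, and then identify the surviving $B$-operator with $\rho_B/2^{|A|}$ via the partial trace. The only cosmetic difference is that the paper first expands in the Pauli basis on both registers and then groups terms into operators $M^P\in\cL(\cH_B)$, whereas you write the $B$-operators $\sigma_P$ directly; the two are the same object and the remainder of the argument is identical.
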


\begin{proof}
Let 
\begin{align*}
   \rho_{AB}  & =\sum_{P \in \cP(\cH_{A}), Q \in \cP(\cH_{B})} \alpha_{PQ} (P \otimes Q)  \\
   &= \sum_{P \in \cP(\cH_{A})}  \left(P \otimes \left(\sum_{Q \in \cP(\cH_{B})} \alpha_{PQ} Q\right)\right)  \\
    & = \sum_{P \in \cP(\cH_{A})}  \left(P \otimes M^P\right),
\end{align*}
where $M^P \defeq \sum_{Q \in \cP(\cH_{B})} \alpha_{PQ} Q$.
It is easily seen that
 \begin{equation}\label{eq:qnm111}
     \rho_B=\tr_A(\rho_{AB}) =\tr(\id_A) M^{\id_A}. 
 \end{equation}
Then, we have that
\begin{align*}
& \frac{1}{\vert \cP(\cH_A)\vert } \sum_{Q \in \cP(\cH_A)} (Q  \otimes \id) \left(\sum_{P \in \cP(\cH_{A})}  \left(P \otimes M^P\right) \right)( Q^\dagger  \otimes \id)  \\ 
     &=  \sum_{P \in \cP(\cH_{A})}   \left( \frac{1}{\vert \cP(\cH_A)\vert } \sum_{Q \in \cP(\cH_A)} (Q P Q^\dagger  )   \otimes M^P \right)  \\ 
       &=  \sum_{P \in \cP(\cH_{A}) \setminus \id_A}   \left( \frac{1}{\vert \cP(\cH_A)\vert } \sum_{Q \in \cP(\cH_A)} (Q P Q^\dagger  )   \otimes M^P \right) \\& \quad \quad  + \sum_{P = \id_A}   \left( \frac{1}{\vert \cP(\cH_A)\vert } \sum_{Q \in \cP(\cH_A)} (Q P Q^\dagger  )   \otimes M^P \right)  \\ 
       & = \id_A \otimes M^{\id_A} & \mbox{(\cref{lem:paulitwirl2})} \\
       & = U_A \otimes \rho_B. & \mbox{(\cref{eq:qnm111})}
\end{align*}
\end{proof}
The proof of the next fact follows similarly to that of \cref{fact:bellbasis} using  \cref{lem:cliffordtwirl2}.
\begin{fact}[$1$-design]\label{fact:notequal}     Let $\rho_{AB}$ be a state. Let $\cSC(\cH_A)$ be the subgroup of Clifford group as defined in \cref{lem:subclifford}. Then,
$$\frac{1}{\vert \cSC(\cH_A) \vert} \sum_{C \in \cSC(\cH_A) } (C \otimes \id)   \rho_{A B}  ( C^\dagger \otimes \id  )  = U_{A } \otimes  \rho_B.$$
\end{fact}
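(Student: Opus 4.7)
The plan is to mimic the proof of \cref{fact:bellbasis} (the Pauli $1$-design version) essentially verbatim, replacing the Pauli averaging by an average over the Clifford subgroup $\cSC(\cH_A)$ and substituting \cref{lem:cliffordtwirl2} in place of \cref{lem:paulitwirl2} at the one step where the twirl is actually invoked.

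First, I would expand $\rho_{AB}$ in the Pauli basis on the $A$-factor, writing
\begin{equation*}
    \rho_{AB} = \sum_{P \in \cP(\cH_A)} P \otimes M^P,
\end{equation*}
where $M^P = \sum_{Q \in \cP(\cH_B)} \alpha_{PQ}\, Q$ for suitable scalars $\alpha_{PQ}$. Taking the partial trace over $A$ and using the fact that the Pauli operators are traceless apart from the identity yields
\begin{equation*}
    \rho_B = \tr_A(\rho_{AB}) = \tr(\id_A)\, M^{\id_A},
\end{equation*}
so in particular $M^{\id_A} = \rho_B / 2^{|A|}$.

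Next I would push the Clifford average through the decomposition. By linearity,
\begin{equation*}
    \frac{1}{|\cSC(\cH_A)|} \sum_{C \in \cSC(\cH_A)} (C \otimes \id)\, \rho_{AB}\, (C^\dagger \otimes \id)
    = \sum_{P \in \cP(\cH_A)} \left( \frac{1}{|\cSC(\cH_A)|} \sum_{C \in \cSC(\cH_A)} C P C^\dagger \right) \otimes M^P.
\end{equation*}
Now I split the outer sum into $P = \id_A$ and $P \neq \id_A$. By \cref{lem:cliffordtwirl2}, the inner average equals $\id_A$ when $P = \id_A$ and vanishes otherwise. Hence only the identity term survives, leaving
\begin{equation*}
    \id_A \otimes M^{\id_A} = \id_A \otimes \frac{\rho_B}{2^{|A|}} = U_A \otimes \rho_B,
\end{equation*}
which is exactly the claimed identity.

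There is essentially no obstacle here: the entire content is the subgroup twirl statement \cref{lem:cliffordtwirl2}, which was already established using the ``uniform non-identity Pauli conjugation'' property of $\cSC(\cH_A)$ guaranteed by \cref{lem:subclifford}. The only minor care needed is the bookkeeping of the normalization factor $\tr(\id_A) = 2^{|A|}$ so that $M^{\id_A}$ combines with $\id_A$ to produce $U_A \otimes \rho_B$ rather than an unnormalized analogue.
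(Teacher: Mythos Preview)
Your proposal is correct and matches the paper's own approach exactly: the paper states that the proof of \cref{fact:notequal} ``follows similarly to that of \cref{fact:bellbasis} using \cref{lem:cliffordtwirl2},'' which is precisely what you do.
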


\subsubsection{The transpose method}
The transpose method (see, e.g.,~\cite{MarisQI}) is one of the most important
tools for manipulating maximally entangled states. 
Note that the canonical purification of a maximally mixed state $\rho_A=U_A$, denoted $\rho_{A\hat{A}}$, is a maximally entangled state.
Roughly speaking, the transpose method corresponds to the statement that some local action on one half of the maximally entangled state (say register $A$) is equivalent to performing the transpose of the same action on the other half of that state (register $\hat{A}$). 
We now state this formally.
\begin{fact}[Transpose method]\label{fact:transposetrick}
    Let $\rho_{A\hat{A}}$ be the canonical purification of $\rho_A = U_A$. For any $M \in \cL(\cH_A)$ it holds that
$$(M \otimes \id_{\hat{A}}) \rho_{A\hat{A}} (M^\dagger \otimes \id_{\hat{A}}) =  (\id_{A} \otimes M^T) \rho_{A\hat{A}} (\id_A \otimes (M^T)^\dagger).$$ 
\end{fact}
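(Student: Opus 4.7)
My plan is to reduce the statement to the standard ``ricochet'' identity for the maximally entangled vector and then sandwich. Using the definition of canonical purification given earlier in the excerpt, when $\rho_A = U_A$ we have $\rho_A^{1/2} = d^{-1/2}\id_A$ (where $d = \dim(\cH_A)$), so the canonical purification is the pure state associated with
\[
\ket{\rho}_{A\hat{A}} \;=\; \frac{1}{\sqrt{d}}\sum_{i}\ket{i}_A\ket{i}_{\hat{A}}.
\]
I would first prove the vector-level identity $(M\otimes\id_{\hat{A}})\ket{\rho}_{A\hat{A}} = (\id_A\otimes M^T)\ket{\rho}_{A\hat{A}}$ for every $M\in\cL(\cH_A)$.

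To verify this, expand $M$ in the computational basis as $M = \sum_{i,j} M_{ij}\ket{i}\bra{j}$. On one side,
\[
(M\otimes\id_{\hat{A}})\ket{\rho}_{A\hat{A}}
= \frac{1}{\sqrt{d}}\sum_{k}M\ket{k}_A\otimes\ket{k}_{\hat{A}}
= \frac{1}{\sqrt{d}}\sum_{i,k}M_{ik}\ket{i}_A\ket{k}_{\hat{A}}.
\]
On the other side, using $(M^T)_{kj}=M_{jk}$,
\[
(\id_A\otimes M^T)\ket{\rho}_{A\hat{A}}
= \frac{1}{\sqrt{d}}\sum_{k}\ket{k}_A\otimes M^T\ket{k}_{\hat{A}}
= \frac{1}{\sqrt{d}}\sum_{j,k}M_{jk}\ket{k}_A\ket{j}_{\hat{A}}.
\]
Relabeling the dummy indices on the right via $(j,k)\mapsto (k,i)$ shows the two expressions agree, which proves the ricochet identity.

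Finally, I would pass from vectors to the density matrix by taking outer products. Applying the identity to $\ket{\rho}$ and its adjoint to $\bra{\rho}$, and using $(M^T)^{\dagger}=\overline{M}$ together with the fact that the outer product of a pure state is independent of which equivalent expression we use to write the vector, we obtain
\[
(M\otimes\id_{\hat{A}})\,\rho_{A\hat{A}}\,(M^{\dagger}\otimes\id_{\hat{A}})
\;=\;(\id_A\otimes M^T)\,\rho_{A\hat{A}}\,(\id_A\otimes (M^T)^{\dagger}),
\]
as claimed. There is essentially no obstacle here: the entire content of the fact is the ricochet identity, and once that is established the sandwiching step is a formality. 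The only thing to be slightly careful about is the bookkeeping $(M^{\dagger})\leftrightarrow (M^T)^{\dagger}=\overline{M}$ when dualizing, and that the chosen orthonormal basis used to define the canonical purification is the same basis with respect to which transposition is taken, which is consistent with the conventions set up in the preliminaries.
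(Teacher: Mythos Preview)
Your proposal is correct and is exactly the standard proof; the paper itself states this as a cited fact (see, e.g., the reference to \cite{MarisQI}) without giving any proof, so there is no alternative argument to compare against. One minor index slip: in your expansion of the right-hand side you should have $M_{kj}$ rather than $M_{jk}$ (since $(M^T)_{jk}=M_{kj}$), after which your relabeling $(j,k)\mapsto(k,i)$ indeed yields $\sum_{i,k}M_{ik}\ket{i}_A\ket{k}_{\hat A}$ and matches the left-hand side; this does not affect the validity of the argument.
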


\subsection{Quantum-secure randomness extractors}

Randomness extractors are key objects in our constructions of non-malleable codes and secret sharing schemes.
We introduce the relevant notions and auxiliary results here.

\subsubsection{Min-entropy and weak sources}

This section introduces information-theoretic formalism useful for our definitions of randomness extractors.

\begin{definition}[Min-entropy]\label{def:condminentropy}
    Given a state $\rho_{XE}$, the min-entropy of $X$ conditioned on $E$, denoted by $\hmin{X}{E}_\rho$, is defined as
    \begin{equation*}
        \hmin{X}{E}_\rho = - \inf_{\sigma_E \in  \mathcal{D}(\cH_{E}) } \dmax{\rho_{XE}}{\id_X \otimes \sigma_E}    .
    \end{equation*}
\end{definition}

We will require the following fact about the min-entropy.
\begin{fact}[\protect{\cite[Lemma 2.14]{CLW14}}]
	\label{fact:PostProcHmin}  
	 Let $\Phi:\mathcal{L} (\cH_M ) \rightarrow   \mathcal{L}(\cH_{M'} )$ be a CPTP map and define $\sigma_{XM'} =(\id \otimes \Phi)(\rho_{XM})$. 
    We have that
    \begin{equation*}
        \hmin{X}{M'}_\sigma  \geq \hmin{X}{M}_\rho.
    \end{equation*}
\end{fact}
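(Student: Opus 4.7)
The plan is to unfold the definition of conditional min-entropy in terms of max-divergence (\cref{def:maxdiv,def:condminentropy}) and then exploit the fact that CPTP maps preserve operator inequalities. Concretely, let $k = \hmin{X}{M}_\rho$. By \cref{def:condminentropy}, for every $\delta > 0$ there exists $\omega_M \in \mathcal{D}(\cH_M)$ such that $\dmax{\rho_{XM}}{\id_X \otimes \omega_M} \leq -k + \delta$, which by \cref{def:maxdiv} is equivalent to the operator inequality
\begin{equation*}
    \rho_{XM} \leq 2^{-k+\delta}\, \id_X \otimes \omega_M.
\end{equation*}

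Next, I would apply the map $\id \otimes \Phi$ to both sides. Since $\Phi$ is completely positive, so is $\id \otimes \Phi$, and hence it preserves the L\"owner order. Using that $(\id \otimes \Phi)(\id_X \otimes \omega_M) = \id_X \otimes \Phi(\omega_M)$ and that $\tau_{M'} \defeq \Phi(\omega_M)$ is a valid density operator on $\cH_{M'}$ (as $\Phi$ is CPTP), I obtain
\begin{equation*}
    \sigma_{XM'} = (\id \otimes \Phi)(\rho_{XM}) \leq 2^{-k+\delta}\, \id_X \otimes \tau_{M'}.
\end{equation*}
Reading this back through \cref{def:maxdiv} gives $\dmax{\sigma_{XM'}}{\id_X \otimes \tau_{M'}} \leq -k + \delta$, and therefore, by \cref{def:condminentropy},
\begin{equation*}
    \hmin{X}{M'}_\sigma \;=\; -\inf_{\zeta_{M'} \in \mathcal{D}(\cH_{M'})} \dmax{\sigma_{XM'}}{\id_X \otimes \zeta_{M'}} \;\geq\; k - \delta.
\end{equation*}
Letting $\delta \to 0$ yields $\hmin{X}{M'}_\sigma \geq k = \hmin{X}{M}_\rho$, as desired.

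There is no real obstacle here; the only subtlety is making sure that complete positivity (not just positivity) of $\Phi$ is what justifies applying $\id \otimes \Phi$ to an inequality on the joint system $XM$, and that the witness state $\omega_M$ on the conditioning side is correctly mapped to another valid state $\Phi(\omega_M)$ on $M'$ to serve as the new witness in the infimum defining $\hmin{X}{M'}_\sigma$. If one wants to avoid the $\delta$-limit argument, one can equivalently invoke a minimizer when the infimum is attained (which is the case in finite dimensions by compactness of $\mathcal{D}(\cH_M)$ and continuity of $\dmax{\cdot}{\id_X \otimes \cdot}$ on the support constraint), but the $\delta$-approach avoids any such technicality.
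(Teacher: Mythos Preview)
Your proof is correct and is the standard argument via monotonicity of the L\"owner order under completely positive maps. Note that the paper does not actually prove this statement: it is presented as a \emph{Fact} cited from \cite[Lemma~2.14]{CLW14}, consistent with the paper's convention that facts are results quoted from prior work without proof. Your argument is essentially the same as the one found in that reference, so there is nothing further to compare.
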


The formalism of qpa-states from~\cite{BJK21} will be useful when building leakage-resilient secret sharing schemes.
\begin{definition}[$\qpas$]\label{qmadvk1k2}
A pure state $\sigma_{X\hat{X}Y\hat{Y}W_1 W_2}$ with $XY$ classical and $\hat{X}$ and $\hat{Y}$ copies of $X$ and $Y$, respectively, is  a  $(k_1,k_2)\mhyphen\qpas$ if
\begin{equation*}
    \hmin{X}{W_2 Y\hat{Y}}_\sigma \geq k_1 \quad \textrm{and} \quad \hmin{Y}{W_1 X\hat{X}}_\sigma \geq k_2.
\end{equation*}
\end{definition}

\subsubsection{Quantum-secure $2$-source extractors}

A $2$-source extractor transforms any two independent weak sources of randomness into an unbiased random string, provided that the input sources contain enough min-entropy.
We will use the fact that the inner product function $\IP:\F_q^N\times\F_q^N\to\F_q$ given by
\begin{equation*}
    \IP(x,y) = \sum_{i=1}^N x_i y_i,
\end{equation*}
with operations performed over $\F_q$, is a $2$-source extractor for qpa-states with good properties.
This is captured in the following fact.
\begin{fact}[\protect{\cite[Claim 5]{BJK21}}]\label{fact:IPqpa}
Let $\rho_{X \hat{X} Y \hat{Y} W_1 W_2}$ be an arbitrary $(k_1,k_2)$-$\qpas$ with $X,Y\in\F_q^N$ and
\begin{equation*}
    k_1+k_2\geq (N+1)\log q + 8\log(1/\eps)+40.
\end{equation*}
Set $Z=\IP(X,Y)$.
We have that
\begin{equation*}
    \|\rho_{Z X W_1} - U_Z  \otimes \rho_{X W_1} \|_1 \leq \eps \quad \textrm{and} \quad \|\rho_{Z Y W_2} - U_Z  \otimes \rho_{Y W_2} \|_1 \leq \eps,
\end{equation*}
where $U_Z$ is uniformly distributed over $\F_q$.
\end{fact}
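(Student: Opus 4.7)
The plan is to reduce the statement to a Fourier-analytic collision-probability bound on $\IP$, adapted to the qpa-state setting where quantum side information on $X$ is (operationally) independent of $Y$ and vice versa. Without loss of generality I focus on the first bound $\|\rho_{ZXW_1} - U_Z \otimes \rho_{XW_1}\|_1 \leq \eps$; the second is symmetric.

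First I would pass from trace distance to a character-sum bound. Writing $\omega = e^{2\pi i /q}$ and $\chi_\alpha(z) = \omega^{\alpha z}$ for $\alpha \in \F_q$, the standard inequality gives
\begin{equation*}
    \|\rho_{ZXW_1} - U_Z \otimes \rho_{XW_1}\|_1^2 \;\leq\; q \sum_{\alpha \neq 0} \Bigl\| \sum_{x,y} \omega^{\alpha \langle x,y\rangle}\, \rho_{XW_1}^{x,y} \Bigr\|_1^2,
\end{equation*}
where $\rho_{XW_1}^{x,y}$ are the (sub-normalized) conditional components obtained by measuring $Y$ in the computational basis. Equivalently, after purifying via $\hat{X},\hat{Y}$, each term becomes an expectation of the ``phased'' operator $D_\alpha = \sum_{x,y} \omega^{\alpha \langle x,y\rangle} \ketbra{xy}_{X\hat{X}Y\hat{Y}}$ against the qpa-state, with the partial trace over $W_2 Y \hat Y$ pushing everything to the $XW_1$ register.

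The main step is to bound each individual $\alpha \neq 0$ contribution. Using the qpa-state min-entropy hypotheses with \cref{def:maxdiv} and \cref{def:condminentropy}, I can replace $\rho_{XW_2 Y\hat Y}$ by $2^{-k_1}\id_X \otimes \sigma_{W_2Y\hat Y}$ and similarly for $Y$, losing only factors of $2^{-k_1}, 2^{-k_2}$. Then the key calculation is essentially the classical one: for $\alpha \neq 0$, the operator $D_\alpha$ on $\F_q^N \otimes \F_q^N$ has spectral norm at most $q^{N/2}$, since fixing $y$ the map $x \mapsto \omega^{\alpha \langle x,y\rangle}$ is a character and $D_\alpha$ restricted to any $y$-slice is unitary up to a phase, so $\|D_\alpha\|_\mathrm{op}^2 \leq q^N$ by Cauchy--Schwarz / a Lindsey-type argument. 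Combining with the min-entropy loss and summing over $\alpha \neq 0$ yields
\begin{equation*}
    \|\rho_{ZXW_1} - U_Z \otimes \rho_{XW_1}\|_1^2 \;\leq\; q \cdot q \cdot q^N \cdot 2^{-(k_1+k_2)},
\end{equation*}
which is $\leq \eps^2$ under $k_1+k_2 \geq (N+1)\log q + 2\log(1/\eps) + O(1)$. The slack in the stated hypothesis ($8\log(1/\eps)+40$) comes from a more careful Bures/fidelity conversion using \cref{fact:TracevsFidelityvsBures} and accounting for both bounds simultaneously.

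The main obstacle will be cleanly handling the quantum side information $W_1$ while keeping the bilinear structure of $\IP$ usable. In the classical proof the reduction to an operator norm estimate is transparent, but in the qpa-state setting $X$ and its companion register $\hat X$ carry correlations with $W_1$, and the character sum must be applied in a way that does not disturb them. The qpa-state definition is tailored precisely for this: the conditional min-entropy bound $\hmin{Y}{W_1 X \hat X} \geq k_2$ says that from the viewpoint of the ``$X$-side adversary'' (holding $X\hat X W_1$), the register $Y$ is still close to flat, which is what lets the Lindsey-style character bound go through in the purified picture. Once this is set up, the estimate reduces to the collision-probability calculation sketched above.
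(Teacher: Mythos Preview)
The paper does not provide its own proof of this statement: it is recorded as a \emph{Fact} and attributed verbatim to \cite[Claim~5]{BJK21}, so there is no in-paper argument to compare against. Your sketch is therefore being evaluated against the standard proof structure for such results (which is also what \cite{BJK21} does): an XOR/character-decomposition step followed by a Lindsey-type operator-norm bound on the matrix $M^{(\alpha)}_{x,y}=\omega^{\alpha\langle x,y\rangle}$, combined with the conditional min-entropy hypotheses to control the ``mass'' on each side. At that level your plan is on the right track.

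A couple of places where your write-up would need tightening before it becomes an actual proof. First, the displayed inequality you start from is not quite the quantum XOR lemma in a standard form; the usual statement bounds $\|\rho_{ZE}-U_Z\otimes\rho_E\|_1$ by (roughly) $\sqrt{q}$ times a maximum or $\ell_2$-combination over $\alpha\neq 0$ of $\bigl\|\sum_z \omega^{\alpha z}\rho_E^z\bigr\|_1$, not a sum of squared trace norms. Second, your description of $D_\alpha$ conflates two different objects: as a diagonal operator $\sum_{x,y}\omega^{\alpha\langle x,y\rangle}\ketbra{xy}$ it has operator norm $1$, whereas the Lindsey bound $\|M^{(\alpha)}\|_{\mathrm{op}}=q^{N/2}$ applies to the $q^N\times q^N$ \emph{matrix} with entries $\omega^{\alpha\langle x,y\rangle}$ viewed as a bilinear form pairing the $X$- and $Y$-distributions. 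The proof needs to make explicit how the qpa-state hypotheses $\hmin{X}{W_2Y\hat Y}\geq k_1$ and $\hmin{Y}{W_1X\hat X}\geq k_2$ let you upper-bound the relevant bias by $\|M^{(\alpha)}\|_{\mathrm{op}}\cdot 2^{-(k_1+k_2)/2}$; this is the genuinely quantum step, and it does not follow just from ``replacing $\rho$ by $2^{-k_1}\id\otimes\sigma$'' as you phrase it, since the two min-entropy bounds are with respect to different conditioning registers and must be combined via a Cauchy--Schwarz/duality argument on operators (this is the content of the cited claim in \cite{BJK21}). Once those points are made precise, the parameter arithmetic you give (up to the generous slack in the hypothesis) goes through.
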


\subsubsection{Quantum-secure $2$-source non-malleable extractors}

Roughly speaking, a $2$-source non-malleable extractor $\nmext$ takes as input two (not necessarily) uniformly random and independent strings $X$ and $Y$ and outputs a string $R=\nmext(X,Y)$ that is statistically close to uniform distribution.
It is called \emph{non-malleable} because learning $\nmext(f(X),g(Y))$ for any known tampering functions $f$ and $g$ (without fixed points) reveals essentially nothing about $R=\nmext(X,Y)$, in the sense that $R$ should still be close to uniformly random given the tampered version $\nmext(f(X),g(Y))$.
We may thus see $(X,Y)$ as a form of split-state encoding of $R$.
Since our split-state tampering adversaries have quantum capabilities \emph{and} access to shared quantum entanglement, we cannot use an arbitrary classical $2$-source non-malleable extractor to generate $R$.
Instead, we make use of an explicit \emph{quantum-secure} $2$-source non-malleable extractor recently constructed by Batra, Boddu, and Jain~\cite{BBJ23}, which remains secure against such quantum adversaries and whose properties we detail below.

\begin{fact}[Quantum-secure $2$-source non-malleable extractor~\cite{BBJ23}]\label{lem:qnmcodesfromnmext}
Consider the split-state tampering experiment in \cref{fig:splitstate5} with a split-state tampering adversary $\cA=(U,V,\ket\psi_{W_1 W_2})$.
Based on this figure, define $p_\sm=\Pr[(X,Y)=(X',Y')]_{\hat\rho}$ and the conditioned quantum states
\begin{equation*}
    \rhosame = (\nmext\otimes\nmext)(\hat\rho | (X,Y)=(X',Y'))
\end{equation*}
and
\begin{equation*}
    \rhotamp = (\nmext\otimes\nmext)(\hat\rho | (X,Y)\neq(X',Y')).
\end{equation*}
For any given constant $\delta>0$, there exists an explicit function $\nmext : \{ 0,1\}^n \times \{0,1 \}^{\delta n} \rightarrow \{0,1 \}^{r} $ with output length $r=(1/2-\delta)n$ such that for independent sources $X\leftarrow\bits^n$ and $Y \leftarrow \bits^{\delta n}$ and any such split-state tampering adversary $\cA=(U,V,\ket\psi_{W_1 W_2})$ it holds that
\begin{enumerate}
    \item $ \Vert  \nmext(X,Y)X - U_{r} \otimes U_n  \Vert_1 \leq \eps$ and $\Vert  \nmext(X,Y)Y - U_{r} \otimes U_{\delta n}  \Vert_1 \leq \eps,$

    \item $p_\sm\Vert \rhosame_{RW_2}-  U_r \otimes \rhosame_{W_2} \Vert_1 +  (1-p_\sm)  \Vert \rhotamp_{RR'W_2}-  U_r \otimes \rhotamp_{R'W_2}  \Vert_1  \leq \eps$,
\end{enumerate}
with $\eps =2^{-n^{\Omega_\delta(1)}}$. Furthermore, $\nmext(x,y)$ can be computed in time $\poly(n)$. 
\end{fact}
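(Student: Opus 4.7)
The plan is to follow the template of classical two-source non-malleable extractor constructions (in the style of Chattopadhyay--Goyal--Li and Li's subsequent improvements) and re-analyze each step inside the purified-adversary (qpa-state) formalism of~\cite{BJK21}. The basic building block will be the inner product function viewed as a seeded extractor for qpa-states, which is already given by \cref{fact:IPqpa}. Note that since $(X,Y)$ are sampled uniformly and independently and the initial registers $W_1,W_2$ hold only the shared entanglement $\ket\psi_{W_1W_2}$, the state at the start of the experiment is automatically an $(n,\delta n)$-qpa-state in the sense of \cref{qmadvk1k2}.

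At the top level, I would define $\nmext(x,y)$ in two stages. First, run an advice-generator $\advc$ that produces a short advice string $\alpha(x,y)$ whose key property is that $\alpha(X,Y) \neq \alpha(X',Y')$ with high probability whenever $(X,Y)\neq(X',Y')$; this advice is typically built from small slices of $x$ and $y$ combined with a few rounds of alternating inner-product extraction. Second, compute $R=\nmext(x,y)$ by a longer alternating-extraction protocol seeded by the advice, where each round replaces a short seed by a freshly extracted string via the inner product. The target output length $r=(1/2-\delta)n$ and error $\eps=2^{-n^{\Omega_\delta(1)}}$ should ultimately come from a final application of \cref{fact:IPqpa} where one side still carries essentially half of its initial entropy.

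For the strong-extractor guarantees in item~(1), the idea is that each intermediate seed can be written as an inner product applied to parts of $X$ and $Y$ together with bounded (quantum) leakage, so iterating \cref{fact:IPqpa} and invoking \cref{fact:PostProcHmin} after each round preserves the qpa-state structure with only mildly degrading entropies; a final invocation of \cref{fact:IPqpa} then makes $R$ $\eps$-close to uniform given $X$ (and symmetrically given $Y$).

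For the non-malleability guarantee in item~(2), the main step is to show that conditioned on $\alpha(X,Y)\neq\alpha(X',Y')$ there is some round $i$ at which the alternating extraction for $(X,Y)$ uses a seed that is independent of, and has high min-entropy against, the corresponding tampered view together with $W_2$ and $R'$; \cref{fact:IPqpa} applied at that round then forces $R$ to be uniform against $R'$, $Y$, and $W_2$. The ``same'' branch $(X,Y)=(X',Y')$ is handled by noting that there $R'$ is literally a copy of $R$ and the bound reduces to item~(1). The main obstacle I anticipate is the bookkeeping required to maintain the qpa-state structure across all rounds of alternating extraction in the presence of the shared entanglement $\ket\psi_{W_1W_2}$: classical fixings of intermediate values that are standard in the classical analysis must be replaced by measurement arguments on purifications, and each such step must be shown to only mildly reduce the min-entropies that \cref{fact:IPqpa} consumes, so that the final round still has enough entropy to produce $r=(1/2-\delta)n$ output bits at error $2^{-n^{\Omega_\delta(1)}}$.
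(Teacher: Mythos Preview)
This statement is labeled as a \emph{Fact} and is cited directly from~\cite{BBJ23}; the paper explicitly states that ``facts denote results already known from prior work,'' and indeed no proof of \cref{lem:qnmcodesfromnmext} appears anywhere in the paper. It is used purely as a black box in the construction and analysis of the split-state non-malleable code in \cref{sec:qnmc67}. So there is no ``paper's own proof'' against which to compare your attempt.

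That said, your sketch is a reasonable high-level outline of the kind of argument carried out in~\cite{BBJ23}: the construction there does build on the qpa-state formalism of~\cite{BJK21}, uses inner-product-based extraction as the workhorse (cf.\ \cref{fact:IPqpa}), and follows the alternating-extraction-with-advice template from the classical non-malleable extractor literature, with the main technical work being exactly the bookkeeping you flag---maintaining the qpa-state structure and min-entropy bounds across rounds when classical conditionings are replaced by measurements on purified states. If your goal were to reproduce the result rather than cite it, you would need to carry out that bookkeeping in full, and in particular verify that the specific output length $r=(1/2-\delta)n$ and error $2^{-n^{\Omega_\delta(1)}}$ survive the quantum analysis; your outline does not yet do this, but it identifies the right ingredients. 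For the purposes of the present paper, however, simply invoking~\cite{BBJ23} is the intended and sufficient step.
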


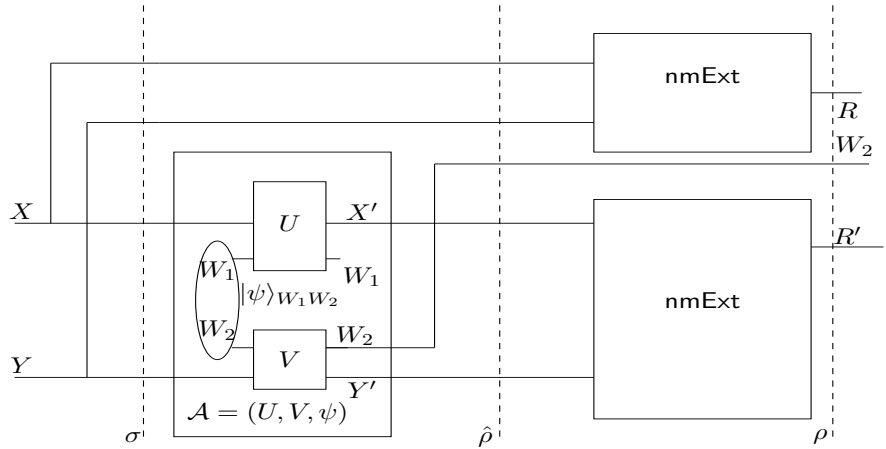
\begin{figure}[h]
\centering

\resizebox{12cm}{6cm}{

\begin{tikzpicture}

\draw (4,4.5) -- (5,4.5);
\draw (5,4.5) -- (11,4.5);

\draw (3.5,5.5) -- (5,5.5);
\draw (5,5.5) -- (11,5.5);
\node at (14.5,4.7) {$R$};
\node at (14.6,4.1) {$W_2$};

\node at (12.5,5.3) {$\nmext$};

\draw [dashed] (4.78,-0.8) -- (4.78,6.5);
\draw [dashed] (9.7,-0.8) -- (9.7,6.5);
\draw [dashed] (14.3,-0.8) -- (14.3,6.5);

\node at (6.8,1.6) {$\ket{\psi}_{W_1W_2}$};
\node at (4.64,-0.8) {$\sigma$};
\node at (9.5,-0.8) {$\hat{\rho}$};
\node at (14.13,-0.8) {$\rho$};

\node at (3.1,3) {$X$};
\node at (7.8,3) {$X'$};
\draw (3,2.8) -- (6.3,2.8);
\draw (7.3,2.8) -- (11,2.8);
\node at (14.5,2.6) {$R'$};
\draw (14,2.4) -- (15,2.4);

\node at (3.1,0.4) {$Y$};
\node at (7.8,0.0) {$Y'$};
\draw (3,0.2) -- (6.3,0.2);
\draw (7.3,0.2) -- (11,0.2);
\draw (3.5,2.8) -- (3.5,5.5);
\draw (4,0.2) -- (4,4.5);

\draw (14,5) -- (14.7,5);

\draw (7.3,0.7) -- (8.8,0.7);
\draw (8.8,0.7) -- (8.8,3.8);
\draw (8.8,3.8) -- (14.8,3.8);

\draw (6.3,2) rectangle (7.3,3.5);
\node at (6.8,2.8) {$U$};
\draw (6.3,0) rectangle (7.3,1);
\node at (6.8,0.5) {$V$};

\node at (6.5,-0.4) {$\mathcal{A}=(U,V,\psi)$};
\draw (5.2,-0.8) rectangle (8.2,4);

\draw (5.8,1.5) ellipse (0.3cm and 1cm);
\node at (5.8,2) {$W_1$};
\draw (6,2.2) -- (6.3,2.2);
\node at (7.8,1.9) {$W_1$};
\draw (7.3,2.2) -- (7.5,2.2);
\node at (5.8,1) {$W_2$};
\draw (6,0.7) -- (6.3,0.7);
\node at (7.7,0.9) {$W_2$};
\draw (7.3,0.7) -- (7.6,0.7);


\draw (11,4) rectangle (14,6);
\draw (11,-0.5) rectangle (14,3.2);

\node at (12.5,1.5) {${\nmext}$};

\end{tikzpicture} }
\caption{Split-state tampering experiment for quantum-secure $2$-source non-malleable extractors.}\label{fig:splitstate5}
\end{figure}

Intuitively, Item 1 in \cref{lem:qnmcodesfromnmext} guarantees that $R=\nmext(X,Y)$ remains close to uniformly random even when one of the input sources $X$ and $Y$ is revealed.
This property is usually called \emph{strong extraction}.
Item 2 spells out the non-malleability guarantees of $\nmext$: If the tampering attack does not change $X$ and $Y$ (i.e., $(X',Y')=(X,Y)$), then $R$ should be close to uniformly random even given one of the updated entangled states shared by the adversaries attacking each source.
On the other hand, if the tampering attack changed $X$ and $Y$ (i.e., $(X',Y')\neq (X,Y)$), then $R$ should be close to uniformly random even given the additional output $R'=\nmext(X',Y')$ \emph{and} one of the updated entangled states.

\subsection{Secret sharing schemes for quantum messages}

Secret sharing schemes for quantum messages were first introduced in~\cite{HBB99,CGL99,KKI99}.
As in the classical case, they allow a dealer to share a quantum message among $p$ parties in a way that only authorized subsets of parties can reconstruct the quantum message, while unauthorized subsets of parties learn nothing about it.
The no-cloning theorem imposes constraints on the type of access structures that can be realized by a secret sharing scheme for quantum messages.
For example, there exists a $t$-out-of-$p$ secret sharing scheme for quantum messages if and only if $t> p/2$.
Conversely, it is known that any access structure that does not violate no-cloning\footnote{This amounts to requiring that if $T\subseteq[p]$ is authorized, then its complement $[p]\setminus T$ must be unauthorized.} can be realized by a secret sharing scheme for quantum messages~\cite{Got00,Smi00}.

We proceed to define secret sharing schemes for quantum messages and then state a known efficient construction of such threshold schemes due to Cleve, Gottesman, and Lo~\cite{CGL99} (a quantum analogue of Shamir's secret sharing scheme). 
We define secret sharing schemes for threshold access structure for simplicity, but this definition can be generalized to any access structure that does not violate no-cloning.

Let 
$\qshare : \cL( \cH_M) \to \cL(\cH_{S_1}\otimes \cH_{S_2} \otimes \cdots \otimes \cH_{S_p})$ and  $\qrec  :  \cL( \bigotimes_{i \in T} \cH_{S_i}) \to  \cL(\cH_{M})$, be the sharing and reconstruction procedures respectively, where $\cL(\cH)$ is the space of all linear operators in the Hilbert space $\cH$.  The reconstruction procedure $\qrec$ acts on any authorized subset of shares $T$ to reconstruct the original message.

\begin{definition}[Threshold secret sharing scheme for quantum messages]
The coding scheme $(\qshare,\qrec)$ is said to be a \emph{$t$-out-of-$p$ secret sharing scheme for quantum messages} if for any quantum message ${\sigma_M} \in \cD(\cH_M)$ (with canonical purification  $\sigma_{M\hat{M}}$) the following two properties are satisfied:
\begin{itemize}
    \item \textbf{Correctness:} For any $T \subseteq [p]$ such that $ \vert T \vert \geq t$  it holds that
    \begin{equation*}
        \qrec (\qshare( \sigma_{M \hat{M}})_{S_{T}}) = \sigma_{M\hat{M}},
    \end{equation*}
    where we recall that $S_T=(S_i)_{i\in T}$.

    \item \textbf{Perfect privacy:} For any $T \subseteq [p]$ such that $ \vert T \vert \leq t-1$ it holds that
    \begin{equation*}
        (\qshare( \sigma))_{\hat{M}S_{T}}  \equiv \sigma_{\hat{M}} \otimes  \zeta_{S_{T}} .
    \end{equation*}
      where $\zeta_{S_{T}} $ is a fixed quantum state for every $\sigma_{M \hat{M}}$.
 \end{itemize}
\end{definition}

\begin{fact}[\cite{CGL99}]\label{fact:qss}
    For any number of parties $p$ and threshold $t$ such that $p/2<t\leq p$, there exists a $t$-out-of-$p$ secret sharing scheme $(\qshare,\qrec)$ for quantum messages of length $b$ with share size at most $2\max(p,b)$, where both the sharing and reconstruction procedures run in time $\poly(p,b)$. 
\end{fact}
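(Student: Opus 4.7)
The plan is to instantiate the quantum Reed--Solomon scheme of Cleve, Gottesman, and Lo~\cite{CGL99}. Fix a prime power $q$ with $q \geq \max(p+1,\,2^b)$ and $\lceil\log_2 q\rceil \leq \max(\lceil\log_2(p+1)\rceil, b) + 1$ (e.g., the smallest power of two satisfying the lower bound works). Each share is a qudit of dimension $q$, so its size in qubits is at most $\max(\lceil\log_2(p+1)\rceil, b) + 1 \leq 2\max(p, b)$. Fix distinct nonzero $\alpha_1,\ldots,\alpha_p \in \F_q$, identify $\cH_M$ with a $2^b$-dimensional subspace of the $q$-dimensional qudit space, and define
\[
\qshare: |s\rangle_M \ \longmapsto\ \frac{1}{\sqrt{q^{t-1}}}\sum_{f\in F_s}|f(\alpha_1)\rangle_{S_1}\otimes\cdots\otimes|f(\alpha_p)\rangle_{S_p},\qquad F_s = \{f\in\F_q[x] : \deg f\leq t-1,\ f(0)=s\},
\]
extended linearly. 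This is a valid isometry because the evaluation map $f\mapsto(f(\alpha_i))_{i\in[p]}$ is injective on polynomials of degree $\leq t-1$ (as $p\geq t$). Efficient implementation in $\poly(p,b)$ time is standard: prepare $t-1$ uniform-superposition ancilla qudits to parametrize the nonconstant coefficients of $f$, combine with $|s\rangle$ for the constant term, coherently evaluate $f(\alpha_j)$ into $S_j$ using finite-field arithmetic circuits, and uncompute the coefficient ancillas by inverting the injective evaluation map restricted to any $t$ coordinates.

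Correctness follows from Lagrange interpolation made coherent. For $|T|=t$, the map $V' : \cH_M \to \cH_{S_T}$ sending $|s\rangle_M$ to $q^{-(t-1)/2}\sum_{f\in F_s}|\mathrm{ev}_T(f)\rangle_{S_T}$ is an isometry, because two degree $\leq t-1$ polynomials agreeing on $t$ points must coincide, so distinct $s,s'$ give orthogonal images. The reconstruction $\qrec$ is the adjoint of $V'$, extended to a CPTP map in the standard way. For efficient implementation, let $\{\lambda_i^T\}_{i\in T}\subseteq\F_q$ be the Lagrange coefficients satisfying $f(0)=\sum_{i\in T}\lambda_i^T f(\alpha_i)$ for every $f$ of degree at most $t-1$; apply a controlled linear combination to coherently write $s = f(0)$ into a fresh $M$ register, then shift $S_T$ by $-s\mathbf{1}$ conditioned on $M$ to remove the $s$-dependence (so the residual $S_T$ superposition, now indexed by $F_0$, is independent of $s$), and finally discard $S_T$. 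By linearity, this preserves entanglement with any external $\hat M$, giving $\qrec\circ\qshare=\id_M$.

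Perfect privacy is the step that will take the most care and is the main obstacle. Fix $T$ with $|T|=k\leq t-1$ and write $T^c = [p]\setminus T$. Computing matrix elements of $\rho_s^T := \tr_{S_{T^c}}\bigl(\qshare(|s\rangle\langle s|_M)\bigr)$ in the computational basis and using the affine structure $F_s=F_0+s$ via $f=g+s$ gives
\[
\langle v|\rho_s^T|v'\rangle \ =\ \frac{1}{q^{t-1}}\,\bigl|\{(g,g')\in F_0\times F_0 : \mathrm{ev}_T(g)=v-s\mathbf{1},\ \mathrm{ev}_T(g')=v'-s\mathbf{1},\ \mathrm{ev}_{T^c}(g)=\mathrm{ev}_{T^c}(g')\}\bigr|.
\]
Independence of $s$ reduces to showing, for every $c\in\F_q$, that there exists $\tilde g_c\in F_0$ with $\mathrm{ev}_T(\tilde g_c)=c\mathbf{1}$: given such a $\tilde g_c$, the map $(g,g')\mapsto(g+\tilde g_c,g'+\tilde g_c)$ is a bijection of $F_0\times F_0$ that shifts both $\mathrm{ev}_T(g)$ and $\mathrm{ev}_T(g')$ by $c\mathbf{1}$ while preserving the equality $\mathrm{ev}_{T^c}(g)=\mathrm{ev}_{T^c}(g')$ (both sides shift by the same vector $\mathrm{ev}_{T^c}(\tilde g_c)$). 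Applied with $c=-s$, this bijection identifies the counting set above with its $s=0$ analogue, establishing $\rho_s^T=\rho_0^T$. Such $\tilde g_c$ exist because the linear map sending the $t-1$ nonconstant coefficients of a polynomial in $F_0$ to $\F_q^T$ has matrix $(\alpha_i^j)_{i\in T,\,1\leq j\leq t-1}$, whose rows are nonzero scalar multiples (by the $\alpha_i$) of rows of a Vandermonde matrix with distinct nodes, hence linearly independent; so the map has rank $k$ and is surjective onto $\F_q^T$, placing $c\mathbf{1}$ in its image. This Vandermonde-surjectivity argument handles the delicate regime $t\leq p\leq 2t-2$ in which the complement of an unauthorized set is itself unauthorized and naive complementary reconstruction fails; correctness, efficient implementation, and share-size bookkeeping are routine once the field and evaluation points have been fixed.
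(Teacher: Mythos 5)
The paper does not prove this statement at all -- it is imported as a Fact from \cite{CGL99} -- so your proposal has to stand on its own, and it has two genuine gaps. First, the correctness argument fails to preserve entanglement with the reference, which is exactly what the paper's correctness condition $\qrec(\qshare(\sigma_{M\hat M})_{S_T})=\sigma_{M\hat M}$ demands. The fact that $V'$ is an isometry does not make its adjoint a valid recovery map: the state the reconstructor actually holds is the \emph{mixed} state $\tr_{S_{T^c}}\qshare(\cdot)$, not $V'(\cdot)V'^{\dagger}$. Your concrete circuit has the same problem: after writing $s=f(0)$ into $M$ and shifting $S_T$ by $-s\mathbf{1}$, the uncontrolled registers $S_{T^c}$ still hold $f(\alpha_j)=g(\alpha_j)+s$ with $g\in F_0$, so the global state is $\sum_s \alpha_s \ket{s}_{\hat M}\ket{s}_M\otimes q^{-(t-1)/2}\sum_{g\in F_0}\ket{\mathrm{ev}_T(g)}\ket{\mathrm{ev}_{T^c}(g+s)}$. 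Since $\mathrm{ev}_T(g)$ determines $g$, the cross terms carry the factor $\braket{\mathrm{ev}_{T^c}(g+s')}{\mathrm{ev}_{T^c}(g+s)}=\delta_{s,s'}$ whenever $T^c\neq\emptyset$, and discarding the shares dephases $M\hat M$: you recover $\sum_s|\alpha_s|^2\ketbra{s}\otimes\ketbra{s}$ rather than $\sigma_{M\hat M}$. The actual CGL-type reconstruction instead applies on $S_T$ the coherent linear change of basis $\ket{\mathrm{ev}_T(f)}\mapsto\ket{f(0)}\ket{\mathrm{ev}_{T^c}(f)}\ket{h(f)}$ (interpolating the values at the points of $T^c$ from the $t$ shares in hand, completed to a bijection by a suitable $h$), so that the leftover registers end up in a fixed, secret-independent state (maximally entangled with $S_{T^c}$), and coherence with $\hat M$ survives.

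Second, your privacy argument only controls the diagonal blocks $\tr_{S_{T^c}}\qshare(\ketbra{s})$, but the paper's perfect privacy is stated with respect to the purifying register $\hat M$, which additionally requires the cross blocks $\tr_{S_{T^c}}\qshare(\ket{s}\bra{s'})$, $s\neq s'$, to vanish. For your direct scheme with only $p$ evaluation points this genuinely fails whenever an unauthorized set $T$ has size $k\geq p-t+1$ (a nonempty range exactly when $p\leq 2t-2$): the polynomial $c\prod_{j\notin T}(x-\alpha_j)$ has degree $p-k\leq t-1$, vanishes on $T^c$, and has nonzero constant term, so there exist $f\in F_s$, $f'\in F_{s'}$ with $s\neq s'$ agreeing on $T^c$, making the cross block nonzero. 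Already for $t=p=2$ the shared state is GHZ-like and a single share is correlated with $\hat M$, so privacy in the required sense is violated; your Vandermonde-surjectivity bijection (which only shows $\rho_s^T=\rho_0^T$) does not address this, and in fact cannot, since the statement it would need is false for this scheme. The construction in \cite{CGL99} avoids the issue precisely by building the $((t,2t-1))$ polynomial scheme -- where every unauthorized set has a complement of size at least $t$, killing all cross terms -- and then discarding $2t-1-p$ shares to realize general $t\leq p\leq 2t-1$; your writeup needs either this share-discarding step or an equivalent modification, together with a corrected, entanglement-preserving reconstruction.
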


The following results state some useful properties satisfied by the threshold secret sharing scheme for quantum messages from \cref{fact:qss}.
The first fact states that the shares corresponding to any unauthorized set of parties are jointly uniformly distributed.
\begin{fact}[\cite{CGL99}]\label{fact:qsshide}
    Let $(\qshare,\qrec)$  be the $t$-out-of-$p$ secret sharing scheme from \cref{fact:qss}.  Let $\sigma_{M}$ be a state such that $\vert M \vert =b$ and $\rho = \qshare(\sigma_{M})$. For any $T \subseteq [p]$ such that $ \vert T \vert \leq t-1$ it holds that
    \begin{equation*}
        \rho_{ S_T } = U_{S_T }.
    \end{equation*}
\end{fact}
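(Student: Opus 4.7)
The plan is to combine the perfect privacy built into the definition of the scheme with an explicit computation on the CGL99 quantum Reed--Solomon encoder. First I will invoke perfect privacy applied to the canonical purification $\sigma_{M\hat M}$: for any $T$ with $|T|\leq t-1$, the shared state satisfies $(\qshare(\sigma_{M\hat M}))_{\hat M S_T}\equiv \sigma_{\hat M}\otimes \zeta_{S_T}$, where $\zeta_{S_T}$ is a fixed state depending only on the scheme. Tracing out $\hat M$ yields $\rho_{S_T}=\zeta_{S_T}$ for every input $\sigma_M$, so the remaining task is to identify $\zeta_{S_T}$ with $U_{S_T}$.

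For this identification I will plug in the maximally mixed input $\sigma_M=U_M$ and compute the marginal directly, first in the ``saturated'' case $p=2t-1$. The CGL99 encoder is the isometry
\begin{equation*}
    V:\ket{M}\;\mapsto\;\frac{1}{\sqrt{q^{t-1}}}\sum_{\substack{f:\,\deg f<t\\ f(0)=M}}\ket{f(\alpha_1),\ldots,f(\alpha_p)},
\end{equation*}
where $\alpha_1,\ldots,\alpha_p$ are distinct nonzero elements of $\F_q$. Averaging over $M$ and tracing out $S_{T^c}$ yields
\begin{equation*}
    \rho_{S_T}=\frac{1}{q^t}\sum_{\substack{f,g:\,f(0)=g(0)\\ f(\alpha_i)=g(\alpha_i)\,\forall i\in T^c}}\ket{f(\alpha_T)}\bra{g(\alpha_T)},
\end{equation*}
where $\alpha_T$ denotes the tuple $(\alpha_i)_{i\in T}$. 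Since $|T^c|\geq p-(t-1)=t$ in the saturated case, the polynomial $f-g$ of degree less than $t$ has at least $t+1$ roots in $\{0\}\cup\{\alpha_i\}_{i\in T^c}$, hence must vanish; this forces $f=g$. A routine count, namely that exactly $q^{t-|T|}$ polynomials of degree less than $t$ take a prescribed value tuple on $T$, then collapses the sum to $\frac{1}{q^{|T|}}\id_{S_T}=U_{S_T}$.

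For the remaining range $t\leq p<2t-1$, I will invoke the standard CGL99 reduction: instantiate the $(2t-1)$-party scheme and then discard the extra $2t-1-p$ shares to obtain a valid $t$-out-of-$p$ scheme (the authorized/unauthorized structure is preserved). Since further partial traces compose with the existing one, the marginal on any $T\subseteq[p]$ of the reduced scheme equals the marginal on the same $T$ in the base $(2t-1)$-party scheme, which by the previous paragraph is $U_{S_T}$.

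The main obstacle is the polynomial-counting step at the core of the saturated case, where two things must line up simultaneously: the MDS property of Reed--Solomon codes forcing $f=g$, and the exact uniform count of surviving polynomials producing precisely equal weight on every basis element of $S_T$. Everything else --- extracting $\rho_{S_T}=\zeta_{S_T}$ from perfect privacy and propagating the saturated identity through the share-discarding reduction --- is bookkeeping.
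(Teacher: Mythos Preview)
The paper does not give its own proof of this statement: it is stated as a Fact with a citation to \cite{CGL99}, and per the paper's conventions, ``facts denote results already known from prior work.'' So there is nothing to compare against in the paper itself.

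Your argument is correct. One small remark: the detour through perfect privacy in step~1 is unnecessary. The polynomial computation you carry out in the saturated case works verbatim for a \emph{fixed} basis input $\ket{M}$ rather than the maximally mixed state: tracing out $S_{T^c}$ from $V\ketbra{M}V^\dagger$ gives a sum over $f,g$ with $f(0)=g(0)=M$ and $f(\alpha_i)=g(\alpha_i)$ for $i\in T^c$; the same MDS argument forces $f=g$, and the count of polynomials with $f(0)=M$ and prescribed values on $T$ is $q^{t-|T|-1}$, giving $\rho_{S_T}=q^{-|T|}\id_{S_T}$ directly. By linearity this extends to arbitrary $\sigma_M$. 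This avoids any appearance of circularity (since in \cite{CGL99} the maximal-mixedness of unauthorized marginals is essentially how privacy is established in the first place). Your share-discarding reduction for $p<2t-1$ is exactly the CGL99 construction and is fine as written.
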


\begin{lemma}\label{fact:qsshidepauli}
    Let $(\qshare,\qrec)$  be the $t$-out-of-$p$ secret sharing scheme from \cref{fact:qss}. Let $P \in \cP(\cH_M)$ be any non-identity Pauli operator.  For any $T \subseteq [p]$ such that $ \vert T \vert \leq t-1$ it holds that
    $(\qshare(P))_{S_T} =0$. 
\end{lemma}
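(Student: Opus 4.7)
The plan is to leverage the linearity of the sharing map $\qshare$ (extended from a CPTP map on density matrices to a linear map on all operators) together with \cref{fact:qsshide}, which says that the marginal of a shared state on any unauthorized subset is maximally mixed regardless of the input state.

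The key observation is that for any non-identity $P \in \cP(\cH_M)$, the operator $P$ is Hermitian with eigenvalues $\pm 1$ (since every $n$-qubit Pauli built from $\{I,X,Y,Z\}$ is Hermitian and squares to $\id$), so $\id_M + P$ is positive semi-definite. Moreover, $\tr(P)=0$ for non-identity Pauli, so $\tr(\id_M+P)=2^{|M|}$. This means
\[
    \rho_M \defeq \frac{\id_M + P}{2^{|M|}}
\]
is a valid density matrix on $\cH_M$, and it differs from the maximally mixed state by $\rho_M - U_M = P/2^{|M|}$.

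First, I would apply $\qshare$ (viewed linearly) to both $U_M$ and $\rho_M$, writing
\[
    \qshare(\rho_M) \;=\; \qshare(U_M) \;+\; \frac{1}{2^{|M|}}\,\qshare(P).
\]
Next, I would trace out the complementary registers $S_{[p]\setminus T}$ on both sides. By \cref{fact:qsshide} applied to the valid state $\rho_M$ and separately to $U_M$, both $\qshare(\rho_M)_{S_T}$ and $\qshare(U_M)_{S_T}$ equal the same fixed maximally mixed state $U_{S_T}$. Subtracting yields
\[
    \frac{1}{2^{|M|}}\,\qshare(P)_{S_T} \;=\; U_{S_T} - U_{S_T} \;=\; 0,
\]
and hence $\qshare(P)_{S_T}=0$, as desired.

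There is no real obstacle here: the proof is a short one-line cancellation once one recognizes that (i) $\qshare$ extends to a linear map on all operators, (ii) $\id_M + P$ is a (rescaled) genuine density matrix when $P$ is a non-identity Pauli, so that \cref{fact:qsshide} genuinely applies, and (iii) the partial trace is also linear. The only point to be careful about is that \cref{fact:qsshide} is stated for states, so we must exhibit a legitimate state whose difference from $U_M$ is a multiple of $P$, which is exactly what $\rho_M$ above provides.
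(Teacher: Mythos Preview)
Your proof is correct and rests on the same mechanism as the paper's: write the traceless Hermitian $P$ as a zero-sum linear combination of genuine density matrices, apply \cref{fact:qsshide} to each, and cancel. The paper chooses the spectral decomposition $P=\sum_i a_i\ketbra{\psi_i}$ with $2^{|M|}$ pure eigenstates (half with $a_i=+1$, half with $a_i=-1$), so each $\qshare(\ketbra{\psi_i})_{S_T}=U_{S_T}$ and the sum vanishes. You instead use just two mixed states, $U_M$ and $\rho_M=(\id_M+P)/2^{|M|}$, and subtract. Your route is a touch more economical (two invocations of \cref{fact:qsshide} rather than $2^{|M|}$) and avoids spelling out the eigenbasis; the paper's route makes the $\pm1$ eigenvalue structure of Paulis more explicit. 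Either way the content is the same one-line cancellation.
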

\begin{proof}
Consider the spectral decomposition of $P$. Let $P = \sum_{i=1}^{2^{\vert M \vert}} a_i \ketbra{\psi}_i$. Note that half of the $a_i$'s are exactly equal to $1$, while the other half is $-1$. 
Since $\qshare(\ketbra{\psi}_i)_{S_T} = U{S_T}$ for every $i$ in $[2^{\vert M \vert}]$, we arrive at the desired result that $(\qshare(P))_{S_T} = 0.$
\end{proof}

The next lemma captures privacy properties in the presence of entangled quantum side information. 
\begin{lemma}\label{fact:qsshiding}
Let $(\qshare,\qrec)$  be the $t$-out-of-$p$ secret sharing scheme from \cref{fact:qss}. 
    Let $\sigma_{EM}$ be a state such that $\vert M \vert =b$ and $\rho = \qshare(\sigma_{EM})$. For any $T \subseteq [p]$ such that $ \vert T \vert \leq t-1$ it holds that
    \begin{equation*}
        \rho_{E S_T } \equiv \rho_E \otimes U_{S_T }\equiv \sigma_E \otimes U_{S_T }.
    \end{equation*}
\end{lemma}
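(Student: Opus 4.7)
The plan is to expand $\sigma_{EM}$ in the Pauli basis on register $M$, then apply the linear map $\qshare$ and use the two previous results (\cref{fact:qsshide,fact:qsshidepauli}) to kill all non-identity contributions. This mirrors the technique used in the proof of \cref{fact:bellbasis}.

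First, I would write $\sigma_{EM}$ in its Pauli decomposition on $M$,
\begin{equation*}
\sigma_{EM} = \sum_{P \in \cP(\cH_M)} M^P \otimes P,
\end{equation*}
where $M^P \in \cL(\cH_E)$ are the corresponding operator-valued coefficients. Tracing out $M$ and using the fact that every non-identity Pauli has zero trace while $\tr(\id_M)=2^b$, I would deduce that $\sigma_E = 2^b \cdot M^{\id_M}$, i.e.\ $M^{\id_M} = \sigma_E/2^b$.

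Next, since $\qshare$ is a CPTP map acting only on $M$, linearity gives
\begin{equation*}
\rho_{ES_T} = (\id_E \otimes \qshare)(\sigma_{EM})\big|_{S_T} = \sum_{P \in \cP(\cH_M)} M^P \otimes \qshare(P)_{S_T}.
\end{equation*}
By \cref{fact:qsshidepauli}, $\qshare(P)_{S_T} = 0$ for every non-identity Pauli $P$, so the sum collapses to the single term $M^{\id_M} \otimes \qshare(\id_M)_{S_T}$. On the other hand, $U_M = \id_M / 2^b$, and \cref{fact:qsshide} (applied with any pure input, hence with $U_M$ by linearity) gives $\qshare(U_M)_{S_T} = U_{S_T}$, so $\qshare(\id_M)_{S_T} = 2^b\, U_{S_T}$.

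Putting everything together,
\begin{equation*}
\rho_{ES_T} = \frac{\sigma_E}{2^b} \otimes 2^b\, U_{S_T} = \sigma_E \otimes U_{S_T},
\end{equation*}
and since $\qshare$ preserves the marginal on $E$, we also have $\rho_E = \sigma_E$, which yields both claimed identities. There is no real obstacle here: all the heavy lifting is done by \cref{fact:qsshide,fact:qsshidepauli}; the only conceptual point is that the Pauli decomposition lets us reduce an arbitrary entangled input $\sigma_{EM}$ to a linear combination of inputs (the Paulis) on which these facts directly apply.
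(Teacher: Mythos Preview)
Your proof is correct and follows essentially the same approach as the paper: expand $\sigma_{EM}$ in the Pauli basis on $M$, apply $\qshare$ linearly, use \cref{fact:qsshidepauli} to annihilate the non-identity terms, and use \cref{fact:qsshide} to evaluate the identity term as $2^b\,U_{S_T}$. The paper's notation differs only cosmetically (it writes the $E$-coefficients as $G^Q$ rather than $M^P$), but the argument is the same.
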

\begin{proof} 
For each $Q \in \cP(\cH_{M})$ and $Q' \in \cP(\cH_{E})$ there exist coefficients $\alpha_{Q' Q}$ such that we may write
\begin{align*}
   \sigma_{EM}  & =\sum_{Q \in \cP(\cH_{M}),Q' \in \cP(\cH_{E})} \alpha_{Q' Q} (Q'\otimes Q)\\
   &= \sum_{Q \in \cP(\cH_{M})}  \left(\left(\sum_{Q' \in \cP(\cH_{E})} \alpha_{Q' Q} Q'\right) \otimes Q \right)  \\
    & = \sum_{Q \in \cP(\cH_{M})}  \left(G^{Q} \otimes Q\right),
\end{align*}
where
\begin{equation*}
    G^Q \defeq \sum_{Q' \in \cP(\cH_{E})} \alpha_{Q'Q} Q'.
\end{equation*}
In particular, this implies that
\begin{equation}\label{eq:qss}
    \sigma_E = G^{\id_M} 2^{\vert M \vert}.
\end{equation}
Then, we conclude that 
\begin{align*}
\rho_{E S_T }  
& =  \sum_{Q \in \cP(\cH_{M})}  \left(G^{Q} \otimes  \left(\qshare(Q)\right)_{S_T }\right) \\
&=  \sum_{Q = \id_M}  \left(G^{Q} \otimes  \left(\qshare(Q)\right)_{S_T }\right) + \sum_{Q \ne \id_M \land Q \in \cP(\cH_{M})}  \left(G^{Q} \otimes  \left(\qshare(Q)\right)_{S_T }\right) \\ 
&=  G^{\id_M} \otimes \left(\qshare \left(  2^{\vert M \vert} \cdot \frac{\id_M}{2^{\vert M \vert}}  \right)\right)_{S_T } \\ 
     &= G^{\id_M} \otimes 2^{\vert M \vert} U_{S_T}\\
     & = \sigma_E \otimes U_{S_T }.
\end{align*}
The third equality above follows from \cref{fact:qsshidepauli}.
The fourth equality holds because of \cref{fact:qsshide}.
The final equality is a consequence of \cref{eq:qss}.
\end{proof}

\section{Split-state non-malleable codes for quantum messages}\label{sec:qnmc67}
In this section, we describe and analyze our split-state non-malleable coding scheme for quantum messages. We begin by describing the encoding and decoding procedures $\enc$ and $\dec$. In the analysis, we first show that $(\enc,\dec)$ is an average-case non-malleable code with low error for quantum messages. This yields \cref{thm:mainqnmc1}. Then, to conclude the argument and obtain \cref{thm:mainqnmc}, we show that every such average-case non-malleable code is also \emph{worst-case} split-state non-malleable at the price of a blow-up in the error as a function of the input quantum message length.

\subsection{Our candidate coding scheme for quantum messages}\label{sec:codedesc}

We proceed to describe our explicit candidate coding scheme for quantum messages.
Suppose that we wish to encode a quantum state $\sigma_M$ with canonical purification $\sigma_{M\hat{M}}$.
Let $b=|M|$ denote the message length and fix an arbitrary constant $\delta>0$.
We will invoke the explicit quantum-secure $2$-source $\eps$-non-malleable extractor $\nmext:\bits^\ell\times\bits^{\delta \ell}\to\bits^r$ guaranteed by \cref{lem:qnmcodesfromnmext} with output length $r$ satisfying
\begin{equation}\label{eq:constraintOutLength}
    r=(1/2-\delta)\ell \geq 5b
\end{equation}
and error 
\begin{equation*}
    \eps=2^{-\ell^{\Omega_\delta(1)}},
\end{equation*}
where $\Omega_\delta(\cdot)$ hides constants which depend only on $\delta$.

The encoding CPTP map $\enc$ works as follows on input $\sigma_{M}$:
\begin{enumerate}
    \item Sample classical bitstrings $X\leftarrow\bits^\ell$ and $Y\leftarrow \bits^{\delta \ell}$;
    
    \item Compute the classical key $R=\nmext(X,Y)\in\bits^r$;
    
    \item Let $\mathcal{SC}(\cH_M)$ be the subgroup of the Clifford group described in \cref{lem:subclifford} and $\samp$ be the associated sampling procedure.
    Compute $C_R=\samp(R)$ and the masked state $\psi_Z=C_R\sigma_M C^\dagger_R$.
    Note that the constraint in \cref{eq:constraintOutLength} guarantees that $R$ is long enough to sample such a Clifford operator as per \cref{lem:subclifford}.
    
    \item Output registers $X$ and $(Y,Z)$ as the two parts of the split-state encoding. Note that $X$ and $Y$ are classical strings while $Z$ is a quantum state.
\end{enumerate}

It is clear that $\enc$ can be computed efficiently if $\nmext$ is explicit. The decoding procedure $\dec$ is straightforward, and proceeds as follows on input possibly tampered registers $(X,Y) \to (X', Y')$ and $\psi_Z \to \tau_Z$:
\begin{enumerate}
    \item Compute the candidate key $R'=\nmext(X',Y')$;
    
    \item Let $\tau_{Z}$ denote the possibly tampered quantum state stored in register $Z$. Then, compute the candidate message $\eta_{M}=C_{R'}^\dagger \tau_{Z}C_{R'}$ where, as in the encoding procedure $\enc$ above, $C_{R'}=\samp(R')$ with $\samp$ the sampling procedure corresponding to the subgroup $\mathcal{SC}(\cH_M)$ of the Clifford group from \cref{lem:subclifford};
    
    \item Output $\eta_{M}$.
\end{enumerate}
It is easy to check that this coding scheme $(\enc,\dec)$ satisfies the basic correctness property
\begin{equation*}
    \dec(\enc(\sigma_{M\hat{M}})) = \sigma_{M\hat{M}}.
\end{equation*}

\subsection{Average-case non-malleability}

In this section, we show that the coding scheme $(\enc,\dec)$ described in \cref{sec:codedesc} is average-case non-malleable with low-error.
More precisely, we prove the following result.
\begin{theorem}[Average-case non-malleable codes with constant rate]\label{thm:avgcaseNMC}
    For any fixed constant $\delta>0$,
    the coding scheme $(\enc,\dec)$ described in \cref{sec:codedesc} with codewords of length $n$ and message size $b\leq \left(\frac{1}{11}-\delta\right)n$ is average-case $\eps'$-non-malleable with $\eps'=2^{-n^{\Omega_\delta(1)}}$.
    Moreover, both $\enc$ and $\dec$ can be computed in time $\poly(n)$.
\end{theorem}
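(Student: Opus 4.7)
The plan is to reduce the argument to a clean two-case analysis via the transpose method, the non-malleability guarantees of $\nmext$, and the twirl properties of $\mathcal{SC}(\cH_M)$. Throughout, take $\sigma_M=U_M$, so its canonical purification $\sigma_{M\hat{M}}$ is a maximally entangled state.

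First, I would rewrite the tampering experiment in a ``delayed Clifford'' form. By Fact~\ref{fact:transposetrick}, applying $C_R$ on register $M$ of the maximally entangled $\sigma_{M\hat{M}}$ (to produce register $Z$) is equivalent to leaving that half alone and instead applying $C_R^T$ on $\hat{M}$. Since the adversary's isometries $U$ and $V$ act only on $XW_1$ and $YZW_2$, this delayed $C_R^T$ commutes with the entire adversarial-plus-decoding channel and can be postponed to the very end. After postponement, the pre-delayed state $\theta_1$ has $\hat{M}Z$ in a maximally entangled state and, crucially, $\hat{M}Z$ independent of $XY$. This clean product structure is exactly what the non-malleable extractor argument requires; it is also the reason one must remove the ``trap flags'' of a quantum authentication scheme in this construction.

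Next, I would invoke Fact~\ref{lem:qnmcodesfromnmext}, treating $\hat{M}Z$ as extra side information held by the $V$-side tampering adversary (alongside $W_2$). Up to total trace-distance error $O(\eps)=2^{-\ell^{\Omega_\delta(1)}}$, the post-tampering state on $(R,R',\hat{M}Z,W_2)$ splits into two branches: a \emph{tamp} branch on $(X',Y')\neq(X,Y)$ in which $R$ is $\eps$-close to uniform and independent of $(R',\hat{M}Z,W_2)$, and a \emph{same} branch on $(X',Y')=(X,Y)$ (hence $R'=R$) in which $R$ is $\eps$-close to uniform and independent of $(\hat{M}Z,W_2)$. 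The $2^{-2|M|}\leq \eps$ sampling error from Fact~\ref{lem:subclifford} for $\samp$ is similarly absorbed.

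The decoding then applies $C_{R'}^\dagger$ on $Z$ and, at the very end, the delayed $C_R^T$ on $\hat{M}$. In the tamp branch, $R$ is independent of both $R'$ and $\hat{M}Z$, so averaging $C_R^T$ over uniform $R$ and using the $1$-design property of $\mathcal{SC}(\cH_M)$ (Fact~\ref{fact:notequal}) randomizes $\hat{M}$ to $U_{\hat{M}}$, producing a product state of the form $\sigma_{\hat{M}}\otimes \gamma_M^{\cA,\mathsf{tamp}}$ depending only on $\cA$. The same branch is the principal obstacle: here $C_R$ and $C_R^\dagger$ meet on the two sides, so I would Pauli-expand the effective $V$-side channel on $Z$ conditioned on the classical data. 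The identity-Pauli component of this expansion simply restores the EPR pair on $\hat{M}Z$ and, after decoding, contributes the term $p_\cA\,\sigma_{M\hat{M}}$ with $p_\cA$ equal to the identity-Pauli weight. For each non-identity $P\neq \id_Z$, the subgroup Clifford twirls of Facts~\ref{lem:cliffordtwirl} and~\ref{lem:cliffordtwirl1} annihilate the off-diagonal cross terms upon averaging $C_R$ uniformly, while the diagonal term, combined with Fact~\ref{fact:notequal} on $\hat{M}$, collapses to the maximally mixed state on the $(4^{|M|}-1)$-dimensional subspace of $\cH_{\hat{M}}\otimes\cH_Z$ orthogonal to the EPR pair; tracing out $\hat{M}$ of this state yields a fixed operator $\gamma_M^{\cA,\mathsf{same}}$ depending only on $\cA$. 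Collecting the two branches, defining $\gamma_M^\cA$ as the appropriate convex combination and applying the triangle inequality yields the required form with total error $\eps'=2^{-n^{\Omega_\delta(1)}}$, while the rate bound $b\leq (1/11-\delta)n$ follows from $n=(1+\delta)\ell+b$ together with $r=(1/2-\delta)\ell\geq 5b$.
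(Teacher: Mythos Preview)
Your proposal is correct and follows essentially the same approach as the paper: the transpose method to shift $C_R$ to $\hat{M}$, postponement of the Clifford so that $\hat{M}Z$ is independent of $XY$ before tampering, invocation of the quantum-secure non-malleable extractor with $\hat{M}Z$ absorbed into the $V$-side side information, and the same/tamp case split with the $1$-design property handling the tamp branch and the subgroup Clifford twirl (your Pauli-expansion argument is the channel-side formulation of the paper's state-side Lemma~\ref{lem:equal101}) handling the same branch. One minor point: your claim that the $2^{-2|M|}$ sampling error is $\leq \eps$ is not automatic for short messages---the paper handles this by padding the message with dummy maximally mixed qubits up to length $b'=r/5$, which you should mention; otherwise the parameter accounting and rate derivation match.
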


The claim about the computational complexity of $\enc$ and $\dec$ in \cref{thm:avgcaseNMC} can be easily verified from the description in \cref{sec:codedesc}, using the fact that the codeword length $n$ satisfies $n=O(\ell)$.

We prove the remainder of \cref{thm:avgcaseNMC} through a sequence of lemmas.
Throughout this proof we assume that $C_R$, the Clifford operator sampled from $R$ via the sampling procedure $\samp$ from \cref{lem:subclifford} used in Step 3 of $\enc(\sigma)$ in \cref{sec:codedesc}, is uniformly distributed over $\cSC(\cH_M)$.
Based on \cref{eq:approxsample} in \cref{lem:subclifford}, this assumption will lead to an additive factor of $2^{-2b}$ in the final non-malleability error, which we add at the end of our argument.

For ease of readability, a detailed diagram of the complete split-state tampering experiment on $(\enc,\dec)$ in \cref{fig:splitstate2}.

Taking into account the notation from \cref{fig:splitstate2}, in order to prove \cref{thm:avgcaseNMC} it suffices to show that for any split-state adversary $\cA=(U,V,\psi)$ and $\sigma_{M\hat{M}}$ maximally entangled it holds that
\begin{equation}\label{eq:finalgoal}
    (\sigma_3)_{\hat{M}M} \approx_{\eps'=2^{-n^{\Omega_\delta(1)}}} p_{\mathcal{A}} \sigma_{\hat{M}M}  + (1-p_\mathcal{A}) (U_{\hat{M}} \otimes  \gamma^{\mathcal{A}}_{M}),
\end{equation}
where $(p_{\mathcal{A}}, \gamma^{\mathcal{A}}_{M})$ depend only on the split-state adversary $\cA$.

We begin by noting that in order to establish \cref{eq:finalgoal} we can equivalently consider the modified tampering experiment described in \cref{fig:splitstate3}, where the (transposed) Clifford operator is applied to $\sigma_{\hat{M}}$ instead.
More precisely, we have the following lemma.
\begin{lemma}\label{lem:modtamptranspose}
    For any fixed split-state adversary $\cA$ it holds that the states $\rho,\rho_1,\rho_2,\rho_3$ in \cref{fig:splitstate3} are equal to $\sigma,\sigma_1,\sigma_2,\sigma_3$ in \cref{fig:splitstate2}.
\end{lemma}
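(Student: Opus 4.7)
The plan is to verify $\sigma_i = \rho_i$ for $i = 0,1,2,3$ one at a time, tracking the state through the tampering experiment. The base case $\sigma = \rho$ is immediate: in both figures no operation has been applied, and both states are the canonical purification $\sigma_{M\hat{M}}$ of $\sigma_M = U_M$.

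The crux of the argument is the step $\sigma_1 = \rho_1$. I will expand the encoding of \cref{fig:splitstate2} as a classical-quantum mixture indexed by the sampled $(x,y) \leftarrow \bits^\ell \times \bits^{\delta\ell}$, yielding
\[ \sigma_1 = \sum_{x,y} p(x)p(y)\, \ketbra{x}_X \otimes \ketbra{y}_Y \otimes (C_{\nmext(x,y)} \otimes \id_{\hat{M}})\,\sigma_{M\hat{M}}\,(C_{\nmext(x,y)}^\dagger \otimes \id_{\hat{M}}), \]
while the modified encoding of \cref{fig:splitstate3}---which replaces $C_R$ on $M$ by $C_R^T$ on $\hat{M}$---gives
\[ \rho_1 = \sum_{x,y} p(x)p(y)\, \ketbra{x}_X \otimes \ketbra{y}_Y \otimes (\id_M \otimes C_{\nmext(x,y)}^T)\,\sigma_{M\hat{M}}\,(\id_M \otimes (C_{\nmext(x,y)}^T)^\dagger). \]
Because $\sigma_{M\hat{M}}$ is the canonical purification of $U_M$ and is therefore maximally entangled, \cref{fact:transposetrick} applied with operator $C_{\nmext(x,y)}$ equates the two quantum factors inside each summand term-by-term. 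The classical distribution on $(X,Y)$ is identical across the two experiments by construction, so summing yields $\sigma_1 = \rho_1$.

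For $i = 2,3$, the key observation is that no subsequent operation touches the purification register $\hat{M}$: the split-state adversary $\cA = (U,V,\ket\psi_{W_1 W_2})$ acts on $X W_1$ and on $Y Z W_2$ after tensoring in $\ket\psi_{W_1 W_2}$, and the decoder classically recomputes $R' = \nmext(X',Y')$ and applies $C_{R'}^\dagger$ on $Z$. Each of these is the \emph{same} CPTP map (acting as the identity on $\hat{M}$) in both experiments, so applying a common CPTP map to the equal inputs $\sigma_1 = \rho_1$ yields $\sigma_2 = \rho_2$ and then $\sigma_3 = \rho_3$. I do not anticipate a substantive obstacle: the only non-routine step is the single invocation of the transpose method, which is precisely the tool the delayed-Clifford analysis is built around; the remainder is register bookkeeping that is immediate from the split-state structure and the fact that $\hat{M}$ is never accessed by the adversary or decoder.
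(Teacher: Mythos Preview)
Your proof is correct and takes essentially the same approach as the paper: the key step is invoking the transpose method (\cref{fact:transposetrick}) on the maximally entangled state $\sigma_{M\hat{M}}$ to establish $\sigma_1 = \rho_1$, after which equality of the subsequent states follows because the adversary and decoder apply identical maps not touching $\hat{M}$. The paper's proof says exactly this, only more tersely.
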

\begin{proof}
    The desired statement follows if we show that $\sigma_1$ and $\rho_1$ are equal.
     This is a direct consequence of the transpose method (\cref{fact:transposetrick}).
\end{proof}

We can further note that delaying the generation of $R$ and the application of the corresponding Clifford operator $C^T_R$ on the $\hat{M}$ register until the very end has no effect on the final state.
Therefore, we can focus on the modified tampering experiment described in \cref{fig:splitstate4}.
We capture this formally in the next brief lemma statement.
\begin{lemma}\label{lem:modtampdelay}
    For any fixed split-state adversary $\cA$ it holds that the final state $\theta_4$ in \cref{fig:splitstate4} is equal to the final state $\rho_3$ in \cref{fig:splitstate3}.
\end{lemma}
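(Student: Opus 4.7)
The plan is to argue that the Clifford operation $C_R^T$ applied to register $\hat{M}$ in \cref{fig:splitstate3} commutes with every subsequent operation in the tampering experiment, so that deferring it to the very end (as in \cref{fig:splitstate4}) leaves the final state unchanged.

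First, I would carefully identify which registers each operation touches. The classical key $R$ is generated from $X,Y$ at the encoding stage and is stored in a classical register that is never exposed to the adversary or modified by any subsequent operation (the encoder keeps it only to apply $C_R^T$ on $\hat{M}$). The adversary's unitaries $U$ and $V$ act on $X W_1$ and $Y Z W_2$, respectively, and the decoder computes $R'=\nmext(X',Y')$ from the classical tampered registers $X',Y'$ and applies $C_{R'}^\dagger$ on register $Z$. In particular, no operation in the experiment after the application of $C_R^T$ touches the register $\hat{M}$.

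Second, I would use the fact that operations on disjoint registers commute: for any unitaries (or CPTP maps) $\Phi$ acting on $\hat{M}$ and $\Psi$ acting on a disjoint collection of registers, we have $(\Phi \otimes \id)(\id \otimes \Psi) = (\id \otimes \Psi)(\Phi \otimes \id)$. Since $R$ is classical, the controlled operation ``apply $C_R^T$ to $\hat{M}$'' can be written as $\sum_r \ketbra{r}_R \otimes C_r^T$ on $R\hat{M}$, and this commutes with any operation that acts trivially on $\hat{M}$ and (at most) reads the classical register $R$ without modifying it. Since $R$ is left untouched and no other operation acts on $\hat{M}$, we can slide $C_R^T$ past all of the adversary's tampering and the decoder's action without altering the overall state.

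Third, applying this commutation repeatedly, I would move the $C_R^T$ operation on $\hat{M}$ past the split-state adversary $(U,V,\ket\psi)$ and past the decoding procedure (which applies $C_{R'}^\dagger$ on $Z$, an entirely different register). The resulting experiment is exactly the one depicted in \cref{fig:splitstate4}, where $C_R^T$ is applied on $\hat{M}$ only at the end. Hence $\theta_4 = \rho_3$. I do not anticipate any real obstacle here; the argument is essentially bookkeeping of register supports together with the observation that classical control by $R$ (which is preserved intact throughout the experiment) lets the $C_R^T$ slide freely.
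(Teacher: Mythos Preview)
Your proposal is correct and matches the paper's approach. The paper in fact does not give a formal proof of this lemma at all; it simply precedes the lemma with the remark that ``delaying the generation of $R$ and the application of the corresponding Clifford operator $C^T_R$ on the $\hat{M}$ register until the very end has no effect on the final state,'' which is exactly the commutation-on-disjoint-registers argument you spell out.
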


In particular, combining \cref{lem:modtamptranspose,lem:modtampdelay} implies that to establish \cref{eq:finalgoal} it suffices to show that for any split-state adversary $\cA=(U,V,\psi)$ and $\theta_{M\hat{M}}$ maximally entangled it holds that
\begin{equation}\label{eq:finalgoaltranspose}
    (\theta_4)_{\hat{M}M} \approx_{\eps'} p_{\mathcal{A}} \theta_{\hat{M}M}  + (1-p_\mathcal{A}) (U_{\hat{M}} \otimes  \gamma^{\mathcal{A}}_{M}),
\end{equation}
where $(p_{\mathcal{A}}, \gamma^{\mathcal{A}}_{M})$ depend only on the split-state adversary $\cA$.
Here, the states $\theta,\theta_1,\theta_2,\theta_3,\theta_4$ correspond to the intermediate states of the modified tampering experiment in \cref{fig:splitstate4}.

We now set up some helpful definitions before proceeding to the next lemma.
We may write $\theta_2$ as
\begin{equation*}
    \theta_2 = (U \otimes V)  (\theta_1 \otimes \ketbra{\psi}_{W_1W_2}) (U \otimes V)^\dagger.
\end{equation*}

Our analysis will proceed by cases, depending on whether the $X$ and $Y$ registers are modified by the tampering experiment in \cref{fig:splitstate4} (i.e., $XY\neq X' Y'$) or not.
To this end, we consider two different conditionings of $\theta_2$ based on these two cases.
More precisely, taking into account \cref{def:conditioning}, we define
\begin{equation*}
    \theta^{\mathsf{tamp}}_2= \theta_2 \vert  (XY \ne X'Y')
\end{equation*}
and
\begin{equation*}
    \theta^{\mathsf{same}}_2= \theta_2 \vert  (XY = X'Y').
\end{equation*}
Note that we may write $\theta_2 = p_{\sm} \theta^{\mathsf{same}}_2+(1-p_{\sm})\theta^{\mathsf{tamp}}_2$, where $p_{\sm}= \Pr[ XY= X'Y' ]_{\theta_2}$ is the probability that $XY$ are not modified in the tampering experiment from \cref{fig:splitstate4}.

Further taking into account that
\begin{equation*}
    \theta_3 = (\nmext_{XY} \otimes \nmext_{X'Y'})\theta_2,
\end{equation*}
let
\begin{equation*}
    \thetatamp_3 = (\nmext_{XY} \otimes \nmext_{X'Y'})\thetatamp_2
\end{equation*}
and
\begin{equation*}
    \thetasame_3 = (\nmext_{XY} \otimes \nmext_{X'Y'})\thetasame_2.
\end{equation*}

Let $D_{R\hat{M}}$ denote the controlled Clifford operator $C^T_{R}$ acting on register $\hat{M}$. Similarly, let $\tilde{D}_{R'Z}$ denote the controlled Clifford operator $C^\dagger_{R'}$ acting on register $Z$. 
We can then write
\begin{equation*}
    (\theta_4)_{\hat{M}M} =  (D_{R\hat{Z}} \otimes \tilde{D}_{R'Z} )  (\theta_3)_{RR'Z\hat{M}}    (D^\dagger_{R\hat{Z}} \otimes \tilde{D}^\dagger_{R'Z} ).
\end{equation*}
Analogously to the previous cases, we define
 \begin{equation}\label{eq:17thm1}
    (\thetatamp_4)_{\hat{M}M} \defeq  (D_{R\hat{Z}} \otimes \tilde{D}_{R'Z} )  (\thetatamp_3)_{RR'Z\hat{M}}    (D^\dagger_{R\hat{Z}} \otimes \tilde{D}^\dagger_{R'Z} )    
 \end{equation}
 and
  \begin{equation}\label{eq:17fvthm1}
    (\thetasame_4)_{\hat{M}M} \defeq  (D_{R\hat{Z}} \otimes \tilde{D}_{R'Z} )  (\thetasame_3)_{RR'Z\hat{M}}    (D^\dagger_{R\hat{Z}} \otimes \tilde{D}^\dagger_{R'Z} ).     
 \end{equation}
    
Recall from \cref{eq:finalgoaltranspose} that, based on \cref{lem:modtamptranspose,lem:modtampdelay}, our end goal is to show that
\begin{equation*}
    (\theta_4)_{\hat{M}M} \approx_{\eps'} p_{\mathcal{A}} \theta_{\hat{M}M}  + (1-p_\mathcal{A}) (U_{\hat{M}} \otimes  \gamma^{\mathcal{A}}_{M})
\end{equation*}
for some quantity $p_\cA\in[0,1]$ and state $\gamma^{\mathcal{A}}_{M}$ depending only on the split-state adversary $\cA$.
Towards establishing this, we begin by writing
\begin{equation*}
    (\theta_4)_{\hat{M}M} = p_{\sm} (\thetasame_4)_{\hat{M}M} + (1-p_{\sm}) (\thetatamp_4)_{\hat{M}M}.
\end{equation*}
We will focus on each of the two terms on the right-hand side of this equation separately.
We invoke the two lemmas below, whose proofs we defer to later dedicated sections.

The first lemma is relevant for handling $\thetatamp_4$, and intuitively states that when tampering occurs and $X'Y'\neq XY$ the outcome of the tampering experiment is close (in trace distance) to being an unentangled message.
\begin{lemma}\label{lem:thetatamp}
    We have that
    \begin{equation*}
        (1-p_{\sm})  \left\|    (\thetatamp_4)_{\hat{M}M} -U_{\hat{M}} \otimes    (\thetatamp_4)_{M} \right\|_1 \leq  \eps.
    \end{equation*}
\end{lemma}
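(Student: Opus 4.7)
The plan is to reduce the lemma to the non-malleability guarantee of $\nmext$ (\cref{lem:qnmcodesfromnmext}, Item~2), combined with the $1$-design property of $\cSC(\cH_M)$ (\cref{fact:notequal}), using the data-processing inequality as the bridge. The key conceptual move is to recast the split-state adversary $\cA=(U,V,\ket{\psi}_{W_1W_2})$ acting on $X\otimes(Y,Z)$ as a legitimate adversary for the pure $\nmext$ tampering experiment of \cref{fig:splitstate5}. To do this, enlarge the right adversary's workspace to $\tilde{W}_2:=(W_2,Z,\hat{M})$ and take the shared entangled state to be $\ket{\psi}_{W_1W_2}\otimes\ket{\sigma}_{\hat{M}Z}$, where $\ket{\sigma}_{\hat{M}Z}$ is the maximally entangled state (matching the state of $\hat{M}Z$ in $\theta_1$ of the transposed picture, \cref{fig:splitstate4}). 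The right adversary acts via $V\otimes\id_{\hat{M}}$, which is a permissible operation for the enlarged entanglement. Invoking Item~2 of \cref{lem:qnmcodesfromnmext} yields
\begin{equation*}
(1-p_\sm)\left\|(\thetatamp_3)_{RR'Z\hat{M}W_2}-U_R\otimes(\thetatamp_3)_{R'Z\hat{M}W_2}\right\|_1\leq\eps.
\end{equation*}

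Next, introduce the ideal state $\tilde\theta_3:=U_R\otimes(\thetatamp_3)_{R'Z\hat{M}W_2}$ and its post-decoding counterpart $\tilde\theta_4:=(D_{R\hat{M}}\otimes\tilde{D}_{R'Z})\tilde\theta_3(D_{R\hat{M}}\otimes\tilde{D}_{R'Z})^\dagger$. By \cref{fact:data}, applying $D_{R\hat{M}}\otimes\tilde{D}_{R'Z}$ and tracing out $R,R',W_2$ preserves the trace-distance bound, so $(1-p_\sm)\|(\thetatamp_4)_{\hat{M}M}-(\tilde\theta_4)_{\hat{M}M}\|_1\leq\eps$.

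I would then compute $(\tilde\theta_4)_{\hat{M}M}$ explicitly. Since $R$ is uniform and independent of $(R',Z,\hat{M},W_2)$ in $\tilde\theta_3$ and the two controlled Cliffords commute (they act on disjoint registers), the partial trace over $R$ reduces the action of $D_{R\hat{M}}$ to averaging $C_R^T(\cdot)(C_R^T)^\dagger$ on $\hat{M}$ over uniformly random $R$. The set $\{C_R^T:C_R\in\cSC(\cH_M)\}$ also forms a $1$-design; this follows by a short structural argument using that $\cSC$ is a group closed under inversion, so transposing the defining identity for $\cSC$ yields the twin identity for the transposed set. Hence \cref{fact:notequal} gives $U_{\hat{M}}$ decoupled from the remaining registers, and tracing out $R'$ and $W_2$ yields $(\tilde\theta_4)_{\hat{M}M}=U_{\hat{M}}\otimes(\tilde\theta_4)_M$. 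The triangle inequality then gives
\begin{equation*}
\left\|(\thetatamp_4)_{\hat{M}M}-U_{\hat{M}}\otimes(\thetatamp_4)_M\right\|_1\leq 2\left\|(\thetatamp_4)_{\hat{M}M}-(\tilde\theta_4)_{\hat{M}M}\right\|_1,
\end{equation*}
so multiplying through by $(1-p_\sm)$ bounds the desired quantity by $2\eps$, with the constant factor absorbed into the asymptotic $\eps=2^{-n^{\Omega_\delta(1)}}$.

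The main obstacle I expect will be carefully justifying the enlargement of the right adversary's shared entanglement with $\hat{M}Z$: one must verify that \cref{lem:qnmcodesfromnmext} is quantified over arbitrary shared entangled states, so augmenting $W_2$ with $\hat{M}Z$ does not weaken the non-malleability guarantee, and that the resulting right adversary $V\otimes\id_{\hat{M}}$ faithfully reproduces the action of the original tampering experiment on the relevant registers. A secondary technical point is verifying that the transposed Cliffords inherit the $1$-design property of $\cSC(\cH_M)$; this is a brief computation using the group structure.
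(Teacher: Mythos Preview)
Your approach is essentially identical to the paper's: enlarge $W_2$ to $W_2 Z\hat{M}$, invoke Item~2 of \cref{lem:qnmcodesfromnmext}, then use the $1$-design property of $\cSC(\cH_M)$ (for the transposed operators) plus data-processing to pass from the $\theta_3$-level bound to the $\theta_4$-level conclusion. The paper carries out exactly this reduction.

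The one unnecessary loss is your final triangle-inequality step. You do not need it: in your ideal state $\tilde\theta_3=U_R\otimes(\thetatamp_3)_{R'Z\hat{M}W_2}$, the operator $D_{R\hat{M}}$ acts unitarily on $R\hat{M}$ and is followed by a trace over $R\hat{M}$ when computing the $M$-marginal, so $(\tilde\theta_4)_M=\tr_{R'}\big(\tilde{D}(\thetatamp_3)_{R'Z}\tilde{D}^\dagger\big)=(\thetatamp_4)_M$ exactly. Hence $(\tilde\theta_4)_{\hat{M}M}=U_{\hat{M}}\otimes(\thetatamp_4)_M$, and the single data-processing bound $(1-p_\sm)\|(\thetatamp_4)_{\hat{M}M}-(\tilde\theta_4)_{\hat{M}M}\|_1\le\eps$ already gives the lemma verbatim, without the extra factor of~$2$. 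Your argument for the transposed $1$-design is fine (combine group-closure under inversion with transposing the identity in \cref{fact:notequal}), and the paper uses it implicitly without comment.
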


The second lemma is relevant for handling $\thetasame_4$, and intuitively states that when $X'Y'= XY$ the outcome of the tampering experiment is close (in trace distance) to being either the original message or a completely independent and unentangled message.
\begin{lemma}\label{lem:thetasame}
    There exists a constant $p_\epr\in[0,1]$ depending only on the split-state adversary $\cA$ such that
    \begin{equation*}
        p_{\sm}\cdot \left\|    (\thetasame_4)_{\hat{M}M}-\left( p_\epr\cdot \theta_{\hat{M}M} + (1-p_\epr) (U_{\hat{M}} \otimes U_{M}) \right)\right\|_1 \leq  \eps+2 \cdot 4^{-b}.
    \end{equation*}
\end{lemma}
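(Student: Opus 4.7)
The plan is to reduce $(\thetasame_4)_{\hat M M}$ to an idealized setting where the key $R$ is uniformly distributed over $\cSC(\cH_M)$ and independent of all other registers, and then to compute the action of the coordinated Cliffords via a twirl argument. In the same branch, $R=R'$ because $(X,Y)=(X',Y')$, so $D_{R\hat M}$ (applying $C_R^T$ to $\hat M$) and $\tilde D_{R'Z}$ (applying $C_R^\dagger$ to $Z$) are coupled by a single random key. I would begin by invoking Item 2 of \cref{lem:qnmcodesfromnmext}, treating Bob's side register as the enlarged $Z W_2 \hat M$: since the NMExt is secure against arbitrary shared entanglement between Alice and Bob, and $\hat M$ is untouched by either tampering unitary, it can be absorbed into Bob's entangled register without enlarging the error. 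This yields $p_\sm\,\|\thetasame_{R Z W_2 \hat M} - U_R \otimes \thetasame_{Z W_2 \hat M}\|_1 \leq \eps$, and by data processing (\cref{fact:data}) this closeness is preserved under the subsequent controlled Cliffords, so I may treat $R$ as exactly uniform and independent at the cost of $\eps$ in the $p_\sm$-scaled trace distance.

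Next, let $\xi_{\hat M Z}$ denote the reduced state on $\hat M Z$ in the idealized $\thetasame_2$. Expand it in the generalized Bell basis $\{\ket{\Phi_P}\defeq (I \otimes P)\ket{\Phi^+}_{\hat M Z}\}_{P \in \cP(\cH_M)}$ as $\xi_{\hat M Z} = \sum_{P,P'} \xi_{P,P'} \ket{\Phi_P}\bra{\Phi_{P'}}$; this is a valid orthonormal expansion since $\langle \Phi_P|\Phi_{P'}\rangle = \tr(P^\dagger P')/d = \delta_{P,P'}$, where $d=2^b$. A short calculation via the transpose trick (\cref{fact:transposetrick}) gives $(C_R^T \otimes C_R^\dagger)\ket{\Phi_P} = \ket{\Phi_{C_R^\dagger P C_R}}$. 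Averaging over uniform $R\in\cSC(\cH_M)$: the modified subgroup Clifford twirl (\cref{lem:cliffordtwirl1}) kills all off-diagonal terms with $P\neq P'$; the uniform-conjugation property of $\cSC(\cH_M)$ (\cref{lem:subclifford}) sends $C_R^\dagger P C_R$ uniformly over the non-identity Paulis for each diagonal non-identity $P$, so these terms collectively yield the maximally mixed state $(I-\ket{\Phi^+}\bra{\Phi^+})/(d^2-1)$ on the subspace orthogonal to $\ket{\Phi^+}$ (using also \cref{fact:bellbasis}); the identity term contributes $\xi_{I,I}\ket{\Phi^+}\bra{\Phi^+}$.

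Setting $p_\epr \defeq \xi_{I,I} = \bra{\Phi^+}\xi_{\hat M Z}\ket{\Phi^+} \in [0,1]$, which depends only on the adversary $\cA$ since $\xi_{\hat M Z}$ does, the idealized decoded state on $\hat M M$ equals
\[
  p_\epr\,\ket{\Phi^+}\bra{\Phi^+} + (1-p_\epr)\cdot\frac{I-\ket{\Phi^+}\bra{\Phi^+}}{d^2-1}.
\]
Using the decomposition $U_{\hat M}\otimes U_M = \tfrac{1}{d^2}\ket{\Phi^+}\bra{\Phi^+} + \tfrac{d^2-1}{d^2}\cdot\tfrac{I-\ket{\Phi^+}\bra{\Phi^+}}{d^2-1}$ and comparing eigenvalues with the target $p_\epr\,\theta_{\hat M M} + (1-p_\epr)(U_{\hat M}\otimes U_M)$, the two states differ in trace norm by at most $2(1-p_\epr)/d^2 \leq 2\cdot 4^{-b}$. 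Combining with the $\eps$ from the idealization step yields the claimed bound. The main obstacle is the careful treatment of $\hat M$ in the first step, since the NMExt guarantee is only stated for Bob's side register $W_2$ and one must justify that the external purification $\hat M$, which is correlated with $Z$, can be absorbed into Bob's shared entanglement without affecting the bound.
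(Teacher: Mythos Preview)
Your proposal is correct and follows essentially the same route as the paper's proof: the paper also invokes Item~2 of \cref{lem:qnmcodesfromnmext} with the enlarged side register $W_2 Z \hat M$ to decouple $R$, defines $p_\epr = \tr(\Pi\,(\thetasame_3)_{Z\hat M})$ with $\Pi = \ket{\Phi^+}\bra{\Phi^+}$, and then performs the same Bell-basis twirl computation (packaged there as the auxiliary \cref{lem:equalreq} and \cref{lem:equal101}) to reach the target state up to $2\cdot 4^{-b}$. Your inlining of the twirl and your explicit identification of the $\hat M$-absorption step are exactly what the paper does, just without factoring out the intermediate lemma.
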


Proofs of \cref{lem:thetatamp,lem:thetasame} can be found in \cref{sec:thetatamp,sec:thetasame}, respectively.
We now show how to wrap up the argument and establish \cref{eq:finalgoaltranspose} using \cref{lem:thetatamp,lem:thetasame}.
We have
\begin{align}
    (\theta_4 )_{\hat{M}M} &= p_{\sm}  ( \thetasame_4 )_{\hat{M}M}  + (1-p_{\sm})  ( \thetatamp_4 )_{\hat{M}M}\nonumber\\
     & \approx_{\eps + 2 \cdot 4^{-b}} p_{\sm}  \left( p_\epr \theta_{M\hat{M}} + (1-p_\epr) (U_{\hat{M}} \otimes U_{M}) \right)+(1-p_{\sm})(\thetatamp_4)_{\hat{M}M}\label{eq:applyLemThetasame}\\
     &\approx_{\eps} p_{\sm}  \left( p_\epr\cdot \theta_{M\hat{M}} + (1-p_\epr) (U_{\hat{M}} \otimes U_{M}) \right)+(1-p_{\sm})(U_{\hat{M}} \otimes (\thetatamp_4)_{M}),\label{eq:applyLemThetatamp}
\end{align}
where \cref{eq:applyLemThetasame} follows from \cref{lem:thetasame} and \cref{eq:applyLemThetatamp} follows from \cref{lem:thetatamp}.
Combining \cref{eq:applyLemThetasame,eq:applyLemThetatamp} with a triangle inequality shows that
\begin{align}
    (\theta_4 )_{\hat{M}M} &\approx_{2(\eps+4^{-b})}p_{\sm}  \left( p_\epr\cdot \theta_{M\hat{M}} + (1-p_\epr) (U_{\hat{M}} \otimes U_{M}) \right)+(1-p_{\sm})(U_{\hat{M}} \otimes (\thetatamp_4)_{M})\nonumber\\
    &= p_{\sm}\cdot p_\epr \cdot \theta_{M\hat{M}}\nonumber \\
    &+ (1-p_{\sm}\cdot p_\epr)\left(U_{\hat{M}}\otimes\left(\frac{p_\sm(1-p_\epr)}{1-p_{\sm}\cdot p_\epr}\cdot U_{M}+\frac{1-p_\sm}{1-p_{\sm}\cdot p_\epr}\cdot (\thetatamp_4)_{M}\right)\right).\label{eq:finalbeforesetting}
\end{align}
Consider setting
\begin{equation*}
    p_{\cA} = p_\sm\cdot p_\epr
\end{equation*}
and
\begin{equation*}
    \gamma^\cA_{M} = \frac{p_\sm(1-p_\epr)}{1-p_{\sm}\cdot p_\epr}\cdot U_{M}+\frac{1-p_\sm}{1-p_{\sm}\cdot p_\epr}\cdot (\thetatamp_4)_{M}
\end{equation*}
in \cref{eq:finalbeforesetting}. 
We claim that $p_\cA$ and $\gamma^\cA_{M}$ depend only on the split-state adversary $\cA$.
This holds because:
\begin{itemize}
    \item $p_{\sm}= \Pr[ XY= X'Y' ]_{\theta_2}$ is a function of $\cA$ and $(X,Y)$, which are sampled independently of the message $\sigma_M$;
    
    \item \cref{lem:thetasame} guarantees that $p_\epr$ is a function of $\cA$ only;
    
    \item The state $(\thetatamp_4)_{M}$ is a function of $(\thetatamp_3)_{ZX'Y'}$, and one can prepare the latter by running an independent tampering experiment on registers $XYZ=U_l \otimes U_{\delta l} \otimes U_Z$.
    
\end{itemize}

Therefore, we conclude that both $p_\cA$ and $\gamma^\cA_{M}$ only depend on $\cA$.
In this case, we can write
\begin{equation*}
    (\theta_4 )_{\hat{M}M} \approx_{2(\eps+4^{-b})} p_\cA \cdot \theta_{M\hat{M}}
    + (1-p_\cA)\left(U_{\hat{M}}\otimes\gamma^\cA_{M}\right).
\end{equation*}
By \cref{lem:modtamptranspose,lem:modtampdelay}, this implies that we have
\begin{equation}\label{eq:finalfinal}
    (\sigma_3)_{\hat{M}M} \approx_{2(\eps+4^{-b})} p_\cA \cdot \sigma_{M\hat{M}}
    + (1-p_\cA)\left(U_{\hat{M}}\otimes\gamma^\cA_{M}\right)
\end{equation}
in the original tampering experiment from \cref{fig:splitstate2}.

\paragraph{Setting parameters for \cref{thm:avgcaseNMC}.}
To conclude the proof of \cref{thm:avgcaseNMC}, it remains to argue that, given any constant $\delta>0$, the coding scheme $(\enc,\dec)$ is average-case $(\eps'=2^{-n^{\Omega_\delta(1)}})$-non-malleable for message length $b\leq (\frac{1}{11}-\delta)n$.

Note that the total codeword length is
\begin{equation*}
    n=|X|+|Y|+|Z|=\ell+\delta \ell+b=(1+\delta+1/10+\delta/5)\ell
\end{equation*}
and that the error $\eps$ of $\nmext$ satisfies $\eps=2^{-\ell^{\Omega_\delta(1)}}=2^{-n^{\Omega_\delta(1)}}$, since $n=O(\ell)$.
Furthermore, recall from \cref{eq:constraintOutLength} in \cref{sec:codedesc} that we may set the message length to be as large as $r/5=\frac{(1/2-\delta)\ell}{5}$.
In fact, even if $b<r/5$ we may always extend a message so that its length is exactly $b'=r/5$ by appending dummy independent uniformly random bits to it.
Since the length-$b$ message $\sigma_{M}$ is maximally mixed, the resulting extended length-$b'$ message will also be maximally mixed. 
Therefore, taking into account \cref{eq:finalfinal} and recalling that our argument in this section assumed that the Clifford operator $C_R$ is sampled uniformly at random from $\cSC(\cH_A)$ while \cref{lem:subclifford} only guarantees that $C_R\approx_{2^{-b'}}U_{\cSC(\cH_A)}$ with $U_{\cSC(\cH_A)}$ uniformly distributed over $\cSC(\cH_A)$,  the resulting coding scheme is average-case $\eps'$-non-malleable with
\begin{equation*}
    \eps'=2(\eps+4^{-b'})+2^{-2b'}=2^{-n^{\Omega_\delta(1)}},
\end{equation*}
as desired.

Finally, simple algebraic manipulation yields
\begin{equation*}
    r/5 = \frac{(1/2-\delta)\ell}{5}\geq \left(\frac{1}{11}-c\right)n
\end{equation*}
for some constant $c=O(\delta)$.
This means that we may encode messages of length at least $\left(\frac{1}{11}-c\right)n$, and the constant $c>0$ can be made arbitrarily small.
This concludes the proof of \cref{thm:avgcaseNMC}.

\begin{figure}
\centering
\resizebox{12cm}{6cm}{
\begin{tikzpicture}

\draw (1,5.5) ellipse (0.3cm and 1.5cm);
\draw (3,3.9) rectangle (4.4,5.4);
\draw (12.6,3.9) rectangle (14,5.4);
\node at (1,4.5) {$M$};

\node at (3.6,4.6) {$C_R$};
\node at (13.3,4.6) {$C^\dagger_{R'}$};
\node at (1,6.5) {$\hat{M}$};
\node at (14.8,6.5) {$\hat{M}$};

\draw (1.2,6.5) -- (14.6,6.5);

\draw (1.2,4.5) -- (3,4.5);
\draw (4.4,4.5) -- (6.3,4.5);
\draw (7.3,4.5) -- (12.6,4.5);
\draw (14,4.5) -- (15,4.5);
\node at (14.8,4.7) {$M$};
\node at (0.8,2) {$Y$};
\node at (0.8,1) {$X$};
\node at (3.4,1.2) {$R$};
\draw (1,1) -- (1.8,1);
\draw (1,2) -- (1.8,2);
\draw (3.2,1.5) -- (4,1.5);
\draw (4,1.5) -- (4,3.9);
\draw (1.2,0.2) -- (4.6,0.2);
\draw (1.2,2.8) -- (4.6,2.8);
\draw (1.2,0.2) -- (1.2,1);
\draw (1.2,2) -- (1.2,2.8);
\draw (1.8,0.5) rectangle (3.2,2.5);
\draw (1.5,-0.5) rectangle (4.8,5.9);
\node at (3.3,6.1) {$\enc$};
\node at (2.5,1.5) {$\nmcenc$};

\draw (9.4,-0.5) rectangle (14.2,5.9);
\node at (11.5,6.1) {$\dec$};


\draw [dashed] (1.4,-1.4) -- (1.4,7.2);
\draw [dashed] (4.98,-1.4) -- (4.98,7.2);
\draw [dashed] (9.1,-1.4) -- (9.1,7.2);
\draw [dashed] (14.4,-1.4) -- (14.4,7.2);


\node at (6.8,1.6) {$\ket{\psi}_{W_1W_2}$};
\node at (1.25,-1.4) {$\sigma$};
\node at (4.8,-1.4) {$\sigma_1$};
\node at (8.9,-1.4) {$\sigma_2$};
\node at (14.2,-1.4) {$\sigma_3$};

\node at (4.6,4.7) {$Z$};
\node at (8.5,4.7) {$Z$};
\node at (4.6,3) {$Y$};
\node at (8.5,3) {$Y'$};
\draw (4.5,2.8) -- (6.3,2.8);
\draw (7.3,2.8) -- (10.5,2.8);

\node at (12.7,2.6) {$R'$};
\draw (12.5,2.4) -- (13.5,2.4);
\draw (13.5,2.4) -- (13.5,3.9);

\node at (4.6,0.4) {$X$};
\node at (8.5,0.0) {$X'$};
\draw (4.5,0.2) -- (6.3,0.2);
\draw (7.3,0.2) -- (10.5,0.2);

\draw (6.3,2) rectangle (7.3,5);
\node at (6.8,3.5) {$V$};
\draw (6.3,0) rectangle (7.3,1);
\node at (6.8,0.5) {$U$};

\node at (6.5,-0.4) {$\mathcal{A}=(U,V,\psi)$};
\draw (5.2,-0.8) rectangle (8.2,5.3);

\draw (5.8,1.5) ellipse (0.3cm and 1cm);
\node at (5.8,2) {$W_2$};
\draw (6,2.2) -- (6.3,2.2);
\node at (7.7,2) {$W_2$};
\draw (7.3,2.2) -- (7.5,2.2);
\node at (5.8,1) {$W_1$};
\draw (6,0.7) -- (6.3,0.7);
\node at (7.7,0.9) {$W_1$};
\draw (7.3,0.7) -- (7.45,0.7);



\draw (10.5,-0.2) rectangle (12.5,3.0);
\node at (11.5,1.5) {${\nmcdec}$};

\end{tikzpicture}}

    \caption{
        Split-state tampering experiment.
    }\label{fig:splitstate2}
\end{figure}
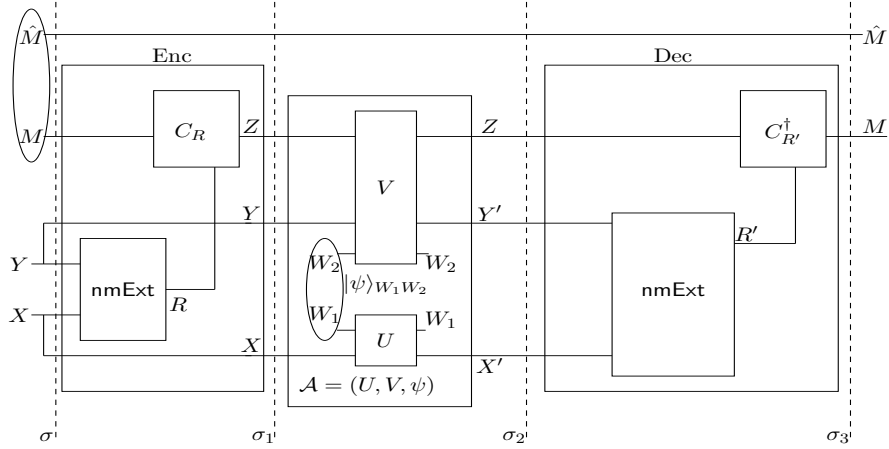

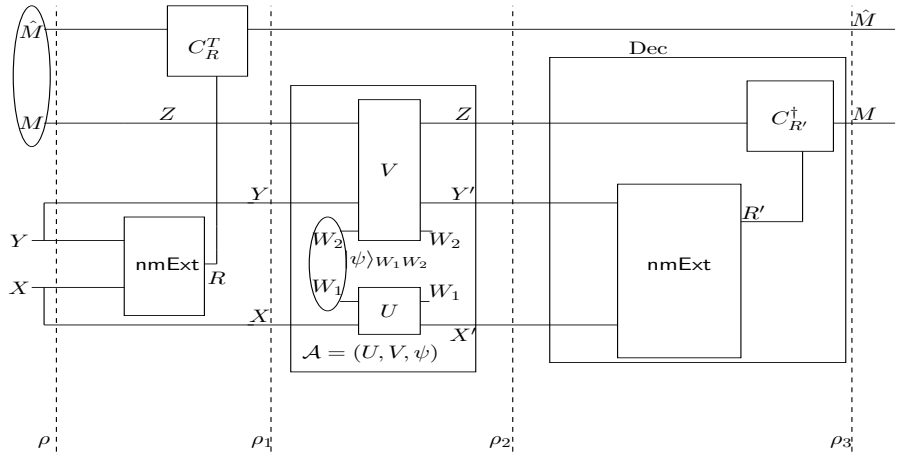
\begin{figure}
\centering
\resizebox{12cm}{6cm}{
\begin{tikzpicture}

\draw (1,5.5) ellipse (0.3cm and 1.5cm);
\draw (3.2,5.5) rectangle (4.5,7);
\draw (12.6,3.9) rectangle (14,5.4);
\node at (1,4.5) {$M$};
\node at (3.8,6.1) {$C^T_{R}$};
\node at (13.3,4.6) {$C^\dagger_{R'}$};
\node at (1,6.5) {$\hat{M}$};
\node at (14.5,6.75) {$\hat{M}$};
\draw (1.2,6.5) -- (3.2,6.5);
\draw (4.5,6.5) -- (15,6.5);

\draw (2.5,0.4) rectangle (3.8,2.5);
\node at (0.8,2) {$Y$};
\node at (0.8,1) {$X$};
\node at (4,1.2) {$R$};
\draw (1,1) -- (2.5,1);
\draw (1,2) -- (2.5,2);
\draw (3.8,1.5) -- (4,1.5);
\draw (4,1.5) -- (4,1.8);
\draw (4,1.8) -- (4,2.3);
\draw (4,2.3) -- (4,5.5);

\draw (1.2,0.2) -- (4.6,0.2);
\draw (1.2,2.8) -- (4.6,2.8);
\draw (1.2,0.2) -- (1.2,1);
\draw (1.2,2) -- (1.2,2.8);

\draw (1.2,4.5) -- (1.7,4.5);
\draw (1.7,4.5) -- (2.5,4.5);
\draw (2.5,4.5) -- (6.3,4.5);
\draw (7.3,4.5) -- (12.6,4.5);
\draw (14,4.5) -- (15,4.5);
\node at (14.5,4.7) {$M$};
\node at (3.2,1.5) {$\nmext$};


\draw [dashed] (1.4,-2.5) -- (1.4,7);
\draw [dashed] (4.88,-2.5) -- (4.88,7);
\draw [dashed] (8.8,-2.5) -- (8.8,7);
\draw [dashed] (14.3,-2.5) -- (14.3,7);

\node at (6.75,1.6) {$\ket{\psi}_{W_1W_2}$};
\node at (1.2,-2.3) {$\rho$};
\node at (4.74,-2.3) {$\rho_1$};
\node at (8.6,-2.3) {$\rho_2$};
\node at (14.13,-2.3) {$\rho_3$};

\node at (3.2,4.7) {$Z$};
\node at (8,4.7) {$Z$};
\node at (4.7,3) {$Y$};
\node at (8,3) {$Y'$};
\draw (4.5,2.8) -- (6.3,2.8);
\draw (7.3,2.8) -- (10.5,2.8);

\node at (12.7,2.6) {$R'$};
\draw (12.5,2.4) -- (13.5,2.4);
\draw (13.5,2.4) -- (13.5,3.9);

\node at (4.7,0.4) {$X$};
\node at (8,0.0) {$X'$};
\draw (4.5,0.2) -- (6.3,0.2);
\draw (7.3,0.2) -- (10.5,0.2);


\draw (6.3,2) rectangle (7.3,5);
\node at (6.8,3.5) {$V$};
\draw (6.3,0) rectangle (7.3,1);
\node at (6.8,0.5) {$U$};

\node at (6.5,-0.4) {$\mathcal{A}=(U,V,\psi)$};
\draw (5.2,-0.8) rectangle (8.2,5.3);

\draw (5.8,1.5) ellipse (0.3cm and 1cm);
\node at (5.8,2) {$W_2$};
\draw (6,2.2) -- (6.3,2.2);
\node at (7.7,2) {$W_2$};
\draw (7.3,2.2) -- (7.5,2.2);
\node at (5.8,1) {$W_1$};
\draw (6,0.7) -- (6.3,0.7);
\node at (7.7,0.9) {$W_1$};
\draw (7.3,0.7) -- (7.45,0.7);



\draw (10.5,-0.5) rectangle (12.5,3.2);
\node at (11.5,1.5) {${\nmcdec}$};



\draw (9.4,-0.6) rectangle (14.2,5.9);
\node at (11,6.1) {$\dec$};
\end{tikzpicture}}
\caption{Split-state tampering experiment after applying the transpose method.
}\label{fig:splitstate3}
\end{figure}

\begin{figure}
\centering
\resizebox{12cm}{6cm}{
\begin{tikzpicture}

\draw (1,5.5) ellipse (0.3cm and 1.5cm);
\draw (14.6,5.7) rectangle (16,7);
\draw (12.9,3.9) rectangle (14.3,5.4);
\node at (1,4.5) {$M$};
\node at (15.3,6.1) {$C^T_{R}$};
\node at (13.5,4.6) {$C^\dagger_{R'}$};
\node at (1,6.5) {$\hat{M}$};
\draw (1.2,6.5) -- (14.6,6.5);
\draw (16,6.5) -- (16.5,6.5);

\draw (1.2,4.5) -- (1.7,4.5);
\draw (1.7,4.5) -- (2.5,4.5);
\draw (2.5,4.5) -- (6,4.5);
\draw (7,4.5) -- (12.9,4.5);
\draw (14.3,4.5) -- (16,4.5);
\node at (16.5,4.7) {$M$};
\node at (16.5,6.2) {$\hat{M}$};
\draw (4.5,-1) -- (10.5,-1);
\draw (4.5,-1.6) -- (10.5,-1.6);
\draw (15.5,-1.6) -- (12.4,-1.6);

\draw  (15.5,-1.8) -- (15.5,4.1);
\draw [dashed] (15.5,4.1) -- (15.5,4.7);
\draw  (15.5,4.7) -- (15.5,5.7);

\draw [dashed] (1.4,-2.5) -- (1.4,7);
\draw [dashed] (4.88,-2.5) -- (4.88,7);
\draw [dashed] (9,-2.5) -- (9,7);
\draw [dashed] (12.8,-2.5) -- (12.8,7);
\draw [dashed] (16.1,-2.5) -- (16.1,7);

\node at (6.6,1.6) {$\ket{\psi}_{W_1W_2}$};
\node at (1.2,-2.3) {$\theta$};
\node at (4.74,-2.3) {$\theta_1$};
\node at (8.8,-2.3) {$\theta_2$};
\node at (13,-2.3) {$\theta_3$};
\node at (15.9,-2.3) {$\theta_4$};

\node at (3.2,4.7) {$Z$};
\node at (8,4.7) {$Z$};
\node at (0.8,3) {$Y$};
\draw (4.5,2.8) -- (1,2.8);
\draw (4.5,2.8) -- (4.5,0.5);
\draw [dashed] (4.5,0.5) -- (4.5,-0.5);
\draw (4.5,-0.5) -- (4.5,-1);

\draw (1,0.2) -- (4.5,0.2);
\draw (3.5,0.2) -- (3.5,-1.6);
\draw (3.5,-1.6) -- (4.5,-1.6);
\node at (8,3) {$Y'$};
\draw (4.5,2.8) -- (6,2.8);
\draw (7,2.8) -- (10.5,2.8);

\node at (12.6,2.6) {$R'$};
\draw (12.4,2.4) -- (13.5,2.4);
\draw (13.5,2.4) -- (13.5,3.9);

\node at (0.8,0.4) {$X$};
\node at (8,0.0) {$X'$};
\draw (4.5,0.2) -- (6,0.2);
\draw (7,0.2) -- (10.5,0.2);

\draw (6,2) rectangle (7,5);
\node at (6.5,3.5) {$V$};
\draw (6,0) rectangle (7,1);
\node at (6.5,0.5) {$U$};

\node at (6.5,-0.4) {$\mathcal{A}=(U,V,\psi)$};
\draw (5.05,-0.8) rectangle (7.75,5.3);

\draw (5.5,1.5) ellipse (0.3cm and 1.2cm);
\node at (5.5,2) {$W_2$};
\draw (5.7,2.2) -- (6,2.2);

\node at (7.4,1.9) {$W_2$};
\draw (7,2.2) -- (8.5,2.2);
\draw (8.5,2.2) -- (8.5,6.3);
\draw (8.5,6.3) -- (13.8,6.3);
\node at (14.1,6.3) {$W_2$};

\node at (5.5,1) {$W_1$};
\draw (5.7,0.7) -- (6,0.7);
\node at (7.4,0.9) {$W_1$};
\draw (7.0,0.7) -- (7.2,0.7);


\draw (10.5,-2) rectangle (12.4,-0.8);
\draw (10.5,-0.4) rectangle (12.4,3.1);
\node at (11.5,1.5) {${\nmcdec}$};
\node at (11.5,-1.4) {${\nmcdec}$};

\node at (15.3,-1.6) {$R$};

\draw (9.4,-0.6) rectangle (14.5,5.9);
\node at (11,6.1) {$\dec$};
\end{tikzpicture}}
\caption{
Split-state tampering experiment after applying the transpose method and delaying both the generation of register $R$ and the application of the corresponding Clifford operator.
}\label{fig:splitstate4}
\end{figure}
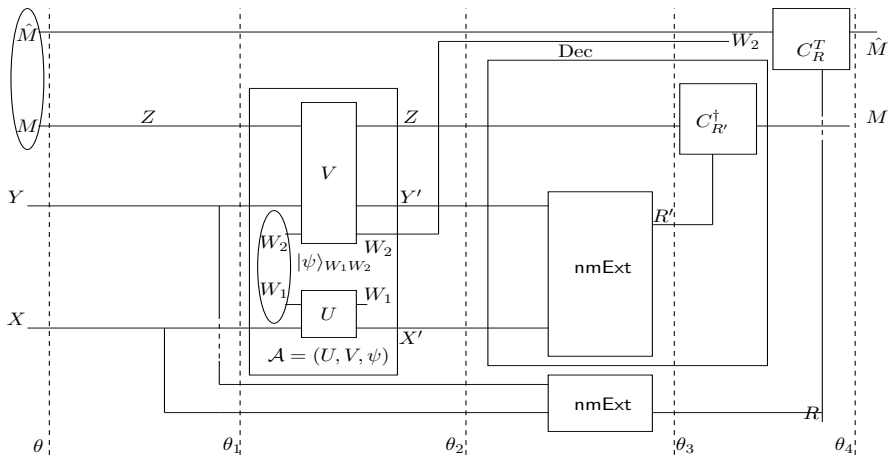

\subsubsection{Proof of \cref{lem:thetatamp}}\label{sec:thetatamp}

We begin by writing
\begin{equation*}
     (\theta_3)_{RR'W_2Z\hat{M}}= p_{\sm} (\thetasame_3)_{ RR'W_2Z\hat{M}} + (1-p_{\sm}) (\thetatamp_3)_{ RR'W_2Z\hat{M}}.
    \end{equation*}
    Observe that $(\theta_1)_{\hat{M}M}$ is a pure state (thus independent of other registers in $\theta_1$). 
    We invoke Item 2 of \cref{lem:qnmcodesfromnmext} with the assignment of states/registers
    (the states/registers on the left below are from the statement of \cref{lem:qnmcodesfromnmext}, while the states/registers on the right are those from this proof)
    \begin{equation*}
        \left(\sigma, \hat{\rho}, \rho, \rho^\sm, \rhotamp , W_1,W_2,X,Y\right)
    \leftarrow 
    \left( \theta_1, \theta_2, \theta_3, \thetasame_3,  \thetatamp_3, W_1, W_2M \hat{M},X,Y \right)
    \end{equation*}
    to conclude that
    \begin{multline}\label{eq:boundtheta3nmext}
        p_{\sm} \Vert (\thetasame_3)_{ RW_2Z\hat{M}}-  U_r \otimes (\thetasame_3)_{W_2Z\hat{M}} \Vert_1 + \\ (1-p_{\sm})\Vert (\thetatamp_3)_{RR'W_2Z\hat{M}}-  U_r \otimes (\thetatamp_3)_{R'W_2Z\hat{M}} ) \Vert_1  \leq \eps.
\end{multline}
In particular, it holds that
\begin{equation}\label{eq:specificboundtheta3nmext}
   (1-p_{\sm})  \Vert (\thetatamp_3)_{ RR'W_2Z\hat{M}}-  U_m \otimes (\thetatamp_3)_{R'W_2Z\hat{M}}  \Vert_1 \leq  \eps.
\end{equation}

As a result, we have that
\begin{align*}
    &(1-p_{\sm})  \Vert  (\thetatamp_4)_{\hat{M}M} -U_{\hat{M}} \otimes    (\thetatamp_4)_{M} \Vert_1 \\
     & = (1-p_{\sm})  \Vert  ( \id \otimes D)( \tilde{D}_{} \otimes \id)(\thetatamp_3)_{ RR'Z\hat{M}} ( \tilde{D}^\dagger_{} \otimes \id)( \id \otimes D) \\ & \quad - ( \id \otimes D)\left(U_{r} \otimes ( \tilde{D}_{} \otimes \id)   (\thetatamp_3)_{R'Z\hat{M}} ( \tilde{D}^\dagger_{} \otimes \id) \right)( \id \otimes D)\Vert_1 \\
    &\leq(1-p_{\sm})  \Vert   ( \tilde{D}_{} \otimes \id)(\thetatamp_3)_{ {R}R'Z\hat{M}} ( \tilde{D}^\dagger_{} \otimes \id) -U_{r} \otimes    ( \tilde{D}_{} \otimes \id)   (\thetatamp_3)_{R'Z\hat{M}} ( \tilde{D}^\dagger_{} \otimes \id) \Vert_1 & \mbox{(\cref{fact:notequal} and \cref{fact:data})}  \\
    & \leq (1-p_{\sm})   \Vert  (\thetatamp_3)_{ RR'Z\hat{M}} -  U_{r} \otimes  (\thetatamp_3)_{R'Z\hat{M}}\Vert_1 & \mbox{(\cref{fact:data})}\\
    & \leq (1-p_{\sm})  \Vert (\thetatamp_3)_{ RR'W_2Z\hat{M}}-  U_r \otimes (\thetatamp_3)_{R'W_2Z\hat{M}}  \Vert_1 & \mbox{(\cref{fact:data})}\\
    & \leq \eps. & \mbox{(\cref{eq:specificboundtheta3nmext})}
\end{align*}
The first inequality invokes \cref{fact:notequal} with the assignment of registers $(A,B,\rho_{ } ) \longleftarrow ( \hat{M}, M, \thetatamp_{4})$, where the registers on the left hand side correspond to those in the statement \cref{fact:notequal} and the registers on the right hand side are those used in this proof.

\subsubsection{Proof of \cref{lem:thetasame}}\label{sec:thetasame}

Before we proceed to the proof of \cref{lem:thetasame}, we prove two auxiliary lemmas.
\begin{lemma}\label{lem:equalreq}
Let $\rho_{A\hat{A}}$ be the canonical purification of $\rho_A = U_A$, $\cSC(\cH_A)$ be the subgroup of Clifford group as defined in \cref{lem:subclifford}, and $P, Q \in \cP(\cH_A)$ be any two Pauli operators. 
If $P  \ne Q$, then
\begin{equation*}
    \frac{1}{\vert \cSC(\cH_A)\vert } \sum_{C \in \cSC(\cH_A)} (C^T  \otimes C^\dagger) (\id \otimes P) \rho_{A \hat{A}} (\id \otimes Q^\dagger) ( (C^T)^\dagger  \otimes C) = 0.
\end{equation*}
Else, if $P=Q = \id_A$, then
\begin{equation*}
    \frac{1}{\vert \cSC(\cH_A)\vert } \sum_{C \in \cSC(\cH_A)} (C^T  \otimes C^\dagger) (\id \otimes P) \rho_{A \hat{A}} (\id \otimes P^\dagger) ( (C^T)^\dagger  \otimes C) = \rho_{A\hat{A}}.
\end{equation*}
Else, if $P =Q \ne \id_A$, then
\begin{equation*}
    \frac{1}{\vert \cSC(\cH_A)\vert } \sum_{C \in \cSC(\cH_A)} (C^T  \otimes C^\dagger) (\id \otimes P) \rho_{A \hat{A}} (\id \otimes P^\dagger) ( (C^T)^\dagger  \otimes C) \approx_{\frac{2}{ \vert \cP(\cH_A) \vert}} \rho_A \otimes \rho_{\hat{A}}.
\end{equation*}
\end{lemma}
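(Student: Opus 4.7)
The plan is to reduce everything to a statement purely on register $\hat{A}$ by using the transpose method (\cref{fact:transposetrick}). Writing $(C^T)^T=C$, for each fixed $C\in\cSC(\cH_A)$ we can rewrite
\begin{equation*}
(C^T\otimes C^\dagger)(\id\otimes P)\rho_{A\hat{A}}(\id\otimes Q^\dagger)((C^T)^\dagger\otimes C)
= (\id\otimes C^\dagger P C)\,\rho_{A\hat{A}}\,(\id\otimes C^\dagger Q^\dagger C),
\end{equation*}
since the $C^T$ and $(C^T)^\dagger$ factors on $A$ can be pushed through $\rho_{A\hat{A}}$ via the transpose method and combined with the $C^\dagger$ and $C$ factors sitting on $\hat{A}$. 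With this normal form, the three cases become straightforward applications of the twirling machinery already recorded in the paper.

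For $P\neq Q$, I would invoke the modified twirl (\cref{lem:cliffordtwirl1}) applied to the Paulis $P$ and $Q$ on $\hat{A}$, which immediately gives that the Clifford average vanishes. For $P=Q=\id_A$, the conjugation $C^\dagger \id_A C=\id_A$ is trivial, so no averaging is needed and the expression equals $\rho_{A\hat{A}}$. The main case is $P=Q\neq\id_A$: here I would use the orbit property from \cref{lem:subclifford}, which says that as $C$ ranges uniformly over $\cSC(\cH_A)$, the operator $C^\dagger P C$ is uniformly distributed over the $|\cP(\cH_A)|-1$ non-identity Paulis. This turns the Clifford average into the average
\begin{equation*}
\frac{1}{|\cP(\cH_A)|-1}\sum_{Q'\in\cP(\cH_A),\,Q'\neq \id_A}(\id\otimes Q')\rho_{A\hat{A}}(\id\otimes Q'^\dagger).
\end{equation*}

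To evaluate this, I would add and subtract the $Q'=\id_A$ term and apply the Pauli $1$-design identity (\cref{fact:bellbasis}, acting on $\hat{A}$ instead of $A$ by symmetry) to conclude
\begin{equation*}
\sum_{Q'\in\cP(\cH_A)}(\id\otimes Q')\rho_{A\hat{A}}(\id\otimes Q'^\dagger)=|\cP(\cH_A)|\,(\rho_A\otimes U_{\hat{A}})=|\cP(\cH_A)|\,(\rho_A\otimes\rho_{\hat{A}}),
\end{equation*}
using $\rho_A=U_A$. Rearranging yields
\begin{equation*}
\frac{1}{|\cP(\cH_A)|-1}\big(|\cP(\cH_A)|\,(\rho_A\otimes\rho_{\hat{A}})-\rho_{A\hat{A}}\big)
=\rho_A\otimes\rho_{\hat{A}}+\frac{1}{|\cP(\cH_A)|-1}\big(\rho_A\otimes\rho_{\hat{A}}-\rho_{A\hat{A}}\big).
\end{equation*}

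The last step, and the one place where care is needed, is to check that the residual trace-norm error matches the claimed $2/|\cP(\cH_A)|$. Since $\rho_{A\hat{A}}$ is pure (the maximally entangled state) while $\rho_A\otimes\rho_{\hat{A}}$ is maximally mixed on $\cH_A\otimes\cH_{\hat{A}}$, the eigenvalues of their difference are $\tfrac{1}{d^2}-1$ (once) and $\tfrac{1}{d^2}$ (with multiplicity $d^2-1$), where $d=2^{|A|}$. Summing absolute values gives $\|\rho_A\otimes\rho_{\hat{A}}-\rho_{A\hat{A}}\|_1=2(1-1/d^2)=2(d^2-1)/d^2$. Combined with the prefactor $1/(|\cP(\cH_A)|-1)=1/(d^2-1)$, the residual error is exactly $2/d^2=2/|\cP(\cH_A)|$, which is what we want. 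The expected obstacle is precisely this bookkeeping: one must use that $|\cP(\cH_A)|-1$ cancels with the $d^2-1$ appearing in $\|\rho_A\otimes\rho_{\hat{A}}-\rho_{A\hat{A}}\|_1$ to obtain a clean bound.
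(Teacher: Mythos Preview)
Your proposal is correct and follows essentially the same route as the paper: reduce via the transpose method to $(\id\otimes C^\dagger P C)\rho_{A\hat{A}}(\id\otimes C^\dagger Q^\dagger C)$, then invoke \cref{lem:cliffordtwirl1} for $P\neq Q$, trivially handle $P=Q=\id_A$, and for $P=Q\neq\id_A$ use the orbit property of \cref{lem:subclifford} together with \cref{fact:bellbasis} to obtain $\frac{|\cP(\cH_A)|(\rho_A\otimes\rho_{\hat{A}})-\rho_{A\hat{A}}}{|\cP(\cH_A)|-1}$ before bounding the residual. Your explicit eigenvalue computation for the trace-norm error is in fact more detailed than what the paper writes (it simply asserts the bound $2/|\cP(\cH_A)|$), but the argument is the same.
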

\begin{proof}
We have that
\begin{align*}
   & \frac{1}{\vert \cSC(\cH_A)\vert } \sum_{C \in \cSC(\cH_A)} (C^T  \otimes C^\dagger) (\id \otimes P) \rho_{A \hat{A}} (\id \otimes Q^\dagger) ( (C^T)^\dagger  \otimes C)  \\ 
      & =\frac{1}{\vert \cSC(\cH_A)\vert } \sum_{C \in \cSC(\cH_A)} (\id  \otimes C^\dagger P) (C^T  \otimes \id)  \rho_{A \hat{A}}  ( (C^T)^\dagger  \otimes \id)  ( \id  \otimes Q^\dagger C) \\ 
       & =\frac{1}{\vert \cSC(\cH_A)\vert } \sum_{C \in \cSC(\cH_A)} (\id  \otimes C^\dagger P) (\id  \otimes C)  \rho_{A \hat{A}}  ( \id \otimes C^\dagger  )  ( \id  \otimes Q^\dagger C) & \mbox{(\cref{fact:transposetrick})} \\
    & =\frac{1}{\vert \cSC(\cH_A)\vert } \sum_{C \in \cSC(\cH_A)} (\id  \otimes C^\dagger P C)   \rho_{A \hat{A}}  ( \id \otimes C^\dagger Q^\dagger C  ) . 
\end{align*}
When $P \neq Q$ we can invoke \cref{lem:cliffordtwirl1} to conclude that
\begin{equation*}
    \frac{1}{\vert \cSC(\cH_A)\vert } \sum_{C \in \cSC(\cH_A)} (\id  \otimes C^\dagger P C)   \rho_{A \hat{A}}  ( \id \otimes C^\dagger Q^\dagger C  ) = 0 .
\end{equation*}
When $P=Q = \id_A$, it holds that
\begin{align*}
    &\frac{1}{\vert \cSC(\cH_A)\vert } \sum_{C \in \cSC(\cH_A)} (\id  \otimes C^\dagger P C)   \rho_{A \hat{A}}  ( \id \otimes C^\dagger P^\dagger C  ) \\
    & =\frac{1}{\vert \cSC(\cH_A)\vert } \sum_{C \in \cSC(\cH_A)} (\id  \otimes C^\dagger C)   \rho_{A \hat{A}}  ( \id \otimes C^\dagger C  )  \\
    &=  \rho_{A \hat{A}}. & \mbox{($C^\dagger C = \id$)}
\end{align*} 
Finally, when $P=Q \ne \id_A$ we have that
\begin{align*}
    & \frac{1}{\vert \cSC(\cH_A)\vert } \sum_{C \in \cSC(\cH_A)} (\id  \otimes C^\dagger P C)   \rho_{A \hat{A}}  ( \id \otimes C^\dagger P^\dagger C  )  \\
     & =\frac{1}{\vert \cP(\cH_A)\vert -1} \sum_{Q \in \cP(\cH_A)\setminus\id } (\id  \otimes Q)   \rho_{A \hat{A}}  ( \id \otimes Q^\dagger   ) & \mbox{(\cref{lem:subclifford})} \\
    &= \frac{\vert \cP(\cH_A)\vert ( \rho_{A} \otimes \rho_{\hat{A}} ) -\rho_{A\hat{A}} }{\vert \cP(\cH_A)\vert-1}. & \mbox{(\cref{fact:bellbasis})}
\end{align*} 
The desired result follows since
\begin{equation*}
    \left\|\frac{\vert \cP(\cH_A)\vert ( \rho_{A} \otimes \rho_{\hat{A}} ) -\rho_{A\hat{A}} }{\vert \cP(\cH_A)\vert-1}   - ( \rho_{A} \otimes \rho_{\hat{A}} )   \right\|_1 \leq \frac{2}{ \vert \cP(\cH_A) \vert}.
\end{equation*}
\end{proof}

\begin{lemma}\label{lem:equal101}
 Let $\ket{\psi}_{\hat{A}A}$ be the canonical purification of $\psi_A = U_A$, $\rho_{\hat{A}A}$ be any state, and $\cSC(\cH_A)$ be the subgroup of Clifford group as defined in \cref{lem:subclifford}. 
 Define $\Pi =\ketbra{\psi}$. 
 Then, we have that
 \begin{multline}\label{eq:equal101}
    \frac{1}{\vert \cSC(\cH_A)\vert } \sum_{C \in \cSC(\cH_A)} (C^T  \otimes C^\dagger) \rho_{\hat{A}A}( (C^T)^\dagger  \otimes C) \\ =  \tr(\Pi \rho) \psi + (1- \tr(\Pi \rho))  \frac{\vert \cP(\cH_A)\vert (U_{\hat{A} } \otimes U_{{A}} ) -\psi_{A\hat{A}} }{\vert \cP(\cH_A)\vert-1}\\ \approx_{\frac{2}{4^{\vert A \vert}}}  \tr(\Pi \rho) \psi + (1- \tr(\Pi \rho)) (U_{\hat{A}} \otimes U_A) .
 \end{multline}
\end{lemma}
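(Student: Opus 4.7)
The plan is to expand $\rho_{\hat{A}A}$ in a basis of Bell-like operators built from $\ket{\psi}$ and Pauli operators, and then apply \cref{lem:equalreq} term by term.

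More precisely, I would first note that since $\{(\id\otimes P)\ket{\psi}\}_{P\in\cP(\cH_A)}$ forms an orthonormal basis of $\cH_{\hat{A}}\otimes\cH_A$ (this is the standard Bell basis, obtained from the identity $\braket{\psi}{(\id\otimes P)\psi}=\tfrac{1}{2^{|A|}}\tr(P)$), any state $\rho_{\hat A A}$ admits an expansion
\[
\rho_{\hat A A} \;=\; \sum_{P,Q\in\cP(\cH_A)} \beta_{P,Q}\,(\id\otimes P)\,\psi_{\hat{A}A}\,(\id\otimes Q^\dagger),
\]
for some complex coefficients $\beta_{P,Q}$. Two simple observations about these coefficients will be crucial: $\beta_{\id,\id}=\bra\psi\rho\ket\psi=\tr(\Pi\rho)$ (using $\braket{\psi}{(\id\otimes P)\psi}=\delta_{P,\id}$ above), and $\sum_{P}\beta_{P,P}=\tr(\rho)=1$, since each $(\id\otimes P)\psi(\id\otimes P^\dagger)$ is a unit-trace pure state.

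Next I would apply the Clifford averaging map $\rho\mapsto \frac{1}{|\cSC(\cH_A)|}\sum_C (C^T\otimes C^\dagger)\rho((C^T)^\dagger\otimes C)$ termwise and invoke \cref{lem:equalreq}: the $(P,Q)$ term with $P\neq Q$ vanishes; the $(\id,\id)$ term gives $\beta_{\id,\id}\,\psi_{\hat{A}A}=\tr(\Pi\rho)\,\psi$; and each $(P,P)$ term with $P\neq \id_A$ contributes $\beta_{P,P}\cdot\frac{|\cP(\cH_A)|(U_{\hat A}\otimes U_A)-\psi_{\hat A A}}{|\cP(\cH_A)|-1}$. Summing over $P\neq \id_A$ and using $\sum_{P\neq\id_A}\beta_{P,P}=1-\tr(\Pi\rho)$ yields exactly the first (equality) claim of the lemma:
\[
\frac{1}{|\cSC(\cH_A)|}\sum_{C}(C^T\otimes C^\dagger)\rho_{\hat A A}((C^T)^\dagger\otimes C)
=\tr(\Pi\rho)\,\psi+(1-\tr(\Pi\rho))\cdot\frac{|\cP(\cH_A)|(U_{\hat A}\otimes U_A)-\psi}{|\cP(\cH_A)|-1}.
\]

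Finally, to obtain the trace-distance approximation, I would bound
\[
\left\|\frac{|\cP(\cH_A)|(U_{\hat A}\otimes U_A)-\psi}{|\cP(\cH_A)|-1} - (U_{\hat A}\otimes U_A)\right\|_1
=\frac{\|(U_{\hat A}\otimes U_A)-\psi\|_1}{|\cP(\cH_A)|-1}\leq \frac{2}{|\cP(\cH_A)|-1}\leq \frac{2}{4^{|A|}},
\]
and combine with the first part via the triangle inequality (using $1-\tr(\Pi\rho)\leq 1$) to conclude the $\tfrac{2}{4^{|A|}}$-closeness to $\tr(\Pi\rho)\psi+(1-\tr(\Pi\rho))(U_{\hat A}\otimes U_A)$. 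The only technically delicate point is bookkeeping the Pauli expansion of a non-Hermitian cross-term $\rho_{\hat A A}$ and verifying the trace identity $\sum_P \beta_{P,P}=1$; the rest is a direct application of \cref{lem:equalreq}, so I do not expect any serious obstacle.
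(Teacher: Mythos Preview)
Your approach is essentially the paper's: both expand in the Bell basis $\{(\id\otimes P)\ket\psi\}_{P\in\cP(\cH_A)}$ and apply \cref{lem:equalreq} termwise, with the only cosmetic difference that the paper does this eigenvector-by-eigenvector (then appeals to convexity) while you expand $\rho_{\hat A A}$ directly as an operator.

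One small arithmetic slip to fix: your final chain $\frac{2}{|\cP(\cH_A)|-1}\leq \frac{2}{4^{|A|}}$ goes the wrong way, since $|\cP(\cH_A)|=4^{|A|}$. To land exactly on $\tfrac{2}{4^{|A|}}$ you should use the exact value $\|(U_{\hat A}\otimes U_A)-\psi\|_1=2\bigl(1-4^{-|A|}\bigr)$ rather than the crude bound $\leq 2$; then $\frac{\|(U_{\hat A}\otimes U_A)-\psi\|_1}{|\cP(\cH_A)|-1}=\frac{2}{4^{|A|}}$ on the nose, which is what the paper does.
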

\begin{proof}
  Let $\ket{\phi}_{\hat{A}A}$ be an eigenvector of $\rho_{\hat{A}A}$. Consider the decomposition 
  $$\ket{\phi}_{\hat{A}A} = \sum_{P \in \cP(\cH_{A}) } \alpha_P (\id \otimes P) \ket{\psi}_{\hat{A}A},$$ 
  where $\sum_{P \in \cP(\cH_{A})  } \vert \alpha_P  \vert^2 =1$.
  Define
  \begin{equation*}
      \tau(P,Q)\defeq (\id \otimes P) \ketbra{\psi} ( \id \otimes Q^\dagger ).
  \end{equation*}
  Then, we have that
\begin{align*}
  &\frac{1}{\vert \cSC(\cH_A)\vert } \sum_{C \in \cSC(\cH_A)} (C^T  \otimes C^\dagger) \ketbra{\phi} ( (C^T)^\dagger  \otimes C)  \\ 
  &=\frac{1}{\vert \cSC(\cH_A)\vert } \sum_{C \in \cSC(\cH_A)} (C^T  \otimes C^\dagger) \left(\sum_{P,Q \in \cP(\cH_{A})} \alpha_{P} \alpha^*_{Q} \tau(P,Q) \right)( (C^T)^\dagger  \otimes C)  \\ 
   &=  \sum_{P,Q \in \cP(\cH_{A})} \alpha_{P} \alpha^*_{Q}   \left( \frac{1}{\vert \cSC(\cH_A)\vert } \sum_{C \in \cSC(\cH_A)} (C^T  \otimes C^\dagger)  \tau(P,Q) ( (C^T)^\dagger  \otimes C) \right)  \\
   & =  \sum_{P,Q \in \cP(\cH_{A}) \land (P \ne Q)}\alpha_{P} \alpha^*_{Q}   \left( \frac{1}{\vert \cSC(\cH_A)\vert } \sum_{C \in \cSC(\cH_A)} (C^T  \otimes C^\dagger)  \tau(P,Q) ( (C^T)^\dagger  \otimes C) \right) \\
   & +  \sum_{P,Q \in \cP(\cH_{A})  \land (P=Q=\id_A)} \alpha_{P} \alpha^*_{Q}   \left( \frac{1}{\vert \cSC(\cH_A)\vert } \sum_{C \in \cSC(\cH_A)} (C^T  \otimes C^\dagger)  \tau(P,Q) ( (C^T)^\dagger  \otimes C) \right) \\
   & +  \sum_{P,Q \in \cP(\cH_{A})  \land (P=Q \ne \id_A)} \alpha_{P} \alpha^*_{Q}   \left( \frac{1}{\vert \cSC(\cH_A)\vert } \sum_{C \in \cSC(\cH_A)} (C^T  \otimes C^\dagger)  \tau(P,Q) ( (C^T)^\dagger  \otimes C) \right) \\
    &= \vert \alpha_{\id_A} \vert^2  \ketbra{\psi}\\
    &+ \sum_{P \in \cP(\cH_{A}) \setminus \id_A}  \vert \alpha_{P} \vert^2   \left( \frac{1}{\vert \cSC(\cH_A)\vert } \sum_{C \in \cSC(\cH_A)} (C^T  \otimes C^\dagger) \tau(P,P) ( (C^T)^\dagger  \otimes C) \right)  \\
      &= \vert \alpha_{\id_A} \vert^2  \ketbra{\psi}+ (1- \vert \alpha_{\id_A} \vert^2 )  \left(\frac{\vert \cP(\cH_A)\vert (U_{A} \otimes U_{\hat{A}} ) -\psi_{A\hat{A}} }{\vert \cP(\cH_A)\vert-1} \right).
\end{align*} 
The last equality follows from \cref{lem:equalreq}. Now, the first equality in \cref{eq:equal101} from the lemma statement follows by observing that $\rho_{\hat{A}A}$ is a convex combination of its eigenvectors, and the approximation in \cref{eq:equal101} follows from \cref{fact:traceconvex} by observing that
\begin{equation*}
    \left\|\frac{\vert \cP(\cH_A)\vert ( U_{A} \otimes U_{\hat{A}} ) -\psi_{A\hat{A}} }{\vert \cP(\cH_A)\vert-1}   - ( U_{A} \otimes U_{\hat{A}} )   \right\|_1 \leq \frac{2}{ \vert \cP(\cH_A) \vert}.
\end{equation*}
\end{proof}

We are now ready to prove \cref{lem:thetasame}.
\begin{proof}[Proof of \cref{lem:thetasame}]
Recall from \cref{eq:boundtheta3nmext} that, in particular,
\begin{equation}\label{eq:boundtheta3same}
  p_{\sm} \Vert (\thetasame_3)_{ RW_2Z\hat{M}}-  U_r \otimes (\thetasame_3)_{W_2Z\hat{M}}  \Vert_1 \leq  \eps.
\end{equation}
Recall also that $\tilde{D}_{R'Z}$ denotes the controlled Clifford unitary $C^\dagger_{R'}$ acting on register $Z$ and that ${D}_{R\hat{M}}$ denotes the controlled Clifford operator $C^T_{R}$ acting on register $\hat{M}$. 

By the conditioning in $\thetasame_3$, we have that
\begin{equation*}
    \Pr[R=R']_{ (\thetasame_3)} =1.
\end{equation*}
Let $\Pi$ denote the maximally entangled state in registers $Z\hat{M}$ and write $p_\epr = \tr(\Pi ( \thetasame_3)_{Z\hat{M}} ))$. Consider the state $\tau_{RR'}$ such that $\Pr[R=R']_\tau=1$ and $\tau_R=U_r$, and for ease of readability define
\begin{equation*}
\beta \defeq \tau_{RR'} \otimes ( \thetasame_3)_{Z\hat{M}}  \quad ; \quad 
    \gamma \defeq (\tilde{D} \otimes D) \beta  (\tilde{D}^\dagger \otimes D^\dagger).
\end{equation*}
Intuitively, the states $\beta$ and $\gamma$ are ``approximate'' versions of $(\thetasame_3)_{RR'Z\hat{M}}$ and $(\thetasame_4)_{\hat{M}M}$ which are easier to reason about.
Our argument proceeds in two steps. 
First, we show that $\gamma_{\hat{M}M}$ is close in trace distance to
\begin{equation*}
    \left( p_\epr \Pi + (1-p_\epr) (U_{\hat{M}} \otimes U_{M}) \right).
\end{equation*}
Then, we argue that $(\thetasame_4)_{\hat{M}M}$ and $ \gamma_{\hat{M}M}$ are close in trace distance, and an application of the triangle inequality to these two statements finishes the proof (after arguing that $p_\epr$ depends only on $\cA$). 
More formally, we have that
\begin{align}
    & p_{\sm} \Vert  (\thetasame_4)_{\hat{M}M}-   \left( p_\epr \Pi + (1-p_\epr) (U_{\hat{M}} \otimes U_{M}) \right)\Vert_1   \nonumber\\
      & \leq p_{\sm} \Vert   (\thetasame_4)_{\hat{M}M}-  \gamma_{\hat{M}M}  \Vert_1 \nonumber\\
      & \quad \quad \quad  +p_\sm \Vert \gamma_{\hat{M}M} -  \left( p_\epr \Pi + (1-p_\epr) (U_{\hat{M}} \otimes U_{M}) \right) \Vert_1 \nonumber\\
        & \leq p_{\sm}  \Vert   (\thetasame_4)_{\hat{M}M}-  \gamma_{\hat{M}M}  \Vert_1+ 2 \cdot 4^{-\vert M \vert} \label{eq:lem1equal10}\\
         & \leq p_{\sm}  \Vert (\tilde{D} \otimes D) ( \thetasame_3 )_{RR'\hat{M}M} (\tilde{D}^\dagger \otimes D^\dagger)-   \gamma_{}  \Vert_1+ 2 \cdot 4^{-\vert M \vert} \label{eq:1data1} \\
     & \leq p_{\sm}\Vert(\thetasame_3)_{ RR'Z\hat{M}}-  \tau_{RR'} \otimes (\thetasame_3)_{Z\hat{M}}  \Vert_1 + 2 \cdot 4^{-\vert M \vert}  \label{eq:1data2}  \\
      & \leq p_{\sm}\Vert( \thetasame_3)_{ RZ\hat{M}}-  U_r \otimes (\thetasame_3)_{Z\hat{M}}  \Vert_1 + 2 \cdot 4^{-\vert M \vert} \label{eq:1data3} \\
        & \leq p_{\sm} \Vert (\thetasame_3)_{ RW_2Z\hat{M}}-  U_r \otimes (\thetasame_3)_{W_2Z\hat{M}}  \Vert_1 + 2 \cdot 4^{-\vert M \vert} \label{eq:1traceavg1} \\
        & \leq \eps + 2 \cdot 4^{-\vert M \vert}. \label{eq:122thm1}
    \end{align}
    \cref{eq:lem1equal10} follows from \cref{lem:equal101} with the assignment of registers (the registers on the left below are from \cref{lem:equal101} and the registers on the right are the registers in this proof)
    \begin{equation*}
        (\hat{A},A , \rho) \longleftarrow ( \hat{M}, Z, \thetasame_3).
    \end{equation*}
    \cref{eq:1data1,eq:1data2,eq:1data3,eq:1traceavg1} follow from \cref{fact:data}.
    We provide some more detail on \cref{eq:1data3}.
    The difference between the expressions in \cref{eq:1data3,eq:1data2} is that the (classical) register $R'$ is removed. 
    The stated inequality (which is even an equality) follows from data-processing because in the states $\thetasame_3$ and $\tau$ we have that $R'$ is a copy of $R$, and so can be generated based on $R$. 
    This is because in these states we have $R = \nmext(X,Y)$, $R' = \nmext(X',Y')$, and $(X,Y)=(X',Y')$.
    The final \cref{eq:122thm1} follows from \cref{eq:boundtheta3same}. 
    
    To conclude the proof, it remains to argue that $p_\epr$ depends only on the split-state adversary $\cA$.
    This holds because $p_\epr$ is a function of the state $(\thetasame_3)_{Z\hat{M}}$, which can be prepared by running an independent tampering experiment with another maximally entangled state independent of the original input message.
\end{proof}

\subsection{From average-case to worst-case non-malleability}\label{sec:avgtoworst}

In this section, we show that every average-case non-malleable code is also worst-case non-malleable with larger error, provided that the message length is not too large.
More precisely, we have the following.
\begin{lemma}\label{thm:avgtoworst}
    If $(\enc,\dec)$ is an average-case $\eps$-non-malleable code for quantum messages of length $b$, then it is also a (worst-case) $\eps'$-non-malleable code for quantum messages of length $b$, where $\eps' = 2^b \cdot \eps$.
\end{lemma}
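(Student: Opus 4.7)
The plan is to show that the worst-case non-malleability follows from average-case non-malleability via a quantum rejection sampling argument, using exactly the same $p_\cA$ and $\gamma^{\cA}_M$ from the average-case guarantee. The key observation is that the tampering experiment acts only on register $M$ (via $\enc$, the adversary, and $\dec$), while the purification register $\hat{M}$ is untouched; hence any isometry on $\hat{M}$ commutes with the entire experiment. We will use this to interpret the worst-case experiment as a post-selected version of the average-case experiment.

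Fix an arbitrary message $\rho_M$ with canonical purification $\rho_{M\hat{M}}$, and let $\sigma_{M\hat{M}}$ denote the canonical purification of the maximally mixed state $\sigma_M = U_M$. Since $\rho_M \leq 2^b U_M = 2^b \sigma_M$ we have $\dmax{\rho_M}{\sigma_M} \leq b$, so by the symmetry between the two registers of a pure state and the data-processing property of max-divergence, $\dmax{\rho_{\hat{M}}}{\sigma_{\hat{M}}} \leq b$ as well. I would then apply the rejection sampling result (\cref{rejectionsampling}) with the roles $A = A' = \hat{M}$ and $B = M$: this yields an isometry $V : \cH_{\hat{M}} \to \cH_{\hat{M}} \otimes \cH_C$ (with $C$ a single qubit) such that $\phi_{M\hat{M}C} \defeq (V \otimes \id_M)\sigma_{M\hat{M}}(V^\dagger \otimes \id_M)$ satisfies $\Pr[C=1]_\phi \geq 2^{-b}$ and $\phi_{M\hat{M} \mid C=1} = \rho_{M\hat{M}}$.

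Let $\cE$ denote the CPTP map on $M$ implementing the full tampering procedure $\enc \to (U \otimes V_{\mathrm{adv}}) \to \dec$ for the split-state adversary $\cA$. Denote by $\eta^\sigma_{M\hat{M}} = (\cE \otimes \id)(\sigma_{M\hat{M}})$ and $\eta^\rho_{M\hat{M}} = (\cE \otimes \id)(\rho_{M\hat{M}})$ the outputs of the experiment on the two messages. The average-case guarantee gives
\begin{equation*}
\left\| \eta^\sigma_{M\hat{M}} - \Bigl( p_\cA\, \sigma_{M\hat{M}} + (1-p_\cA)\, \gamma^\cA_M \otimes \sigma_{\hat{M}} \Bigr) \right\|_1 \leq \eps.
\end{equation*}
Since $V$ acts on $\hat{M}$ and $\cE$ acts on $M$, they commute, and applying $V$ to both sides (an isometry, hence trace-norm-preserving) yields states on $M\hat{M}C$ whose trace distance is still at most $\eps$. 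Crucially, on the LHS we get $(\cE \otimes \id)(\phi_{M\hat{M}C})$, and the $\hat{M}$-marginal of the RHS becomes $V \sigma_{\hat{M}} V^\dagger = \phi_{\hat{M}C}$ in both summands, so the projector $\Pi_{C=1}$ has the same trace $p \defeq \Pr[C=1]_\phi \geq 2^{-b}$ against both states.

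The final step is to post-select on $C=1$ using \cref{traceavg1}. Conditioning on $C=1$ transforms $\phi_{M\hat{M}C}$ into $\rho_{M\hat{M}}$, and $\phi_{\hat{M}C}$ into $\rho_{\hat{M}}$, so the two normalized post-measurement states are exactly $\eta^\rho_{M\hat{M}}$ and $p_\cA\, \rho_{M\hat{M}} + (1-p_\cA)\, \gamma^\cA_M \otimes \rho_{\hat{M}}$. \cref{traceavg1} then gives
\begin{equation*}
p \cdot \left\| \eta^\rho_{M\hat{M}} - \Bigl( p_\cA\, \rho_{M\hat{M}} + (1-p_\cA)\, \gamma^\cA_M \otimes \rho_{\hat{M}} \Bigr)\right\|_1 \leq \eps,
\end{equation*}
and dividing by $p \geq 2^{-b}$ yields the claimed bound $\eps' = 2^b \cdot \eps$. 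Since $p_\cA$ and $\gamma^\cA_M$ are inherited unchanged from the average-case definition, they depend only on $\cA$, as required. There is no real obstacle here beyond verifying that $V$ genuinely commutes with the tampering experiment and that the $\hat{M}$-marginals of both sides coincide (so that $\Pi_{C=1}$ has the same weight against both, making the post-selection argument clean); both are immediate.
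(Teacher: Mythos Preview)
The proposal is correct and follows essentially the same approach as the paper: both use quantum rejection sampling (\cref{rejectionsampling}) on the purification register $\hat{M}$, exploit that this commutes with the entire tampering experiment, and then post-select via \cref{traceavg1} to convert the average-case error $\eps$ into the worst-case error $2^b\eps$ with the same $p_\cA$ and $\gamma^\cA_M$. The only cosmetic differences are that you phrase the rejection sampling via the isometry $V$ and an explicit flag qubit $C$ whereas the paper works directly with the measurement operator $\Pi$, and the roles of the symbols $\rho$ and $\sigma$ are swapped.
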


Combining \cref{thm:avgcaseNMC} with \cref{thm:avgtoworst} immediately implies the following result.
\begin{theorem}\label{thm:worstcaseNMC}
    There exists a constant $c\in (0,1)$ such that the following holds: The coding scheme $(\enc,\dec)$ described in \cref{sec:codedesc} with codewords of length $n$ and message size $b\leq n^c$ is (worst-case) $\eps$-non-malleable with $\eps=2^{-n^{\Omega(1)}}$.
    Moreover, both $\enc$ and $\dec$ can be computed in time $\poly(n)$.
\end{theorem}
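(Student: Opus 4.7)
The plan is to use quantum rejection sampling (Fact \ref{rejectionsampling}) to bootstrap the average-case guarantee, where the message is maximally mixed, into a worst-case guarantee for an arbitrary message $\rho_M$. The key observation is that the canonical purification $\rho_{M\hat{M}}$ of any $\rho_M$ can be obtained from the canonical purification $\sigma_{M\hat{M}}$ of the maximally mixed state $\sigma_M=U_M$ by applying an isometry $V$ \emph{on the $\hat{M}$ register alone}, followed by measurement of an auxiliary single-qubit register $C$ and post-selection on $C=1$, which succeeds with probability at least $2^{-b}$. This is guaranteed by Fact \ref{rejectionsampling}, since $\rho_M \le \id = 2^b\sigma_M$ implies $\dmax{\rho_M}{\sigma_M}\le b$. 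Crucially, the encoding, adversarial tampering, and decoding of the tampering experiment act only on $M$ (and the adversary's own registers), never on $\hat{M}$, so $V$ and the measurement of $C$ commute with the entire experiment.

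Fix any split-state adversary $\cA=(U,V,\ket\psi_{W_1W_2})$ and denote by $\Phi$ the induced CPTP map from $\cL(\cH_M)$ to $\cL(\cH_M)$ corresponding to ``encode, tamper, decode.'' For a quantum message $\tau_M$ with canonical purification $\tau_{M\hat{M}}$, write $\eta^{\tau}_{M\hat{M}} \defeq (\Phi\otimes \id)(\tau_{M\hat{M}})$. By average-case $\eps$-non-malleability applied to $\sigma_M=U_M$, there exist $p_\cA\in[0,1]$ and a state $\gamma^{\cA}_M$ (depending only on $\cA$) such that
\begin{equation*}
    \eta^{\sigma}_{M\hat{M}} \approx_{\eps} p_\cA\, \sigma_{M\hat{M}} + (1-p_\cA)\, \gamma^{\cA}_M \otimes \sigma_{\hat{M}}.
\end{equation*}
I will then apply the isometry $V$ from Fact \ref{rejectionsampling} on $\hat{M}$ (extending into the auxiliary register $C$) to both sides, which preserves the $\eps$ trace distance, and then measure $C$ and condition on outcome $1$.

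On the left-hand side, because $V$ acts only on $\hat{M}$ it commutes with $\Phi\otimes\id$, so conditioning yields exactly $\eta^{\rho}_{M\hat{M}'}$, i.e.\ the worst-case tampering output for the message $\rho_M$ (identifying $\hat{M}'$ with $\hat{M}$). On the right-hand side, Fact \ref{rejectionsampling} says that applying $(V\otimes \id_M)$ to $\sigma_{M\hat{M}}$ and conditioning on $C=1$ produces $\rho_{M\hat{M}'}$, with success probability exactly $p^*\ge 2^{-b}$. The same isometry applied to the marginal $\sigma_{\hat{M}}=U_{\hat{M}}$ yields the same marginal on $C$ (marginals on $C$ agree whether or not we trace out $M$), so $C=1$ also occurs with probability $p^*$; and since the joint post-selected state $\rho_{M\hat{M}'}$ is pure, its $\hat{M}'$-marginal is $\rho_{\hat{M}'}$. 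Hence both components of the right-hand side are conditioned with the \emph{same} weight $p^*$ and the post-measurement right-hand side equals
\begin{equation*}
    p_\cA\, \rho_{M\hat{M}'} + (1-p_\cA)\, \gamma^{\cA}_M \otimes \rho_{\hat{M}'}.
\end{equation*}

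Finally, I invoke Fact \ref{traceavg1} to bound the trace distance between the two conditioned states by $\eps/p^* \le 2^b\eps$, which yields the claimed worst-case $\eps'=2^b\eps$ non-malleability since $p_\cA$ and $\gamma^{\cA}_M$ are inherited unchanged from the average-case bound and depend only on $\cA$. The main subtlety, and the only place where the argument could fail, is ensuring that the post-selection probabilities of the two summands on the right-hand side coincide so that conditioning produces a clean convex combination rather than one with reweighted coefficients. This is precisely where using $\sigma_M=U_M$ (so that $\sigma_{\hat{M}}=U_{\hat{M}}$) and the fact that $V$ acts only on $\hat{M}$ play an essential role.
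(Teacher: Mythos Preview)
Your proposal is correct and follows essentially the same approach as the paper: the paper proves Theorem~\ref{thm:worstcaseNMC} by combining the average-case guarantee (Theorem~\ref{thm:avgcaseNMC}) with a generic average-to-worst-case reduction (Lemma~\ref{thm:avgtoworst}), and that reduction is exactly the rejection-sampling argument you outline---apply the isometry/measurement from Fact~\ref{rejectionsampling} on the $\hat{M}$ register, use that it commutes with $\enc$, $\dec$, and $\cA$, and invoke Fact~\ref{traceavg1} to control the $2^b$ blow-up. The only step you leave implicit is the final parameter check that $2^b\cdot 2^{-n^{\Omega(1)}}=2^{-n^{\Omega(1)}}$ once $b\le n^c$ for $c$ small enough, which is immediate.
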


Note that the only barrier towards obtaining a constant rate code in \cref{thm:worstcaseNMC} is the fact that the average-case non-malleable code from \cref{thm:avgcaseNMC} has error $\eps=2^{-n^{\Omega(1)}}$, where $n$ is the codeword size.
Therefore, \cref{thm:avgtoworst} only allows for quantum messages of length up to $n^{\Omega(1)}$, as otherwise the resulting error ($2^b\cdot \eps$) becomes trivial.

We now proceed to prove \cref{thm:avgtoworst}.
\begin{proof}[Proof of \cref{thm:avgtoworst}]
    Fix a split-state adversary $\cA=(U,V,\ket{\psi}_{W_1W_2})$.
    Let $\rho_{M\hat{M}}$ be the canonical purification of $\rho_M=U_M$, and denote by $\eta$ the outcome of the tampering experiment of $\cA$ on the encoding of $\rho$, i.e.,
    \begin{equation*}
        \eta=  \dec(  (U \otimes V) (\enc (\rho_{\hat{M}M}) \otimes \ketbra{\psi} )(U^\dagger \otimes V^\dagger)).
    \end{equation*}
    Since $(\enc,\dec)$ is an average-case $\eps$-non-malleable code and $\rho_{M\hat{M}}$ is maximally entangled (and so $\rho_M$ is maximally mixed), we know that
    \begin{equation}\label{eq:guaranteeavgcase}
        \eta_{\hat{M}M} \approx_{\eps} p_{\mathcal{A}}\rho_{\hat{M}M}+ (1-p_{\mathcal{A}}) \rho_{\hat{M}} \otimes \gamma^{\mathcal{A}}_{M} ,
    \end{equation}
    where $(p_{\mathcal{A}}, \gamma^{\mathcal{A}}_{M} )$ depend only on the split-state adversary $\mathcal{A}$. 
    Note also that $\eta_{\hat{M}} =\rho_{\hat{M}}= U_{\hat{M}}.$ 
    
    Towards showing that $(\enc,\dec)$ is also worst-case non-malleable, consider an arbitrary quantum message $\sigma_M$ of size $b$ with canonical purification $\sigma_{M\hat{M}}$.
    If
    \begin{equation*}
        \zeta = \dec(  (U \otimes V) (\enc (\sigma_{\hat{M}M}) \otimes \ketbra{\psi} )(U^\dagger \otimes V^\dagger) )
    \end{equation*}
    is the outcome of the tampering experiment of $\cA$ on message $\sigma$, to derive the theorem statement it suffices to show that 
    \begin{equation}\label{eq:worstcasefinalgoal}
        \zeta_{\hat{M}M} \approx_{2^b \eps} p_{\mathcal{A}}\sigma_{\hat{M}M}+ (1-p_{\mathcal{A}}) 
        (\sigma_{\hat{M}} \otimes  \gamma^{\mathcal{A}}_{M}),
    \end{equation}
    where $(p_{\mathcal{A}},\gamma^{\mathcal{A}}_{M})$ depend only on $\cA$ and come from \cref{eq:guaranteeavgcase}.
    
    Recalling the definition of max-divergence (\cref{def:maxdiv}), we have that
    \begin{equation*}
        \dmax{\sigma_M}{\rho_M} \leq b,
    \end{equation*}
    since $\sigma_M \leq \id_M$ and $\rho_M=\id_M/2^b$. 
    Therefore, by \cref{rejectionsampling} there exists a measurement $\{\Pi , \bar{\Pi} \}$ on register $\hat{M}$ in state $\rho$ with success probability $\tr(\Pi \rho) \geq 2^{-b}$ such that conditioned on success we have that
    \begin{equation*}
        \sigma_{\hat{M}M} =\frac{\Pi \rho \Pi}{ \tr(\Pi \rho)}.
    \end{equation*} 
    First, note that $\Pi$ commutes with the operations of ($\enc, \dec$) and $\mathcal{A}$, since it is applied on register $\hat{M}$. 
    Consequently, again conditioned on the measurement being successful, we have that
    \begin{equation*}
        \zeta_{\hat{M}M}  =\frac{\Pi \eta_{\hat{M}M} \Pi}{ \tr(\Pi \rho)},
    \end{equation*}
    and so 
    \begin{align*}
        \zeta_{\hat{M}M}  & = \frac{\Pi \eta_{\hat{M}M} \Pi}{ \tr(\Pi \rho)} \\
        & \approx_{2^{b}\eps} p_{\mathcal{A}}\frac{\Pi \rho_{\hat{M}M} \Pi}{ \tr(\Pi \rho)}+ (1-p_{\mathcal{A}}) \frac{\Pi \rho_{\hat{M}} \Pi}{ \tr(\Pi \rho)} \otimes \gamma^{\mathcal{A}}_{M} \\
        & = p_{\mathcal{A}}\sigma_{\hat{M}M}+ (1-p_{\mathcal{A}}) 
        (\sigma_{\hat{M}} \otimes  \gamma^{\mathcal{A}}_{M}).
    \end{align*}
    The approximation above follows from \cref{traceavg1} and by noting that
    \begin{equation*}
        \eta_{\hat{M}M} \approx_{\eps} p_{\mathcal{A}}\rho_{\hat{M}M}+ (1-p_{\mathcal{A}}) \rho_{\hat{M}} \otimes \gamma^{\mathcal{A}}_{M}.
    \end{equation*}
    This completes the proof. 
\end{proof}

\subsection{Privacy of our split-state NMC for quantum messages}\label{sec:privNMC}

We show that our split-state (worst-case) non-malleable code for quantum messages is also a $2$-out-of-$2$ secret sharing scheme for quantum messages.
\begin{theorem}\label{thm:22nmss}
    Let $(\enc,\dec)$ be the $(\eps'=2^{-n^{\Omega(1)}})$-non-malleable code described in \cref{sec:codedesc}.
    Then, $(\enc,\dec)$ is also an $(\epspriv=2^{-n^{\Omega(1)}}, \epsnm=2^{-n^{\Omega(1)}})$-$2$-out-of-$2$ non-malleable secret sharing scheme for quantum messages. 
\end{theorem}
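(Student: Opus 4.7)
The plan is to verify the three properties required by \cref{def:qnmss} with $t=p=2$: correctness, statistical privacy of each single share, and non-malleability when both shares are tampered. Two of these are essentially free from the NMC analysis, and the real work is in privacy.

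\textbf{Correctness and non-malleability.} Correctness is immediate from the description of $(\enc,\dec)$ in \cref{sec:codedesc}. Since the only authorized set of size $t=2$ is $\{1,2\}$, the NMSS non-malleability requirement for $|T|=2$ coincides exactly with the split-state NMC non-malleability already established in \cref{thm:worstcaseNMC}, yielding $\epsnm=2^{-n^{\Omega(1)}}$.

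\textbf{Privacy of the left share.} The register $X$ is sampled uniformly at random and independently of the input $\sigma_{M\hat{M}}$, and the Clifford $C_R$ is applied only on the right share. Hence
\[
\enc(\sigma_{M\hat{M}})_{\hat{M}X}\;=\;\sigma_{\hat{M}}\otimes U_X,
\]
so left-share privacy holds with zero error.

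\textbf{Privacy of the right share.} This is where the main technical work lies. Setting $\zeta_{YZ}\defeq U_Y\otimes U_Z$ (independent of $\sigma_M$), I want to show
\[
\enc(\sigma_{M\hat{M}})_{\hat{M}YZ}\;\approx_{\epspriv}\;\sigma_{\hat{M}}\otimes \zeta_{YZ},
\]
with $\epspriv=2^{-n^{\Omega(1)}}$. I proceed in two stages. First, applying the strong-extraction clause (Item 1 of \cref{lem:qnmcodesfromnmext}) to independent uniform $X\leftarrow\bits^\ell$ and $Y\leftarrow\bits^{\delta\ell}$ gives
\[
\|(R,Y)-(U_r,Y)\|_1\le \eps=2^{-n^{\Omega(1)}},
\]
and combining with \cref{eq:approxsample} in \cref{lem:subclifford}, the classical description $\samp(R)$ is within $\eps+2^{-2b}$ in total variation from the uniform distribution on $\cSC(\cH_M)$, jointly with $Y$. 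The map $R\mapsto C_R\sigma_M C_R^\dagger$ is a CPTP map controlled by the classical register $R$, so by data processing (\cref{fact:data}) this closeness lifts to the state level:
\[
\enc(\sigma_{M\hat{M}})_{\hat{M}YZ}\;\approx_{\eps+2^{-2b}}\;U_Y\otimes \tfrac{1}{|\cSC(\cH_M)|}\sum_{C\in\cSC(\cH_M)}(C\otimes\id)\,\sigma_{M\hat{M}}\,(C^\dagger\otimes\id).
\]
Second, the $1$-design identity for $\cSC(\cH_M)$ from \cref{fact:notequal} (proved via the Pauli decomposition of $\sigma_{M\hat{M}}$ and \cref{lem:cliffordtwirl2}) yields
\[
\tfrac{1}{|\cSC(\cH_M)|}\sum_{C\in\cSC(\cH_M)}(C\otimes\id)\,\sigma_{M\hat{M}}\,(C^\dagger\otimes\id)\;=\;U_M\otimes\sigma_{\hat{M}}.
\]
Since $|Z|=|M|=b$, chaining the two displays and reordering tensor factors gives the desired approximation with $\epspriv=\eps+2^{-2b}=2^{-n^{\Omega(1)}}$.

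\textbf{Main obstacle.} The only real subtlety is handling the ``substitution'' in the first stage cleanly: the quantum register $Z$ depends on $R$ through Clifford conjugation, so one cannot naively replace $R$ by an independent uniform string. The right viewpoint is to package $R\mapsto C_R\sigma_M C_R^\dagger$ as a CPTP map controlled by the classical register $R$ and apply data processing; once this is done, the rest is a routine invocation of the $1$-design property of $\cSC(\cH_M)$ together with collection of error terms.
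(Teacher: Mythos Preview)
Your proposal is correct and follows essentially the same approach as the paper: reduce to privacy (correctness and non-malleability being inherited from the NMC analysis), note that $X$ is trivially independent of the message, and for the $(Y,Z)$ share combine the strong-extraction property of $\nmext$ (Item~1 of \cref{lem:qnmcodesfromnmext}) with the sampling approximation in \cref{lem:subclifford} and the $1$-design identity from \cref{fact:notequal}. Your explicit framing of the substitution step as data processing for a classically-controlled CPTP map is a slightly cleaner articulation of what the paper does informally, but the argument is the same.
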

\begin{proof}
Since $(\enc,\dec)$ is already known to be $(\eps'=2^{-n^{\Omega(1)}})$-non-malleable by \cref{thm:mainqnmc}, it remains to show that it also satisfies statistical privacy with error $\epspriv=2^{-n^{\Omega(1)}}$.

Let $\sigma_M$ be an arbitrary message.
Recall that $\enc(\sigma_M)$ is composed of two parts $X$ and $(Y,C_R \sigma_M C^\dagger_R)$, where $X\leftarrow \bits^\ell$, $Y\leftarrow\bits^{\delta\ell}$ are independent and $R=\nmext(X,Y)$ with $\nmext$ being the explicit function described in \cref{lem:qnmcodesfromnmext}.
Additionally, we have $C_R=\samp(R)$ with $\samp$ being the sampling procedure for the special Clifford subgroup from \cref{lem:subclifford}.

First, note that $X$, the first part of $\enc(\sigma_M)$, is clearly sampled independently of the message $\sigma_M$, and so statistical privacy trivially holds when an adversary only has access to this first part.
Therefore, we only need to focus on the case where an adversary holds the second part $(Y,C_R\sigma_M C^\dagger_R)$ of $\enc(\sigma_M)$.
Item 1 of \cref{lem:qnmcodesfromnmext} guarantees that
\begin{equation*}
    RY = \nmext(X,Y)Y \approx_{2^{-n^{\Omega(1)}}} U_r\otimes U_{\delta n}.
\end{equation*}
This means that we can from here onwards assume in our argument that $R$ and $Y$ are independent and both uniformly distributed (i.e., $RY =U_r\otimes U_{\delta n}$) at the expense of an extra additive factor $2^{-n^{\Omega(1)}}$ in the final statistical privacy error $\epspriv$.

Observe that, for a fixed message $\sigma_M$, the state $(Y,C_R\sigma_M C^\dagger_R)$ corresponding to the second part of $\enc(\sigma_M)$ is a (randomized) function of $RY$ only.
Since we are assuming that $Y$ is independent of $R$, it suffices to argue that
\begin{equation}\label{eq:approxencryption}
    C_R \sigma_{\hat{M}M}C^\dagger_R \approx_{2^{-n^{\Omega(1)}}}  \sigma_{\hat{M}} \otimes U_M
\end{equation}
for any message $\sigma_{\hat{M}M}$.
If $C_R$ was uniformly distributed over the Clifford subgroup $\cSC(\cH_M)$, then \cref{fact:notequal} would guarantee that
\begin{equation}\label{eq:dreamequiv}
      C_R \sigma_{\hat{M}M}C^\dagger_R = \sigma_{\hat{M}} \otimes U_M.
\end{equation}
Although $C_R$ is not uniformly distributed over the Clifford subgroup $\cSC(\cH_M)$, we know from \cref{lem:subclifford} that
\begin{equation}\label{eq:approxsampleguarantee}
    \samp(R)\approx_{2^{-n^{\Omega(1)}}} U_{\cSC(\cH_M)},
\end{equation}
where $\samp(R)$ is the classical description of $C_R$ and $U_{\cSC(\cH_M)}$ is uniformly distributed over classical descriptions of Clifford operators in $\cSC(\cH_M)$.
Combining \cref{eq:dreamequiv,eq:approxsampleguarantee} yields \cref{eq:approxencryption}.
Finally, summing the various trace distances along the way yields the final statistical privacy error $\epspriv =2^{-n^{\Omega(1)}}$ via the triangle inequality.
\end{proof}

\section{Threshold non-malleable secret sharing schemes for quantum messages}\label{sec:nmssquantummsg}

In this section, we present and analyze our non-malleable threshold secret sharing schemes for quantum messages. Before doing so, we must first introduce auxiliary objects, which we call \emph{quantum-secure augmented leakage-resilient secret sharing schemes for classical messages.}

Our final construction of a non-malleable secret sharing scheme will combine our split-state non-malleable code for quantum messages from \cref{thm:mainqnmc1} with a standard secret sharing scheme for quantum messages and a quantum-secure leakage-resilient secret sharing scheme for classical messages. This overall approach can be seen as the quantum analogue of the argument by Goyal and Kumar~\cite{GK16}.

\subsection{Quantum-secure augmented leakage-resilient secret sharing schemes for classical messages}\label{sec:2lrss}

Roughly speaking, a (locally) leakage-resilient secret sharing scheme allows a dealer to share a secret among $p$ parties in such a way that authorized subsets of parties can reconstruct the secret, but unauthorized subsets of parties gain almost no information about the secret even when they learn bounded quantum leakage from other shares.

We proceed to define what we call, quantum-secure augmented leakage-resilient secret sharing schemes for classical messages. Let $\sigma_M$ be a classical message in a register $M$, and let $\sigma_{M\hat{M}}$ be such that $\hat{M}$ is a copy of $M$. We wish to share $\sigma_M$ among $p$ parties in a way that any subset of at least $t$ parties can reconstruct the secret, but subsets of at most $t-1$ parties gain almost no information about the secret, even if they learn some \emph{quantum} leakage from every other share, and even when these $t-1$ parties share arbitrary entangled states with the leakage adversaries for the remaining shares.

Towards this end, we consider a coding scheme given by a classical encoding map
$\lrenc : \cL( \cH_M) \to \cL(\cH_{S_1}\otimes \cH_{S_2} \otimes \cdots \otimes \cH_{S_p})$ and a classical decoding map $\lrdec  : \cL(\bigotimes_{i \in T} \cH_{S_i}) \to  \cL(\cH_{M})$, where $\cL(\cH)$ is the space of all linear operators in the Hilbert space $\cH$. The reconstruction procedure $\lrdec$ acts on any authorized subset of shares $T$ to reconstruct the original message $M$.

We now describe the adversarial leakage model we work under. 
Let $\rho_{ S_1 S_2 \dots S_p}=\lrenc(\sigma_M)$ be the secret sharing of message $\sigma_M$ and fix a subset $T$ of parties.
An \emph{$\ell$-bounded local leakage adversary} $\cA_T$ is specified by $p-|T|$ leakage maps
$\Phi_j:\cL(\cH_{S_j}\otimes \cH_{W_j})\to\cL(\cH_{Z_j})$ for $j \in [p]\setminus T$ along with a quantum state $\ket\psi_{W_1 W_2 \dots W_p}$ which captures the shared entanglement between non-communicating local leakage adversaries.
The outcome of the leakage operation on the $j$-th share $S_j$, is stored in register $Z_j$ with $Z_j$ being the leakage register, is obtained by applying $\Phi_j$ to the contents of registers $S_j$ and $W_j$. 
We assume that the leakage $Z_j$ is bounded, in the sense that $\vert Z_j \vert \leq \ell$ for every $j \in [p]\setminus T$, where $\ell$ is some leakage bound parameterizing the leakage adversary.
We denote the outcome of the leakage experiment against $\cA_T$ by $\tau^{\cA_T}$, which may be written as
\begin{equation*}
    \tau^{\cA_T} = \left(\left(\bigotimes_{i\in T} \id\right)\otimes\left(\bigotimes_{j\not\in T}\Phi_j\right)\right)\left( \lrenc(\sigma_{M \hat{M}}) \otimes \ketbra{\psi}_{W_1 W_2 \dots W_p} \right).
\end{equation*}

We are now ready to define quantum-secure augmented leakage-resilient secret sharing schemes for classical messages. It is easy to extend this definition to more general access structures.
For simplicity and because it suffices for our needs, we focus on threshold access structures only.
\begin{definition}[Quantum-secure threshold augmented leakage-resilient secret sharing]\label{def:2naqlrss}
    We say that the coding scheme $(\lrenc , \lrdec)$ is an \emph{$(\ell,\eps)$-quantum-secure $t$-out-of-$p$ augmented leakage-resilient secret sharing scheme} if the following properties hold:
    \begin{itemize}
        \item \textbf{Correctness:} For any set $T\subseteq[p]$ of size $|T|\geq t$ it holds that
        \begin{equation*}
        \lrdec(\lrenc(\sigma_{})_{\hat{M}T})=\sigma_{\hat{M}M}.
        \end{equation*}

        \item \textbf{Leakage-resilience:} For any set $T\subseteq[p]$ of size $|T|= t-1$, every $\ell$-bounded local leakage adversary
    \begin{equation*}
        \mathcal{A}_T= ((\Phi_j)_{j\not\in T}, \ket{\psi}_{W_1W_2 \dots W_p}),
    \end{equation*}
    and every classical message $\sigma_M$ (with the register $\hat{M}$ being a copy of $M$) it holds that 
    \begin{equation*}
        \tau^{\cA_T}_{\hat{M}S_TW_T (Z_j)_{j\not\in T}} \approx_{\eps} \sigma_{\hat{M}} \otimes \gamma^{\cA_T}_{S_T W_T (Z_j)_{j\not\in T} },
    \end{equation*}
    where $\gamma^{\cA_T}_{S_T W_T (Z_j)_{j\not\in T} }$ is a fixed state depending only\footnote{By this, we mean that the state $\gamma^{\cA_T}_{S_T W_T (L_j)_{j\not\in T}}$ can be prepared without the knowledge of the input message $\sigma_{M\hat{M}}$.} on the adversary
    \begin{equation*}
        \mathcal{A}_T= ((\Phi_j)_{j\not\in T}, \ket{\psi}_{W_1W_2 \dots W_p}).
    \end{equation*}
    \end{itemize}

    Moreover, we say that the coding scheme $(\lrenc , \lrdec)$ is an \emph{$(\ell,\eps)$-quantum-secure average-case $t$-out-of-$p$ augmented leakage-resilient secret sharing scheme} if the above holds for a uniformly random message $\sigma_M=U_M$.
\end{definition}

As was the case for our non-malleability definitions, our notion of average-case leakage-resilience is akin to requiring that the \emph{average} leakage-resilience error over the uniform choice of input message be small.

We use the adjective ``augmented'' because we consider a leakage experiment where the distinguisher learns not only the shares $S_T$ corresponding to the parties in $T$ and the leakage $(Z_j)_{j\not\in T}$, but (crucially) also the entangled states $W_T$ corresponding to parties in $T$.
Note that every $(\ell,\eps)$-quantum-secure $t$-out-of-$p$ augmented leakage-resilient secret sharing scheme is also average-case with the same parameters.

\subsubsection{A $2$-out-of-$2$ scheme}

On the way to constructing a quantum-secure $2$-out-of-$p$ augmented leakage-resilient secret sharing scheme, we first focus on the special case where $t=p=2$.
We show that a well-known secret sharing scheme based on the inner product extractor is a quantum-secure augmented leakage-resilient scheme with good parameters.

Suppose that we wish to share a $b$-bit classical message.
Let $q=2^b$ and recall the inner product function $\IP:\F_q^N\times\F_q^N\to\F_q$ given by
\begin{equation*}
    \IP(x,y) =\sum_{i=1}^N x_i y_i,
\end{equation*}
where operations are performed over $\F_q$. Consider the following coding scheme $(2\lrenc,2\lrdec)$.
On input a message $s\in\F_q$, the sharing procedure $2\lrenc(s)$ works as follows:
\begin{enumerate}
    \item Sample $(X,Y)\leftarrow \IP^{-1}(s)=\{(x,y)\in\F_q^N\times\F_q^N:\IP(x,y)=s\}$;
    \item Set the two shares as $S_1=X$ and $S_2=Y$.
\end{enumerate}
To reconstruct the secret, we set $\lrdec(S_1,S_2)=\IP(S_1,S_2)$.
Note that both the sharing and reconstruction procedures run in time $\poly(N,b)$.
The following theorem, proved using the extractor properties of $\IP$ (see \cref{fact:IPqpa}), states the main properties of this scheme.

 \begin{theorem}\label{thm:22augmented}
    The coding scheme 
    $(2\lrenc , 2\lrdec)$ defined above is an $(\ell,\eps)$-quantum-secure $2$-out-of-$2$ augmented leakage-resilient secret sharing scheme provided that the share size $N\cdot b$ (in bits) satisfies
    \begin{equation*}
        N\cdot b \geq 9b + 2\ell+ 8 \log(1/\eps)+40.
    \end{equation*}
    Moreover, the sharing and reconstruction procedures run in time $\poly(N,b)$.
 \end{theorem}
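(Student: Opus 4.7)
The correctness of $(2\lrenc, 2\lrdec)$ is immediate since $2\lrdec(S_1, S_2) = \IP(X,Y) = s$ by construction. For leakage-resilience, by the symmetry of $\IP$ it suffices to treat the case $T = \{1\}$; I fix an $\ell$-bounded local leakage adversary $\cA_T = (\Phi_2, \ket{\psi}_{W_1 W_2})$. My plan is to first establish an \emph{average-case} version of the leakage-resilience bound with error $\eps/2^b$ (i.e., for a uniformly random classical message $s$) by invoking \cref{fact:IPqpa}, and then to convert it to the required worst-case $\eps$-bound for each fixed $s$ by the standard observation that $\sum_s 2^{-b}\|\rho'_{XW_1Z_2|M=s} - \gamma\|_1 \leq \eps/2^b$ implies $\max_s \|\rho'_{XW_1Z_2|M=s} - \gamma\|_1 \leq \eps$. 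The extra $8b$ in the theorem's hypothesis (beyond what \cref{fact:IPqpa} would require by itself) is exactly what is needed to absorb this factor of $2^b$ inside the $8\log(1/\eps)$ term.

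To apply \cref{fact:IPqpa}, I need to exhibit a suitable $\qpas$ on the post-leakage state. The pre-leakage pure state $\sigma_{X\hat{X}Y\hat{Y}W_1W_2} = U_X \otimes U_Y \otimes \ketbra{\psi}_{W_1W_2}$ (with $\hat{X}, \hat{Y}$ the canonical classical copies of $X, Y$) is trivially a $(Nb, Nb)$-$\qpas$. Next, I apply the Stinespring dilation of $\Phi_2$ on $YW_2$, producing the leakage register $Z_2$ and a purifying environment register $W_2''$. Since $\Phi_2$ may ``consume'' the original $Y$ register, I reuse the untouched classical copy $\hat{Y}$ (together with a fresh classical ancilla) to play the roles of $Y$ and $\hat{Y}$ in the new $\qpas$. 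Folding $Z_2$ into Alice's side by setting $W_1'' \defeq W_1 Z_2$, one verifies that $\hmin{X}{W_2'' Y \hat{Y}}_{\rho'} = Nb$ (the leakage does not act on $X$, which remains uniform and independent of everything in the conditioning) and that $\hmin{Y}{W_1'' X \hat{X}}_{\rho'} = \hmin{Y}{W_1 Z_2 X \hat{X}}_{\rho'} \geq Nb - 2\ell$. For the latter, $X\hat{X}$ is classical and independent of $YW_1Z_2$ and hence may be dropped, and the marginal $\rho'_{YW_1} = U_Y \otimes \rho_{W_1}$ is a product since neither the renamed $Y$ nor $W_1$ is touched by $\Phi_2$; \cref{fact:boundnew} then yields $\imax(Y:W_1Z_2)_{\rho'} \leq 2|Z_2| \leq 2\ell$, which gives the stated min-entropy lower bound because $Y$ is uniform. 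Thus $\rho'$ is a $(Nb, Nb - 2\ell)$-$\qpas$.

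Applying \cref{fact:IPqpa} to $\rho'$ with error parameter $\eps/2^b$ requires
\begin{equation*}
2Nb - 2\ell \;\geq\; (N+1)b + 8\log(2^b/\eps) + 40 \;=\; (N+9)b + 8\log(1/\eps) + 40,
\end{equation*}
which is exactly equivalent to the theorem's hypothesis $Nb \geq 9b + 2\ell + 8\log(1/\eps) + 40$. The conclusion for the ``Alice side'' then gives $\|\rho'_{MXW_1Z_2} - U_M \otimes \rho'_{XW_1Z_2}\|_1 \leq \eps/2^b$ with $M = \IP(X,Y)$. Defining $\gamma^{\cA_T}_{XW_1Z_2} \defeq \rho'_{XW_1Z_2}$---which depends only on $\cA_T$ because it is the marginal of a state prepared using only $\cA_T$ together with independent uniform randomness---the worst-case conversion in the first paragraph delivers the required bound $\tau^{\cA_T}_{\hat{M}XW_1Z_2} \approx_\eps \sigma_{\hat{M}} \otimes \gamma^{\cA_T}_{XW_1Z_2}$ for every fixed classical message $s$. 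The $\poly(N,b)$ runtime of $2\lrenc$ and $2\lrdec$ is evident from their descriptions.

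The main subtlety I anticipate is handling the post-leakage state cleanly: since $\Phi_2$ may consume the original $Y$ register, one must recycle the pre-existing classical copy $\hat{Y}$ to supply the ``$Y$'' register in the new qpa-state, and verify carefully that the marginal $\rho'_{YW_1}$ remains a product. A related technical point is that the $2\ell$-bit min-entropy loss genuinely requires the quantum max-information bound from \cref{fact:boundnew} rather than a purely classical chain rule (which would only lose $\ell$ bits); the $2\ell$ loss precisely matches the corresponding term in the theorem's hypothesis, and using a tighter classical bound would under-provision the slack needed for \cref{fact:IPqpa} to go through.
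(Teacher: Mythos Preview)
Your proposal is correct and follows essentially the same approach as the paper: both establish the average-case bound by exhibiting a post-leakage $\qpas$ (with min-entropies $Nb$ and $Nb-2\ell$ via \cref{fact:boundnew}), invoke \cref{fact:IPqpa} with error $\eps/2^b$, and then convert to the worst-case bound by conditioning on the value of $M=\IP(X,Y)$, which is uniform. The only cosmetic difference is that the paper chooses a Stinespring dilation of $\Phi_2$ that explicitly retains the $Y$ register, whereas you recycle the classical copy $\hat{Y}$; both are valid ways to furnish the required $\qpas$ structure.
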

 \begin{proof}
     The correctness and runtime of this coding scheme are straightforward.
     We focus on establishing leakage-resilience.
     We briefly discuss the intuition behind the proof.
     First, we consider an ``ideal'' case where the message $\sigma_M$ is uniformly random.
     This corresponds to the two shares, $X$ and $Y$, being independent and uniformly random.
     In this unrealistic case we can invoke the randomness extraction properties of $\IP$ from \cref{fact:IPqpa} to conclude that $\IP(X,Y)$ will still be extremely close to uniformly random even when conditioned on, say, $X$ along the entangled state $W_1$ and the leakage $Z_2$ from $Y$.
     Then, we move to the real case where $\IP(X,Y)=s$ for an arbitrary fixed message $s$ by conditioning the ideal distribution above on this event, which happens with probability $2^{-b}$.
     This incurs a blowup factor of $2^b$ on the final statistical error when moving from the ideal case to the real case, which we can survive.

     Without loss of generality, fix $T=\{1\}$ and some leakage adversary $\cA=(\Phi_2,\ket\psi_{W_1 W_2})$.
     First, we consider the ``ideal'' case where $X$ and $Y$ are independent and uniformly distributed over $\F_q^N$.
    Define $\rho = \rho_{X \hat{X}} \otimes  \rho_{Y \hat{Y}} $, where $\hat{X}$ and $\hat{Y}$ are canonical purification registers of $X$ and $Y$, respectively. 
    Let $Z_2$ be the leakage register after applying the leakage function $\Phi_2$ to the contents of registers $Y$ and $W_2$.
    Invoking \cref{fact:stinespring}, let $V_{\Phi_2} : \cH_Y \otimes \cH_{W_2} \to  \cH_Y \otimes \cH_{Z_2}  \otimes \cH_{W'_2}$ be the Stinespring isometry corresponding to the leakage map $\Phi_2$. Denote by $\gamma$ the state $ \rho \otimes \ketbra{\psi}$ 
    and $\tau$ by the state obtained by applying the leakage attack defined by $\cA$ to registers $(Y,W_2)$, i.e.,
    \begin{equation*}
        \tau = V_{\Phi_2} \gamma V^\dagger_{\Phi_2}.
    \end{equation*}

We begin by lower bounding the min-entropy of certain random variables in order to be able to apply the extractor properties of $\IP$.
By independence of $X$ and $Y$, we have that
\begin{equation}\label{eq:hminboundsnew}
        \hmin{X}{Y\hat{Y}W_2}_\gamma \geq N \log q \quad \textrm{and} \quad \hmin{Y}{X\hat{X} W_1}_\gamma \geq N \log q  .
    \end{equation}
In fact, we claim that we actually have the inequalities
    \begin{equation}\label{eq:hminbounds}
        \hmin{X}{Y\hat{Y} W'_2}_\tau \geq N \log q \quad \textrm{and} \quad \hmin{Y}{X\hat{X} W_1 Z_2}_\tau \geq N \log q -2\ell.
     \end{equation}
The leftmost inequality of \cref{eq:hminbounds} follows by combining the leftmost inequality of \cref{eq:hminboundsnew} with \cref{fact:PostProcHmin}, since 
$\tau_{YZ_2 W'_2} = V_{\Phi_2} \gamma_{YW_2}V^\dagger_{\Phi_2}$. To see the rightmost inequality of \cref{eq:hminbounds}, first note that $\tau_{YX\hat{X}W_1} = \tau_Y \otimes \tau_{X\hat{X}W_1}=U_Y\otimes \tau_{X\hat{X}W_1}$. 
Furthermore, we have that
\begin{equation}\label{eq:imaxUB}
    \imax(Y: X\hat{X}W_1 Z_2)_\tau \leq 2 \vert Z_2 \vert \leq 2\ell,
\end{equation}
where we recall that $\imax$ is the max-information and the inequality follows from \cref{fact:boundnew}. The rightmost inequality of \cref{eq:hminbounds} now follows from \cref{def:maxinfo} and \cref{def:condminentropy} after noting that $\tau_Y = U_Y$.

Now, set $Z=\IP(X,Y)$ and store this string in register $M$. 
Taking into account \cref{eq:hminbounds}, we invoke \cref{fact:IPqpa} with the assignment of states/registers
    (the states/registers on the left below are from the statement of \cref{fact:IPqpa}, while the states/registers on the right are those from this proof)
    \begin{equation*}
        \left(\rho, W_1,W_2,X,Y\right)
    \leftarrow 
    \left( \tau, W_1Z_2 ,W_2',X,Y \right)
    \end{equation*}
to conclude that
     \begin{equation}\label{eq:finalgoalunif}
         \tau_{M X W_1 Z_2} \approx_{\eps'=\eps\cdot 2^{-b}} U_M \otimes \tau_{X W_1 Z_2},
     \end{equation}
     provided that $N\log q = N\cdot b \geq 9b + 2\ell +8\log(1/\eps)+40$.
     
     To conclude the proof, we would like to show that \cref{eq:finalgoalunif} holds not only when $X$ and $Y$ are uniformly random over $\F_q^N$, but also when $X$ and $Y$ are sampled uniformly at random from any preimage $\IP^{-1}(s)$ with $\sigma_M=s\in\F_q$ any fixed message.
     Let $\tau'$ denote the state $\tau$ conditioned on the event that $\tau_M=\sigma_M$.
     Then, $\tau'$ corresponds exactly to the state obtained by sharing $s$ using $2\lrenc$ and then applying the leakage attack defined by $\cA$ above on the resulting shares.
     Since $\Pr[U_M=\sigma_M]=2^{-b}$, conditioning both sides of \cref{eq:finalgoalunif} on the event that $\tau_M=\sigma_M$ yields
     \begin{equation*}
         \tau'_{M X W_1 Z_2} \approx_{\eps'\cdot 2^b=\eps} \sigma_M \otimes \tau_{X W_1 Z_2},
     \end{equation*}
     as desired.
 \end{proof}

\subsubsection{A $2$-out-of-$p$ scheme}\label{sec:2nLRSS}

We now use our $2$-out-of-$2$ augmented leakage-resilient secret sharing scheme to obtain a $2$-out-of-$p$ scheme for any number of parties $p\geq 3$.
This is accomplished by following the high-level approach of Goyal and Kumar~\cite{GK16} in the classical setting.

Let $(2\lrenc,2\lrdec)$ be our secret sharing scheme from \cref{thm:22augmented} for $b$-bit messages with share size $N\cdot b$.
Then, we define our $2$-out-of-$p$ secret sharing scheme $(\lrenc,\lrdec)$ for $b$-bit messages as follows.
To share a message $s\in\bits^b$ among $p$ parties, the sharing procedure $\lrenc(s)$ proceeds as follows:
\begin{enumerate}
    \item For each pair $(i,j)\in[p]\times[p]$ with $i<j$, compute the $2$-out-of-$2$ secret sharing $(X^j_i,X^i_j)\leftarrow 2\lrenc(s)$.

    \item Set the final $i$-th share as $S_i=(X^1_i,X^2_i,\dots,X^{i-1}_i,X^{i+1}_i,\dots,X^p_i)$ for each $i\in[p]$.
\end{enumerate}
For an authorized subset of parties $T$ (i.e., $|T|\geq 2$), the reconstruction procedure $\lrdec(S_T)$ works as follows: 
Pick any two indices $i,j\in T$ such that $i<j$.
Then, output $s=2\lrdec(X^j_i,X^i_j)$.

Note that $(\lrenc,\lrdec)$ above has share size $(p-1)\cdot N\cdot b$.
The following theorem states the secret sharing and leakage-resilience properties of $(\lrenc,\lrdec)$. 
\begin{theorem}\label{thm:2naugmented}
    The coding scheme $(\lrenc,\lrdec)$ is an $(\ell,\eps)$-quantum-secure $2$-out-of-$p$ augmented leakage-resilient secret sharing scheme whenever the share size $(p-1)\cdot N\cdot b$ satisfies
    \begin{equation*}
        (p-1)\cdot N\cdot b\geq (p-1)(9b + 2\ell+ 8 \log(1/\eps)+16\log p+40).
    \end{equation*}
    Moreover, the sharing and reconstruction procedures run in time $\poly(p,N,b)$.  Furthermore, the fixed leaked state in the leakage-resilience property of~\cref{def:2naqlrss}  can be obtained by running the leakage tampering experiment for a uniform input  message.
\end{theorem}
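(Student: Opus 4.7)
The plan is to prove the leakage-resilience clause by induction on $p$, taking Theorem~\ref{thm:22augmented} as the base case $p=2$ and reducing the inductive step to a sequence of $p-1$ ``Case-$1$'' hybrid swaps (each replacing one pair of the form $(1,j)$) followed by a single recursive appeal to the $(p-1)$-party instance of the theorem for parties $\{2,\dots,p\}$. Correctness and the runtime claim are immediate from the construction: for any authorized $T$ with $|T|\ge 2$ and indices $i<j$ in $T$, the pair $(X^j_i,X^i_j)$ lies inside $(S_i,S_j)$ and $2\lrdec$ recovers $s$, while sharing invokes $2\lrenc$ exactly $\binom{p}{2}$ times. For the main clause, without loss of generality fix $T=\{1\}$, an $\ell$-bounded local leakage adversary $\cA=((\Phi_j)_{j\ge 2},\ket{\psi}_{W_1\dots W_p})$, and a classical message $\sigma_M$.

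For the hybrid sequence, define $H_0,\dots,H_{p-1}$ where $H_k$ replaces the sharings of the pairs $(1,2),\dots,(1,k+1)$ by sharings of fresh independent uniform secrets (equivalently, uniform pairs over $(\F_q^N)^2$), so that $H_0$ is the real experiment. To bound $\|\tau_{H_{k-1}}-\tau_{H_k}\|_1$ I condition on the environment $E$ consisting of all other pairs' sharings; conditional on $E$, the view becomes a function only of the pair $(X^{k+1}_1,X^1_{k+1})$ together with the entangled state. I then construct a $2$-out-of-$2$ augmented leakage adversary $\cA^E$ for Theorem~\ref{thm:22augmented} whose authorized side receives $X^{k+1}_1$ together with a combined register consisting of $W_1$ and every $W_j$ for $j\notin\{1,k+1\}$ (so that party $1$ can internally compute every $Z_j$ with $j\ne k+1$ from the hardwired $E$ and the corresponding $W_j$), and whose $\ell$-bit leakage $\Psi^E$ on $(X^1_{k+1},W_{k+1})$ returns $Z_{k+1}=\Phi_{k+1}((X^1_{k+1},\text{components fixed by }E),W_{k+1})$. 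Two applications of Theorem~\ref{thm:22augmented} (once for the message $s$ and once for the uniform message) combined by triangle inequality give $\|\tau_{H_{k-1}\mid E}-\tau_{H_k\mid E}\|_1\le 2\eps_0$, where $\eps_0$ is the $2$-out-of-$2$ error; averaging over $E$ preserves this bound.

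In the terminal hybrid $H_{p-1}$, every component of $S_1$ is an independent uniform element of $\F_q^N$, and all residual dependence on $s$ lives inside $Z_2,\dots,Z_p$. I view this residual experiment as a leakage attack on the $2$-out-of-$(p-1)$ instance of $(\lrenc,\lrdec)$ for parties $\{2,\dots,p\}$ by absorbing each uniform $X^1_j$ into a redefined entangled register $W_j^{(p-1)}$ and placing $W_1$ inside $W_2^{(p-1)}$ for the choice $T'=\{2\}$, repackaging each $\Phi_j$ as an $\ell$-bounded leakage $\Phi_j^{(p-1)}$ on the new share of party $j$. The inductive hypothesis then yields that the $(p-1)$-party adversary's view is $\eps_{p-1}$-close to a fixed state independent of $s$, and post-processing the authorized side by $\Phi_2$ to recover $Z_2$ together with adjoining the independently sampled $S_1$ reconstructs the full $p$-party view in $H_{p-1}$. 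Chaining the two steps yields the recursion $\eps_p\le 2(p-1)\eps_0+\eps_{p-1}$ with base $\eps_2=\eps_0$, which unrolls to $\eps_p=O(p^2)\eps_0$; substituting $\eps_0=\Theta(\eps/p^2)$ into the parameter bound of Theorem~\ref{thm:22augmented} produces the extra $16\log p$ slack in the share-size requirement.

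The main subtlety is the bookkeeping of the entangled state when repackaging the $p$-party adversary into a $2$-out-of-$2$ adversary for each hybrid step and again into a $(p-1)$-party adversary for the recursion: Case-$2$ pairs $(i,j)\subseteq\{2,\dots,p\}$ are never switched by an explicit $2$-out-of-$2$ step but are handled implicitly by the recursive invocation, and one must verify that stashing extra $W_j$ registers on the authorized side (which is permitted since augmented leakage-resilience allows an arbitrary quantum register there) does not inflate the single $\ell$-bit leakage that genuinely crosses the partition. Finally, the claim that $\gamma^{\cA_T}$ equals the leakage experiment on a uniform message follows immediately from the leakage-resilience guarantee itself: instantiating Definition~\ref{def:2naqlrss} at $\sigma_M=U_M$ and tracing out $\hat M$ exhibits the uniform-message view as an $\eps$-approximation of $\gamma^{\cA_T}$, which may be taken as its definition.
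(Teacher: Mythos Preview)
Your approach is correct but differs from the paper's in organization. The paper does a single flat hybrid over all $\binom{p}{2}$ pairs in lexicographic order: hybrid $H^s_{i,j}$ replaces every pair up to $(i,j)$ by an independent uniform pair, and for each consecutive step one conditions on the remaining pairs (exactly as you do) and invokes \cref{thm:22augmented} once. After the last hybrid $H^s_{p-1,p}$ every pair is uniform, so the terminal state is manifestly the uniform-message experiment and serves directly as $\gamma^{\cA_T}$; there is no recursion.

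Your inductive route---swap only the $p-1$ pairs $(1,j)$, then repackage the residual dependence on $s$ as a full $(p-1)$-party instance and recurse---also works, and your bookkeeping (absorbing each uniform $X^1_j$ into $W^{(p-1)}_j$, parking $W_1$ inside $W^{(p-1)}_2$, purifying the now-mixed entangled state) is sound. The recursion $\eps_p\le 2(p-1)\eps_0+\eps_{p-1}$ unrolls to $O(p^2)\eps_0$, matching the paper's $\binom{p}{2}\cdot O(\eps_0)$ and yielding the same $16\log p$ slack. The trade-off is that your argument carries more overhead: the recursive repackaging and the purification of the absorbed classical randomness are exactly the ``main subtlety'' you flag, whereas the paper sidesteps all of it by simply continuing the hybrid through the remaining $\binom{p-1}{2}$ pairs. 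Your final paragraph (redefining $\gamma^{\cA_T}$ as the uniform-message view via one more application of the already-proved guarantee) is a valid shortcut, though in fact the $\gamma$ produced by your recursion already \emph{equals} the uniform-message experiment, so the redefinition is unnecessary.
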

\begin{proof}
    The correctness of this coding scheme is straightforward.
    Therefore, we focus on showing leakage-resilience, which follows via a hybrid argument analogous to that of~\cite[Theorem 6]{GK16}.

    Without loss of generality we may set $T=\{1\}$.
    Fix a secret $s\in\F_q$ and an arbitrary leakage adversary $\cA=(\Phi_2,\dots,\Phi_p,\ket\psi_{W_1 W_2\dots W_p})$, where $\Phi_i : \cL(\cH_{S_i} \otimes \cH_{W_i} ) \to \cL(\cH_{Z_i} )$ with $Z_i$ being the leakage register for every $i \in [2,p]$.

  Let $\sigma^s_{S_1S_2 \ldots S_p} $ where $S_i \equiv X^1_i\dots X^{i-1}_iX^{i+1}_i\dots X^p_i$ be the $p$ shares corresponding to the encoding of secret $s$, and let 
    \[
        \rho^s_{S_1W_1Z_2 \dots Z_p} = (\Phi_2 \otimes \ldots \otimes  \Phi_p)  (\sigma^s_{S_1S_2 \ldots S_p} \otimes \ketbra{\psi}_{W_1 W_2\dots W_p}). 
    \]
 Let $H^s_{0,0}$ denote the original leakage experiment when run on $\sigma^s_{S_1S_2 \ldots S_p}$, which outputs $\rho^s_{S_1W_1Z_2 \dots Z_p}$. 
    For all pairs $(i,j)\in[p]\times[p]$ with $i<j$, let $Y^j_i$ and $Y^i_j$ be independent and uniformly distributed over $\F_q^N$. We consider hybrids $H^s_{i,j}$ where for all pairs $(i',j')$ up to $(i,j)$ in lexicographic order we replace $X^j_i$ and $X^i_j$ by $Y^j_i$ and $Y^i_j$, respectively.
    
    Note that the output of $H^s_{p-1,p}$ is independent of the secret $s$, since it can be obtained by running the leakage tampering experiment for a uniform input message. Furthermore, there are $\binom{p}{2} < p^2$ hybrids. 
    Therefore, the theorem statement will follow if we show that the outputs of any two consecutive hybrids are $(\eps/p^2)$-close in trace distance and then repeatedly apply the triangle inequality.
    We show here that
    \begin{equation}\label{eq:hybclose}
        H^s_{0,0}\approx_{\eps/p^2} H^s_{1,2}.
    \end{equation}
    The desired analogous statement for any given pair of consecutive hybrids follows in exactly the same manner.
    Recall that the only thing that changes from $H^s_{0,0}$ to $H^s_{1,2}$ is that $X^2_1$ and $X^1_2$ are replaced by $Y^2_1$ and $Y^1_2$, respectively.
    Furthermore, observe that, since $s$ is fixed, we may sample any other pair $(X^j_i,X^i_j)\leftarrow\IP^{-1}(s)$ for index pairs $(i,j)\neq (1,2)$ using randomness independent of $(X^2_1,X^1_2,Y^2_1,Y^1_2)$.
    Consider an arbitrary fixing $(X^j_i,X^i_j)=(x^j_i,x^i_j)$ of all pairs $(i,j)\neq (1,2)$ with $i<j$ such that $\IP(x^j_i,x^i_j)=s$. Let us denote the arbitrary fixing as $\mathsf{fix}$, and $\sigma^{\mathsf{fix}}_{\mathsf{rest}}$  to mean that the rest of the registers except $(X_1^2, X_2^1)$ are set to $\mathsf{fix}$. Let $ H^{s,\mathsf{fix}}_{0,0}$ be the original tampering experiment when run after fixing $\mathsf{fix}$.
    Then, we have that
    \begin{equation*}
        H^{s,\mathsf{fix}}_{0,0} =  (\Phi_2 \otimes \ldots \otimes  \Phi_p) ( \sigma^s_{X_1^2 X_2^1} \otimes \sigma^{\mathsf{fix}}_{\mathsf{rest}}  \otimes \ketbra{\psi}_{W_1 \cdots W_p}   )
    \end{equation*}
    and
    \begin{equation*}
        H^{s,\mathsf{fix}}_{1,2} = (\Phi_2 \otimes \ldots \otimes  \Phi_p) (  U_{Y_1^2} \otimes U_{Y_2^1}  \otimes \sigma^{\mathsf{fix}}_{\mathsf{rest}}  \otimes \ketbra{\psi}_{W_1 \cdots W_p}   ).
    \end{equation*}
    Furthermore, it holds that
    \[ H^{s}_{0,0}  = \E_{\mathsf{fix}} H^{s,\mathsf{fix}}_{0,0} \quad \textrm{and} \quad H^{s}_{1,2}  = \E_{\mathsf{fix}} H^{s,\mathsf{fix}}_{1,2},  \]where $\E_{\mathsf{fix}}$ stands for expectation taken over all the fixings. If we show that for every fixing $\mathsf{fix}$, $ H^{s,\mathsf{fix}}_{0,0} \approx_{\eps/p^2} H^{s,\mathsf{fix}}_{1,2},$ we will be done as this implies $ H^{s\mathsf{}}_{0,0} \approx_{\eps/p^2} H^{s\mathsf{}}_{1,2}$ by invoking \cref{fact:traceconvex}. We now proceed to show that $ H^{s,\mathsf{fix}}_{0,0} \approx_{\eps/p^2} H^{s,\mathsf{fix}}_{1,2}.$

    By post-processing of trace distance (\cref{fact:data}) it suffices to show that
    \begin{equation}\label{eq:preprocess}
       \Phi_2  ( \sigma^s_{X_1^2 X_2^1} \otimes \sigma^{\mathsf{fix}}_{\mathsf{rest}}  \otimes \ketbra{\psi}_{W_1 \cdots W_p}   ) \approx_{\eps/p^2} \Phi_2  (  U_{Y_1^2} \otimes U_{Y_2^1}  \otimes \sigma^{\mathsf{fix}}_{\mathsf{rest}}  \otimes \ketbra{\psi}_{W_1 \cdots W_p}  ).
    \end{equation}
    Noting that $\Phi_2$ is an $\ell$-bounded leakage function we see that \cref{eq:preprocess} follows from \cref{thm:22augmented} and the definition of quantum-secure augmented leakage-resilient secrete sharing, provided that $N\log q =  N\cdot b \geq 9b + 2\ell +8\log(1/\eps)+ 16 \log p+40$. This completes the proof. 
\end{proof}

\subsection{Our candidate non-malleable secret sharing scheme for quantum messages}\label{sec:qnmssscheme}

We proceed to leverage our split-state non-malleable code for quantum messages from \cref{thm:mainqnmc1} and quantum-secure augmented leakage-resilient secret sharing scheme for classical messages from \cref{thm:2naugmented} to construct non-malleable threshold secret sharing schemes for quantum messages.
The high-level approach we follow can be seen as the quantum analogue of the construction of classical non-malleable secret sharing schemes due to Goyal and Kumar~\cite{GK16}.
We exploit two crucial properties of our split-state non-malleable code for quantum messages: First, one of its states is fully classical.
Second, it is also a $2$-out-of-$2$ secret sharing scheme for quantum messages.

Fix a threshold $t\geq 3$ and a number of parties $p$ such that $t\leq p\leq 2t-1$.
We will require the following objects:
\begin{itemize}
    \item A $2$-out-of-$2$ $(\epsnm=\eps,\epspriv=\eps)$-non-malleable secret sharing scheme $(2\nmenc,2\nmdec)$ for quantum messages of length $b$ with a quantum left share of length $b_1$ and a classical right share of length $b_2$, guaranteed by \cref{thm:22nmss};

    \item A $t$-out-of-$p$ secret sharing scheme $(\qshare,\qrec)$ for quantum messages of length $b_1$ with shares of length at most $\ell$, guaranteed by \cref{fact:qss};

    \item A $2$-out-of-$p$ $(\ell,\epslk)$-quantum-secure 
    augmented leakage-resilient secret sharing scheme $(\lrenc,\lrdec)$ for classical messages of length $b_2$, guaranteed by \cref{thm:2naugmented}.
\end{itemize}

We proceed to describe our candidate $t$-out-of-$p$ scheme $(\nmshare,\nmrec)$.
On input a quantum message $\sigma_M$ with canonical purification $\sigma_{M\hat{M}}$, the sharing algorithm $\nmshare(\sigma)$ proceeds as follows:
\begin{enumerate}
    \item Compute the split-state encoding $\rho_{LR}=2\nmshare(\sigma_M)$, where $R$ is a classical register;

     \item Apply $\qshare$ to the contents of register $L$ to obtain $p$ quantum shares stored 
    \item Apply $\lrenc$ to the contents of register $R$ to obtain $p$ classical shares stored in registers $R_1,\dots,R_p$; 

    \item Form the $i$-th final share $S_i=(L_i,R_i)$.
\end{enumerate}

The correctness and runtime of this coding scheme are clear.
In the following sections we establish its statistical privacy and non-malleability.

\subsection{Statistical privacy}\label{sec:statpriv}

Consider an arbitrary message $\sigma_M$ with canonical purification $\sigma_{M\hat{M}}$.
Let $\rho_{\hat{M} LR} = 2\nmshare(\sigma_{M\hat{M}})$ be the split-state encoding of $\sigma$ with $L$ the quantum register and $R$ the classical register. Let $\gamma = \qshare(\rho)$ denote the state after encoding the contents of register $L$ using the $t$-out-of-$p$ secret sharing scheme $(\qshare,\qrec)$ for quantum messages, and let $\theta= \nmshare(\sigma_{M\hat{M}})= \lrenc (\gamma)$ denote the final state after encoding $\sigma$.
In order to establish the desired privacy property for $(\nmshare,\nmrec)$, we must show that for any subset of parties $T\subseteq[p]$ of size $|T|\leq t-1$ it holds that
\begin{equation*}
    \theta_{\hat{M}S_T}\approx_{\epspriv} \sigma_{\hat{M}}\otimes \zeta_{S_T},
\end{equation*}
for a fixed state $\zeta_{S_T}$ (recall the notation $S_T = (S_i)_{i \in T}$).

Since $\gamma = \qshare(\rho)$, invoking \cref{fact:qsshiding} with the assignment of registers (where the registers on the left are those from the statement of \cref{fact:qsshiding} and the registers on the right are the registers used in this proof)
\begin{equation*}
    (\rho, \sigma, \sigma_{E}, \sigma_M) \leftarrow ( \gamma, \rho, \rho_{\hat{M}R}, \rho_{L})
\end{equation*}
ensures that
\begin{equation}\label{eq:leftsharesindep}
    \gamma_{\hat{M}R L_{T}}  \equiv \rho_{\hat{M}R} \otimes U_{L_T}.
\end{equation} 
Since $(2\nmshare,2\nmrec)$ satisfies $\epspriv$-privacy, we also have that
\begin{equation}\label{eq:22privacy}
   \rho_{\hat{M}R} \approx_{\epspriv}  \sigma_{\hat{M}} \otimes  U_{R}.
\end{equation}
Thus, combining \cref{eq:leftsharesindep,eq:22privacy} with the triangle inequality yields
\begin{equation*}
    \gamma_{\hat{M}RL_{T}} \approx_{\epspriv}  \sigma_{\hat{M}} \otimes  U_{R}\otimes U_{L_T}.
\end{equation*}
Since $\theta_{} =  \lrenc (\gamma)$, the post-processing property of trace distance (\cref{fact:data}) implies that
\begin{equation}\label{eq:statprivacy1}
    \theta_{\hat{M} R_{T}L_{T}}  \approx_{\epspriv} \sigma_{\hat{M}} \otimes  \lrenc (U_R)_{R_T} \otimes U_{L_{T}}.
\end{equation}
We conclude that
\begin{equation*}
    \theta_{\hat{M} S_T}\equiv \theta_{\hat{M} R_{T}L_{T}}  \approx_{\epspriv} \sigma_{\hat{M}} \otimes \zeta_{S_{T}},
\end{equation*}
where $\zeta_{S_{T}}=  \lrenc (U_R)_{R_T} \otimes U_{L_{T}}$ is the fixed state.

\subsection{Average-case non-malleability}
\label{sec:avgnmqss}

We proceed to show average-case non-malleability for our coding scheme $(\nmshare,\nmrec)$.
In this case we take $\sigma_M=U_M$, and so the canonical purification $\sigma_{M\hat{M}}$ is maximally entangled.
Without loss of generality, we may take $T=\{1,\dots,t\}$.

Fix an arbitrary tampering adversary $\cA=(U_1,U_2, \ldots, U_t , \ketbra{\psi}_{E_1E_2 \ldots E_p})$, where the quantum registers $E_1,\dots, E_p$ store an arbitrary quantum state to be used as shared entanglement across the several tampering adversaries.
Our goal is to reduce the non-malleability of $(\nmshare,\nmrec)$ to the non-malleability of $(2\nmshare,2\nmrec)$.
In order to do this, we show that if $\cA$ breaks the  $(\epsnm+2 \sqrt{\epslk})$-non-malleability of $(\nmshare,\nmrec)$ via the authorized subset $T$, then we can build, using $\cA$, a split-state adversary $\cA'=(U,V,\ketbra\gamma_{W_1W_2})$ that breaks the $\epsnm$-non-malleability of $(2\nmshare,2\nmrec)$.
Since we know that $(2\nmshare,2\nmrec)$ is $\epsnm$-non-malleable, the desired result follows.

To build the $2$-split-state tampering adversary $\cA'$ from our original adversary $\cA$, we would like to follow the high-level approach of Goyal and Kumar~\cite{GK16} in the classical setting and extend it to the quantum setting. 
However, as already mentioned in \cref{sec:techoverviewNMSS}, naively replicating this argument in the quantum setting is not possible since the quantum side information cannot be fixed.
We require a quantum-secure \emph{augmented} leakage-resilient secret sharing scheme to overcome this issue.
Another thing that complicates the approach in the quantum setting is
that the register $L$ in the split-state encoding $(L,R)$ of the message $\sigma_M$ is quantum and so is lost after tampering. 
We get around this issue by considering ``coherent copies'' of $L$ and $R$ which we will be able to access in our analysis even after $L$ has been destroyed.

To be more precise and because it will be useful during the following argument to generate coherent copies of registers, we recall some properties of $(2\nmshare,2\nmrec)$. 
First, $2\nmshare(\sigma_M)$ is composed of two parts, $R$ and $L=(L^1,L^2)$, where $R=U_R$ and $L^1=U_{L^1}$ are classical registers, and $L^2$ is the quantum register. 
Moreover, we have that $L^2 = C_K \sigma_M C_K^\dagger =U_{L^2}$ with $K=\nmext(R,L^1)$, where $\nmext$ is the explicit quantum-secure 2-source non-malleable extractor from \cref{lem:qnmcodesfromnmext}, and $C_K$ is an appropriate Clifford gate. It follows from the procedure that 
\begin{equation}\label{eq:qnmssencodingrho}
   2\nmshare(\sigma_M)=  U_L \otimes U_R.
\end{equation}

Note that $2\nmshare$ is a CPTP map and not an isometry -- to access ``coherent copies'' of $L$ and $R$, we will have to consider the Stinespring isometry extention of $2\nmshare$. 
Let  $V_{2\nmshare} :  \cH_M \to (\cH_{L} \otimes  \cH_{R}   \otimes \cH_{ F})$ be the Stinespring isometry extension (see \cref{fact:stinespring}) of $2\nmshare$, where the register $F$ in the definition of $V_{2\nmshare}$  is the external ancilla register required to implement the operation $2\nmshare$ as an isometry. 
Let $ \nu  = V_{2\nmshare} \sigma_{M \hat{M}} V^\dagger_{2\nmshare}.$ 
Observe that $\nu$ is a pure state and  $2\nmshare(\sigma_M)=  \tr_{\hat{M}F} \left( V_{2\nmshare} \sigma_{M \hat{M}} V^\dagger_{2\nmshare} \right). $ Combining this with \cref{eq:qnmssencodingrho}, we have that
\begin{equation}\label{eq:qnmssencodingphi}
   \tr_{\hat{M}F} \left(\nu \right) =  U_L \otimes U_R.
\end{equation}

Let $\rho_{L\hat{L}R\hat{R} }$ be the canonical purification of $\rho_{LR} = U_L \otimes U_R$.
We now invoke Uhlmann's theorem (\cref{fact:uhlmann}) with the following assignment of registers (where the registers on the left are those from the statement of \cref{fact:uhlmann} and the registers on the right are the registers used in this proof):
\begin{equation*}
    (\sigma_A, \rho_A, \sigma_{AC}, \rho_{AB}) \leftarrow ( \nu_{LR} ,  \rho_{LR}, \nu_{LRF\hat{M} }, \rho_{LR\hat{L}\hat{R}}).
\end{equation*}
This guarantees the existence of an isometry $\mathsf{Uhlmann} : \cH_{\hat{M}} \otimes \cH_F  \to \cH_{\hat{L}} \otimes \cH_{\hat{R}}$ such that
\begin{equation*}
   \Delta_B \left( \rho_{LR\hat{L}\hat{R}} , (\mathsf{Uhlmann}) \nu_{LRF\hat{M} } (\mathsf{Uhlmann}^\dagger) \right) = \Delta_B( \rho_{LR},  \nu_{LR}) =0,
\end{equation*}
 where the rightmost inequality holds because $\rho_{LR} = \nu_{LR} = U_L \otimes U_R$. 
 Thus, we conclude that the state $\rho$ as depicted in \cref{fig:22nmssq} satisfies
\begin{equation}\label{eq:qnmssrho}
    \rho_{L\hat{L}R\hat{R} } = \rho_{L\hat{L}} \otimes \rho_{ R\hat{R} }  \quad \textrm{and} \quad  \rho_{LR} = U_L \otimes U_R.
\end{equation}

With access to coherent copies of $\hat{L}, \hat{R}$ in state $\rho$, we now move on to consider the two relevant tampering experiments are described in \cref{fig:22nmssq,fig:nmssqq1} for some adversaries $\cA'$ and $\cA$, respectively.
We will refer to states in those figures throughout our argument.
First, observe that the state $\rho$ is the same in \cref{fig:22nmssq,fig:nmssqq1}. 
Moreover, the procedures applied to the states $\tau'$ and $\tau$ in \cref{fig:22nmssq,fig:nmssqq1}, respectively, are also the same.
This means that in order to prove the desired result it suffices to show that given any adversary $\cA$ in \cref{fig:nmssqq1} we can come up with a corresponding $2$-split-state adversary $\mathcal{A}' = (U,V, \ketbra{\gamma}_{W_1W_2})$ (which transforms the state $\rho$ into $\tau'$ in \cref{fig:22nmssq}) such  that $\tau'$ in \cref{fig:22nmssq} is  appropriately close (in trace distance) to the state $\tau$ in \cref{fig:nmssqq1}.

As a stepping stone towards defining $\cA'$, we begin by studying the state $\theta$ in \cref{fig:nmssqq1}, which is obtained from $\rho$ through an application of the $\qshare$ and $\lrenc$ procedures.
Note that $\theta$ is not pure -- to facilitate the analysis, we will from here onwards work with the pure state extension of $\theta$, which, because the context is clear, we will also call $\theta$.
More precisely, let  $V_{\qshare} :  \cH_L \to (\cH_{L_1} \otimes \cdots \otimes \cH_{L_p}   \otimes \cH_{ \tilde{L}})$ be the Stinespring isometry extension (see \cref{fact:stinespring}) of $\qshare$, and let
$V_{\lrenc} :  \cH_R \to (\cH_{R_1} \otimes \cdots \otimes \cH_{R_p}   \otimes \cH_{ \tilde{R}_1} \otimes \cdots \otimes \cH_{\tilde{R}_p})$ be the Stinespring isometry extension (see \cref{fact:stinespringclassicalmap}) of $\lrenc$.
The registers $\tilde{L}$ and $\tilde{R}_1,\dots,\tilde{R}_p$ in the definitions of $V_{\qshare}$ and $V_{\lrenc}$, respectively, are the external ancilla registers required to implement the operations $\lrenc$ and $\qshare$ (which are CPTP maps) as isometries. Then, we define
\begin{equation*}
    \theta \defeq (V_{\qshare}  \otimes V_{\lrenc} ) \rho (V^\dagger_{\qshare}  \otimes V^\dagger_{\lrenc} ).
\end{equation*}

We also consider the following intermediate state $\theta'$ obtained from $\theta$ by tampering with the first $t-1$ shares according to $\cA$, i.e.,
\begin{equation*}\label{eq:qnmssthetaprime}
    \theta' \defeq (U_1 \otimes \ldots \otimes U_{t-1} \otimes \id_t \otimes \ldots \id_p ) (\theta \otimes \ketbra{\psi}_{E_1 \ldots E_p}) (U^\dagger_{1}  \otimes  \ldots \otimes U^\dagger_{t-1}\otimes \id^\dagger_t \otimes \ldots \id^\dagger_p  ).
\end{equation*}
The registers associated with $\theta'$ are found in \cref{fig:nmssqq1} after applying $U_1,\dots,U_{t-1}$ to $\theta$.
In what follows we will need to refer to several registers of $\theta'$.
To help with readability, we define
\begin{equation*}
    W'_1 \equiv R_t L_{[t-1]} E_{[t,p]}  \tilde{L}_{} L_{[t,p]} \quad \textrm{and}  \quad  W'_2 \equiv  R_{[t-1]}E_{[t-1]} R_{[t+1,p]} \tilde{R}_{[p]}  
\end{equation*}
as shorthand, where we recall that $[t,p]$ denotes the set $\{t,t+1,\dots,p\}$.
The notation $W'_1$ and $W'_2$ is justified by the fact that we will use $\gamma_{W_1 W_2}\equiv \theta'_{W_1 W_2}$ with
\begin{equation*}
    W_1 \equiv R_t L_{[t-1]} E_{[t,p]} \hat{L} \tilde{L} L_{[t,p]} \quad \textrm{and}  \quad W_2 \equiv  R_{[t-1]} E_{[t-1]} \hat{R} R_{[t+1,p]} \tilde{R}_{[p]}  ,
\end{equation*}
obtained by adding $\hat{R}$ to $W'_2$ and $\hat{L}$ to $W'_1$, respectively, as the shared entanglement for the $2$-split-state adversary $\cA'$. 

We now present some important properties of the state $\theta'$. Their proofs appear later in \cref{sec:qss1,sec:qss2}.
The first claim states that the coherent copy $\hat{L}$ defined above is independent of $\hat{R}$ even when $W'_2$ is revealed as side information.
\begin{restatable}{claim}{qssone}\label{claim:qss1}
    We have that
    \begin{equation*}
        \theta'_{\hat{L} \hat{R}W'_2 } =  \theta'_{\hat{L}} \otimes   \theta'_{\hat{R}W'_2}.
    \end{equation*}
\end{restatable}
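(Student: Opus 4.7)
The plan is to exploit a key property of the quantum secret-sharing scheme $\qshare$: the purification register $\hat{L}$ of the left share $L$ is in tensor product with any $t-1$ of the resulting shares $L_1,\dots,L_p$. Since the tampering $U_1\otimes\cdots\otimes U_{t-1}$ acts only on $L_{[t-1]}R_{[t-1]}E_{[t-1]}$, and the post-tampering contents of $L_{[t-1]}$ are eventually discarded as part of $W'_1$, the register $\hat{L}$ passes through the experiment unaffected and remains in tensor product with the rest of the output.

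The first step is to write the pure state $\theta$ as $\theta=\theta_L\otimes\theta_R$, where $\theta_L\defeq V_{\qshare}\rho_{L\hat{L}}V^\dagger_{\qshare}$ lives on $\hat{L}L_{[p]}\tilde{L}$ and $\theta_R\defeq V_{\lrenc}\rho_{R\hat{R}}V^\dagger_{\lrenc}$ lives on $\hat{R}R_{[p]}\tilde{R}_{[p]}$. This factorization follows from $\rho_{L\hat{L}R\hat{R}}=\rho_{L\hat{L}}\otimes\rho_{R\hat{R}}$ (see~\cref{eq:qnmssrho}) together with the fact that $V_{\qshare}$ and $V_{\lrenc}$ act on disjoint registers and do not touch $\hat{L}$ or $\hat{R}$. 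Since $\rho_L=U_L$, the state $\rho_{L\hat{L}}$ is a canonical purification of a maximally mixed state, so $\rho_{\hat{L}}=U_{\hat{L}}$. I would then apply~\cref{fact:qsshiding} to the input $\rho_{\hat{L}L}$ (with side-information register $E\leftarrow\hat{L}$) to deduce
\begin{equation*}
    [\theta_L]_{\hat{L}L_{[t-1]}}=\rho_{\hat{L}}\otimes U_{L_{[t-1]}}=U_{\hat{L}}\otimes U_{L_{[t-1]}}.
\end{equation*}

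The second step is to commute the partial traces past the tampering. The unitary $U\defeq U_1\otimes\cdots\otimes U_{t-1}$ used to define $\theta'$ acts trivially on $R_t,L_{[t,p]},E_{[t,p]},\tilde{L}$, so these partial traces may be performed on $\theta\otimes\ketbra{\psi}_{E_{[p]}}$ before applying $U$. Using the factorization above and writing $\psi'\defeq\tr_{E_{[t,p]}}\ketbra{\psi}$, the resulting pre-tampering state is
\begin{equation*}
    U_{\hat{L}}\otimes U_{L_{[t-1]}}\otimes[\theta_R]_{\hat{R}R_{[t-1]}R_{[t+1,p]}\tilde{R}_{[p]}}\otimes\psi'_{E_{[t-1]}}.
\end{equation*}
Applying $U$ on $L_{[t-1]}R_{[t-1]}E_{[t-1]}$ and then tracing out $L_{[t-1]}$ is a CPTP map that touches only the tensor factors other than $\hat{L}$; hence $\hat{L}$ remains factored off, yielding $\theta'_{\hat{L}\hat{R}W'_2}=U_{\hat{L}}\otimes\Omega_{\hat{R}W'_2}$ for some state $\Omega$. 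Reading off the marginals gives $\theta'_{\hat{L}}=U_{\hat{L}}$ and $\theta'_{\hat{R}W'_2}=\Omega$, and the claim follows. No genuine obstacle arises beyond careful register bookkeeping, since the heart of the argument is the perfect privacy property of $\qshare$ with entangled side information (\cref{fact:qsshiding}), which holds because $t-1$ shares reveal nothing even in the presence of a reference system.
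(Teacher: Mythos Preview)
Your proof is correct and follows essentially the same approach as the paper: both arguments factor $\theta$ as $\theta_L\otimes\theta_R$ using \cref{eq:qnmssrho}, invoke the privacy property \cref{fact:qsshiding} to get $[\theta_L]_{\hat{L}L_{[t-1]}}=\theta_{\hat{L}}\otimes U_{L_{[t-1]}}$, and then observe that the tampering $U_1\otimes\cdots\otimes U_{t-1}$ followed by the relevant partial traces is a CPTP map not acting on $\hat{L}$, so the tensor factorization survives. Your version is slightly more explicit about which partial traces commute past the tampering, while the paper compresses this into two invocations of data-processing (\cref{fact:data}), but the content is identical.
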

The second claim states that $\hat{R}$ is close to being independent of $\hat{L}$ given the side information $W'_1$.
This uses the augmented leakage-resilience property of the $(\lrenc,\lrdec)$ scheme.
\begin{restatable}{claim}{qsstwo}\label{claim:qss2}
    We have that
    \begin{equation*}
        \theta'_{\hat{R} \hat{L}W'_1 } \approx_{\epslk} \theta'_{\hat{R} } \otimes \theta'_{\hat{L} W'_1 }.
    \end{equation*}
\end{restatable}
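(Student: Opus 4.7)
The plan is to reduce the claim to a direct application of the augmented leakage-resilience of $(\lrenc,\lrdec)$ guaranteed by \cref{thm:2naugmented}. Specifically, I will exhibit an $\ell$-bounded local leakage adversary $\cA'$ attacking $(\lrenc,\lrdec)$ with authorized set $T_R=\{t\}$ whose induced view coincides exactly with the marginal $\theta'_{\hat{R}\hat{L}W'_1}$, and then invoke \cref{def:2naqlrss} to deduce independence of $\hat{R}$ from that view.

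To build $\cA'$, I will set the entanglement registers as $W_j\equiv L_jE_j$ for $j\in[t-1]$, $W_t\equiv \hat{L}\tilde{L}L_{[t,p]}E_{[t,p]}$, and $W_j$ trivial for $j\in[t+1,p]$. The shared entangled state across these registers is taken to be the marginal of $\theta\otimes\ketbra{\psi}_{E_{[p]}}$ on $\hat{L}\tilde{L}L_{[p]}E_{[p]}$; by \cref{eq:qnmssrho} together with the fact that $V_{\qshare}$ acts only on the $L$-side while $V_{\lrenc}$ acts only on the $R$-side, this marginal factorizes as $(V_{\qshare}\rho_{L\hat{L}}V_{\qshare}^\dagger)\otimes\ketbra{\psi}_{E_{[p]}}$, which is pure and completely independent of $R,\hat{R},\tilde{R}_{[p]}$, as required by \cref{def:2naqlrss}. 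The leakage functions are: for $j\in[t-1]$, $\Phi_j$ applies the tampering unitary $U_j$ to $L_jR_jE_j$ and then outputs only the tampered register $L'_j$ (tracing out the post-tampering contents of $R_j,E_j$), which respects the $\ell$-bit leakage budget since $|L_j|\leq\ell$ by the parameter setup of \cref{sec:qnmssscheme}; for $j\in[t+1,p]$, $\Phi_j$ is the zero-output map.

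Next, I will check that the state produced by this leakage experiment on the uniform message $R=U_R$ coincides with the desired marginal of $\theta'$: the message-purification register $\hat{M}$ corresponds to $\hat{R}$, and reassembling $S_{T_R}W_{T_R}(Z_j)_{j\neq t}=R_t\cdot(\hat{L}\tilde{L}L_{[t,p]}E_{[t,p]})\cdot L'_{[t-1]}$ gives precisely $\hat{L}W'_1$. Since our input $R$ is uniform, the ``furthermore'' clause of \cref{thm:2naugmented} identifies the fixed leaked state with $\gamma^{\cA'}=\theta'_{\hat{L}W'_1}$ exactly, while trivially $\sigma_{\hat{R}}=\theta'_{\hat{R}}=U_{\hat{R}}$. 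Invoking the leakage-resilience property from \cref{def:2naqlrss} then yields
\[\theta'_{\hat{R}\hat{L}W'_1}\approx_{\epslk}\theta'_{\hat{R}}\otimes\theta'_{\hat{L}W'_1},\]
which is exactly the claim.

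The main obstacle, such as it is, lies in the register bookkeeping: carefully matching the quantum registers arising from the Stinespring dilations of $\qshare$ and $\lrenc$ against those in the LRSS leakage experiment, confirming purity and $R$-independence of the constructed entanglement, and verifying that each per-party leakage output $L'_j$ fits within the $\ell$-bit bound of the augmented LRSS scheme (which is handled once and for all by the parameter setup in \cref{sec:qnmssscheme}). Once these identifications are in place, the conclusion follows without any further analysis.
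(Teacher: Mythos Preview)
Your proposal is correct and takes essentially the same approach as the paper's proof: both reduce the claim to a single invocation of the augmented leakage-resilience of $(\lrenc,\lrdec)$ from \cref{thm:2naugmented}, with the distinguisher holding share $R_t$ together with entanglement $W_t=\hat{L}\tilde{L}L_{[t,p]}E_{[t,p]}$, and with the $j$-th leakage adversary (for $j\in[t-1]$) applying $U_j$ to $(L_j,R_j,E_j)$ and outputting the tampered $L_j$. Your write-up is in fact slightly cleaner than the paper's on two points: you correctly identify the unauthorized set as the singleton $\{t\}$ (the paper writes $T=[t-1]$, which conflicts with its own convention that leakage maps are indexed by $j\notin T$), and you explicitly invoke the ``furthermore'' clause of \cref{thm:2naugmented} to pin down the fixed leaked state as $\theta'_{\hat{L}W'_1}$.
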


We use these claims as starting points towards constructing the tampering maps $U$ and $V$ of the $2$-split-state adversary $\cA'$.
First, we show that there exist maps $U' : \cH_{L} \otimes \cH_{W_1 } \to \cH_{W'_1 }$ and $V' : \cH_{R  } \otimes \cH_{W_2} \to \cH_{W'_2}$  and a pure state $\gamma_{W_1  W_2}$ that depends only on $(U_1, U_2, \ldots , U_{t-1}, \ketbra{\psi}_{E_1 \ldots E_p} )$ such that
\begin{equation}\label{eq:thetaprimeleak}
    \theta'_{W_1W_2} \approx_{2\sqrt{\epslk}} (U' \otimes V')   ( \rho \otimes \gamma_{W_1 W_2} ) ((U')^\dagger \otimes (V')^\dagger).
\end{equation}
\cref{eq:thetaprimeleak} is a consequence of a careful instantiation of the following fact.
A variation of the fact below in terms of mutual information already appears in~\cite[Lemma II.15]{JPY14}. For the sake of completeness, we provide a proof of the version we need in \cref{sec:qsstechnical}.
\begin{restatable}{fact}{qsstechnical}\label{lem:qsstechnical}
    Let $\phi_{XABY}$ be a pure state such that 
    \begin{equation*}
        \phi_{X BY  } \approx_{\eps_1} \phi_{X  } \otimes \phi_{ BY  } \quad \textrm{and} \quad \phi_{YAX   } \approx_{\eps_2} \phi_{Y  } \otimes \phi_{AX} .
    \end{equation*}
    Furthermore, let $\kappa$ be a pure state such that
    \begin{equation*}
        \kappa_{X_1 \hat{X}_1 Y_1 \hat{Y}_1} = \kappa_{X_1 \hat{X}_1  } \otimes \kappa_{Y_1 \hat{Y}_1},
    \end{equation*}
    where $\hat{X}_1$ and $\hat{Y}_1$ are the canonical purification registers of $X_1$ and $Y_1$, respectively, and
    \begin{equation*}
        \kappa_{X_1} \equiv \phi_{X} \quad \textrm{and} \quad \kappa_{Y_1} \equiv \phi_Y.
    \end{equation*} 
    Then, there exist two isometries $U' : \cH_{\hat{X}_1} \otimes \cH_A \otimes \cH_{X} \to \cH_{A}$ and $V' : \cH_{\hat{Y}_1} \otimes \cH_B \otimes \cH_{Y} \to \cH_{B}$ alongside a pure state $\gamma_{XABY}$ such that 
    \begin{equation*}
        \phi_{XABY} \approx_{2(\sqrt{\eps_1}  + \sqrt{\eps_2})} (U' \otimes V') ( \kappa \otimes \gamma) ((U')^\dagger \otimes (V')^\dagger).
    \end{equation*} 
\end{restatable}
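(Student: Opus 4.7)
The plan is to apply Uhlmann's theorem (\cref{fact:uhlmann}) twice, once per approximate-independence hypothesis, to transform $\phi_{XABY}$ into a nearby pure state with a tri-partite product structure from which the local isometries $U', V'$ and the ancilla $\gamma$ can be read off.

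The first Uhlmann step exploits $\phi_{XBY} \approx_{\eps_1} \phi_X \otimes \phi_{BY}$: the Fuchs--van de Graaf inequalities (\cref{fact:TracevsFidelityvsBures}) convert this into a Bures-distance bound $\Delta_B(\phi_{XBY}, \phi_X \otimes \phi_{BY}) \le \sqrt{\eps_1/2}$, and Uhlmann's theorem yields a pure state $\phi^{(1)}_{XABY}$, obtained from $\phi$ by an isometry acting on $A$ alone, with $\phi^{(1)}_{XBY} = \phi_X \otimes \phi_{BY}$ and $\|\phi - \phi^{(1)}\|_1 \le 2\sqrt{\eps_1}$. Since any purification of a product state itself factorizes (via the Schmidt decomposition), up to a further isometry on $A$ we may write $\phi^{(1)} = |\mu\rangle\!\langle\mu|_{X A_1} \otimes |\nu\rangle\!\langle\nu|_{A_2 B Y}$, where $|\mu\rangle$ purifies $\phi_X$, $|\nu\rangle$ purifies $\phi_{BY}$, and $A = A_1 A_2$.

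The second Uhlmann step propagates $\phi_{YAX} \approx_{\eps_2} \phi_Y \otimes \phi_{AX}$ to $\phi^{(1)}$ via the triangle inequality, which, by the factored form of $\phi^{(1)}$, reduces to an approximate-independence statement for $Y$ versus $A_2$ inside $|\nu\rangle$. Another Uhlmann application to $|\nu\rangle$, now on the purifying register $B$, produces a nearby pure state of the form $|\tilde\nu\rangle = |\alpha\rangle_{Y B_1} \otimes |\beta\rangle_{A_2 B_2}$ with $|\alpha\rangle$ purifying $\phi_Y$ and $|\beta\rangle$ a fixed pure state on $A_2 B_2$. Setting $\phi^{**} = |\mu\rangle\!\langle\mu|_{X A_1} \otimes |\alpha\rangle\!\langle\alpha|_{Y B_1} \otimes |\beta\rangle\!\langle\beta|_{A_2 B_2}$ and carefully summing the two Bures-distance contributions, one concludes $\|\phi - \phi^{**}\|_1 \le 2(\sqrt{\eps_1} + \sqrt{\eps_2})$.

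Finally, I read off the required data from $\phi^{**}$: the ancilla $\gamma_{XABY}$ carries $|\beta\rangle$ on the sub-registers $A_2 B_2 \subseteq A B$, with trivial reference states on its $X, Y$ parts; $U'$ is the composition of (i) the Uhlmann isometry $\hat{X}_1 \to A_1$ that maps the canonical purification $\kappa_{X_1 \hat{X}_1}$ of $\phi_X$ to $|\mu\rangle_{X_1 A_1}$ (which exists because both states purify $\phi_X = \kappa_{X_1}$), and (ii) the identity on the $A_2$ part taken from $\gamma$; and $V'$ is analogous, using the Uhlmann isometry $\hat{Y}_1 \to B_1$ and identity on $B_2$. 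Composing these local maps on $\kappa \otimes \gamma$ reproduces $\phi^{**}$ exactly. The main obstacle is the error bookkeeping in the second Uhlmann step: a naive triangle-inequality chain inflates the bound to $2\sqrt{\eps_1} + 2\sqrt{\eps_2 + O(\sqrt{\eps_1})}$, so recovering the clean bound $2(\sqrt{\eps_1}+\sqrt{\eps_2})$ requires working in Bures distance throughout (exploiting its well-behaved triangle inequality and its multiplicativity under tensor products with common factors) and converting back to trace distance only at the very end.
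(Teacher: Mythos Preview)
Your two-step Uhlmann factorization is a genuinely different route from the paper's, and it is almost correct, but the error bookkeeping you propose does not recover the stated bound. The paper takes $\gamma = \phi$ (an independent copy of the full state) as the shared entanglement and applies both Uhlmann isometries so that each step compares \emph{directly against $\phi$}: in the second step one bounds $\Delta_B(\phi'_{X_1A}\otimes\phi_Y,\phi_{XAY}) \le \sqrt{\eps_1/2}+\sqrt{\eps_2/2}$ by a single triangle inequality, and Uhlmann then gives $\Delta_B(\phi'',\phi)$ equal to this quantity with no further loss. Your approach instead passes through the intermediate factored state $\phi^{(1)}$, which forces an extra triangle step.

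Concretely, your second step needs $\Delta_B(\nu_{YA_2},\nu_Y\otimes\nu_{A_2})$, and you propose to bound it by propagating the $\eps_2$ hypothesis from $\phi$ to $\phi^{(1)}$ via the triangle inequality. Even working entirely in Bures distance and using $\Delta_B(\mu\otimes\sigma,\mu\otimes\tau)=\Delta_B(\sigma,\tau)$, the chain $\nu_{YA_2}\to\phi^{(1)}_{YAX}\to\phi_{YAX}\to\phi_Y\otimes\phi_{AX}\to\phi^{(1)}_Y\otimes\phi^{(1)}_{AX}$ picks up $\sqrt{\eps_1/2}$ twice (once for $\phi^{(1)}_{YAX}\approx\phi_{YAX}$ and once for $\phi^{(1)}_{AX}\approx\phi_{AX}$), yielding $2\sqrt{\eps_1/2}+\sqrt{\eps_2/2}$ for the second step and $3\sqrt{\eps_1/2}+\sqrt{\eps_2/2}$ overall, i.e.\ $6\sqrt{\eps_1}+2\sqrt{\eps_2}$ in trace distance. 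Working in Bures throughout does not by itself remove these extra terms.

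The missing observation that makes your approach go through with the clean bound is purification invariance: since $|\nu\rangle_{A_2BY}$ and $|\phi\rangle_{XABY}$ are both purifications of $\phi_{BY}$, they are related by an isometry on the purifying register, which implies
\[
\Delta_B(\nu_{YA_2},\nu_Y\otimes\nu_{A_2}) \;=\; \Delta_B(\phi_{YXA},\phi_Y\otimes\phi_{XA}) \;\le\; \sqrt{\eps_2/2}
\]
\emph{directly}, with no $\eps_1$ contamination. With this in hand, $\Delta_B(\phi^{(1)},\phi^{**})\le\sqrt{\eps_2/2}$ and the total is $\sqrt{\eps_1/2}+\sqrt{\eps_2/2}$, matching the paper. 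So your factorization strategy does work and yields a smaller ancilla $\gamma$ than the paper's choice $\gamma=\phi$, but the key step is this purification-invariance equality rather than a triangle-inequality propagation.
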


To be precise, we obtain \cref{eq:thetaprimeleak} from \cref{lem:qsstechnical} via the assignment of registers (below the registers on the left are from~\cref{lem:qsstechnical} and the registers on the right are the registers in this proof)
\begin{equation*}
    (\phi, \phi_X,\phi_Y,\phi_A,\phi_B, \kappa, \kappa_{X_1},\kappa_{X_2} , \kappa_{\hat{X}_1}, \kappa_{\hat{X}_2}) \leftarrow ( \theta' ,  \theta'_{\hat{L}} , \theta'_{\hat{R}}  ,\theta'_{W'_1},   \theta'_{W'_2} , \rho, \rho_{\hat{L}}, \rho_{\hat{R} }, \rho_{L}, \rho_{R}),
\end{equation*}
and by setting $\eps_1=0$ and $\eps_2=\epslk$, which is justified by \cref{claim:qss1,claim:qss2}, respectively.
Furthermore, we have used \cref{claim:qss1,claim:qss2} to satisfy the conditions of state $\phi$ (which will be state $\theta'$ in this proof) in~\cref{lem:qsstechnical}.

We use $U'$ and $V'$ to construct the CPTP $2$-split-state tampering maps $U : 
 \cL(\cH_{L  } \otimes \cH_{ { W}_1 } ) \to  \cL(\cH_{{ L} })$ and $V : 
 \cL(\cH_{R  } \otimes \cH_{ { W}_2 } ) \to  \cL(\cH_{{ R} })$ of $\cA'$ as follows:
\begin{itemize}
  \item On input the contents of registers $\rho_L$ and $\gamma_{W_1}$, the map $U$ first applies the unitary $U':\cH_L\otimes \cH_{W_1}\to \cH_{W'_1}$ to generate the contents of register $\theta'_{W'_1}$ (approximately).
    In particular, note that registers $L_{[t-1]} R_t L_t E_t$ are part of $W'_1$. Then, it uses the unitary $U_t$ followed by $\qrec$ to generate $\tau_L$ and traces out the remaining registers of $W'_1$.
    \item On input the contents of registers $\rho_R$ and $\gamma_{W_2}$, the map $V$ first applies the unitary $V':\cH_R\otimes \cH_{W_2}\to \cH_{W'_2}$ to generate the contents of register $\theta'_{W'_2}$ (approximately).
    In particular, note that $R_{[2]}$ is part of $W'_2$ (this crucially uses the fact that $t>2$).
    Then, it uses $\lrdec$ to generate $\tau_R$ from $\theta'_{R_{[2]}}$ and traces out the remaining registers of $W'_2$;

\end{itemize}

Let $\tau'_{\hat{L} \hat{R} LR }  = ( U \otimes V)   ( \rho \otimes \gamma_{{ W}_1  { W}_2} )$
be the state obtained after applying the $2$-split-state tampering attack defined by $\cA'=(U,V,\gamma_{W_1 W_2})$ on $\rho$ in \cref{fig:22nmssq}.
By inspecting the definition of $\theta'$ above, the state $\tau_{\hat{L} \hat{R} LR }$ obtained after applying the tampering attack defined by $\cA$ on $\rho$ in \cref{fig:nmssqq1} can be written as
\begin{equation}\label{eq:exprtau}
    \tau_{\hat{L} \hat{R} LR } = (\qrec \otimes \lrdec ) \left(  \left( U_t \theta' U^\dagger_t  \right)_{\hat{L} \hat{R} L_{[t]} R_{[2]}} \right).
\end{equation}
Recall from our discussion above that our main goal is to show that these two states $\tau$ and $\tau'$ are appropriately close in trace distance.
Combining \cref{eq:exprtau} with \cref{eq:thetaprimeleak} and the postprocessing property of trace distance (\cref{fact:data}) yields
\begin{equation}\label{eq:qssfinal}
 \tau_{\hat{L} \hat{R} LR } \approx_{2\sqrt{\epslk}} \tau'_{\hat{L} \hat{R} LR }   ,   
\end{equation}
To see this, first note that registers $\theta'_{L_1\dots L_{t-1}L_tR_tE_t}$ are obtained by applying $U'$ to $\rho_L \otimes \gamma_{W_1}$. Then, the register $\tau_L$ is generated using $\qrec (U_t \theta'_{L_1\dots L_{t-1}L_tR_tE_t} U^\dagger_t )$.
On the other hand, the register $\tau_R$ is generated using $\lrdec$ with input the registers $\theta'_{R_{[2]}}$, which are themselves generated from $\rho_R$ and $\gamma_{W_2}$ via $V'$.

Recalling that the operations applied to $\tau'$ and $\tau$ in \cref{fig:22nmssq,fig:nmssqq1}, respectively, are the same, we conclude from \cref{eq:qssfinal} and postprocessing (\cref{fact:data}) that the final decoded tampered state $\eta$ in \cref{fig:nmssqq1} satisfies
 \begin{equation}\label{eq:qssfinal1}
    \eta_{\hat{M} M } \approx_{2\sqrt{\epslk}}  \tr_F \left( (\mathsf{Uhlmann}^\dagger \otimes 2\nmrec ) (  ( U \otimes  V)   ( \rho \otimes \gamma_{{ W}_1  { W}_2} ) )\right) .
\end{equation}
Moreover, from the $\epsnm$-non-malleability of $(2\nmshare, 2\nmrec)$ we get that 
  \begin{equation}\label{eq:qssfinal2}
   \tr_F \left((\mathsf{Uhlmann}^\dagger  \otimes 2\nmrec ) (  ( U \otimes  V)   ( \rho \otimes \gamma_{{ W}_1  { W}_2} ) ) \right)  \approx_{\epsnm} p_\mathcal{A} \sigma_{\hat{M}M } +(1-p_\mathcal{A} )  \sigma_{\hat{M}} \otimes \zeta^{\mathcal{A}}_{M} ,   
\end{equation}
 where $(p_{\mathcal{A}}, \zeta^{\mathcal{A}}_{M})$ depend only on $(U, V, \gamma_{{ W}_1  { W}_2})$, which further depend only on the initial adversary $\mathcal{A}= (U_1,U_2, \ldots, U_t, \ketbra{\psi}_{E_1\ldots E_p})$.
 Finally, combining \cref{eq:qssfinal1,eq:qssfinal2} with the triangle inequality leads to
 \begin{equation*}
     \eta_{\hat{M} M } \approx_{\epsnm+2\sqrt{\epslk}} p_\mathcal{A} \sigma_{\hat{M}M } +(1-p_\mathcal{A} )  \sigma_{\hat{M}} \otimes \zeta^{\mathcal{A}}_{M},
 \end{equation*}
 which completes the proof of average-case non-malleability.

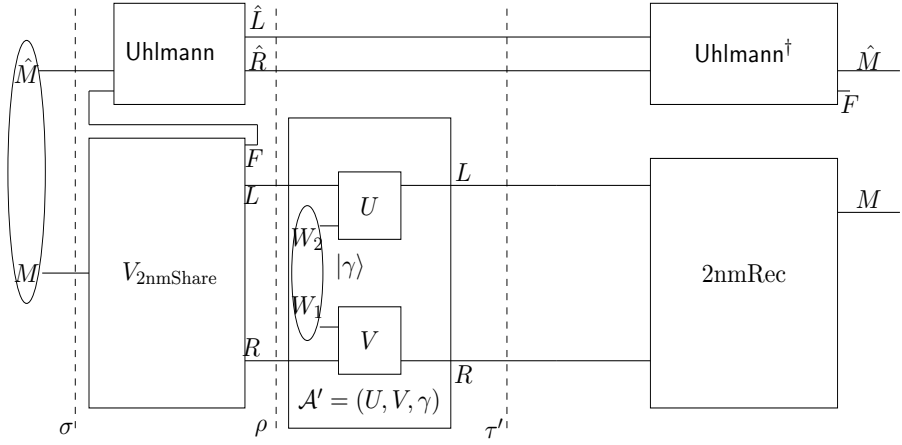
\begin{figure}
\centering
\resizebox{12cm}{6cm}{
\begin{tikzpicture}

\node at (1,4.5) {$\hat{M}$};
\node at (1,1.5) {$M$};
\draw (1.2,4.5) -- (2.4,4.5);
\draw (2.4,4) rectangle (4.5,5.5);
\draw (11,4) rectangle (14,5.5);
\draw (4.5,4.5) -- (11,4.5);
\draw (14,4.5) -- (15,4.5);
\draw (4.5,5) -- (11,5);
\node at (14.5,4.7) {$\hat{M}$};

\draw (14,4.2) -- (14.2,4.2);
\node at (14.2,4) {${F}$};

\node at (4.7,5.3) {$\hat{L}$};
\node at (4.7,4.7) {$\hat{R}$};

\draw (1.25,1.5) -- (2,1.5);
\draw (2,-0.5) rectangle (4.5,3.5);
\node at (3.3,1.5) {$V_{2\nmshare}$};
\node at (3.3,4.8) {$\mathsf{Uhlmann}$};
\node at (12.5,4.8) {$\mathsf{Uhlmann}^\dagger$};
\draw (1.0,3.0) ellipse (0.3cm and 1.95cm);
\draw [dashed] (1.78,-0.8) -- (1.78,5.5);
\draw [dashed] (5,-0.8) -- (5,5.5);
\draw [dashed] (8.7,-0.8) -- (8.7,5.5);

\node at (6.2,1.6) {$\ket{\gamma}_{}$};
\node at (1.64,-0.8) {$\sigma$};
\node at (4.76,-0.8) {$\rho$};
\node at (8.5,-0.8) {${\tau'}$};

\draw (2,4.2) -- (2.4,4.2);
\draw (2,3.7) -- (2,4.2);
\draw (4.7,3.7) -- (2,3.7);
\draw (4.7,3.4) -- (4.7,3.7);
\draw (4.5,3.4) -- (4.7,3.4);
\node at (4.65,3.2) {$F$};
\node at (4.6,2.65) {$L$};
\node at (8,3) {$L$};
\draw (4.5,2.8) -- (6,2.8);
\draw (7,2.8) -- (9.5,2.8);
\draw (9.5,2.8) -- (11,2.8);
\node at (14.5,2.6) {$M$};
\draw (14,2.4) -- (15,2.4);

\node at (4.6,0.4) {$R$};
\node at (8,0.0) {$R$};
\draw (4.5,0.2) -- (6,0.2);
\draw (7,0.2) -- (9.5,0.2);
\draw (9.5,0.2) -- (11,0.2);

\draw (6,2) rectangle (7,3);
\node at (6.5,2.5) {$U$};
\draw (6,0) rectangle (7,1);
\node at (6.5,0.5) {$V$};

\node at (6.5,-0.4) {$\mathcal{A}'=(U,V,\gamma)$};
\draw (5.2,-0.8) rectangle (7.8,3.8);

\draw (5.5,1.5) ellipse (0.25cm and 1cm);
\node at (5.5,2) {$W_2$};
\draw (5.7,2.2) -- (6,2.2);

\node at (5.5,1) {$W_1$};
\draw (5.7,0.7) -- (6,0.7);


\draw (11,-0.5) rectangle (14,3.2);
\node at (12.5,1.5) {$2\nmrec$};

\end{tikzpicture}}
\caption{Tampering experiment for average-case $2$-out-of-$2$ NMSS.}\label{fig:22nmssq}
\end{figure}

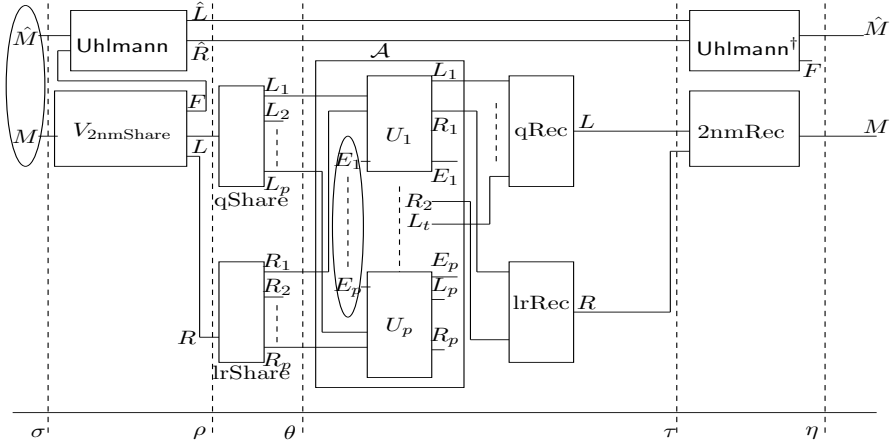
\begin{figure}
\centering
\resizebox{12cm}{6cm}{
\begin{tikzpicture}

\node at (0,6.5) {$\hat{M}$};
\node at (13.2,6.7) {$\hat{M}$};
\draw (0.2,6.5) -- (0.7,6.5);
\draw (2.5,6.8) -- (10.3,6.8);
\draw (2.5,6.4) -- (10.3,6.4);
\draw (12,6.5) -- (13,6.5);
\draw (0.7,5.8) rectangle (2.5,7);
\node at (1.5,6.3) {$\mathsf{Uhlmann}$};
\node at (2.7,7) {$\hat{L}$};
\node at (2.7,6.2) {$\hat{R}$};
\draw (10.3,5.8) rectangle (12,7);
\node at (11.2,6.3) {$\mathsf{Uhlmann}^\dagger$};

\draw (0.45,3.9) rectangle (2.5,5.4);
\draw (10.3,3.9) rectangle (12,5.4);
\node at (0,4.5) {$M$};
\node at (12.2,5.8) {$F$};
\draw (12,6) -- (12.2,6);
\node at (2.65,5.2) {$F$};

\node at (1.5,4.6) {$V_{2\nmshare}$};
\node at (3.5,3.2) {$\qshare$};
\node at (3.5,-0.2) {$\lrenc$};
\node at (8.0,4.6) {$\qrec$};
\node at (8.0,1.2) {$\lrdec$};

\draw (2.5,4.5) -- (3,4.5);
\node at (11.1,4.6) {$2\nmrec$};


\draw (0.2,4.5) -- (0.5,4.5);
\draw (12,4.5) -- (13.2,4.5);
\node at (13.2,4.7) {$M$};
\node at (2.5,0.5) {$R$};

\draw (2.7,0.5) -- (3,0.5);
\draw (3.7,1.8) -- (4.7,1.8);
\node at (3.9,2) {$R_1$};
\draw (3.7,1.3) -- (4,1.3);
\node at (3.9,1.5) {$R_2$};
\draw [dashed] (3.9,0.4) -- (3.9,1.2);
\draw (3.7,0.3) -- (5.3,0.3);
\draw (4.6,0.6) -- (5.3,0.6);
\node at (3.9,0) {$R_p$};
\draw (3,0) rectangle (3.7,2);
\draw (7.5,0) rectangle (8.5,2);
\draw (8.5,1) -- (10,1);
\draw (10,1) -- (10,4.2);
\draw (10,4.2) -- (10.3,4.2);
\node at (8.7,1.2) {$R$};

\draw [dashed] (7.3,4) -- (7.3,5.25);

\draw (5,2.7) ellipse (0.24cm and 1.8cm);

\draw (4.7,5) -- (4.7,1.8);
\draw (4.7,5) -- (5.3,5);
\draw (3.7,5.3) -- (5.3,5.3);
\draw (6.3,5.6) -- (7.5,5.6);
\draw (6.3,5) -- (7,5);
\draw (7,1.8) -- (7,5);
\draw (7,1.8) -- (7.5,1.8);
\draw (6.3,4) -- (6.7,4);

\node at (6.5,2) {$E_p$};
\draw (6.3,1.7) -- (6.7,1.7);
\node at (6.5,1.5) {$L_p$};
\node at (6.1,2.8) {$L_t$};
\draw (6.3,1.25) -- (6.5,1.25);
\draw (7.2,3.7) -- (7.2,2.75);
\draw (6.3,2.75) -- (7.2,2.75);
\draw (7.2,3.7) -- (7.5,3.7);

\draw (6.9,0.45) -- (7.5,0.45);

\draw (6.9,0.45) -- (6.9,3.2);
\draw (6.3,3.2) -- (6.9,3.2);
\node at (6.1,3.2) {$R_2$};
\node at (6.5,0.5) {$R_p$};
\draw (6.3,0.25) -- (6.5,0.25);

\node at (3.9,5.5) {$L_1$};
\draw (3.7,4.8) -- (4,4.8);
\node at (3.9,5) {$L_2$};
\draw [dashed] (3.9,3.9) -- (3.9,4.7);
\draw (3.7,3.8) -- (4.6,3.8);
\draw (4.6,0.6) -- (4.6,3.8);
\node at (3.9,3.5) {$L_p$};
\draw (3,3.5) rectangle (3.7,5.5);
\draw (7.5,3.5) rectangle (8.5,5.7);

\draw (8.5,4.6) -- (10.3,4.6);
\node at (8.7,4.8) {$L$};

\draw [dashed] (5,1.9) -- (5,3.7);
\node at (5,4) {$E_1$};
\node at (6.5,3.7) {$E_1$};
\node at (6.5,4.7) {${R_1}$};
\node at (6.5,5.8) {${L_1}$};

\node at (5,1.5) {$E_p$};


\draw  (2.7,0.5) -- (2.7,4.1);
\draw  (2.5,4.1) -- (2.7,4.1);

\draw [dashed] (0.35,-1.4) -- (0.35,7.2);
\draw [dashed] (2.9,-1.4) -- (2.9,7.2);
\draw [dashed] (10.1,-1.4) -- (10.1,7.2);
\draw [dashed] (4.3,-1.4) -- (4.3,7.2);
\draw [dashed] (12.4,-1.4) -- (12.4,7.2);

\draw (-0.2,-1) -- (13.2,-1);

\node at (0.2,-1.4) {$\sigma$};
\node at (2.7,-1.4) {$\rho$};
\node at (4.1,-1.4) {$\theta$};
\node at (10,-1.4) {$\tau$};
\node at (12.2,-1.4) {$\eta$};

\draw  (0.5,6.2) -- (0.7,6.2);
\draw  (0.5,5.6) -- (0.5,6.2);
\draw  (0.5,5.6) -- (2.8,5.6);
\draw  (2.8,5) -- (2.8,5.6);
\draw  (2.5,5) -- (2.8,5);
\node at (2.7,4.3) {$L$};



\draw (5.3,3.8) rectangle (6.3,5.7);
\node at (5.8,4.5) {$U_1$};

\draw [dashed] (5.8,3.5) -- (5.8,1.8);

\draw (5.3,-0.3) rectangle (6.3,1.8);
\node at (5.8,0.7) {$U_p$};

\node at (5.5,6.2) {$\mathcal{A}$};
\draw (4.5,-0.5) rectangle (6.8,6);




\draw (5.2,1.5) -- (5.35,1.5);
\draw (5.2,4) -- (5.35,4);
\draw (0,5.5) ellipse (0.3cm and 1.6cm);
\end{tikzpicture}}

\caption{Tampering experiment for average-case $t$-out-of-$p$ quantum NMSS.}\label{fig:nmssqq1}
\end{figure}

\subsubsection{Proof of \cref{claim:qss1}}\label{sec:qss1}

For convenience, we restate the relevant claim.
\qssone*
\begin{proof}
    From \cref{eq:qnmssrho}, we have that $\rho_{L\hat{L} R\hat{R} } = \rho_{L\hat{L}} \otimes \rho_{ R\hat{R}} $.  Further we have  $  \theta = (V_{\qshare}  \otimes V_{\lrenc} ) \rho (V^\dagger_{\qshare}  \otimes V^\dagger_{\lrenc} )$. Since $V_{\qshare} $ and $V_{\lrenc}$ act on two different registers $L$ and $R$, respectively, we have that 
    \begin{equation}\label{eq:qssintermediate1}
         \theta_{\hat{L} \tilde{L} L_{[p]} \hat{R}  R_{[p]} \tilde{R}_{[p]} } =\theta_{\hat{L} \tilde{L} L_{[p]} } \otimes \theta_{\hat{R}  R_{[p]} \tilde{R}_{[p]} }.
    \end{equation} 
    Furthermore, using the fact that $V_{\qshare}$ is a $t$-out-of-$p$ secret sharing scheme and invoking \cref{fact:qsshiding}, we have that 
\[ \theta_{\hat{L} L_{[t-1]} }  = \theta_{\hat{L}  }   \otimes  \theta_{ L_{[t-1]} } = \theta_{\hat{L}  }   \otimes  U_{ L_{[t-1]} }  . \]
This means that
\begin{equation*}
    \theta_{\hat{L}L_{[t-1]} \hat{R}  R_{[p]} \tilde{R}_{[p]} }  = \theta_{\hat{L}   }   \otimes  \theta_{ L_{[t-1]} }  \otimes \theta_{\hat{R}  R_{[p]} \tilde{R}_{[p]} },
\end{equation*}
and so by postprocessing (\cref{fact:data}) it follows that
\begin{equation*}
    \theta'_{\hat{L} L_{[t-1]}R_{[t-1]} E_{[t-1]} E_{[t,p]}\hat{R}  R_{[t,p]} \tilde{R}_{[p]} }  = \theta'_{\hat{L}}   \otimes  \theta'_{L_{[t-1]}  R_{[t-1]} E_{[t-1]} E_{[t,p]}\hat{R}  R_{[t,p]} \tilde{R}_{[p]} },
\end{equation*} 
which, again by postprocessing (\cref{fact:data}) and by the definition of $W'_2$, implies the desired statement.
\end{proof}

\subsubsection{Proof of \cref{claim:qss2}}\label{sec:qss2}

For convenience, we restate the relevant claim here.
\qsstwo*
\begin{proof}
We use the $2$-out-of-$p$ $(\ell,\epslk)$-quantum-secure 
augmented leakage-resilient secret sharing property of $\lrenc$ to conclude that  
\begin{equation}\label{eq:eq2Claim2}
    \theta'_{\hat{R} R_t L_{[t-1]} E_{[t,p]} \hat{L} \tilde{L} L_{[t,p]}}  \approx_{\epslk}\theta'_{\hat{R}  }   \otimes  \theta'_{R_t L_{[t-1]} E_{[t,p]} \hat{L} \tilde{L} L_{[t,p]} }.
\end{equation}
Recalling the definition of $W'_2$, this corresponds exactly to the statement of \cref{claim:qss2}.

To see why \cref{eq:eq2Claim2} holds, first recall that
\begin{equation*}
    \theta' = (U_1 \otimes \ldots \otimes U_{t-1} \otimes \id_t \otimes \ldots \id_p ) (\theta \otimes \ketbra{\psi}_{E_1 \ldots E_p}) (U^\dagger_{1}  \otimes  \ldots \otimes U^\dagger_{t-1}\otimes \id^\dagger_t \otimes \ldots \id^\dagger_p ),
\end{equation*} 
with $\theta$ illustrated in \cref{fig:nmssqq1}. 
We now consider the following leakage attack on the $R_{[p]}$ registers: Taking into account the notation from \cref{sec:2lrss}, 
we set $T=[t-1]$ and
for each $i\in[t-1]$ we see $W_i=(L_i,E_i)$ as the entangled register for the local leakage adversary attacking $R_i$.
Recall that the $i$-th tampering map $U_i$ is applied to $(S_i,E_i)=(L_i,R_i,E_i)$, yielding tampered registers $(L_i,R_i,E_i)$.
Given this, we see $L_i$, which satisfies $|L_i|\leq \ell$, as the $\ell$-bounded leakage on the $i$-th share $R_i$ produced by the $i$-th local leakage adversary for $\lrenc$ with access to $R_i$ and the entangled registers $(L_i, E_i)$.
Furthermore, we set the entangled register associated with the share $R_t$ as $W_t=(E_{[t,p]}, \hat{L}, \tilde{L}, L_{[t,p]})$, and for $i>t$ we set the entangled registers $W_i$ to be empty.
Note that from \cref{eq:qssintermediate1} we have that
\begin{equation*}\label{eq:eq1Claim2}
    \theta_{\hat{L} \tilde{L} L_{[p]}\hat{R} R_{[p]} \tilde{R}_{[p]} } =\theta_{\hat{L} \tilde{L} L_{[p]} } \otimes \theta_{\hat{R}  R_{[p]} \tilde{R}_{[p]} },
\end{equation*}
and so the selected shared entangled state stored in registers $W_1,\dots,W_n$ does not depend on the contents of $R$ and $\hat{R}$.
Invoking \cref{thm:2naugmented} and the augmented leakage-resilience property of $(\lrenc,\lrdec)$ under this leakage attack yields \cref{eq:eq2Claim2}, which completes the proof.
\end{proof}

\subsubsection{Proof of \cref{lem:qsstechnical}}\label{sec:qsstechnical}

For convenience, we restate the relevant fact here.
\qsstechnical*
\begin{proof}
We are given a state $\phi_{XABY}$ such that \begin{equation*}
        \phi_{X BY  } \approx_{\eps_1} \phi_{X  } \otimes \phi_{ BY  } \quad \textrm{and} \quad \phi_{YAX   } \approx_{\eps_2} \phi_{Y  } \otimes \phi_{AX} .
    \end{equation*}
    We intend to show that we can generate a state close to $\phi$ using only $\kappa_{X_1 Y_1} \equiv \phi_X \otimes \phi_Y$,  shared entanglement $\gamma$, and isometries $U,V$. 
    We use an independent copy of $\phi$ as shared entanglement, i.e., $\gamma_{XABY} \equiv \phi_{XABY}$. 
    We first correlate registers $\kappa_{X_1 \hat{X}_1}$ and shared entanglement $\gamma_{XABY}$ using the isometry $U$ to obtain state $\phi'$. We next correlate registers $\kappa_{Y_1 \hat{Y}_1}$ and state $\phi'$ using the isometry $V$ to obtain $\phi''$. 
    Then, we show that state $\phi''$ is the final state which will be close to $\phi$ in trace distance, as desired. 
    We use Uhlmann's theorem (\cref{fact:uhlmann}) to argue closeness in trace distance.

Consider the pure state
\begin{equation*}
    \eta = \kappa_{X_1 \hat{X}_1} \otimes \gamma_{XABY}.
\end{equation*}
Using the fact that $\kappa_{X_1}\equiv \phi_X$, we have that $\eta_{X_1BY} \equiv \phi_X \otimes \phi_{BY}$.
Moreover, since $ \phi_X \otimes \phi_{BY} \approx_{\eps_1}  \phi_{XBY}$ by hypothesis, it follows that $\eta_{X_1BY} \approx_{\eps_1}  \phi_{XBY}$.

We invoke Uhlmann's theorem (\cref{fact:uhlmann}) with the following assignment of registers (where the registers on the left are those from the statement of \cref{fact:uhlmann} and the registers on the right are the registers used in this proof):
\begin{equation*}
    (\sigma_A, \rho_A, \sigma_{AC}, \rho_{AB}) \leftarrow ( \eta_{X_1BY} ,  \phi_{XBY}, \eta_{X_1BY \hat{X}_1XA}, \phi_{XBYA}).
\end{equation*}
This guarantees the existence of an isometry $U : \cH_{\hat{X}_1} \otimes \cH_A \otimes \cH_{X} \to \cH_{A}$ such that
\begin{equation}\label{Eq:ulhman1}
   \Delta_B \left({\phi}'_{X_1BYA} , \phi_{XBYA}\right) = \Delta_B( \eta_{X_1BY},  \phi_{XBY}) \leq \sqrt{\eps_1/2}
\end{equation}
with ${\phi}'_{X_1AYB} =  U\left(  \kappa_{X_1 \hat{X}_1} \otimes \gamma_{XABY}\right) U^\dagger$
and where the rightmost inequality follows from \cref{fact:TracevsFidelityvsBures} and the fact that $\eta_{X_1BY} \approx_{\eps_1}  \phi_{XBY}$.

Now, consider the state $\eta' ={\phi}'_{X_1AYB}  \otimes \kappa_{Y_1 \hat{Y}_1}$. 
Observe that 
\begin{equation}\label{Eq:ulhman2}
    \eta'_{X_1AY_1} = \phi'_{X_1A} \otimes \kappa_{Y_1} \equiv \phi'_{X_1A} \otimes \phi_{Y}.
\end{equation} 
Using \cref{fact:TracevsFidelityvsBures} and the fact that $\phi_Y \otimes \phi_{AX}  \approx_{\eps_2}  \phi_{YAX}$ (by hypothesis), we get  
 \begin{equation}\label{Eq:ulhman3}
  \Delta_B( \phi_Y \otimes \phi_{AX} , \phi_{YAX} ) \leq \sqrt{\eps_2/2} .
\end{equation}
From \cref{Eq:ulhman1} and \cref{fact:data}, we get $ \Delta_B( \phi'_{X_1A} , \phi_{XA} ) \leq \sqrt{\eps_1/2}  $. 
This further implies that 
\begin{equation}\label{Eq:ulhman4}
 \Delta_B( \phi'_{X_1A} \otimes \phi_Y , \phi_{XA} \otimes \phi_Y ) \leq \sqrt{\eps_1/2}  .
\end{equation}
Combining \cref{Eq:ulhman2,Eq:ulhman3,Eq:ulhman4} along with the triangle inequality, we have that
\begin{equation*}
    \Delta_B( \eta'_{X_1AY_1},\phi_{XAY}   ) \leq \sqrt{\eps_1/2}  +\sqrt{\eps_2/2}  .
\end{equation*}

 We use \cref{fact:uhlmann}  with the following assignment of registers (below the registers on the left are from ~\cref{fact:uhlmann} and the registers on the right are the registers in this proof),
 $$(\sigma_A, \rho_A, \sigma_{AC}, \rho_{AB}) \leftarrow ( \eta'_{X_1AY_1} ,  \phi_{XAY}, \eta_{X_1AY_1 \hat{Y}_1YB}, \phi_{XAYB}).$$
 From \cref{fact:uhlmann}  we get an isometry  $V : \cH_{\hat{Y}_1} \otimes \cH_B \otimes \cH_{Y} \to \cH_{B}$ such that
\begin{equation}\label{Eq:ulhman5}
  \Delta_B \left(   {\phi}''_{X_1AY_1B} , \phi_{XAYB}\right) = \Delta_B(\phi'_{X_1AY_1},  \phi_{XAY}) \leq \sqrt{\eps_1/2}+\sqrt{\eps_2/2},.
\end{equation}
where,
$$ {\phi}''_{X_1AY_1B} =  V\left( {\phi}'_{X_1AYB}  \otimes \kappa_{Y_1 \hat{Y}_1}\right) V^\dagger = (U \otimes V) ( \kappa \otimes \gamma) (U^\dagger \otimes V^\dagger).$$    
From \cref{Eq:ulhman5} and \cref{fact:TracevsFidelityvsBures}, we obtain the desired statement.
\end{proof}

\subsection{From average-case to worst-case non-malleability}

In \cref{sec:avgnmqss}, we showed that our secret sharing scheme $(\nmshare,\nmrec)$ for quantum messages described in \cref{sec:qnmssscheme} is \emph{average-case} $(\epsnm+2\sqrt{\epslk})$-non-malleable.
To obtain our \cref{thm:mainqnmss}, we would like to upgrade from average-case to worst-case non-malleability.
This can be done in exactly the same manner as was done for our $2$-split-state NMC for quantum messages in \cref{sec:avgtoworst}, and so we avoid repeating the argument here.
More precisely, the following easy extension of \cref{thm:avgtoworst} holds in the secret sharing setting.
\begin{lemma}\label{thm:avgtoworstnmss}
    If $(\share,\rec)$ is an average-case $(\epspriv,\epsnm)$-non-malleable secret sharing scheme for quantum messages of length $b$, then it is also a worst-case $(\epspriv,\epsnm')$-non-malleable secret sharing scheme for quantum messages of length $b$, where $\epsnm'=2^b\cdot \epsnm$.
\end{lemma}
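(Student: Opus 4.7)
The plan is to essentially replay the argument from the proof of \cref{thm:avgtoworst} in the secret sharing setting, since the rejection-sampling reduction from worst-case to average-case is agnostic to whether the encoding is done by $\enc$ or by $\nmshare$. Specifically, I would fix an arbitrary threshold tampering adversary $\cA_T$ (for some authorized $T$ with $|T|=t$) and an arbitrary quantum message $\sigma_M$ of length $b$ with canonical purification $\sigma_{M\hat{M}}$, then compare the outcome of the tampering experiment on $\sigma$ against the outcome of the tampering experiment on the maximally mixed message $\rho_M = U_M$ (with canonical purification $\rho_{M\hat{M}}$).

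The key observation is that $\dmax{\sigma_M}{\rho_M} \leq b$, so by \cref{rejectionsampling} there exists a projective measurement $\{\Pi,\bar\Pi\}$ acting solely on the purification register $\hat{M}$ of $\rho_{M\hat{M}}$ such that $\tr(\Pi\rho)\geq 2^{-b}$ and, conditioned on obtaining outcome $\Pi$, the resulting state is exactly $\sigma_{M\hat{M}}$. The crucial point is that this measurement commutes with every map in the tampering experiment: $\nmshare$ acts only on $M$, each tampering unitary $U_i$ acts on $(S_i,W_i)$, and $\nmrec$ acts on $S_T$; none of these touch $\hat{M}$. Consequently, if
\[
\eta = \nmrec\left(\Bigl(\bigotimes_{i\in T} U_i \otimes \bigotimes_{j\notin T} \id\Bigr)\bigl(\nmshare(\rho_{M\hat{M}})\otimes\ketbra{\psi}\bigr)\Bigl(\bigotimes_{i\in T} U_i^\dagger \otimes \bigotimes_{j\notin T} \id\Bigr)\right)
\]
is the average-case tampering output, then the corresponding worst-case tampering output $\zeta$ on $\sigma$ satisfies $\zeta_{\hat{M}M} = \Pi\eta_{\hat{M}M}\Pi / \tr(\Pi\rho)$.

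By assumption, $\eta_{\hat{M}M}\approx_{\epsnm} p_{\cA_T}\rho_{\hat{M}M} + (1-p_{\cA_T})\rho_{\hat{M}}\otimes\gamma^{\cA_T}_M$, where $p_{\cA_T}$ and $\gamma^{\cA_T}_M$ depend only on $\cA_T$. Applying the ``trace after projection'' bound \cref{traceavg1} with projector $\Pi$ blows up the trace distance by the factor $1/\tr(\Pi\rho) \leq 2^b$, yielding
\[
\zeta_{\hat{M}M} \approx_{2^b \cdot \epsnm}\; p_{\cA_T}\,\frac{\Pi\rho_{\hat{M}M}\Pi}{\tr(\Pi\rho)} + (1-p_{\cA_T})\,\frac{\Pi\rho_{\hat{M}}\Pi}{\tr(\Pi\rho)}\otimes\gamma^{\cA_T}_M = p_{\cA_T}\sigma_{\hat{M}M} + (1-p_{\cA_T})\sigma_{\hat{M}}\otimes\gamma^{\cA_T}_M,
\]
which is exactly worst-case non-malleability with error $\epsnm' = 2^b\cdot\epsnm$. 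Privacy is unaffected, since the definition of average-case NMSS already requires $\epspriv$-privacy for \emph{every} message (only the non-malleability clause is weakened in the average-case notion), so we simply inherit the same $\epspriv$.

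The argument is almost entirely routine once \cref{thm:avgtoworst} is in hand. The only mild point to verify carefully, which is really the only ``obstacle,'' is that the rejection-sampling measurement on $\hat{M}$ genuinely commutes with all maps appearing in the NMSS tampering experiment (sharing, the $p$ local tamperings with shared entanglement across $W_1,\dots,W_p$, and reconstruction); but since $\hat{M}$ is never touched by any of these operations (the shared entangled state sits on the disjoint registers $W_1,\dots,W_p$, and the tamperings and reconstruction act on $S_1,\dots,S_p$), this commutation is immediate, and the reduction carries through verbatim.
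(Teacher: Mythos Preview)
Your proposal is correct and follows exactly the approach the paper intends: the paper explicitly states that this lemma ``can be done in exactly the same manner as was done for our $2$-split-state NMC for quantum messages in \cref{sec:avgtoworst}, and so we avoid repeating the argument here.'' Your write-up faithfully replays the rejection-sampling argument of \cref{thm:avgtoworst} in the NMSS setting, correctly notes that the measurement on $\hat{M}$ commutes with $\nmshare$, the local tamperings on $(S_i,W_i)$, and $\nmrec$, and correctly observes that statistical privacy already holds for every message in the average-case definition.
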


Combining \cref{thm:avgtoworstnmss} with the statistical privacy property shown in \cref{sec:statpriv} and the average-case non-malleability property shown in \cref{sec:avgnmqss} yields the following result.
\begin{theorem}\label{thm:qnmssbeforeparams}
    Given a threshold $t\geq 3$ and a number of parties $p$ such that $t\leq p\leq 2t-1$, it holds that
    the scheme $(\nmshare,\nmrec)$ described in \cref{sec:qnmssscheme} is a $t$-out-of-$p$ $(\epspriv,\epsnm')$-non-malleable secret sharing scheme for quantum messages of length $b$ with $\epsnm'=2^b(\epsnm+2\sqrt{\epslk})$.
\end{theorem}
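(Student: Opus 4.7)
The plan is to assemble the three pieces that have already been established earlier in this section, since Theorem \ref{thm:qnmssbeforeparams} is essentially a packaging result. First, I would invoke the statistical privacy analysis carried out in \cref{sec:statpriv}: that argument established the bound $\theta_{\hat{M}S_T}\approx_{\epspriv}\sigma_{\hat{M}}\otimes \zeta_{S_T}$ for every unauthorized set $T$ (with $|T|\le t-1$), with a fixed state $\zeta_{S_T}$ of the form $\lrenc(U_R)_{R_T}\otimes U_{L_T}$, and this exactly matches the statistical privacy requirement in \cref{def:qnmss} with parameter $\epspriv$. This step requires no further work.

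Next, I would invoke the average-case non-malleability argument from \cref{sec:avgnmqss}. That section, via the reduction to the underlying $2$-out-of-$2$ split-state NMSS $(2\nmshare, 2\nmrec)$ together with the augmented leakage-resilience of $(\lrenc,\lrdec)$ and the use of the Uhlmann-type lemma \cref{lem:qsstechnical}, shows that for any authorized $T$ with $|T|=t$ and any tampering adversary $\cA_T$, the final decoded state satisfies
\begin{equation*}
\eta_{\hat{M}M}\approx_{\epsnm+2\sqrt{\epslk}} p_{\cA_T}\sigma_{\hat{M}M}+(1-p_{\cA_T})\sigma_{\hat{M}}\otimes\gamma^{\cA_T}_M,
\end{equation*}
where $(p_{\cA_T},\gamma^{\cA_T}_M)$ depend only on $\cA_T$, whenever $\sigma_M = U_M$. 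In other words, $(\nmshare,\nmrec)$ is an average-case $(\epspriv,\epsnm+2\sqrt{\epslk})$-non-malleable secret sharing scheme for quantum messages of length $b$.

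Finally, I would apply \cref{thm:avgtoworstnmss} to upgrade average-case non-malleability to worst-case non-malleability. Since the sharing/reconstruction procedures commute with the rejection-sampling measurement on $\hat{M}$ used in the proof of \cref{thm:avgtoworst} (the measurement acts purely on the external purification register, which is untouched by $\nmshare$, $\nmrec$, and the tampering adversary $\cA_T$), the same blow-up argument applies verbatim: the error inflates by a factor of $2^b$, yielding $\epsnm'=2^b(\epsnm+2\sqrt{\epslk})$. Statistical privacy is preserved unchanged, since privacy already holds for every fixed message and no rejection-sampling blow-up is needed there. Combining these three steps yields the theorem. I do not expect any step to be a genuine obstacle — all the conceptual work has been done earlier; this is a bookkeeping argument that threads the average-case result from \cref{sec:avgnmqss} through \cref{thm:avgtoworstnmss}, while carrying the privacy bound from \cref{sec:statpriv} along unchanged.
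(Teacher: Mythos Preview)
Your proposal is correct and takes essentially the same approach as the paper: the paper's proof of \cref{thm:qnmssbeforeparams} is exactly the one-line combination of the statistical privacy argument from \cref{sec:statpriv}, the average-case non-malleability bound $\epsnm+2\sqrt{\epslk}$ from \cref{sec:avgnmqss}, and the average-to-worst-case upgrade \cref{thm:avgtoworstnmss} that inflates the non-malleability error by $2^b$.
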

It remains to set the parameters in \cref{thm:qnmssbeforeparams} (which come from the underlying objects used to define our secret sharing scheme in \cref{sec:qnmssscheme}) appropriately so that we can obtain \cref{thm:mainqnmss}.

\subsection{Setting the parameters}\label{sec:settingparams}

To conclude our argument, we now show how to instantiate the objects used to define our secret sharing scheme in \cref{sec:qnmssscheme} to get \cref{thm:mainqnmss}.
First, due to \cref{thm:22nmss}, the underlying $2$-out-of-$2$ $(\epspriv,\epsnm)$-non-malleable scheme $(2\nmshare,2\nmrec)$ can be set so that $\epspriv=\epsnm=2^{-n^{\Omega(1)}}$ and that the quantum left share and classical right share have lengths $b_1$ and $b_2$, respectively, satisfying $b_1+b_2=n$, provided that the input quantum message has length $b\leq n^c$ for some sufficiently small absolute constant $c>0$.
Since $b_1\leq n$, the scheme $(\qshare,\qrec)$ produces shares of length $\ell=\poly(p,n)$ by \cref{fact:qss}.
Finally, by \cref{thm:2naugmented}, the scheme $(\lrenc,\lrdec)$ can be set so that it is $(\ell=\poly(p,n),\epslk=2^{-n^{\Omega(1)}})$-augmented leakage-resilient with shares of length $\poly(p,n)$.
Additionally, all of these procedures run in time $\poly(p,n)$.
Combining this with \cref{thm:qnmssbeforeparams} yields the desired \cref{thm:mainqnmss}, provided that the input quantum message length $b$ is at most $n^c$ for a sufficiently small absolute constant $c>0$.
We restate the resulting theorem here for convenience.

\qnmss*

\subsection{Quantum-secure classical non-malleable secret sharing schemes}\label{sec:qsqnmss}

In this section, we present and analyze non-malleable threshold secret sharing schemes for classical messages secure against quantum adversaries with shared entanglement. 
Here, both the sharing and reconstruction procedures are classical, and we call these schemes \emph{quantum-secure non-malleable.}
We obtain such schemes via a simpler and entirely analogous realization of the approach laid out earlier in this \cref{sec:nmssquantummsg}, combining a known split-state non-malleable code for classical messages from~\cite{BBJ23} with Shamir's secret sharing scheme and a quantum-secure leakage-resilient secret sharing scheme for classical messages.

We begin by defining threshold quantum-secure non-malleable secret sharing schemes.
This definition follows along the lines of \cref{def:qnmss} specialized to classical messages and with a different definition of the register $\hat{M}$.
Let $\sigma_M$ be a classical message, and set $\hat{M}$ to contain a copy of this message.\footnote{Here, the register $\hat{M}$ is a classical copy of $M$, while in our definition of NMSS schemes for quantum messages $\hat{M}$ is the canonical purification register of $M$.} 
We consider a classical scheme $(\nmshare,\nmrec)$ with a classical sharing procedure $\nmshare : \mathcal{L}(\mathcal{H}_M) \to \mathcal{L}(\mathcal{H}_{S_1} \otimes \mathcal{H}_{S_2} \otimes \cdots \otimes  \mathcal{H}_{S_n})$ and a classical reconstruction procedure $\nmrec : \mathcal{L}( \bigotimes_{i \in T}\mathcal{H}_{S_i} ) \to \mathcal{L}(\mathcal{H}_M)$, where $\cL(\cH)$ is the space of all linear operators in the Hilbert space $\cH$ and the reconstruction procedure $\nmrec$ acts on any authorized set $T$.

Let $\rho = \nmshare(\sigma_{M \hat{M}})$ be the classical sharing of $\sigma_M$, and denote by $S_1, S_2, \ldots, S_p$ the classical registers corresponding to the $p$ resulting shares. 
The most basic property we require of this scheme is correctness, i.e., for any subset of shares $T \subseteq [p]$ such that $ \vert T \vert \geq t$, we must have that
\begin{equation*}
    \nmrec(\rho_{\hat{M}S_{T} }) = \sigma_{M\hat{M}}.
\end{equation*}
A split-state tampering adversary $\cA_T$ for some authorized subset $T$ of size $|T|=t$ is specified as
\begin{equation*}
    \mathcal{A}_T=\left(\bigotimes_{i\in T}U_i, \ket{\psi}_{W_1W_2 \ldots W_p}\right),
\end{equation*}
where $\ket{\psi}_{W_1W_2 \ldots W_p}$ represents the entangled quantum state shared among tampering adversaries and $U_i : \cL(\cH_{S_i}) \otimes \cL(\cH_{W_i})\to  \cL(\cH_{S_i}) \otimes  \otimes \cL(\cH_{W_i})$ are the tampering maps.
 We denote the resulting tampered state by
\begin{equation*}
    \tau^{\cA_T} = \left(\left(\bigotimes_{i\in T}U_i\right)\otimes\left(\bigotimes_{j\not\in T} \id\right)\right) \left(\nmshare( \sigma) \otimes \ketbra{\psi}_{W_1 \cdots W_p} \right) \left(\left(\bigotimes_{i\in T}U^\dagger_i\right)\otimes\left(\bigotimes_{j\not\in T} \id^\dagger\right)\right) .
\end{equation*}

We are now ready to define threshold quantum-secure non-malleable secret sharing schemes.
\begin{definition}[Threshold quantum-secure non-malleable secret sharing scheme\label{def:nmssqcodesclassical}]
The coding scheme $(\nmshare,\nmrec)$ is said to be a \emph{$t$-out-of-$p$ $(\epspriv,\epsnm)$-quantum-secure  
 non-malleable secret sharing scheme}, if for any  classical message ${\sigma_M} \in \cD(\cH_M)$ (with copy $\sigma_{\hat{M}}$ in  $\sigma_{M\hat{M}}$) the following properties are satisfied:
\begin{itemize}
    \item \textbf{Correctness:} For any $T \subseteq [p]$ such that $ \vert T \vert \geq t$  it holds that
    \begin{equation*}
        \nmrec (\nmshare( \sigma_{M \hat{M}})_{S_{T}}) = \sigma_{M\hat{M}},
    \end{equation*}
    where we write $S_T=(S_i)_{i\in T}$.
    
    \item \textbf{Statistical privacy:} For any $T \subseteq [p]$ such that $ \vert T\vert \leq t-1$ it holds that
    \begin{equation*}
        \nmshare( \sigma_{M\hat{M}})_{\hat{M}S_{T}}  \approx_{\epspriv} \sigma_{\hat{M}} \otimes  \zeta_{S_{T}},
    \end{equation*}
    where $\zeta_{S_{T}}$ is a fixed state independent of $\sigma_M$.
    
    \item \textbf{Non-malleability:} For any $T \subseteq [p]$ such that $ \vert T \vert = t$ and for every split-state tampering adversary $\mathcal{A}_{T}$ as above, it holds that
    \begin{equation}\label{eq:qsnmssgoal}
        \nmrec(\tau^{\cA_T}_{\hat{M}S_{T} }) \approx_{\epsnm}  
    p_{\cA_T} \cdot \sigma_{\hat{M}M} + (1-p_{\cA_T}) \sigma_{\hat{M}} \otimes \gamma^{\cA_T}_{M},
    \end{equation}
    where $p_{\cA_T}\in[0,1]$ and $\gamma^{\cA_T}_{M}$ depend only on the adversary $\cA_T$.

    If \cref{eq:qsnmssgoal} is only guaranteed to hold when $\sigma_M=U_M$, we say that $(\nmshare,\nmrec)$ is an \emph{average-case} $t$-out-of-$p$ $(\epspriv, \epsnm)$-quantum-secure non-malleable secret sharing scheme.

\end{itemize}

\end{definition}

\suppress{
\begin{figure}
\centering
\resizebox{12cm}{6cm}{
\begin{tikzpicture}

\node at (1,4.5) {$\hat{M}$};
\node at (1,1.5) {$M$};
\draw (1.2,4.5) -- (4.5,4.5);
\draw (4.5,4.5) -- (14,4.5);
\draw (14,4.5) -- (15,4.5);
\node at (14.5,4.7) {$\hat{M}$};


\draw (1.25,1.5) -- (3,1.5);
\draw (3,-0.5) rectangle (4.5,3.5);
\node at (3.8,1.5) {$\qenc$};

\draw  (1,1.7) -- (1,4.3);
\draw [dashed] (2.78,-0.8) -- (2.78,5.5);
\draw [dashed] (14.3,-0.8) -- (14.3,5.5);

\node at (6.2,1.6) {$\ket{\psi}_{}$};
\node at (2.64,-0.8) {$\sigma$};
\node at (14.13,-0.8) {$\eta$};

\node at (4.6,3) {$L$};
\node at (8,3) {$L'$};
\draw (4.5,2.8) -- (6,2.8);
\draw (7,2.8) -- (9.5,2.8);
\draw (9.5,2.8) -- (11,2.8);
\node at (14.5,2.6) {$M'$};
\draw (14,2.4) -- (15,2.4);

\node at (4.6,0.4) {$R$};
\node at (8,0.0) {$R'$};
\draw (4.5,0.2) -- (6,0.2);
\draw (7,0.2) -- (9.5,0.2);
\draw (9.5,0.2) -- (11,0.2);

\draw (6,2) rectangle (7,3);
\node at (6.5,2.5) {$U$};
\draw (6,0) rectangle (7,1);
\node at (6.5,0.5) {$V$};

\node at (6.5,-0.4) {$\mathcal{A}=(U,V,\psi)$};
\draw (5.2,-0.8) rectangle (7.8,3.8);

\draw (5.5,1.5) ellipse (0.2cm and 1cm);
\node at (5.5,2) {$W_1$};
\draw (5.7,2.2) -- (6,2.2);
\node at (7.4,2.1) {$W_1'$};
\draw (7,2.5) -- (7.4,2.5);
\node at (7.6,2.6) {$\hat{L}'$};

\draw (7,0.5) -- (7.2,0.5);
\draw (7,2.2) -- (7.2,2.2);
\node at (5.5,1) {$W_2$};
\draw (5.7,0.7) -- (6,0.7);
\node at (7.4,0.9) {$W_2'$};
\node at (7.5,0.5) {$\hat{R}'$};
\draw (7.0,0.7) -- (7.2,0.7);


\draw (11,-0.5) rectangle (14,3.2);
\node at (12.5,1.5) {$\qdec$};

\end{tikzpicture}}
\caption{Average-case quantum-secure $2$-out-of-$2$ NMSS.}\label{fig:22nmss11}
\end{figure}

\begin{figure}
\centering
\resizebox{12cm}{6cm}{
\begin{tikzpicture}

\node at (1,4.5) {$\hat{M}$};
\node at (1,1.5) {$M$};
\draw (1.2,4.5) -- (3,4.5);
\draw (3,4) rectangle (4.5,5.5);
\draw (12.5,4) rectangle (14,5.5);
\draw (4.5,4.5) -- (12.5,4.5);
\draw (14,4.5) -- (15,4.5);
\draw (4.5,5) -- (12.5,5);
\node at (14.5,4.7) {$\hat{M}$};

\node at (4.7,5.3) {$\hat{L}$};
\node at (4.7,4.7) {$\hat{R}$};

\draw (1.25,1.5) -- (3,1.5);
\draw (3,-0.5) rectangle (4.5,3.5);
\node at (3.8,1.5) {$\qenc$};
\node at (3.8,4.8) {$\qenc$};
\node at (13.3,4.8) {$\qdec$};

\draw  (1,1.7) -- (1,4.3);
\draw [dashed] (2.78,-0.8) -- (2.78,5.5);
\draw [dashed] (4.78,-0.8) -- (4.78,5.5);
\draw [dashed] (8.3,-0.8) -- (8.3,5.5);
\draw [dashed] (14.3,-0.8) -- (14.3,5.5);

\node at (6.2,1.6) {$\ket{\psi}_{}$};
\node at (2.64,-0.8) {$\sigma$};
\node at (4.64,-0.8) {$\rho$};
\node at (8.1,-0.8) {${\tau}$};
\node at (14.13,-0.8) {$\eta$};

\node at (4.6,3) {$L$};
\node at (8,3) {$L'$};
\draw (4.5,2.8) -- (6,2.8);
\draw (7,2.8) -- (9.5,2.8);
\draw (9.5,2.8) -- (11,2.8);
\node at (14.5,2.6) {$M'$};
\draw (14,2.4) -- (15,2.4);

\node at (4.6,0.4) {$R$};
\node at (8,0.0) {$R'$};
\draw (4.5,0.2) -- (6,0.2);
\draw (7,0.2) -- (9.5,0.2);
\draw (9.5,0.2) -- (11,0.2);

\draw (6,2) rectangle (7,3);
\node at (6.5,2.5) {$U$};
\draw (6,0) rectangle (7,1);
\node at (6.5,0.5) {$V$};

\node at (6.5,-0.4) {$\mathcal{A}=(U,V,\psi)$};
\draw (5.2,-0.8) rectangle (7.8,3.8);

\draw (5.5,1.5) ellipse (0.2cm and 1cm);
\node at (5.5,2) {$W_1$};
\draw (5.7,2.2) -- (6,2.2);
\node at (7.4,2.1) {$W_1'$};
\draw (7,2.5) -- (7.4,2.5);
\node at (7.6,2.6) {$\hat{L}'$};

\draw (7,0.5) -- (7.2,0.5);
\draw (7,2.2) -- (7.2,2.2);
\node at (5.5,1) {$W_2$};
\draw (5.7,0.7) -- (6,0.7);
\node at (7.4,0.9) {$W_2'$};
\node at (7.5,0.5) {$\hat{R}'$};
\draw (7.0,0.7) -- (7.2,0.7);


\draw (11,-0.5) rectangle (14,3.2);
\node at (12.5,1.5) {$\qdec$};

\end{tikzpicture}}
\caption{Average-case quantum-secure $2$-out-of-$2$ NMSS.}\label{fig:22nmss}
\end{figure}
}

We will require the following known quantum-secure $2$-out-of-$2$ non-malleable secret sharing scheme from~\cite{BBJ23}, along with standard Shamir's threshold secret sharing scheme.
\begin{theorem}[Average-case $2$-out-of-$2$ quantum-secure non-malleable secret sharing scheme with constant rate~\cite{BBJ23}]\label{thm:bbj1}
For any fixed constant $c>0$, there exists an average-case $2$-out-of-$2$ $(\epspriv, \epsnm)$-quantum-secure non-malleable secret sharing scheme for classical messages with codeword length $n$, message length at most $\left(\frac{1}{5}-c\right)n$, $\epspriv = 2^{-n^{\Omega_c(1)}}$ and  $\epsnm = 2^{-n^{\Omega_c(1)}}$, where $\Omega_c(\cdot)$ hides constants which depend only on $c$.
    Furthermore, the sharing and reconstruction procedures can be computed in time $\poly(n)$.
\end{theorem}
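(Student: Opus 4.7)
The plan is to derive this result from the explicit quantum-secure $2$-source non-malleable extractor $\nmext:\bits^{n_1}\times\bits^{\delta n_1}\to\bits^r$ of Fact~\ref{lem:qnmcodesfromnmext} via the standard ``non-malleable randomness encoder'' reduction. I will instantiate $\nmext$ with $r=(1/2-\delta)n_1$ for a small $\delta$ and define $\nmshare(s)$ to sample $(X,Y)\leftarrow\nmext^{-1}(s)$ uniformly and output $S_1=X,\ S_2=Y$, with $\nmrec(S_1,S_2)=\nmext(S_1,S_2)$. Correctness is immediate; and in the average-case regime (the message $s$ is uniform over $\bits^r$), drawing $(X,Y)\leftarrow\nmext^{-1}(U_r)$ is statistically indistinguishable from drawing $X\leftarrow U_{n_1}$ and $Y\leftarrow U_{\delta n_1}$ independently, which is exactly the regime to which Fact~\ref{lem:qnmcodesfromnmext} applies.

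Privacy for $T=\{1\}$ (the case $T=\{2\}$ is symmetric) will follow directly from the strong extraction guarantee (Item~1 of Fact~\ref{lem:qnmcodesfromnmext}): since $\nmext(X,Y)\,X\approx_{\eps} U_r\otimes U_{n_1}$, the joint distribution of the message $R=\nmext(X,Y)$ and the share $S_1=X$ is $\eps$-close to the product $U_R\otimes\zeta_{S_1}$ with fixed $\zeta_{S_1}=U_{n_1}$, which matches the average-case privacy condition of Definition~\ref{def:nmssqcodesclassical} with $\epspriv\le \eps$. For non-malleability, fix any split-state adversary $\mathcal{A}=(U,V,\ket{\psi}_{W_1W_2})$ tampering $(X,Y)\to(X',Y')$, set $p_{\sm}=\Pr[(X,Y)=(X',Y')]$, and condition on the two branches. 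On the ``same'' branch one automatically has $R'=R$, yielding exactly $\sigma_{M\hat{M}}$. On the ``tampering'' branch, Item~2 of Fact~\ref{lem:qnmcodesfromnmext} guarantees that the joint state of $(R,R')$ is $\eps$-close to $U_R\otimes\rhotamp_{R'}$ for a distribution $\rhotamp_{R'}$ that depends only on $\mathcal{A}$; after tracing over the auxiliary quantum registers this is exactly the second summand in the non-malleability definition with $\gamma^{\mathcal{A}}_M=\rhotamp_{R'}$. Setting $p_{\mathcal{A}}=p_{\sm}$ and combining the two branches by the triangle inequality yields $\epsnm\le 2\eps=2^{-n^{\Omega_c(1)}}$.

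The main technical obstacle is the efficiency of $\nmshare$, which requires sampling from $\nmext^{-1}(s)$ in $\poly(n)$ time. The plan to handle this is to exploit the specific algebraic structure of the BBJ23 extractor: for every fixed $X$, the map $Y\mapsto\nmext(X,Y)$ is a well-structured algebraic map (built from an inner-product-style core composed with an alternating extractor) whose fibers are efficiently listable and uniformly sized, so one can sample $X\leftarrow U_{n_1}$ and then sample $Y$ uniformly from $\{y:\nmext(X,y)=s\}$ in polynomial time, with any residual sampling bias absorbed into $\epspriv$ and $\epsnm$. The reason the achievable rate is $1/5$ rather than the $\approx 1/2$ suggested by the raw parameters of Fact~\ref{lem:qnmcodesfromnmext} is the combined overhead of (i) the efficient-inversion step, which forces a specific balance between $|X|$ and $|Y|$, and (ii) the slack needed to drive the extractor error below $2^{-n^{\Omega_c(1)}}$ while keeping the output length linear in $n$; choosing $\delta$ sufficiently small in terms of $c$ then yields message length at least $(1/5-c)n$ with both $\epspriv$ and $\epsnm$ of the form $2^{-n^{\Omega_c(1)}}$ and both $\nmshare$ and $\nmrec$ running in $\poly(n)$ time.
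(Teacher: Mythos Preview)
This theorem is not proved in the paper at all: it is quoted as a result of Batra, Boddu, and Jain~\cite{BBJ23} and used as a black box in \cref{sec:qnmssschemeqs}. So there is no ``paper's own proof'' to compare against, and your proposal should be read as an attempt to reconstruct the BBJ23 argument from the non-malleable extractor of \cref{lem:qnmcodesfromnmext}.

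Your high-level idea---use $\nmext$ as a non-malleable randomness encoder, with shares $(X,Y)$ and message $\nmext(X,Y)$---is indeed the mechanism underlying the BBJ23 result, and your reductions of privacy to Item~1 and of non-malleability to Item~2 of \cref{lem:qnmcodesfromnmext} are correct in the regime where $(X,Y)$ are uniform and independent. However, there are two genuine gaps. First, the efficiency of $\nmshare$: you assert that for fixed $X$ the fibers of $Y\mapsto\nmext(X,Y)$ are ``efficiently listable and uniformly sized,'' but the BBJ23 extractor is built from alternating extraction and does not obviously have this algebraic structure; this step is the crux and you have not actually carried it out. The standard way BBJ23-type constructions avoid preimage sampling is to run the NMRE \emph{forward} (sample $X,Y$ uniformly, set $R=\nmext(X,Y)$) and then tie the given message $s$ to $R$ via a one-time pad plus a quantum-secure information-theoretic MAC appended to one share; this is what introduces the extra overhead. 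Second, your rate accounting is off: your bare construction has message length $(1/2-\delta)n_1$ and codeword length $(1+\delta)n_1$, i.e.\ rate close to $1/2$, not $1/5$. Your explanation that ``efficient inversion forces a specific balance'' is not what drives the rate down; the loss from $1/2$ to $1/5$ in \cite{BBJ23} comes precisely from the one-time-pad/MAC layer mentioned above, which you have not included. So the proposal captures the right starting point but omits the actual construction that makes the scheme both efficient and achieves the stated rate.
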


\suppress{
\begin{theorem}[Worst-case $2$-split-state quantum-secure NMC~\cite{BBJ23}]\label{thm:bbj2}
    There exists a constant $c\in(0,1)$ such that the following holds:
    There exists a quantum-secure $\eps$-non-malleable code with codeword length $n$ (two shares of sizes $n/5$ , $4n/5$ respectively), message length at most $n^c$, and error $\eps = 2^{-n^{\Omega(1)}}$.
    Furthermore, the encoding and decoding procedures of this code can be computed in time $\poly(n)$.
\end{theorem}}

\begin{fact}[Classical Shamir's secret sharing~\cite{Sha79}]\label{fact:css}
    For any number of parties $p$ and threshold $t$ such that $t\leq p$, there exists a $t$-out-of-$p$ secret sharing scheme $(\share,\rec)$ for classical messages of length $b$ with share size at most $\max(p,b)$, where both the sharing and reconstruction procedures run in time $\poly(p,b)$. 
\end{fact}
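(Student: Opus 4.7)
The plan is to instantiate the classical Shamir secret sharing construction via polynomial interpolation over a suitable finite field. First I would choose a finite field $\mathbb{F}_q$ of size $q = 2^{\lceil \log_2 \max(p+1, 2^b) \rceil}$, so that elements of $\mathbb{F}_q$ can be encoded with $\max(\lceil \log_2(p+1) \rceil, b) \le \max(p, b)$ bits (for $p \geq 2$, which is the only interesting regime). I would then view the classical message $s \in \{0,1\}^b$ canonically as an element of $\mathbb{F}_q$ via some fixed embedding, and fix $p$ distinct nonzero evaluation points $\alpha_1, \ldots, \alpha_p \in \mathbb{F}_q$.

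For the sharing procedure $\share(s)$, I would sample $a_1, \ldots, a_{t-1}$ independently and uniformly from $\mathbb{F}_q$, define the polynomial
\begin{equation*}
    f(X) = s + a_1 X + a_2 X^2 + \cdots + a_{t-1} X^{t-1} \in \mathbb{F}_q[X],
\end{equation*}
and set the $i$-th share to be $S_i = f(\alpha_i)$ for each $i \in [p]$. For the reconstruction procedure $\rec$, given any authorized set $T \subseteq [p]$ with $|T| \geq t$, I would apply Lagrange interpolation to the $|T|$ pairs $(\alpha_i, S_i)_{i \in T}$ to recover the unique degree-at-most-$(t-1)$ polynomial consistent with them, and output $f(0) = s$.

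The two things to verify are correctness and perfect privacy. Correctness is immediate from the fact that a polynomial of degree at most $t-1$ is uniquely determined by its evaluations at any $t$ distinct points, which is standard Lagrange interpolation. For privacy against any unauthorized $T \subseteq [p]$ with $|T| \leq t-1$, I would show that the joint distribution of $(S_i)_{i \in T}$ is uniform over $\mathbb{F}_q^{|T|}$ independently of $s$: for any fixed $s$ and any target tuple $(\beta_i)_{i \in T} \in \mathbb{F}_q^{|T|}$, the number of polynomials $f$ with $f(0) = s$ and $f(\alpha_i) = \beta_i$ for all $i \in T$ equals $q^{t-1-|T|}$, since the $|T|+1 \leq t$ constraints $f(0) = s$, $f(\alpha_i) = \beta_i$ impose $|T|+1$ linearly independent conditions on the $t$ coefficients of $f$, leaving exactly $t - 1 - |T|$ free coefficients.

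Efficiency is routine: sampling the $t-1$ random field elements and evaluating $f$ at the $\alpha_i$ takes $\poly(p, b)$ time, and Lagrange interpolation also runs in $\poly(p,b)$ time. The main (minor) obstacle is simply the bookkeeping to ensure the share size bound $\max(p,b)$ is met exactly rather than up to an additive constant, which can be handled by picking $q$ to be the smallest power of $2$ at least $\max(p+1, 2^b)$ and noting $\lceil \log_2(p+1)\rceil \leq p$ for $p \geq 1$.
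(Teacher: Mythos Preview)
Your proposal is correct and gives the standard Shamir construction with the usual correctness and privacy arguments. Note that the paper does not actually prove this statement: it is recorded as a fact with a citation to~\cite{Sha79} and used as a black box, so there is no paper proof to compare against; your write-up simply fills in the well-known details.
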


\subsubsection{Our candidate quantum-secure non-malleable secret sharing scheme}\label{sec:qnmssschemeqs}

We leverage the $2$-out-of-$2$ quantum-secure non-malleable secret sharing scheme from~\cref{thm:bbj1} and our quantum-secure augmented leakage-resilient secret sharing scheme for classical messages from \cref{thm:2naugmented} to construct quantum-secure non-malleable threshold secret sharing schemes.
As mentioned before, we accomplish this via the same techniques we used to construct our NMSS schemes for quantum messages earlier in this section.

Fix a threshold $t\geq 3$ and a number of parties $p$ such that $t\leq p$.
We will require the following objects:
\begin{itemize}
    \item A $2$-out-of-$2$ $(\epsnm=\eps,\epspriv=\eps)$-quantum-secure non-malleable secret sharing scheme, which we denote by $(2\nmenc,2\nmdec)$, for messages of length $b$ with a left share of length $b_1$ and a right share of length $b_2$, guaranteed by \cref{thm:bbj1};

    \item A $t$-out-of-$p$ secret sharing scheme $(\share,\rec)$ for messages of length $b_1$ with shares of length at most $\ell$, guaranteed by \cref{fact:css};

    \item A $2$-out-of-$p$ $(\ell,\epslk)$-quantum-secure 
    augmented leakage-resilient secret sharing scheme, which we denote by $(\lrenc,\lrdec)$, for classical messages of length $b_2$, guaranteed by \cref{thm:2naugmented}.
\end{itemize}

We proceed to describe our candidate $t$-out-of-$p$ scheme $(\nmshare,\nmrec)$.
On input a classical message $\sigma_M$ (with a copy $\hat{M}$ in $\sigma_{M\hat{M}}$), the classical sharing algorithm $\nmshare(\sigma)$ proceeds as follows:
\begin{enumerate}
    \item Compute the encoding $\rho_{LR}=2\nmshare(\sigma_M)$;

     \item Apply $\share$ to the contents of register $L$ to obtain $p$  shares stored in registers $L_1,\dots,L_p$;

    \item Apply $\lrenc$ to the contents of register $R$ to obtain $p$  shares stored in registers $R_1,\dots,R_p$;

    \item Form the $i$-th final share $S_i=(L_i,R_i)$.
\end{enumerate}
The corresponding reconstruction algorithm $\nmrec$ is straightforward.
The differences between our quantum-secure NMSS scheme considered here and the NMSS scheme for quantum messages from \cref{sec:qnmssscheme} are the following:
\begin{itemize}
    \item The NMSS scheme for quantum messages from \cref{sec:qnmssscheme} uses $(2\nmshare, 2\nmrec)$ guaranteed by \cref{thm:22nmss}, while we use $(2\nmshare, 2\nmrec)$ guaranteed by \cref{thm:bbj1}.
    
    \item The NMSS scheme for quantum messages from \cref{sec:qnmssscheme} uses quantum Shamir's secret sharing $(\qshare, \qrec)$ guaranteed by \cref{fact:qss}, while we use classical Shamir's secret sharing $(\share, \rec)$ guaranteed by \cref{fact:css}. 
    This ensures that for classical messages we get $t$-out-of-$p$ NMSS schemes for any $t \leq p$ and $t \geq 3$.
    
    \item In the non-malleability argument for our NMSS scheme for quantum messages in \cref{sec:avgnmqss}, we have used the $\mathsf{Uhlmann}$ isometry to create the coherent copies of quantum registers. 
    For the non-malleability argument of our quantum-secure NMSS scheme we can completely ignore such an argument and just copy the classical registers and proceed with the rest of argument.
\end{itemize}

We have the following result. 
\begin{restatable}[Average-case $t$-out-of-$p$ quantum-secure non-malleable secret sharing scheme]{theorem}{avg-quantum-sec-nmss}\label{thm:mainqnmssqs1}
The coding scheme 
$(\nmshare,\nmrec)$ as described in \cref{sec:qnmssschemeqs} is an average-case $(\epspriv,\epsnm+2\sqrt{\epslk})$-quantum-secure 
 non-malleable secret sharing scheme. 
\end{restatable}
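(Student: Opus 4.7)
The plan is to adapt the proof strategy of the quantum-message analogue \cref{thm:mainqnmss} (developed in \cref{sec:statpriv,sec:avgnmqss}) to the classical setting. The argument becomes considerably simpler: both the message and every intermediate register in the sharing procedure are classical, so the Uhlmann isometry used in the quantum case to create coherent copies of the left and right shares can be replaced by plain classical copying, and the Shamir-type privacy statement (analogue of \cref{fact:qsshiding}) follows immediately from the perfect privacy of classical Shamir's scheme.

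For statistical privacy, I would fix $T \subseteq [p]$ with $|T| \leq t-1$ and argue that (i) perfect privacy of classical Shamir gives $L_T$ uniform and independent of $(L, R, \hat{M})$, and (ii) the $\epspriv$-statistical privacy of $(2\nmshare,2\nmrec)$ gives $(2\nmshare(\sigma))_{\hat{M}R} \approx_{\epspriv} \sigma_{\hat{M}} \otimes \tau_R$ for a fixed state $\tau_R$. Applying $\lrenc$ to $R$ and tracing out everything outside $T$ (postprocessing of trace distance, \cref{fact:data}) yields $\nmshare(\sigma)_{\hat{M}S_T} \approx_{\epspriv} \sigma_{\hat{M}} \otimes \zeta_{S_T}$ with $\zeta_{S_T} = \lrenc(\tau_R)_{R_T} \otimes U_{L_T}$, matching the privacy component of the conclusion.

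For average-case non-malleability, fix an adversary $\cA = (U_1,\ldots,U_t,\ket{\psi}_{E_1\cdots E_p})$ and set $T = [t]$ without loss of generality. The plan is to construct, from $\cA$, a $2$-split-state adversary $\cA' = (U,V,\ket{\gamma}_{W_1W_2})$ attacking $(2\nmshare,2\nmrec)$ whose decoded output is $2\sqrt{\epslk}$-close in trace distance to the decoded output of $\cA$ on the full scheme; combining this with the $\epsnm$-non-malleability of $(2\nmshare,2\nmrec)$ and a triangle inequality then yields the claimed $(\epsnm+2\sqrt{\epslk})$-non-malleability, with $p_{\cA}$ and $\gamma^{\cA}_M$ inherited from the inner scheme. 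The construction mirrors \cref{sec:avgnmqss}: let $\theta$ be the state after applying $\share$ to $L$ and $\lrenc$ to $R$ (now with $\hat{L}, \hat{R}$ mere classical copies), and $\theta'$ the state after additionally applying $U_1,\ldots,U_{t-1}$. Perfect classical Shamir privacy on $\hat{L}$ delivers $\theta'_{\hat{L}\hat{R}W'_2} = \theta'_{\hat{L}} \otimes \theta'_{\hat{R}W'_2}$ (analogue of \cref{claim:qss1}), and the augmented leakage-resilience of $(\lrenc,\lrdec)$, with $L_i$ viewed as $\ell$-bounded classical leakage from $R_i$ and $(L_i, E_i)$ as the entangled side-register, delivers $\theta'_{\hat{R}\hat{L}W'_1} \approx_{\epslk} \theta'_{\hat{R}} \otimes \theta'_{\hat{L}W'_1}$ (analogue of \cref{claim:qss2}). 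Feeding these two near-product structures into \cref{lem:qsstechnical} produces isometries $U', V'$ and a shared state $\ket{\gamma}_{W_1 W_2}$ depending only on $(U_1,\ldots,U_{t-1}, \ket{\psi})$ with $\theta'_{W_1 W_2} \approx_{2\sqrt{\epslk}} (U' \otimes V')(\rho \otimes \gamma)({U'}^\dagger \otimes {V'}^\dagger)$; the maps $U, V$ are then assembled by composing $U', V'$ with $U_t$, $\rec$, and $\lrdec$ exactly as in \cref{sec:avgnmqss} (the restriction $t \geq 3$ enters because $V$ must reconstruct from $R_{[2]} \subset W'_2$).

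The main technical point to verify is that the augmented leakage-resilience of \cref{thm:2naugmented} applies in this reduction: one must check that the side-registers $(W_1,\ldots,W_p)$ presented to the leakage adversary (containing the classical Shamir shares of $L$ and the adversary-supplied entanglement $\ket{\psi}_{E_1\cdots E_p}$) are set up independently of the right-share encoding $R_1,\ldots,R_p$, which holds because $L$ and $\ket{\psi}$ are independent of $R$ by construction. Beyond this verification and routine data-processing and triangle-inequality steps, the proof is a straightforward classical simplification of \cref{sec:avgnmqss}.
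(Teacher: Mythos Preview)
Your proposal is correct and takes essentially the same approach as the paper. The paper's own proof simply states that correctness and runtime are clear and that statistical privacy and average-case non-malleability follow analogously to the arguments in \cref{sec:statpriv} and \cref{sec:avgnmqss}, having already noted (just before the theorem statement) precisely the simplifications you identify: classical Shamir replaces quantum Shamir, the $(2\nmshare,2\nmrec)$ from \cref{thm:bbj1} replaces that of \cref{thm:22nmss}, and classical copying of registers replaces the $\mathsf{Uhlmann}$ isometry.
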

\begin{proof}
The correctness and runtime of this coding scheme are clear. 
Establishing the statistical privacy and average-case non-malleability of our scheme follows analogously to the arguments in~\cref{sec:statpriv} and \cref{sec:avgnmqss}, respectively, and we opt to not repeat them.
\end{proof}

As before, it remains to upgrade average-case non-malleability to worst-case non-malleability.
This is also generically possible in the present setting, as captured by the following lemma whose proof is analogous to those of \cref{thm:avgtoworst,thm:avgtoworstnmss}.
\begin{lemma}\label{thm:avgtoworstnmssqs}
    If $(\share,\rec)$ is an average-case $(\epspriv,\epsnm)$-quantum-secure 
 non-malleable secret sharing scheme for messages of length $b$, then it is also a worst-case  $(\epspriv,\epsnm')$-quantum-secure 
non-malleable secret sharing scheme for messages of length $b$, where $\epsnm'=2^b\cdot \epsnm$.
\end{lemma}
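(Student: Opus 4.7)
The plan is to mirror the rejection-sampling argument already used in \cref{thm:avgtoworst}, adapted to the fully classical (but quantum-secure) secret-sharing setting. Fix an arbitrary classical message $\sigma_M$ with classical copy $\sigma_{M\hat{M}} = \sum_x p_x \ketbra{x}_M \otimes \ketbra{x}_{\hat{M}}$ and an arbitrary split-state tampering adversary $\cA_T = \bigl( (U_i)_{i\in T}, \ket{\psi}_{W_1\cdots W_p}\bigr)$ acting on an authorized subset $T$ of size $t$. Let $\rho_M = U_M$ denote the uniformly random classical message with its corresponding classical copy state $\rho_{M\hat{M}}$. Since the share, tamper, and reconstruction CPTP maps all act only on registers $S_1,\dots,S_p,W_1,\dots,W_p$ and are \emph{safe} on the register $\hat{M}$, every operation in the tampering experiment commutes with any measurement performed on $\hat{M}$ alone; this is the crucial structural property that enables the reduction.

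Next, I would observe that $\sigma_M \leq \id_M = 2^b \, \rho_M$ and consequently $\dmax{\sigma_{M\hat{M}}}{\rho_{M\hat{M}}} \leq b$. Invoking \cref{rejectionsampling} (applied to canonical purifications of the classical copy states, which have the same diagonal structure), there exists a measurement $\{\Pi,\bar\Pi\}$ on $\hat{M}$ with success probability $\tr(\Pi\rho_{M\hat{M}}) \geq 2^{-b}$ such that, conditioned on the outcome $\Pi$, the resulting state on $M\hat{M}$ is exactly $\sigma_{M\hat{M}}$. Concretely, one can simply take $\Pi = \sum_x p_x \ketbra{x}_{\hat{M}}$ and verify directly that $\tr_{\hat{M}}\!\bigl((\id \otimes \Pi)\rho_{M\hat{M}}\bigr)/\tr(\Pi\rho_{M\hat{M}}) = \sigma_M$, which sidesteps any concern about pure-state hypotheses in \cref{rejectionsampling}.

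Let $\eta$ and $\zeta$ denote the outcomes of the full tampering experiment (sharing, adversary, reconstruction) applied to $\rho_{M\hat{M}}$ and $\sigma_{M\hat{M}}$, respectively. Because $\Pi$ commutes with every operation of the experiment, we have $\zeta_{M\hat{M}} = \Pi \eta_{M\hat{M}} \Pi / \tr(\Pi \rho_{M\hat{M}})$, and note also that $\eta_{\hat{M}} = \rho_{\hat{M}}$ (the $\hat{M}$ marginal is untouched). By the hypothesized average-case non-malleability applied to $\rho$, there exist $p_{\cA_T}\in[0,1]$ and a state $\gamma^{\cA_T}_M$ depending only on $\cA_T$ such that
\begin{equation*}
    \eta_{\hat{M}M} \approx_{\epsnm} p_{\cA_T}\rho_{\hat{M}M} + (1-p_{\cA_T})\rho_{\hat{M}}\otimes \gamma^{\cA_T}_M.
\end{equation*}
Conditioning both sides on the $\Pi$ outcome and invoking \cref{traceavg1} contracts the trace distance by the factor $1/\tr(\Pi\rho_{M\hat{M}}) \leq 2^b$, yielding
\begin{equation*}
    \zeta_{\hat{M}M} \approx_{2^b \epsnm} p_{\cA_T}\sigma_{\hat{M}M} + (1-p_{\cA_T})\sigma_{\hat{M}}\otimes \gamma^{\cA_T}_M,
\end{equation*}
which is precisely the worst-case non-malleability bound with $\epsnm' = 2^b\epsnm$. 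Statistical privacy transfers unchanged because its statement must already hold for every classical message $\sigma_M$. The main (very mild) subtlety I expect is verifying carefully that the measurement $\Pi$ on $\hat{M}$ commutes with each $U_i$ and with $\nmshare,\nmrec$; this is immediate since none of those maps touch $\hat{M}$, but it is the only place where the argument would break if $\hat{M}$ were interpreted as anything other than an inert side register.
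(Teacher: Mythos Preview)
Your approach is exactly what the paper intends: it states only that the proof is analogous to \cref{thm:avgtoworst} (and \cref{thm:avgtoworstnmss}), i.e., the rejection-sampling reduction you wrote out. One small technical slip to clean up: with your explicit choice $\Pi = \sum_x p_x\ketbra{x}_{\hat{M}}$, the operator $\Pi$ is a POVM element but not a projector, so the post-selected state is $\sqrt{\Pi}\,\eta\,\sqrt{\Pi}/\tr(\Pi\eta)$ rather than $\Pi\eta\Pi/\tr(\Pi\rho)$, and \cref{traceavg1} as stated applies to projectors; either replace $\Pi$ by $\sqrt{\Pi}$ throughout (it is still diagonal in the computational basis, commutes with the experiment, and reproduces $\sigma_{M\hat{M}}$), or bypass the issue entirely by conditioning classically on the value in $\hat{M}$, which in this fully classical setting gives the $2^b$ blow-up directly.
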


Finally, the choices of parameters used to instantiate our construction in this section are analogous to those of~\cref{sec:settingparams}. 
We obtain the following final result by combining \cref{thm:mainqnmssqs1} with \cref{thm:avgtoworstnmssqs}.
\begin{restatable}[Threshold quantum-secure NMSS schemes]{theorem}{quantum-sec-nmss}\label{thm:mainqnmssqs}
    There exists a constant $c\in(0,1)$ such that the following holds: Given a threshold $t\geq 3$ and a number of parties $p$ such that $t\leq p$, there exists a $(\epspriv=\eps,\epsnm=\eps)$-$t$-out-of-$p$ quantum-secure non-malleable secret sharing scheme for messages with shares of size at most $\poly(p,n)$, any message length at most $n^c$, and $\eps=2^{-n^{\Omega(1)}}$.
    Furthermore, the sharing and reconstruction procedures of this scheme can be computed in time $\poly(p,n)$.
\end{restatable}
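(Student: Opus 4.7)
The plan is to assemble the scheme $(\nmshare, \nmrec)$ exactly as described in \cref{sec:qnmssschemeqs}, and then proceed in two stages: first establish average-case non-malleability via \cref{thm:mainqnmssqs1}, and then upgrade to worst-case non-malleability via \cref{thm:avgtoworstnmssqs}, while carefully tracking parameters. Specifically, I would instantiate the three building blocks as follows. For the underlying $2$-out-of-$2$ quantum-secure non-malleable secret sharing scheme $(2\nmshare, 2\nmrec)$, I invoke \cref{thm:bbj1} with codeword length $n$ so that it is $(\epspriv = 2^{-n^{\Omega(1)}}, \epsnm = 2^{-n^{\Omega(1)}})$-quantum-secure non-malleable for messages of length $b \leq n^{c_0}$ (for some constant $c_0 > 0$ I will fix below), producing two shares $L$ and $R$ of lengths $b_1$ and $b_2$ summing to $n$. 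For the secondary $t$-out-of-$p$ scheme on $L$, I use classical Shamir's scheme $(\share, \rec)$ from \cref{fact:css} with shares of length $\ell = \max(p, b_1) \leq \poly(p,n)$. For the $2$-out-of-$p$ augmented leakage-resilient scheme on $R$, I invoke \cref{thm:2naugmented} with leakage bound $\ell = \poly(p,n)$ and error $\epslk = 2^{-n^{\Omega(1)}}$, whose shares also have size $\poly(p,n)$.

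Given this instantiation, applying \cref{thm:mainqnmssqs1} immediately yields that $(\nmshare, \nmrec)$ is an average-case $(\epspriv, \epsnm + 2\sqrt{\epslk})$-quantum-secure non-malleable secret sharing scheme, where both $\epspriv$ and $\epsnm + 2\sqrt{\epslk}$ are of the form $2^{-n^{\Omega(1)}}$. The correctness and the $\poly(p,n)$ runtime of $\nmshare$ and $\nmrec$ follow from those of the three underlying primitives.

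To obtain worst-case non-malleability, I apply \cref{thm:avgtoworstnmssqs}, which incurs a multiplicative blow-up of $2^b$ in the non-malleability error. Concretely, this yields a (worst-case) $(\epspriv, 2^b(\epsnm + 2\sqrt{\epslk}))$-quantum-secure non-malleable secret sharing scheme. The main (and essentially only) obstacle is to make sure that this $2^b$ blow-up is absorbed by the sub-exponential error of the average-case guarantee; since $\epsnm + 2\sqrt{\epslk} = 2^{-n^{\alpha}}$ for some constant $\alpha > 0$ (depending on the underlying primitives), choosing the message length bound $b \leq n^{c}$ with $c < \alpha$ ensures that $2^b \cdot 2^{-n^{\alpha}} = 2^{-n^{\Omega(1)}}$, as required.

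Finally, setting $c = \min(c_0, \alpha/2)$ guarantees simultaneously that the $2$-out-of-$2$ inner scheme supports messages of length $b \leq n^{c}$ and that the worst-case conversion preserves a sub-exponential error. The resulting share size is $b_1' + b_2' \leq \poly(p,n)$, combining the share lengths from $\share$ and $\lrenc$. I do not anticipate subtle difficulties beyond this parameter bookkeeping: both \cref{thm:mainqnmssqs1} and \cref{thm:avgtoworstnmssqs} have already been established in the preceding sections, and the final theorem is obtained by a clean composition once the message length is chosen conservatively enough.
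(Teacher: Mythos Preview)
Your proposal is correct and follows essentially the same approach as the paper: instantiate the construction of \cref{sec:qnmssschemeqs} with the three building blocks from \cref{thm:bbj1}, \cref{fact:css}, and \cref{thm:2naugmented}, apply \cref{thm:mainqnmssqs1} to get average-case non-malleability, and then invoke \cref{thm:avgtoworstnmssqs} while choosing the message-length exponent $c$ small enough that the $2^b$ blow-up is absorbed by the $2^{-n^{\Omega(1)}}$ error. One minor remark: \cref{thm:bbj1} actually supports constant-rate messages (up to $(1/5-c)n$), so the constraint $b\leq n^{c_0}$ you impose on the inner scheme is not needed there and the only real restriction on $c$ comes from the average-to-worst-case step; your choice $c=\min(c_0,\alpha/2)$ is harmless but could simply be $c<\alpha$.
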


\bibliography{References}
\bibliographystyle{alpha}

\end{document}